\newcommand{\set}[1]{\left\{ #1 \right\}}
\DeclareMathOperator{\polylog}{polylog}
\newcommand{\floor}[1]{\left\lfloor #1 \right\rfloor}
\newcommand{\norm}[2]{\left\| {#2} \right\|_{{#1}}}
\newcommand{\mvec}[1]{\mathbf{#1}}
\newcommand{\vb}{\mvec{b}}
\newcommand{\vc}{\mvec{c}}
\newcommand{\ve}{\mvec{e}}
\newcommand{\vf}{\mvec{f}}
\newcommand{\vq}{\mvec{q}}
\newcommand{\vr}{\mvec{r}}
\newcommand{\vt}{\mvec{t}}
\newcommand{\vu}{\mvec{u}}
\newcommand{\vv}{\mvec{v}}
\newcommand{\vw}{\mvec{w}}
\newcommand{\vx}{\mvec{x}}
\newcommand{\vy}{\mvec{y}}
\newcommand{\vz}{\mvec{z}}
\newcommand{\vha}{\hat{\mvec{a}}}
\newcommand{\vhq}{\hat{\mvec{q}}}
\newcommand{\vhr}{\hat{\mvec{r}}}
\newcommand{\vht}{\hat{\mvec{t}}}
\newcommand{\vhu}{\hat{\mvec{u}}}
\newcommand{\vhv}{\hat{\mvec{v}}}
\newcommand{\vhw}{\hat{\mvec{w}}}
\newcommand{\vhx}{\hat{\mvec{x}}}
\newcommand{\vhy}{\hat{\mvec{y}}}
\newcommand{\vhz}{\hat{\mvec{z}}}
\newcommand{\vty}{\tilde{\mvec{y}}}
\newcommand{\vtz}{\tilde{\mvec{z}}}
\newcommand{\vM}{\mvec{M}}
\newcommand{\vA}{\mvec{A}}
\newcommand{\vB}{\mvec{B}}
\newcommand{\vD}{\mvec{D}}
\newcommand{\vF}{\mvec{F}}
\newcommand{\vI}{\mvec{I}}
\newcommand{\vJ}{\mvec{J}}
\newcommand{\vP}{\mvec{P}}
\newcommand{\vX}{\mvec{X}}
\newcommand{\vzero}{\mvec{0}}
\newcommand{\calA}{\mathcal A}
\newcommand{\eat}[1]{}
\newcommand{\poly}{\mathrm{poly}}
\newcommand{\eps}{\epsilon}
\newcommand{\R}{\mathbb{R}}
\newcommand{\C}{\mathbb{C}}
\newcommand{\Z}{\mathbb{Z}}
\newcommand{\So}{\mathbf{S}^1}
\newcommand{\true}{\textnormal{\sc true}}
\newcommand{\false}{\textnormal{\sc false}}
\newcommand{\prob}{\mathop{\textnormal{Prob}}}
\newcommand{\be}{\begin{enumerate}}
\newcommand{\ee}{\end{enumerate}}
\newcommand{\bi}{\begin{itemize}}
\newcommand{\ei}{\end{itemize}}
\newcommand{\beq}{\begin{equation}}
\newcommand{\eeq}{\end{equation}}
\newcommand{\bp}{\begin{proof}}
\newcommand{\ep}{\end{proof}}
\newcommand{\bcor}{\begin{cor}}
\newcommand{\ecor}{\end{cor}}
\newcommand{\bthm}{\begin{thm}}
\newcommand{\ethm}{\end{thm}}
\newcommand{\blmm}{\begin{lmm}}
\newcommand{\elmm}{\end{lmm}}
\newcommand{\bdefn}{\begin{defn}}
\newcommand{\edefn}{\end{defn}}
\newcommand{\bprop}{\begin{prop}}
\newcommand{\eprop}{\end{prop}}
\newcommand{\bconj}{\begin{conj}}
\newcommand{\econj}{\end{conj}}
\newcommand{\bopm}{\begin{opm}}
\newcommand{\eopm}{\end{opm}}
\newcommand{\brmk}{\begin{rmk}}
\newcommand{\ermk}{\end{rmk}}
\newcommand{\inner}[1]{\langle #1 \rangle}
\theoremstyle{plain}                   
\newtheorem{thm}{Theorem}[section]
\newtheorem{lmm}[thm]{Lemma}
\newtheorem{prop}[thm]{Proposition}
\newtheorem{cor}[thm]{Corollary}
\theoremstyle{definition}              
\newtheorem{opm}[thm]{Open Problem}
\newtheorem{conj}[thm]{Conjecture}
\newtheorem{defn}[thm]{Definition}
\newtheorem{rmk}[thm]{Remark}
\newtheorem{claim}[thm]{Claim}
\newtheorem{observation}[thm]{Observation}
\newcommand{\bbox}{
\begin{center}
\begin{tabular}{|c|}
\hline
}
\newcommand{\ebox}{
\\
\hline
\end{tabular}
\end{center}
}
\newlength{\toppush}
\newcommand{\avg}{\mathbb{E}}
\newcommand{\Avg}[1]{\avg\left[{#1}\right]}
\algrenewcommand\algorithmicrequire{\textbf{Input:}}
\algrenewcommand\algorithmicensure{\textbf{Output:}}
\algrenewcommand\algorithmicwhile{\textbf{While}}
\algrenewcommand\algorithmicfor{\textbf{For}}
\algrenewcommand\algorithmicreturn{\textbf{Return}}
\algrenewcommand\algorithmicif{\textbf{If}}
\newcommand{\lst}{\mathcal{L}}
\newcommand{\abs}[1]{\left|#1\right|}
\definecolor{ao}{rgb}{0,0.5,0}
\newcommand{\eqdef}{\stackrel{\text{def}}{=}}
\newcommand{\Prob}[1]{\prob\left[{#1}\right]}
\newcommand{\brackets}[1]{\left[{#1}\right]}
\newcommand{\parens}[1]{\left({#1}\right)}
\newcommand{\ceils}[1]{\left\lceil{#1}\right\rceil}
\newcommand{\supp}{\mathrm{supp}}
\newcommand{\jac}[3]{\mathcal{J}^{({#1},{#2})}_{{#3}}}
\newcommand{\bessel}[2]{J_{{#1}}\parens{{#2}}}
\newcommand{\alphajac}{\alpha}
\newcommand{\betajac}{\beta}
\newcommand{\addjac}{\frac{\alphajac+\betajac+1}{2}}
\newcommand{\qprune}[1]{q_{\textsc{Prune}}\parens{{#1}}}
\newcommand{\tprune}[1]{T_{\textsc{Prune}}\parens{{#1}}}
\newcommand{\prunealg}{\textsc{Prune}}
\newcommand{\approxcos}{\textsc{ApproxArcCos}}
\newcommand{\jacforcos}{\textsc{QueryJacobiForCos}}
\newcommand{\prunespread}{\textsc{PruneNonSpread}}
\newcommand{\verify}{\textsc{Verify}}
\newcommand{\chk}{\textsc{Check}}
\newcommand{\refine}{\textsc{Refine}}
\newcommand{\trunc}[2]{\textsc{trunc}_{{#1}}\parens{{#2}}}
\newcommand{\sign}[1]{\textsc{sgn}\parens{{#1}}}
\newcommand{\assert}[1]{\textsf{Assertion {#1}}}
\newcommand{\tl}{\tilde{\ell}}
\newcommand{\cnst}{C}
\newcommand{\median}{\mathrm{Median}}
\newcommand{\evalpts}{\lambda}
\newcommand{\smll}{\textsc{large}}
\newcommand{\g}[2]{g_{{#1}}\parens{#2}}
\newcommand{\kap}[3]{\kappa_{#1,#2}\parens{#3}}
\begin{document}
\title{\textbf{Sparse Recovery for Orthogonal Polynomial Transforms}}
\author{\textsc{Anna Gilbert}\footnotemark[1] \and \textsc{Albert Gu}\footnotemark[2] \and \textsc{Christopher R\'{e}}\footnotemark[2] \and \textsc{Atri Rudra}\footnotemark[3] \and \textsc{Mary Wootters}\footnotemark[2]}
\date{\footnotemark[1]~~Department of Mathematics\\
University of Michigan\\
\texttt{annacg@umich.edu}\\
\vspace*{2mm}
\footnotemark[2]~~Department of Computer Science\\
Stanford University\\
\texttt{\{albertgu,chrismre,marykw\}@stanford.edu}\\
\vspace*{2mm}
\footnotemark[3]~~Department of Computer Science and Engineering\\
University at Buffalo\\
\texttt{atri@buffalo.edu}
}

\maketitle

\setcounter{page}{0}
\thispagestyle{empty}

\begin{abstract}

In this paper we consider the following \em sparse recovery \em problem. We have query access to a vector $\vx \in \R^N$ such that $\vhx = \vF \vx$ is $k$-sparse (or nearly $k$-sparse) for some orthogonal transform $\vF$.  The goal is to output an approximation (in an $\ell_2$ sense) to $\vhx$ in sublinear time.  This problem has been well-studied in the special case that $\vF$ is the Discrete Fourier Transform (DFT), and a long line of work has resulted in sparse Fast Fourier Transforms that run in time $O(k \cdot \mathrm{polylog} N)$.  However, for transforms $\vF$ other than the DFT (or closely related transforms like the Discrete Cosine Transform), the question is much less settled.

In this paper we give sublinear-time algorithms---running in time $\poly(k \log(N))$---for solving the sparse recovery problem for orthogonal transforms $\vF$ that arise from \em orthogonal polynomials. \em  More precisely, our algorithm works for any $\vF$ that is an orthogonal polynomial transform derived from  \em Jacobi polynomials\em.  The Jacobi polynomials are a large class of classical orthogonal polynomials (and include {\em Chebyshev} and {\em Legendre} polynomials as special cases), and show up extensively in applications like numerical analysis and signal processing.  One caveat of our work is that we require an assumption on the sparsity structure of the sparse vector, although we note that vectors with random support have this property with high probability. 

Our approach is to give a very general reduction from the $k$-sparse sparse recovery problem to the $1$-sparse sparse recovery problem that holds for any flat orthogonal polynomial transform; then we solve this one-sparse recovery problem for transforms derived from Jacobi polynomials.  Frequently, sparse FFT algorithms are described as implementing such a reduction; however, the technical details of such works are quite specific to the Fourier transform and moreover the actual implementations of these algorithms do not use the $1$-sparse algorithm as a black box.  In this work we give a reduction that works for a broad class of orthogonal polynomial families, and which uses any $1$-sparse recovery algorithm as a black box.

\end{abstract}

\newpage

\section{Introduction}

In this paper, we consider the following \em sparse recovery \em problem.  Suppose that we have query access to a vector $\vx \in \mathbb{R}^N$, which has the property that for a fixed \em orthogonal transform matrix \em $\vF$, $\vhx = \vF \vx$ is $k$-sparse (or approximately $k$-sparse, in the sense that $\vhx$ is close in $\ell_2$ distance to a $k$-sparse vector).  The goal is to recover an approximation $\vhz$ to $\vhx$, so that $\|\vhx - \vhz\|_2$ is small with high probability, as quickly as possible.

Variants of this problem have been studied extensively over several decades---we refer the reader to the book~\cite{Foucart2013} for many examples and references.
One particularly well-studied example is the \em sparse Fast Fourier Transform \em (sFFT)---see the survey~\cite{FourierSurvey2014} and the references therein.  In this case, the matrix $\vF$ is taken to be the Discrete Fourier Transform (DFT) and a long line of work has produced near-optimal results: algorithms with running time $O(k\polylog(N))$ and sample complexity $O(k\log N)$~\cite{HIKP12, IKP14, IK14}.

We study the sparse recovery problem for a more general class of transforms $\vF$ called \em orthogonal polynomial transforms, \em and in particular those that arise from \em Jacobi polynomials, \em a broad class of orthogonal polynomials (OPs).  Jacobi polynomials include as special cases many familiar families of OPs, including Gegenbauer and in particular Chebyshev, Legendre, and Zernike\footnote{To be more precise the {\em radial} component of a Zernike polynomial is a Gegenbauer and hence, a Jacobi polynomial.} polynomials, and the corresponding OP transforms appear throughout numerical analysis and signal processing.

Despite the progress on the sFFT described above, much remains unknown for general orthogonal polynomial transforms.
As discussed more in Section~\ref{sec:rel-work} below, 
the \em sample complexity \em of the sparse recovery problem is well understood, and the `correct' answer is known to be $\Theta(k \polylog(N))$ queries to $\vx$.  However, the algorithmic results that go along with these sample complexity bounds result in $\poly(N)$ time algorithms.  Our goal in this work will be \em sublinear \em time algorithms as well as sublinear sample complexity.  There are sublinear-time algorithms
available for the special cases of Chebyshev and Legendre polynomials that work by essentially reducing to the Fourier case.  For general Jacobi polynomials, such reductions are not available.  We elaborate in Appendix~\ref{app:hard} why reducing general Jacobi polynomials to the Fourier case does not seem easy.
There are also algorithms based on Prony's method, some of which work for quite general families of OPs~\cite{prony-general}.  However these general results require exact sparsity; to the best of our knowledge versions of Prony's method that are provably robust to noise are restricted to classes of OPs similar to the Fourier transform. 

\paragraph{Results.}
In this work, we give the first (to the best of our knowledge) sublinear-time algorithms with provable guarantees for the (approximately-)sparse recovery problem for general orthogonal transforms derived from Jacobi polynomials.  We discuss our results in more detail in Section~\ref{sec:results} and briefly summarize them here.  Our algorithms run in time $\poly(k \log(N))$ and given query access to $\vv = \vF^{-1} \vhv$, can find approximations to $\vhv$ when $\vhv$ is approximately $k$-sparse of an appropriate form.  More precisely,  we can handle vectors $\vhv = \vhx + \vhw$ where $\vhx$ is $k$-sparse with a `spread-out' support (made precise in Definition~\ref{def:sparse-separated}), and $\vhw$ is an adversarial noise vector with sufficiently small $\ell_2$ norm.  We obtain guarantees of the following flavor: for any such vector $\vv$, we can find $\vhz$ such that $\|\vhz - \vhx\|_2 \leq 0.01 \|\vhx\|_2$ with high probability.  

We note that these results are weaker than the results for the sFFT: our sample complexity and running time are polynomially larger, and we need stronger assumptions on the sparse signals.  However, we also note that the decade or so of work on the sFFT culminating in the results above began with similar results (see~\cite{GGIMS}, for example, in which the dependence on $k$ is an unspecified polynomial) and we hope that this work will similarly be a first step towards near-optimal algorithms for general orthogonal polynomial transforms. 

\paragraph{Techniques.}
Our techniques follow the outline of existing algorithms for the sFFT, although as we elaborate on in Section~\ref{sec:overview}, the situation for general Jacobi polynomials is substantially more complicated.  More precisely, we first give a very general reduction, which reduces the $k$-sparse case to the $1$-sparse case.  The idea of such a reduction was implicit in the sFFT literature, but previous work has relied heavily on the structure of the DFT.  Our reduction applies to a broad class of OPs including Jacobi polynomials.
Next, we show how to solve the $1$-sparse recovery problem for general Jacobi polynomials. The basic idea is to use known approximations of Jacobi polynomial evaluations by certain cosine evaluations~\cite{szego} in order to iteratively narrow down the support of the unknown $1$-sparse vector.  We give a more detailed overview of our techniques in Section~\ref{sec:overview}.

\paragraph{Organization.}
For the rest of the introduction, we briefly introduce orthogonal polynomial transforms, discuss previous work, and give a high-level overview of our approach.  After that we introduce the formal notation and definitions we need in Section~\ref{sec:prelims}, after which we state our results more formally in Section~\ref{sec:results}.  Then we prove our main results: the reduction from $k$ to $1$-sparse recovery is proved in Section~\ref{sec:redux}, the $1$-sparse recovery algorithm for Jacobi polynomials is presented in Section~\ref{sec:jacobi-1sps}, and the resulting $k$-sparse recovery algorithm for Jacobi polynomials is presented in Section~\ref{sec:k-sps-jacobi}.

\subsection{Orthogonal Polynomial Transforms}

Orthogonal polynomials (OPs) play an important role in classical applied mathematics, mathematical physics, and the numerical analysis necessary to simulate solutions to such problems. 
We give more precise definitions in Section~\ref{sec:prelims} but briefly a family of orthogonal polynomials $p_0(X), p_1(X), \ldots$ is a collection of polynomials defined on an interval ${\cal D}$ of $\R$, that are pairwise orthogonal with respect to a (non-negative) weight function~$w$.

In this work we study \em Jacobi polynomials \em (defined formally in Section~\ref{sec:prelims}), which are a very general class of orthogonal polynomials.  These include Chebyshev polynomials, Legendre polynomials, Zernike polynomials and more generally Gegenbauer polynomials.  These OP families show up in many places.
For example, Zernike polynomials are a family of orthogonal polynomials on the unit disk that permit an analytic expression of the 2D Fourier transform on the disk. They are used in optics and interferometry~\cite{Tango1977}. They can be utilized to extract features from images that describe the shape characteristics of an object and were recently used for improved cancer imaging~\cite{yu2016}. Different families of orthogonal polynomials give rise to different quadrature rules for numerical integration~\cite{R81,quadrature}. Specifically, Chebyshev polynomials are used for numerical stability (see e.g.\ the ChebFun package~\cite{chebfun}) as well as approximation theory (see e.g.\ Chebyshev approximation~\cite{cheb-approx}). Chebyshev polynomials also have certain optimal {\em extremal} properties, which has resulted in many uses in theoretical computer science, including in learning theory, quantum complexity theory, linear systems solvers, eigenvector computation, and more~\cite{musco}. Further, Jacobi polynomials form solutions of certain differential equations~\cite{jacobi}. More recent applications include Dao et al.'s~\cite{DaoDeSaRe2017} use of orthogonal polynomials to derive quadrature rules for computing kernel features in machine learning. 

Orthogonal polynomials naturally give rise to (discrete) orthogonal polynomial transforms.  
Suppose that $\vF$ is an $N \times N$ matrix, with each column corresponding to an orthogonal polynomial $p_0, \ldots, p_{N-1}$  and each row an evaluation point $\lambda_0, \ldots, \lambda_{N-1}$ in a suitable domain and suitably normalized so that it is an orthogonal matrix (Definition~\ref{def:OP}).
A familiar example might be the DFT: in this language, the DFT matrix is defined by the polynomials $1, X, X^2, \ldots, X^{N-1}$, evaluated at points $\lambda_j = \omega^j$ where $\omega$ is the $N$th root of unity.\footnote{We note that in this work we consider a setting slightly different than this example, where $\mathcal{D} = [-1,1]$ rather than $\So$.}
Like the Fourier Transform, it is known that all OP transforms admit `fast' versions, allowing matrix-vector multiplication in time $O(N\log^2(N))$~\cite{driscoll}.\footnote{We note that even though the work of~\cite{driscoll} has in some sense solved the problem of computing any OP transform in near-linear time, many practical issues still remain to be resolved and the  problem of computing OP transforms in near-linear time  has seen a lot of research activity recently. We just mention two recent works~\cite{jacobi-1,jacobi-2} that present near-linear time algorithms for the Jacobi polynomial transforms (and indeed their notion of {\em uniform Jacobi transform} corresponds exactly to the Jacobi polynomial transform that we study in this paper). However, these algorithms inherently seem to require at least linear-time and it is not clear how to convert them into sub-linear algorithms, which is the focus of our work.}
Thus, our problem of sparse recovery for OP transforms is a natural extension of the sFFT problem, with applications to the areas mentioned above.

\subsection{Related Work}
\label{sec:rel-work}

As previously described, there has been a great deal of work on the sFFT; we refer the reader to the survey \cite{FourierSurvey2014} for an overview.  There has also been work on non-Fourier OP transforms.  We break up our discussion below into discussion on the \em sample complexity \em (which as mentioned above is largely settled) and the \em algorithmic complexity \em (which remains largely open).

\paragraph{Sample complexity.}
The sample complexity of OP transforms $\vF$ has been largely pinned down by the \em compressed sensing \em literature.
For example, suppose that 
 $\vF \in \mathbb{R}^{N \times N}$ is any orthogonal and sufficiently flat matrix, in the sense that none of the entries of $\vF$ are too large.   Then a result of Rudelson and Vershynin (and a sharpening of their result by Bourgain) shows that $m = O(k\log{k}\log^2{N})$ samples suffice to establish that the matrix $\Phi \in \mathbb{R}^{m \times N}$ (which is made up of $m$ sampled rows from $\vF^T$) has the \em Restricted Isometry Property \em (RIP)~\cite{rip-B14,RV}. 
Finding $\vhx = \vF \vx$ from samples of $\vF$ of corresponds to the problem of finding an (approximately) $k$-sparse vector $\vhx$ from the linear measurements $\Phi \vhx$, which is precisely the  compressed sensing problem.  It is known that if $\Phi$ satisfies the RIP, then this can be solved (for example with $\ell_1$ minimization) in time $N^{O(1)}$.
We note that very recently a result due to B{\l}asiok et al. show that this is essentially tight, in that $O(k\log^2{N})$ queries (for a certain range of $k$) to $\vx$ are not enough to compute a $k$-sparse approximation of $\vF\vx$~\cite{Blasiok2019}.
Bounds specific to the DFT over finite fields can be found in~\cite{Rao2019}.

Foucart and Rauhut~\cite{Foucart2013} show that if the orthogonal polynomials satisfy a Bounded Orthogonal System (BOS) which are suitably flat, then if the $m$ evaluation points $\lambda_j$ are chosen uniformly at random proportional to the weight function $w$, then the $m \times N$ matrix $\Phi$ defined by normalizing $\vP_N[i,j] = p_j(\lambda_i)$ appropriately satisfies the RIP with high probability provided that $m$ has an appropriate dependence on $N,k,\epsilon,$ and the flatness of the matrix, and this again gives an $N^{O(1)}$-time algorithm to solve the sparse recovery problem.

Rauhut and Ward~\cite{RW12} show that for Jacobi polynomial transforms if the evaluation points were picked according to the {\em Chebyshev measure}, then with $O(k \, \mathrm{polylog}{N})$ random measurements, the corresponding matrix has the RIP (note that the Foucart and Rauhut sample the evaluation points according to the measure of orthogonality for the Jacobi polynomials, which in general is {\em not} the Chebyshev measure). This result again does not give a sub-linear time algorithm but was used in the result of~\cite{legendre-sparse} which we describe below.

While these approaches can give near-optimal sample complexity, they do not give sublinear-time algorithms.  In fact, it is faster to compute $\vhx$ exactly by computing $\vF \vx$, if we care only about the running time and not about sample complexity~\cite{driscoll}.  Thus, we turn our attention to sublinear-time algorithms.

\paragraph{Sublinear-time algorithms for OP transforms.}

There have been several works generalizing and building on the sFFT results mentioned above.  One direction is to the multi-dimensional DFT (for example in~\cite{IK14,KVZ19}). 
Another direction is to apply the sFFT framework to orthogonal polynomials with similar structure.
One example is Chebyshev polynomials and the Discrete Cosine Transform (DCT).  
It was observed in~\cite{legendre-sparse} (also see Appendix~\ref{sec:dct-reduction}) that this can be reduced to sFFT in a black box manner, solving the sparse recovery problem for Chebyshev polynomials and the DCT.
A second example of OP transforms which can essentially be reduced to the sFFT is \em Legendre polynomials. \em
Hu et al.~\cite{legendre-sparse} seek to recover 
an unknown $k$-term Legendre polynomial (with highest magnitude degree limited to be $N/2$), defined on $[-,1,1]$, from samples.  
They give a sublinear two-phase algorithm: in the first phase, they 
reduce $k$-sparse-Legendre to sFFT to identify a set of candidate Legendre polynomials. The second phase uses the RIP result for BOS to produce a matrix that is used to estimate the coefficients of the candidate Legendre polynomials. 
We note that in this work the setting is naturally continuous, while ours is discrete.  

Choi et al.~\cite{Choi2018} study higher dimensions and obtain sublinear-time algorithms for more general harmonic expansions in multiple dimensions. 
The results of \cite{Choi2018} complement our work.  More precisely, that work shows how to use any algorithm for a univariate polynomial transform to design an algorithm for a multi-variate polynomial transform where the multi-variate polynomials are products of univariate polynomials in the individual variables.  Thus our improvements for univariate polynomial transforms can be used with \cite{Choi2018}.

Finally, there are sparse OP transforms based on Prony's method.  
The work~\cite{prony-general} extends Prony's method to a very general setting, including Jacobi polynomials, and gives an algorithm that requires only $O(k)$ queries to recover exactly $k$-sparse polynomials.  However, these general results work only for exact sparsity and are in general not robust to noise.  There has been work extending and modifying these techniques to settings with noise (for example,~\cite{HuaSarkar1990,Potts2010}), but to the best of our knowledge the only provable results for noise are for either the sFFT or closely related polynomial families.  We note that \cite{Potts2016} presents a Prony-like algorithm for Legendre and Gegenbauer polynomials and demonstrates empirically that this algorithm is robust to noise, although they do not address the question theoretically.

\subsection{Technical overview}
\label{sec:overview}

Our technical results have two main parts.  First, inspired by existing approaches to the sFFT, we provide a general reduction from the $k$-sparse recovery problem to the $1$-sparse recovery algorithm, which works for any family of OPs that is sufficiently `flat:' that is, no entry of the matrix $\vF$ is too large.  Second, we provide a $1$-sparse recovery algorithm for Jacobi polynomials.  We give an overview of both parts below.

For what follows, let $\vF$ be an orthogonal matrix. For simplicity in this overview we will assume that there is no noise.  That is, we want to compute the exactly $k$-sparse $\vhx=\vF\vx$ given query access to $\vx$.  However, we note that our final results do work for approximately $k$-sparse vectors $\vhv = \vhx + \vhw$ provided that $\|\vhw\|_2$ is sufficiently small.

\subsubsection{Reduction to one-sparse recovery}
We give a general reduction from the $k$-sparse recovery problem to the one-sparse recovery problem, which works for a broad class of OP families defined on a finite interval.\footnote{\label{fn:hermite}We note that our results do not (yet) work for the case when the orthogonality is defined over an infinite interval. In particular, our reduction does not work for the Hermite and Laguerre polynomials.}   At a high level, the idea is as follows.  Suppose that $\vhx = \vF\vx$ is $k$-sparse and $\vb \in \mathbb{R}^N$ is a `filter': at this stage it is helpful to think of it like a boxcar filter, so $\vb$ is $1$ on some interval $I$ and zero outside of that interval.  If we choose this interval randomly, we might hope to isolate a single `spike' of $\vhx$ with $\vb$: that is, we might hope that $\vD_{\vb} \vhx$ is one-sparse, where $\vD_{\vb}$ is the diagonal matrix with $\vb$ on the diagonal.  Suppose that this occurs.  In order to take advantage of this one-sparse vector with a black-box solution to the one-sparse recovery problem, we would need query access to the vector $\vF^{-1} \vD_{\vb} \vhx = \vF^{-1} \vD_{\vb} \vF \vx$, while what we have is query access to $\vx$.  Thus, we would like to design $\vb$ so that $\vF^{-1} \vD_{\vb} \vF$ is \em row-sparse. \em  This would allow us to query a position of $\vF^{-1} \vD_{\vb} \vhx$ using only a few queries from $\vx$.

One of our main technical contributions is showing how to design such a vector $\vb$, so that $\vb$ approximates a boxcar filter and so that $\vF^{-1} \vD_{\vb} \vF$ is row-sparse for \em any \em OP transform $\vF$.  

Then, given this filter, we can iteratively identify and subtract off `spikes' in $\vhx$ until we have recovered the whole thing.
Of course, the actual details are much more complicated than the sketch above.  First, the one-sparse solver might have a bit of error, which will get propagated through the algorithm.  Second, in our analysis the vector $\vhx$ need not be exactly $k$-sparse.  Third, $\vb$ will only approximate a boxcar filter, and this is an additional source of error that needs to be dealt with.  We will see how to overcome these challenges in Section~\ref{sec:redux}.

For the reader familiar with the sFFT, this approach might look familiar: most sFFT algorithms work by using some sort of filter to isolate single spikes in an approximately sparse signal.  Below, we highlight some of the challenges in extending this idea beyond the Fourier transform.  Some of these challenges we have overcome, and one we have not (yet) overcome.  We mention this last open challenge both because it explains the assumption we have to make on the sparsity structure of $\vhx$, and also because we hope it will inspire future work.

\paragraph{Challenge 1: Choice of filter.}  One key difficulty in extending sFFT algorithms to general orthogonal polynomials is that the filters used in the sFFT approach are very specific to the Fourier transform.  Indeed, much of the progress that has been made on that problem has been due to identifying better and better choices of filter specialized to the Fourier transform.  In order to find filters that work for \em any \em OP family, we take a different approach and construct a filter out of low-degree Chebyshev polynomials.  Then we use the orthogonality properties of the OP family to guarantee that $\vF^{-1} \vD_{\vb} \vF$ has the desired sparsity properties.

\paragraph{Challenge 2: Explicit black-box reduction.}  Because our goal is generality (to as broad a class of OPs as possible), we give an explicit reduction that uses a $1$-sparse solution as a black box.  To the best of our knowledge, existing work on the sFFT does not explicitly do this: a reduction of this flavor is certainly implicit in many of these works, and even explicitly given as intuition, but we are not aware of an sFFT algorithm which actually uses a $1$-sparse recovery algorithm as a black box.

\paragraph{Challenge 3: Equi-spaced evaluation points.} The evaluations points in DFT and the DCT are equispaced (in the angular space). This fact is crucially exploited in sFFT algorithms (as well as the reduction of DCT to DFT---see Appendix~\ref{sec:dct-reduction} for more details on the reduction). Unfortunately, the roots of Jacobi polynomials are no longer equally spaced---see Appendix~\ref{app:hard} for why this is a barrier to reducing $k$-sparse recovery of Jacobi polynomials directly to $k$-sparse recovery for DCT or DFT. However, it is known that the roots of Jacobi polynomials are `spread out,' (in a sense made below precise in Definition~\ref{def:roots-spread}), and we show that this property is enough for our reduction. In fact, our reduction from $k$-sparse recovery to $1$-sparse recovery works generally for any `flat' OP family with `spread out' roots. 

\paragraph{(Open) Challenge 4: Permuting the coordinates of $\vhx$.}  In the approach described above, we hoped that an interval $I$ would `isolate' a single spike.  In the sFFT setting, this can be achieved through a permutation of the coordinates of $\vhx$.  In our language, in the sFFT setting it is possible to define a random (enough) permutation matrix $\vP$ so that $\vP \vhx$ has permuted coordinates, and so that $\vF^{-1} \vD_{\vb} \vP \vF$ is row-sparse---this argument crucially exploits the fact that the roots of unity are equispaced in the angle space.  This means that not only can we sample from the one-sparse vector $\vD_{\vb} \vhx$, but also we can sample from $\vD_{\vb} \vP \vhx$, and then there is some decent probability that any given spike in $\vhx$ is isolated by $\vb$.  However, we have not been able to come up with (an approximation to) such a $\vP$ that works in the general OP setting.  This explains why we require the assumption that the support of $\vhx$ be reasonably `spread out,' so that we can hope to isolate the spikes by $\vb$.  This assumption is made precise in Definition~\ref{def:sparse-separated}.  We note that if such a $\vP$ were found in future work, this would immediately lead to an improved $k$-sparse recovery result for Jacobi polynomials, which would work for arbitrary sparse signals $\vhx$.

\subsubsection{A one-sparse recovery algorithm for Jacobi polynomials}
With the reduction complete, to obtain a $k$-sparse recovery algorithm for general Jacobi polynomials we need to solve the one-sparse case.
We give an overview of the basic idea here.  First, we note that via well-known approximations of Jacobi polynomials~\cite{szego}, one can approximate the evaluation of any Jacobi polynomial at a point in $(-1,1)$ by evaluating the cosine function at an appropriate angle. 
Using some standard local error-correcting techniques (for example, computing $\cos(A)$ via $\cos{A} =\frac{\cos(A+B)+\cos(A-B)}{2\cos{B}}$ for a random $B$), we reduce the $1$-sparse recovery problem to computing $\theta$ from noisy values of $\cos(w\theta)$ for some integers $w\ge 1$. Since the reduction is approximate, some care has to be taken to handle some corner cases where the approximation does not hold. In particular, we have to figure out for which real numbers $y\in [0,N)$, does its orbit $\inner{xy}$ for $x\in\Z_N$ have small order. We give a result to handle this, which to the best of our knowledge (and somewhat surprisingly) seems to be new.\footnote{We thank Stefan Steinerberger for showing us a much simpler proof than our original more complicated proof, which also gave worse parameters.}
With this out of the way, our algorithm to compute the value of $\theta$ from the evaluations $\cos(w\theta)$ is based on the following idea.  Assuming we already know $\cos(\theta)$ up to $\pm \eps$, we get a noise estimate of $\theta$ (which lives in the range $\arccos(\cos(\theta)\pm \eps)$)  and then use the evaluations at $w>1$ to `dilate' the region of $[0,\pi]$ where we know $\theta$ lies, reducing $\eps$.  We proceed iteratively until the region of uncertainty is small enough that there are only  
$O(1)$ possibilities remaining, which we then prune out using the fact that $\vF$ is orthogonal and flat, in the sense that none of its entries are too large. (We note that proving $\vF$ is flat needs a bit of care. 
In particular, we need a sharper bound on Jacobi polynomials (than the cosine approximation mentioned above) in terms of Bessel functions to prove that {\em all} entries of $\vF$ are small.)
Similar ideas have been used for $1$-sparse recovery for the DFT (for example, in~\cite{HIKP12}),
although our situation is more complicated than the DFT because working with cosines instead of complex exponentials means that we lose sign information about $\theta$ along the way.

\section{Background and Preliminaries}
\label{sec:prelims}

\subsection{Notation}

We use bold lower-case letters ($\vx, \vy$) for vectors and bold upper-case letters ($\vP, \vF$) for matrices. Non-bold notation $x,y,U$ is used for scalars in $\mathbb{R}$.
In general, if there is a given transform $\vF$ we are considering, then the notation $\vhx \in \R^N$ indicates $\vF \cdot \vx$. 
We use the notation $\vx[i]$ or $\vX[i,j]$ to index into a vector or matrix, respectively.
All of our vectors and matrices are $0$-indexed, i.e. the entries of a vector $\vx \in \R^N$ are $\vx[0], \dots, \vx[N-1]$.
We use $[N]$ to denote the set $\{0, \ldots, N-1\}$. Given a subset $S\subset [N]$, we will denote the complement set (i.e. $[N]\setminus S$) by $S^c$.

Given a vector $\vx\in\R^N$ and an integer $1\le s\le N$, we define $\smll(s,\vx)$ to be the magnitude of the $s$th largest value in $\vx$ (by absolute value).

For any vector $\vu\in\R^N$, we define $\vD_{\vu} \in \R^{N \times N}$ as the diagonal matrix with $\vu$ on its diagonal. For a diagonal matrix $\vD$, and any real $\alpha$ we denote $\vD^\alpha$ to denote the diagonal matrix with the $(i,i)$ entry being $\parens{\vD[i,i]}^\alpha$.
Given a vector $\vx \in \R^N$ and set $S \subseteq [N]$, $\vx_S$ denotes the vector $\vx$ where all entries out of $S$ are masked to $0$.
For $\vx \in \R^N$, $\supp(\vx) \subseteq [N]$ denotes the support (i.e. the set of non-zero positions) of $\vx$.

We use $x \pm h$ to refer to either the interval $[x-h, x+h]$ or a point in this interval, whichever is clear from context.
Similarly, if $S$ is an interval $[a,b]$ then $S \pm h$ is the interval $[a-h, b+h]$.

When stating algorithms, we use superscript notation to denote query access.  That is $\mathcal{A}^{(\vx)}(\vz)$ takes input $\vz$ and has query access to $\vx$.

We use the notation $f(n) \lesssim g(n)$ to mean that there is some constant $C$ so that, for sufficiently large $n \geq n_0$, $f(n) \leq C g(n)$.

\subsection{Orthogonal Polynomials}

For the remainder of this paper, we consider polynomials $p_0(X),p_1(X),\dots$ that form a normalized orthogonal polynomial family with respect to any compactly supported measure $w(X)$.
By suitably scaling and translating $X$, we can ensure that the orthogonality is on $[-1,1]$.\footnote{See footnote \ref{fn:hermite}.}   
In particular $\deg(p_i) = i$ and
for any $i,j\ge 0$,
\begin{equation}
\label{eq:OP-cond}
\int_{-1}^1 p_i(X)p_j(X)w(X)dX=\delta_{i,j},
\end{equation}
where $\delta_{i,j}=1$ if $i=j$ and $0$ otherwise.

Then for given $N$ evaluation points $\evalpts_0,\dots\evalpts_{N-1}$, define the orthogonal polynomial transform $\vP_N$ as follows. 
For any $0\le i,j<N$, we have
\[\vP_N[i,j]=p_j(\evalpts_i).\]
In other words, the rows of $\vP_N$ are indexed by the evaluation points and the columns are indexed by the polynomials.

For the rest of the paper, assume $\evalpts_0\le \evalpts_1\le \cdots\le \evalpts_{N-1}$ are the roots of $p_N(X)$. Then it is well-known (see e.g.~\cite{szego}) that
\begin{itemize}
  \item The roots lie in the support of the measure (i.e.\ $\evalpts_i \in [-1, 1]$) and are distinct (i.e. $\evalpts_0<\evalpts_1<\cdots<\evalpts_{N-1}$).
  \item There exists weights Gaussian quadrature weights
$w_{\ell}=\frac 1{\sum_{j=0}^{N-1}p_j\parens{\evalpts_{\ell}}^2}, i = 0, \dots, N-1$
such that for any polynomial $f(X)$ of degree at most $2N-1$,
\begin{equation}
\label{eq:quadrature}
\int_{-1}^1 f(X)w(X)dX=
\sum_{\ell=0}^{N-1} f(\evalpts_{\ell})\cdot w_{\ell}.
\end{equation}
\end{itemize}

We are now ready to define the orthogonal matrix corresponding to $\vP_N$ that we deal with in this paper:

\begin{defn}\label{def:OP}
  Let $p_0(X), \dots, p_{N-1}(X),\dots$ be an orthogonal polynomial family, $\evalpts_0, \dots, \evalpts_{N-1}$ be the roots of $p_N(X)$, and $w_0, \dots, w_{N-1}$ be the Gaussian quadrature weights.
  Define $\vD_{\vw}$ to be the diagonal matrix with $w_0,\dots,w_{N-1}$ on its diagonal, and
  \[
    \vF_N = \vD_\vw^{\frac{1}{2}} \vP_N.
  \]
\end{defn}
Note that by~\eqref{eq:OP-cond} and~\eqref{eq:quadrature},
\[
  \vF_N^T\vF_N = \vP_N^T \vD_\vw \vP_N = \vI_N,
\]
so $\vF_N$ is an orthogonal matrix.
In particular, 
\[ \vP_N^T\vD_\vw\vP_N[i,j] = \sum_{k=0}^{N-1} p_i(\lambda_k) w_k p_j(\lambda_k) = \int_{-1}^1 p_i(X)p_j(X)w(X)\,dX = \delta_{i,j}. \]

Note that since $\vF_N$ is orthogonal, by definition we have 
\[\vF_N^{-1}=\vF_N^T.\]

\subsubsection{Jacobi Polynomials and Special Cases}
In this section we define Jacobi Polynomials, our main object of interest, and point out a few special cases.
We note that families of named orthogonal polynomials $\{p_i(X)\}$ are sometimes defined through different means, hence are normalized differently up to constants.
The corresponding discrete orthogonal polynomial transform (e.g. Discrete Legendre Transform) frequently refers to multiplication by $\vP$ instead of $\vF$.
In these cases, the transform satisfies $\vP_N^T \vD_\vw \vP_N = \vD$ for a diagonal matrix $\vD$ corresponding to the normalization.
The transform $\vF_N = \vD_\vw^{\frac{1}{2}} \vP \vD^{-\frac{1}{2}}$ we consider (note that this matrix is indeed orthogonal) is thus equivalent up to diagonal multiplication.

\paragraph{Jacobi polynomials.}

Jacobi polynomials are indexed by two parameters $\alphajac,\betajac>-1$ and these are polynomials $\set{P_j^{(\alphajac,\betajac)}}_{j\ge 0}$ that are orthogonal with respect to the measure
\[w^{(\alphajac,\betajac)}(X)=(1-X)^{\alphajac}\cdot(1+X)^{\betajac}\]
in the range $[-1,1]$. This definition is not normalized, in the sense that we have  $\vP_N^T \vD_\vw \vP_N = \vD$, where 
\[ \vD[j,j]=\frac{2^{\alphajac+\betajac+1}}{2j+\alphajac+\betajac+1}\cdot \frac{\Gamma(j+\alphajac+1)\Gamma(j+\betajac+1)}{\Gamma(j+1)\Gamma(j+\alphajac+\betajac+1)}\]
 (see~\cite[Pg. 68, (4.3.3)]{szego}). We will come back to the normalization in Section~\ref{sec:jacobi-1sps} (cf.\ Corollary~\ref{cor:jac-U-bound}).

We record three well-known special cases: Chebyshev polynomials (of the first kind) are special case of $\alphajac=\betajac=-\frac 12$ and Legendre polynomials are the special case of $\alphajac=\betajac=0$ (up to potentially a multiplicative factor that could depend on the degree $j$).
Another notable special case of Jacobi polynomials are the {\em Gegenbauer} or {\em ultraspherical polynomials} ($\alphajac=\betajac$).

\paragraph{Chebyshev polynomials of the 1st kind.}
The Chebyshev polynomials of the 1st kind are orthogonal with respect to the weight measure $w(X) = (1-X^2)^{-\frac{1}{2}}$.

The normalized transform $\vF_N$ has the closed form
\[
  \vF_N[i,j] =
  \begin{cases}
    \sqrt{\frac{1}{N}} & j = 0 \\
    \sqrt{\frac 2 N} \cdot \cos\left[ \frac{\pi}{N} j \left( i+\frac{1}{2} \right) \right] & j = 1, \dots, N-1.
  \end{cases}
\]
This is a variant of the Discrete Cosine Transform (DCT-III, or the inverse DCT). It is well-known that the DCT-III can be `embedded' into a DFT of twice the dimension, and we work out some of the details of how to use the sparse FFT to compute a sparse DCT in Appendix~\ref{sec:dct-reduction}.

\paragraph{Legendre polynomials.}

Legendre polynomials are orthogonal with respect to the uniform measure i.e. $w(X)=1$ and play a critical role in multipole expansions of potential functions (whether electrical or gravitational) in spherical coordinates. They are also important for solving Laplace's equation in spherical coordinates.

\subsubsection{Roots of Orthogonal Polynomials}

Since $\lambda_i \in [-1,1]$ for all $i$, there is a unique $\theta_i \in [0, \pi]$ such that $\lambda_i = \cos \theta_i$.
Our reduction holds for orthogonal polynomials that have roots that are `well-separated' in this angle space:

\begin{defn}
\label{def:roots-spread}
Let $0 < C_0 < C_1$.
A family of orthogonal polynomials $p_0(X), p_1(X), \ldots$ is $(C_0,C_1, \gamma_0)$-{\em dense} if for all large enough $d$, the following holds.

Let $\evalpts_0,\ldots, \evalpts_{d-1}$ be the roots of $p_d$, and $\theta_i = \arccos \evalpts_i$.
Then for any $i \in [d]$, for any $\gamma \geq \gamma_0/d$:
\[ C_0 \gamma d \leq \abs{ \set{ \theta_0, \ldots, \theta_{d-1}} \cap \brackets{ \theta_i - \frac{\gamma}{2} , \theta_i + \frac{\gamma}{2} } } \leq C_1 \gamma d. \]
\end{defn}

It turns out that any family of Jacobi polynomials has the required property: their roots are spaced out such that $\theta_{\ell}$ is close to $\ell\pi/N$ (Theorem~\ref{thm:jacobi-roots}).

\subsection{Sparse Recovery Problem}
\label{sec:sr-def}

We will consider approximately $k$-sparse vectors $\vhv = \vhx + \vhw$, where $\vhx$ is $k$-sparse and $\|\vhw\|_2$ is sufficiently small.  We will require that $\vhx$ has a `spread out' support, defined as follows.

\begin{defn}
\label{def:sparse-separated}
Let $k \in [N]$ and $0\le \sigma < 1$.
We say that a vector $\vx\in\R^N$ is $(k,\sigma)$-{\em sparsely separated} if there are $k$ non-zero locations in $\vx$ and any two non-zero locations are more than $\sigma N$ indices apart.
\end{defn}

It is not hard to see that a vector $\vx$ with random support of size $k$ is, with constant probability, $\parens{k,\Omega\parens{\frac{1}{k^2}}}$-sparsely separated.

In our reduction, we will reduce the $k$-sparse recovery problem to the special case of $k=1$. Next, we define some notation for the $1$-sparse case.

\begin{defn}
\label{def:1-sparse} We say that the matrix $\vF_N$ has an $(N,\eps,\delta,\mu)$ one-sparse recovery algorithm with query complexity $Q(N,\eps,\delta,\mu)$ and time complexity $T(N,\eps,\delta,\mu)$
if there exists an algorithm  $\calA$ with the properties below:

For all $\vy$ so that $\vhy=\vF_N \vy$ can be decomposed as
\[\vhy=\vty+\vw,\]
where ${\vty}=v\cdot \ve_h$ is $1$-sparse and
\[\norm{2}{\vw}\le \eps\abs{v},\]
we have:
\begin{enumerate}
\item $\calA$ makes at most $Q(N,\eps,\delta,\mu)$ queries into $\vy=\vF^{-1}\parens{v\cdot\ve_h+\vw}$.
\item With probability at least $1-\mu$, $\calA$ outputs $\tilde{v}\cdot \ve_h$ with $|v-\tilde{v}|\le \delta\abs{v}$ in time $T(N,\eps,\delta,\mu)$.
\end{enumerate}
\end{defn}
\paragraph{Pre-processing time.}
Our algorithm requires some pre-processing of $\vF_N$.  Our pre-processing step is given in Algorithm~\ref{alg:preprocessing} and involves computing the roots $\lambda_1, \ldots, \lambda_N$ of $p_N$ and storing them in an appropriate data structure, and additionally forming and storing some matrices that we will use in our algorithm.
Finding the roots and creating the data structure can be done in time $\poly(N)$, and 
the rest of the pre-processing step also takes time $\poly(N)$.  We note that this is an up-front cost that needs to be only paid once.

\paragraph{Precision.} We note that we need to make certain assumptions on size of the entries in $\vhv$ since otherwise we would not even be able to read coefficients that are either too large or too small and need $\omega(\log{N})$ bits to represent. Towards this end we will make the standard assumption that $\norm{2}{\vhv}=1$. In particular, this allows us to ignore any coefficients that are smaller than say $\frac 1N$ since their contribution to $\norm{2}{\vhv}$ is at most $\frac 1{\sqrt{N}}$, which will be too small for our purposes.\footnote{More generally, we can ignore smaller coefficients as long as they are polynomially large.} In particular, this implies that we only have to deal with numbers that need $O(\log{N})$ bits and as is standard in the RAM model, basic arithmetic operations on such numbers can be done in $O(1)$ time. We will implicitly assume this for the rest of the paper (except in the proof of Lemma~\ref{lem:jac-to-cos-correct}, where we will explicitly make use of this assumption).

\section{Results}
\label{sec:results}

In this section we state our main results.  These results follow from more detailed versions which are stated with the proofs of these results.

We start off with our main result for Jacobi polynomials.  We state an informal version here, and refer the reader to Corollary~\ref{cor:jacobi-k-sps} for the formal result. 
\bthm[General Sparse Recovery for Jacobi Polynomial Transform, Informal]
\label{thm:jacobi-intro}
Fix arbitrary parameters $\alphajac,\betajac>-1$ for Jacobi polynomials and let $\vJ^{(\alphajac,\betajac)}_N$ be the $N \times N$ orthogonal matrix that arises from it as in Definition~\ref{def:OP}.
Then there is an algorithm \textsc{Recover} that does the following.
Let $\vv = \vx + \vw$ where $\vhx = \vJ^{(\alphajac,\betajac)}_N \vx$ is $(k, C_1/k^2)$-sparsely separated, and suppose that
$ \|\vhw\|_2 \lesssim \delta \min_{h \in \supp(\vhx)} |\vhx[h]|. $
Then with probability at least $0.99$, \textsc{Recover} outputs $\vhz$ such that
\[ \|\vhx - \vhz\|_2 \lesssim \delta \|\vhx\|_2, \]
with $\poly\parens{\frac{k\log{N}}\delta}$ queries and running time $\poly\parens{\frac{k\log{N}}\delta}$.
\ethm

\begin{rmk}
The requirement on the noise term might be bad if one entry of $\vhx$ is extremely small compared to the rest.  However in this case we can decrease $k$ and add the very small entries of $\vhx$ to the noise term $\vhw$ resulting in a potentially better guarantee.  We note that our algorithm iteratively finds the large components of $\vhx$ and in fact has a mechanism for stopping early when all of the `large-enough' entries have been found.
\end{rmk}

To prove the above result, we first reduce the $k$-sparse recovery problem to $1$-sparse recovery problem, in the presence of a small amount of noise.
Next, we present an informal statement of our reduction.
See Theorem~\ref{thm:redux-main} for the formal result.

\bthm[Main Reduction, Informal]
\label{thm:redux-main-intro}
Let $p_1, \ldots, p_N$ be a $(C_0, C_1, \gamma_0)$-dense orthogonal polynomial family, and let $\vF_N$ be the $N \times N$ orthogonal matrix that arises from it as in Definition~\ref{def:OP}.
Suppose that $|\vF^{-1}_N[i,j]| \lesssim 1/\sqrt{N}$ for all $i,j \in [N]$.
Suppose that for some sufficiently small $\delta > 0$, 
$\vF_N$ has a $\parens{N,O(\delta),\delta,O(C_0/k^2)}$ one-sparse recovery algorithm with query complexity $Q$ and running time $T$. 
 
Then there is an algorithm \textsc{Recover} that does the following.
Let $\vv = \vx + \vw$ where $\vhx = \vF_N \vx$ is $(k, C_1/k^2)$-sparsely separated, and suppose that
$ \|\vhw\|_2 \lesssim \delta \min_{h \in \supp(\vhx)} |\vhx[h]|. $
Then with probability at least $0.99$, \textsc{Recover} outputs $\vhz$ so that
\[ \|\vhx - \vhz\|_2 \lesssim \delta \|\vhx\|_2, \]
with $\poly(k/\delta C_0)Q$ queries and running time $\poly(k/\delta C_0)T$.
\ethm

The final algorithmic piece missing from the result above is the algorithm for $1$-sparse recovery. We provide this missing piece for Jacobi polynomials (see Theorem~\ref{thm:jacobi-1-sps} for the formal statement):
\begin{thm}[$1$-Sparse Recovery for Jacobi Transform, Informal]
\label{thm:jacobi-1-sps-results}
There exists a universal constant $C$ such that the following holds.
Consider the Jacobi transform for any fixed parameters $\alphajac,\betajac>-1$.
There exists an $(N,\eps,C\cdot\eps,\gamma)$ $1$-sparse recovery algorithm for the Jacobi transform that makes $\poly\parens{\log\parens{\frac N{\gamma}}\cdot \frac 1\eps}$ queries and takes time $\poly\parens{\log\parens{\frac N{\gamma}}\cdot \frac 1\eps}$.
\end{thm}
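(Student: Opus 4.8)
The plan is to reduce $1$-sparse recovery for the Jacobi transform to a one-dimensional ``angle recovery'' problem and then solve that by iterative refinement (``dilation'') in the style of $1$-sparse sFFT, with extra care for corner cases peculiar to Jacobi polynomials. Fix $\alphajac,\betajac>-1$ and write $\vF_N$ for the orthogonal matrix from Definition~\ref{def:OP}. By Definition~\ref{def:1-sparse} we have query access to $\vy=\vF_N^{-1}(v\ve_h+\vw)$ with $\norm{2}{\vw}\le\eps\abs v$; since $\vF_N$ is orthogonal, $\vy[i]=v\cdot\vF_N[h,i]+(\vF_N^{-1}\vw)[i]$ with $\norm{2}{\vF_N^{-1}\vw}=\norm{2}{\vw}\le\eps\abs v$, and $\vF_N[h,i]=\sqrt{w_h}\,\tilde p_i(\lambda_h)$ where $\tilde p_i$ is the normalized Jacobi polynomial of degree $i$ (cf.\ Corollary~\ref{cor:jac-U-bound}) and $\lambda_h=\cos\theta_h$ for $\theta_h=\arccos\lambda_h\in[0,\pi]$. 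So querying at degree $i$ returns, up to known rescaling and controlled additive error, a noisy value of a fixed Jacobi polynomial at the unknown node $\lambda_h$; the task is to recover $h$ (equivalently $\theta_h$) and output $\tilde v\ve_h$ with $\abs{v-\tilde v}\le C\eps\abs v$.

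The first ingredient is structural. Using the classical bulk asymptotics of Jacobi polynomials~\cite{szego}, for $\theta_h$ bounded away from $\{0,\pi\}$ one has $\tilde p_i(\cos\theta_h)=A(\theta_h)\cos\!\big((i+c_{\alphajac,\betajac})\theta_h-d_{\alphajac,\betajac}\big)+(\text{error decaying in }i)$, where $A(\cdot)$ and the constants $c_{\alphajac,\betajac},d_{\alphajac,\betajac}$ are explicit and efficiently computable. Thus, modulo known rescaling, querying $\vy$ at degree $i$ yields a noisy evaluation of $w\mapsto\cos(w\theta_h+\phi)$ at frequency $w\approx i$. Near the endpoints this expansion breaks down; there I would instead invoke the sharper expansion of Jacobi polynomials in terms of Bessel functions to prove $\abs{\vF_N[i,j]}\lesssim 1/\sqrt N$ for all $i,j$, i.e.\ that $\vF_N$ is flat. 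Establishing flatness is needed both to discharge the hypothesis of Theorem~\ref{thm:redux-main-intro} and to justify the final pruning step below.

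The core subroutine recovers $\theta\defeq\theta_h$ from noisy evaluations of $w\mapsto\cos(w\theta+\phi)$, where the effective per-frequency noise combines the Szeg\H{o} error with an $\ell_2$-bounded adversarial term. I would (i) estimate the amplitude $v\sqrt{w_h}$ from a block of low-degree queries via a robust (trimmed/median) combination; (ii) maintain an interval $I_t\ni\theta$ of length $\gamma_t$, starting from $I_0=[0,\pi]$; (iii) at step $t$ pick $w_t\approx 1/\gamma_t$, query $\vy$ at $w_t$ together with a few nearby and randomly shifted frequencies, and use $\cos A=\tfrac{\cos(A+B)+\cos(A-B)}{2\cos B}$ with random $B$ to estimate $\cos(w_t\theta+\phi)$ to within $\pm O(\eps)$, driving the failure probability down by repetition; (iv) since $I_t$ already pins $w_t\theta$ to a window of length $w_t\gamma_t=O(1)$, resolve the correct branch of $\arccos$, invert, and divide by $w_t$ to obtain $I_{t+1}$ of length $O(\eps\gamma_t)$ --- the nearby/shifted frequencies ensure $\abs{\sin(w_t\theta+\phi)}$ is bounded below for at least one query, so the inversion is well-conditioned. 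Unlike the DFT case, $\cos$ loses the sign of the phase, so each measurement localizes $\theta$ only up to reflection, which is why we carry an entire interval rather than a point estimate. Iterating $O(\log N)$ times drives $\gamma_t$ below the root-separation scale $\gamma_0/N$ guaranteed by Theorem~\ref{thm:jacobi-roots} and Definition~\ref{def:roots-spread}, leaving $O(1)$ candidate indices.

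Two difficulties remain, and together they are the crux. First, the cosine approximation and the branch-resolution step can fail for special $\theta$: when the orbit $\inner{w\theta/\pi}$ over $w\in\Z_N$ is not spread out, $w_t\theta$ may lie near a multiple of $\pi$ for every affordable frequency. I would isolate this by proving a standalone number-theoretic lemma describing the few, structured $y\in[0,N)$ whose multiplicative orbit $\inner{xy}$ for $x\in\Z_N$ has small order, so that the corresponding $\theta$ can be enumerated and tested directly; the endpoint regime $\theta\approx 0,\pi$ is handled separately using the Bessel form of the expansion, where a parallel dilation argument applies. Second, to finish we prune the $O(1)$ surviving candidates: flatness of $\vF_N$ bounds how well an incorrect candidate can mimic $\vy$, so querying a few more degrees and comparing $\vy$ against the predicted clean signal $v\cdot\vF_N[h,\cdot]$ (with $v$ replaced by its estimate) eliminates all but the true $h$ except with probability $O(\gamma)$; outputting $\tilde v\ve_h$ then satisfies $\abs{v-\tilde v}\le C\eps\abs v$ by the error-correction. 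A union bound over the $O(\log(N/\gamma))$ stages, each costing $\poly(1/\eps)$ queries with $O(\log(1/\gamma))$-fold repetition, yields the claimed $\poly(\log(N/\gamma)/\eps)$ query and time bounds. I expect the number-theoretic small-orbit lemma and the endpoint Bessel analysis to be the hardest parts; the dilation/error-correction/amplitude-estimation core is a fairly direct, if notation-heavy, adaptation of the sFFT $1$-sparse template to cosines.
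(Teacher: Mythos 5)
Your outline matches the paper's proof of Theorem~\ref{thm:jacobi-1-sps} essentially step for step: the Szeg\H{o} cosine asymptotics plus a Bessel-based flatness bound, the product-to-sum error-correction to extract noisy $\cos(w\theta_h)$ values, an iterative dilation (\approxcos) to shrink an interval for $\theta_h$, a number-theoretic lemma on small-orbit $y\in[0,N)$ to handle the ``non-spread'' indices where the error-correction denominator is small, and a final pruning of $O(1)$ surviving candidates using flatness. Two small imprecisions worth flagging: (1) in the error-correction step the random quantity is the \emph{anchor} degree $\Delta$, so one queries at $\Delta,\Delta\pm w$ to isolate $\cos(w\theta)$ as the ratio $\bigl(\text{query}(\Delta+w)+\text{query}(\Delta-w)\bigr)/\bigl(2\,\text{query}(\Delta)\bigr)$; your phrasing (``query at $w_t$ together with nearby/random shifts and use $\cos A=\tfrac{\cos(A+B)+\cos(A-B)}{2\cos B}$ with random $B$'') would require knowing $\cos B$, so you want $A$ random and $B=w\theta$ the target; and (2) the endpoint regime $\theta\approx 0,\pi$ is not handled by a parallel Bessel-based dilation in the paper --- the Bessel expansion is used only to prove flatness and to bound the number of roots near the endpoints, after which those $O(1/\eps)$ candidates are eliminated directly by the pruning subroutine, which is simpler than the parallel-dilation scheme you propose.
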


\subsection{Open Questions}
Before we dive into the proofs of the results above, we list a few questions left open by our work. 

\begin{enumerate}
\item First, it is a natural to try and improve our $k$-sparse recovery algorithm to work for arbitrary $k$-sparse support, rather than `well-separated' supports.  One natural way to do this is to address the fourth (open) challenge in Section~\ref{sec:overview} for a general class of OPs. 
\item Second, we could hope to handle a more general class of noise $\vhw$ than we currently do.  One could hope to handle \em any \em vector $\vv$, with an error guarantee that degrades smoothly with the $\ell_2$ norm of the `tail' of $\vv$.
\item Third, we would like to extend our results to hold for OPs defined over infinite intervals (e.g. Hermite and Laguerre polynomials).
\item Fourth, we would like to solve the sparse recovery for $\vF^T$ (where $\vF$ is as in Definition~\ref{def:OP}): i.e. given query access to $\vx$ figure out a good $k$-sparse approximation to $\vF^T\vx$ (recall that $\vF^{-1}=\vF^T$). (Note that this problem can be equivalently stated as follows: given query access to $\vF\vy$, compute a good $k$-sparse approximation to $\vy$.) Currently our results do not solve this problem since we cannot show that the existence of a filter $\vb$ such that $\vF\vD_{\vb}\vF^T$ is row-sparse. Note that this is not an issue for DFT since it is symmetric.
\item Finally, we would like to reduce the exponent on $k$ in our final runtime. In particular, for the case of random $k$-sparse support, the dependence on $k$ in the runtime for Jacobi transform is $k^8$. We note that we have not tried too hard to optimize the constants though we believe even getting a quadratic dependence on $k$ with our framework would be challenging. 
We would like to stress that the majority of the work in the sFFT literature has been to make the dependence on $k$ be linear and for such results, it seems very unlikely that a generic reduction from $k$-sparse recovery to $1$-sparse recovery would work. In other words, using the knowledge about the $1$-sparse recovery algorithm for DFT seems necessary to get a overall $k$-sparse FFT with running time $k\poly(\log{n})$.
\end{enumerate}

\section{Reduction to $1$-sparse case}
\label{sec:redux}

In this section, we will prove Theorem~\ref{thm:redux-main-intro}, which shows how we can reduce the $k$-sparse recovery problem (when $\vx$ is also `well-separated') to the $1$-sparse case.
Theorem~\ref{thm:redux-main-intro} follows from the following theorem, which is the main theorem in this section.
(Recall that $\smll(s,\vx)$ is the $s$'th largest value in $\vx$ by absolute value.)

\bthm
\label{thm:redux-main}
There are some constants $\delta_0, \cnst' > 0$ so that the following holds.
Let $p_1, \ldots, p_N$ be a $(C_0, C_1, \gamma_0)$-dense orthogonal polynomial family, and let $\vF_N$ be the $N \times N$ orthogonal matrix that arises from it as in Definition~\ref{def:OP}.
Let
\begin{equation}
\label{eq:U}
 U = \max_{i,j \in [N]^2} |\vF^{-1}_N[i,j]|. 
\end{equation}

Then there is an algorithm \textsc{Recover} that does the following.

Consider any $k \in [N]$, $\delta < \delta_0$, $0 < \mu < 1$, and $\gamma \ge \gamma_0/N$.
Let
\begin{equation}
  \label{eq:mu_0}
  \mu_0 = \frac{\mu^2 C_0^2 \gamma^2 }{k^2}.
\end{equation}
Suppose that
$\vF_N$ has a $\parens{N,\frac{6\delta}{\cnst},\delta,\frac{\mu_0}2}$ one-sparse recovery algorithm ${\cal A}$ for some $C \ge C'$ with query complexity $Q\parens{N,\frac{6\delta}{\cnst},\delta,\frac{\mu_0}2}$ and time complexity $T\parens{N,\frac{6\delta}{\cnst},\delta,\frac{\mu_0}2}$.  

Let $\vhv = \vhx + \vhw$ so that $\vhx = \vF_N \vx$ is $(k, C_1\gamma)$-sparsely separated, and so that
$\norm{2}{\vhw} \le \frac{\delta}{2\cnst}\smll(k, \vhx)$.

Then
with probability at least $1-\mu$, 
$\textsc{Recover}^{(\vv)}(k,\delta,\mu,\gamma)$ outputs $\vhz$ so that
\begin{equation}
  \label{eq:redux-main-guarantee-1}
  \|\vhx - \vhz\|_2 \leq 3 \delta \|\vx\|_2
\end{equation}
and so that
\begin{equation}
  \label{eq:redux-main-guarantee-2}
  \|\vhv - \vhz \|_2 \leq 3\delta \| \vx \|_2 + \|\vw\|_2.
\end{equation}
Further,
$\textsc{Recover}$ makes at most
\[ \poly\left(\frac{k\log(1/\mu)}{\gamma \delta C_0}\right) \cdot \parens{NU^2 + Q\parens{N, \frac{6\delta}{C}, \delta, \frac{\mu_0}{2}}} \]
queries to $\vv$,
and has running time
\[ \poly\left(\frac{k\log(1/\mu)}{\gamma \delta C_0}\right) \cdot \parens{NU^2 + T\parens{N, \frac{6\delta}{C}, \delta, \frac{\mu_0}{2}}} \]
\ethm

\begin{algorithm}
\caption{$\textsc{Recover}^{(\vv)}(k,\delta,\mu,\gamma)$}\label{alg:recover}
\begin{algorithmic}
\Require Query access to $\vv = \vx + \vw$ where $\vhx=\vF\vx$ is $(k,C_1\gamma)$-sparsely separated and $\vw$ is as in Theorem~\ref{thm:redux-main}, as well as parameters $k, \delta,\mu$.
\Ensure $\vhz$ (an approximation to $\vhx$) 
\Statex
\State $\vhz\gets\vzero$
\For{$i=1,\dots,k$}
    \State $\vtz, \texttt{stop} \gets\textsc{Peeler}^{(\vv)}\parens{\vhz,k,\delta/\cnst, \mu,\gamma}$ 
    \If{\texttt{stop}}
        \State Break
    \EndIf
    \State $\vhz \gets \vtz$
\EndFor
\State \Return{$\vhz$}
\end{algorithmic}
\end{algorithm}

The basic idea of \textsc{Recover} (Algorithm~\ref{alg:recover}) is as follows.  We will define an algorithm \textsc{Peeler} (Algorithm~\ref{alg:peel}) which will
iteratively `peel' off the heavy hitters and store them in the approximation $\vhz$ until the variable \texttt{stop} is set to $\true$.  We will show that if we stop, then every value in the residual $\vhv - \vhz$ will be very small, at which point we will be done.  The idea for the \textsc{Peeler} algorithm is illustrated in Figure~\ref{fig:peel}: we first use a filter to hopefully isolate a single spike of $\vhx$, and then we use the one-sparse recovery algorithm to estimate this spike and subtract it off.

In Section~\ref{sec:proof-redux-main} below, we prove Theorem~\ref{thm:redux-main} assuming that a suitable \textsc{Peeler} algorithm exists.  More precisely, we will formalize what we need from \textsc{Peeler} in Lemma~\ref{lem:redux-work-horse}, and we will prove Theorem~\ref{thm:redux-main} assuming Lemma~\ref{lem:redux-work-horse}.  Next, in Sections~\ref{sec:peeler-prelims} and \ref{sec:peelerpf}, we will state the \textsc{Peeler} algorithm (Algorithm~\ref{alg:peel}) and prove that it works.  Section~\ref{sec:peeler-prelims} contains some useful preliminaries, and Section~\ref{sec:peelerpf} contains the statement of \textsc{Peeler} and the proof of Lemma~\ref{lem:redux-work-horse}.

\subsection{Proof of Theorem~\ref{thm:redux-main}}\label{sec:proof-redux-main}
In this section we prove Theorem~\ref{thm:redux-main} which implies Theorem~\ref{thm:redux-main-intro}.  
For the rest of this section, given a vector $\vx\in\R^N$, we will denote its transform $\vF_N\cdot \vx$ by $\vhx$.

The algorithm \textsc{Recover} (Algorithm~\ref{alg:recover}) uses an algorithm called \textsc{Peeler} (Algorithm~\ref{alg:peel}) multiple times.  
This algorithm will satisfy the following guarantee.
\blmm
\label{lem:redux-work-horse}
Consider an orthogonal polynomial family and its corresponding matrix $\vF_N$ that satisfies the properties in Theorem~\ref{thm:redux-main}.

There is an algorithm \textsc{Peeler} with the following guarantee.

Let $k \in [N]$, $\eps$ be sufficiently small, $0 < \mu < 1$, $\gamma \ge \gamma_0/N$, and $\mu_0$ as in~\eqref{eq:mu_0}.
Suppose that $\vF_N$ has a $(N, 6\eps, \cnst\eps, \mu_0/2)$ one-sparse recovery algorithm ${\cal A}$ for some sufficiently large $C$.

Suppose \textsc{Peeler} has query access to $\vv=\vx+\vw$ such that $\vhx$ is $\parens{k,C_1\gamma}$-sparsely separated.
Consider an input $\vhz$ with sparsity $\|\vhz\|_0 < k$ such that $\|\vhw\|_2 \le \epsilon\|\vhv - \vhz \|_\infty$ and the following holds for every $i\in \supp(\vhz)$:
\begin{equation}
  \label{eq:peeler-condition}
  \abs{\vhz[i]-\vhv[i]}\le \cnst\eps\abs{\vhv[i]} \text{ and } \abs{\vhv[i]} \ge (1-2\cnst\eps)\cdot\| \vhv - \vhz \|_\infty.
\end{equation}
Then, with probability at least $1-\mu$, \textsc{Peeler}$^{(\vv)}(\vhz, k, \epsilon, \mu, \gamma)$ returns $(\vtz, \text{\texttt{stop}})$, 
where $\vtz \in \mathbb{R}^N$ and \texttt{stop} is a Boolean variable, so that
\begin{itemize}
\item If \texttt{stop} is $\false$, then $\vtz=\vhz+\tilde{v}\cdot \ve_h$ for some $h\in\supp(\vhx)\setminus \supp(\vhz)$,
and~\eqref{eq:peeler-condition} is satisfied for all $i \in \supp(\vtz)$, i.e.
\[\abs{\tilde{v}-\vhv[h]}\le \cnst\eps\abs{\vhv[h]}\text{ and } \abs{\vhv[h]}\ge (1-2\cnst\eps)\| \vhv - \vtz \|_\infty.\]
\item If \texttt{stop} is $\true$, then $\supp(\vtz) = \supp(\vhz)$, \eqref{eq:peeler-condition} is satisfied for all $i \in \supp(\vtz)$, and
$\|\vhv-\vtz\|_{\infty}\le \frac{\eps}{\sqrt{k}}\|\vhv\|_{\infty}$.
\end{itemize}
Further, if ${\cal A}$ has query complexity $Q(N, 6\eps, \cnst\eps, \mu_0/2)$ and running time $T(N, 6\eps, \cnst\eps, \mu_0/2)$,
$\textsc{Peeler}$ makes at most
\[ O\left( T_0 \cdot T_2 \cdot \frac{ \sqrt{k}}{\eps \gamma} \left( T_1 + Q(N,6\eps, \cnst\eps, \mu_0/2)\right) \right) \]
queries to $\vx + \vw$, and runs in time at most
\[ O\left( T_0 \cdot T_2 \cdot \frac{ k^{3/2}}{\eps \gamma} \left( T_1 + T(N,6\eps, \cnst\eps, \mu_0/2)\right) \right) \]
where $\mu_0, T_0, T_1, T_2$ are as defined in Algorithm~\ref{alg:peel}.
\elmm

The basic idea of Lemma~\ref{lem:redux-work-horse} is that 
\textsc{Peeler} will iteratively `peel' off the heavy hitters and store them in the approximation $\vhz$ until the variable \texttt{stop} is set to $\true$.  As long as we are doing well enough so far, the lemma says that we will find a new heavy hitter.  If we stop, then the lemma implies that every value in the residual $\vhv - \vhz$ is very small.

We will present \textsc{Peeler} and prove Lemma~\ref{lem:redux-work-horse} in Sections~\ref{sec:peeler-prelims} and \ref{sec:peelerpf}.  Before that, we prove Theorem~\ref{thm:redux-main}---which says that \textsc{Recover} (Algorithm~\ref{alg:recover}) works---assuming Lemma~\ref{lem:redux-work-horse}.

\begin{proof}[Proof of Theorem~\ref{thm:redux-main}]

We will repeatedly apply Lemma~\ref{lem:redux-work-horse} with $\eps=\frac{\delta}{C}$ (and preserving the remaining parameters); for the rest of this section, we use this choice for $\eps$.
As a preliminary, we note the following useful bounds between $\vhx$ and $\vhv$ at the locations of the heavy hitters (or {\em spikes}), due to the condition on the noise $\vhw$ in the statement of the theorem.  More precisely, we have the following claim.
\begin{claim}\label{claim:convert-x-v}
  For any $i \in \supp(\vhx)$,
  \begin{equation}
    \label{eq:v-to-x}
    (1-\epsilon)\abs{\vhx[i]} \le \abs{\vhv[i]} \le (1+\epsilon)\abs{\vhx[i]}
  \end{equation}
    \begin{equation}
    \label{eq:x-to-v}
    (1-\epsilon)\abs{\vhv[i]} \le \abs{\vhx[i]} \le (1+\epsilon)\abs{\vhv[i]}.
  \end{equation}
  Furthermore if $v$ satisfies $\abs{v-\vhv[i]} \le C\epsilon\abs{\vhv[i]}$, then
  \begin{equation}
    \label{eq:x-decrease}
    |v-\vhx[i]| 
    \le 
    2\epsilon(C+1)|\abs{\vhx[i]}.
  \end{equation}

\end{claim}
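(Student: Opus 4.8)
The plan is to derive all three statements from one elementary observation about the relative size of the noise. Since the definition of $(k,C_1\gamma)$-sparsely separated forces $\vhx$ to have exactly $k$ nonzero entries, we have $\smll(k,\vhx)=\min_{h\in\supp(\vhx)}\abs{\vhx[h]}$, so the hypothesis $\norm{2}{\vhw}\le\frac{\delta}{2\cnst}\smll(k,\vhx)$ together with the running choice $\eps=\delta/\cnst$ gives, for every $i\in\supp(\vhx)$,
\[ \abs{\vhw[i]}\ \le\ \norm{2}{\vhw}\ \le\ \frac{\eps}{2}\,\smll(k,\vhx)\ \le\ \frac{\eps}{2}\,\abs{\vhx[i]}. \]
Everything else will be triangle inequalities plus bookkeeping of the $O(\eps)$ slack, which is harmless provided $\delta_0$ (hence $\eps$) is small enough, say $\eps\le 1/2$.

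First I would prove~\eqref{eq:v-to-x}: writing $\vhv[i]=\vhx[i]+\vhw[i]$ and using the bound above, $\abs{\vhv[i]}\le(1+\eps/2)\abs{\vhx[i]}\le(1+\eps)\abs{\vhx[i]}$ and $\abs{\vhv[i]}\ge(1-\eps/2)\abs{\vhx[i]}\ge(1-\eps)\abs{\vhx[i]}$. To obtain~\eqref{eq:x-to-v} I would first re-express the noise bound in terms of $\abs{\vhv[i]}$: from $\abs{\vhw[i]}\le\frac{\eps}{2}\abs{\vhx[i]}\le\frac{\eps}{2}\bigl(\abs{\vhv[i]}+\abs{\vhw[i]}\bigr)$ we get $\abs{\vhw[i]}\le\frac{\eps/2}{1-\eps/2}\abs{\vhv[i]}\le\eps\abs{\vhv[i]}$, and then $\vhx[i]=\vhv[i]-\vhw[i]$ gives $(1-\eps)\abs{\vhv[i]}\le\abs{\vhx[i]}\le(1+\eps)\abs{\vhv[i]}$ by the triangle inequality in both directions.

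Finally, for~\eqref{eq:x-decrease}, assuming $\abs{v-\vhv[i]}\le\cnst\eps\abs{\vhv[i]}$ I would write $\abs{v-\vhx[i]}\le\abs{v-\vhv[i]}+\abs{\vhw[i]}\le\cnst\eps\abs{\vhv[i]}+\eps\abs{\vhv[i]}=(\cnst+1)\eps\abs{\vhv[i]}$ and then substitute $\abs{\vhv[i]}\le(1+\eps)\abs{\vhx[i]}\le 2\abs{\vhx[i]}$ from~\eqref{eq:v-to-x} to conclude $\abs{v-\vhx[i]}\le 2(\cnst+1)\eps\abs{\vhx[i]}$. There is no genuine obstacle in this claim; the only point requiring care is fixing $\delta_0$ small enough that the two elementary inequalities $\frac{\eps/2}{1-\eps/2}\le\eps$ and $1+\eps\le 2$ hold, which is where the hypothesis $\delta<\delta_0$ gets used.
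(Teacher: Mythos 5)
Your proof is correct and follows essentially the same approach as the paper's: the key observation $\abs{\vhw[i]}\le\norm{2}{\vhw}\le\frac{\eps}{2}\smll(k,\vhx)\le\frac{\eps}{2}\abs{\vhx[i]}$ followed by triangle-inequality bookkeeping. The minor rearrangements you use (re-expressing the noise bound directly in terms of $\abs{\vhv[i]}$ rather than inverting the two-sided inequality of~\eqref{eq:v-to-x}, and bounding $\abs{\vhv[i]}\le 2\abs{\vhx[i]}$ in the last step instead of expanding $(1+\eps)(\eps/2\cdot(1+\eps)+C\eps)$) are cosmetic and yield the same constants.
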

\begin{proof}
  For any $i\in\supp(\vhx)$, the noise condition says 
\begin{align*}
\abs{\vhx[i]-\vhv[i]}&=\abs{\vhw[i]}\\
&\leq \| \vhw\|_2 \\
&\leq (\eps/2)\smll(k, \vhx) \\
&\le(\epsilon/2)\abs{\vhx[i]}, 
\end{align*}
which implies~\eqref{eq:v-to-x}.
  This also implies that
  \[
    (1-\epsilon/2)\abs{\vhv[i]} \le \frac{1}{1+\epsilon/2}\abs{\vhv[i]} \le \abs{\vhx[i]} \le \frac{1}{1-\epsilon/2}\abs{\vhv[i]} \le (1+\epsilon)\abs{\vhv[i]},
  \]
which in turn implies~\eqref{eq:x-to-v}.

  Consequently, if $v$ satisfies $\abs{v-\vhv[i]} \le C\epsilon\abs{\vhv[i]}$, then
  \begin{align*}
    \abs{\vhx[i]-v} &\le \abs{\vhw[i]}+\abs{\vhv[i]-v} \le \frac\epsilon 2\cdot\abs{\vhx[i]} + \abs{\vhv[i]-v}
    \\&\le \frac \epsilon 2\cdot(1+\epsilon)\abs{\vhv[i]} + C\epsilon\abs{\vhv[i]}
    \le (1+\epsilon)\parens{\frac\epsilon 2\cdot(1+\epsilon)+C\epsilon}\abs{\vhx[i]}
    \\&\le 2\epsilon(C+1)\abs{\vhx[i]}
  \end{align*}
\end{proof}

\paragraph{Conditions to Lemma~\ref{lem:redux-work-horse}.}

We first check that at every call to \textsc{Peeler} in \textsc{Recover}, the conditions to Lemma~\ref{lem:redux-work-horse} are always met with this choice of $\epsilon$.
First we note that the invariant~\eqref{eq:peeler-condition} is vacuously satisfied at the beginning of \textsc{Recover}, and the guarantees of \textsc{Peeler} imply that it is therefore always satisfied.
The only condition left to check is that the requirement on the noise $\vhw$ in Lemma~\ref{lem:redux-work-horse} is met:
that is, that $\|\vhw\|_2 \le \epsilon \|\vhv - \vhz\|_\infty$.
Let $h \in \supp(\vhx)\setminus\supp(\vhz)$, which exists since \textsc{Peeler} is called at most $k$ iterations.
By our assumption on the noise in Theorem~\ref{thm:redux-main} and~\eqref{eq:x-to-v},
\[ \|\vhw\|_2 \le \frac{\epsilon}{2}\cdot \smll(k, \vhx) \le \frac{\epsilon}{2}|\vhx[h]| \le \frac{\epsilon}{2}(1+\epsilon)|\vhv[h]| = \frac{\epsilon}{2}(1+\epsilon)|\vhv[h]-\vhz[h]| \leq \eps \| \vhv - \vhz \|_\infty. \]

\paragraph{Establishing the error guarantee.}

Next, we claim that the estimates $\vhz$ returned by \textsc{Peeler} are always good in an $\ell_2$ sense.

\begin{claim}\label{claim:boundell2}
Let $\vhz$ be the output of \textsc{Peeler} at any iteration.
Let $Z$ be the support of $\vhz$ and $X$ be the support of the largest $|Z|$ coordinates of $\vhx$.
Then
\[
\| \vhx - \vhz\|_2^2 
\leq 4\epsilon^2(C+1)^2 \|\vhx_X\|_2^2 + (1 + 5C\eps) \|\vhx_{X^c}\|_2^2,
\]
\end{claim}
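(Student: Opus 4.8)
The plan is to decompose $\|\vhx-\vhz\|_2^2$ along the support $Z := \supp(\vhz)$ and treat the two pieces separately: the ``on-support'' error $\sum_{i\in Z}|\vhx[i]-\vhz[i]|^2$ is controlled by the accuracy invariant \eqref{eq:peeler-condition}, while the ``off-support'' mass $\|\vhx_{Z^c}\|_2^2$ must be compared with the ideal tail $\|\vhx_{X^c}\|_2^2$ by showing that the greedily-peeled set $Z$ is almost as good as the optimal set $X$.

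First I would record that, by the structure of \textsc{Peeler} from Lemma~\ref{lem:redux-work-horse} (every coordinate it adds lies in $\supp(\vhx)$), we have $Z\subseteq\supp(\vhx)$, hence $|Z|\le k$ and $X$ (the indices of the largest $|Z|$ entries of $\vhx$) is well defined. Since $\vhz$ is supported on $Z$,
\[
\|\vhx-\vhz\|_2^2 = \sum_{i\in Z}|\vhx[i]-\vhz[i]|^2 + \|\vhx_{Z^c}\|_2^2 .
\]
For each $i\in Z$, the invariant \eqref{eq:peeler-condition} gives $|\vhz[i]-\vhv[i]|\le C\epsilon|\vhv[i]|$, so \eqref{eq:x-decrease} of Claim~\ref{claim:convert-x-v} (with $v=\vhz[i]$, using $i\in\supp(\vhx)$) yields $|\vhx[i]-\vhz[i]|\le 2\epsilon(C+1)|\vhx[i]|$; squaring, summing, and using $\|\vhx_Z\|_2^2\le\|\vhx_X\|_2^2$ bounds the first sum by $4\epsilon^2(C+1)^2\|\vhx_X\|_2^2$. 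It then remains to prove $\|\vhx_{Z^c}\|_2^2\le(1+5C\epsilon)\|\vhx_{X^c}\|_2^2$, which after subtracting $\|\vhx\|_2^2$ from both sides is equivalent to
\[
\|\vhx_X\|_2^2-\|\vhx_Z\|_2^2 \;=\; \|\vhx_{X\setminus Z}\|_2^2-\|\vhx_{Z\setminus X}\|_2^2 \;\le\; 5C\epsilon\,\|\vhx_{X^c}\|_2^2 .
\]

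The heart of the argument is this last inequality. Note $|X\setminus Z| = |Z\setminus X| =: m$, and I would show that all $\vhx$-magnitudes indexed by $X\setminus Z$ and by $Z\setminus X$ lie in a common interval of multiplicative width $1+O(C\epsilon)$. On one hand, since $X$ consists of the largest $|Z|$ entries and $Z\setminus X\subseteq X^c$, every $|\vhx[j]|$ with $j\in Z\setminus X$ is at most every $|\vhx[i]|$ with $i\in X\setminus Z$. On the other hand, the invariant $|\vhv[j]|\ge(1-2C\epsilon)\|\vhv-\vhz\|_\infty$ for $j\in Z$, combined with $\|\vhv-\vhz\|_\infty \ge |\vhv[i]-\vhz[i]| = |\vhv[i]|$ for $i\notin Z$ (as $\vhz$ vanishes off $Z$), gives $|\vhv[j]|\ge(1-2C\epsilon)|\vhv[i]|$ for all $j\in Z$, $i\notin Z$; converting from $\vhv$ to $\vhx$ on $\supp(\vhx)$ via \eqref{eq:v-to-x}--\eqref{eq:x-to-v} turns this into $|\vhx[j]|\ge(1-O(C\epsilon))|\vhx[i]|$ for $j\in Z\setminus X$, $i\in X\setminus Z$. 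Hence with $b := \max_{i\in X\setminus Z}|\vhx[i]|$ all relevant magnitudes lie in $[(1-O(C\epsilon))b,\,b]$, so $\|\vhx_{X\setminus Z}\|_2^2\le mb^2$ and $\|\vhx_{Z\setminus X}\|_2^2\ge m(1-O(C\epsilon))^2b^2$; subtracting and using $mb^2\le(1-O(C\epsilon))^{-2}\|\vhx_{Z\setminus X}\|_2^2$ gives $\|\vhx_{X\setminus Z}\|_2^2-\|\vhx_{Z\setminus X}\|_2^2\le O(C\epsilon)\|\vhx_{Z\setminus X}\|_2^2\le O(C\epsilon)\|\vhx_{X^c}\|_2^2$, the last step because $Z\setminus X\subseteq X^c$. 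Taking $\epsilon$ (equivalently $C\epsilon=\delta$) sufficiently small and $C$ large, the hidden constant comes out to at most $5$.

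The step I expect to be the main obstacle is exactly this greedy-versus-optimal comparison: the invariant \eqref{eq:peeler-condition} speaks only about $\vhv$ and only relative to the current $\ell_\infty$ residual, not about the true ordering of $|\vhx|$, so one has to interleave the noise conversions of Claim~\ref{claim:convert-x-v} with the containment $Z\setminus X\subseteq X^c$ and keep the accumulated $(1+O(C\epsilon))$ factors under control in order to land at the stated $1+5C\epsilon$. Everything else is a routine support decomposition.
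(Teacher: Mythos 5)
Your proof is correct. The two essential ingredients are the same as the paper's: (i) the accuracy bound $|\vhx[i]-\vhz[i]|\le 2\eps(C+1)|\vhx[i]|$ for $i\in Z$ from the invariant together with Claim~\ref{claim:convert-x-v}, and (ii) the two-sided comparison $(1-\eps)^2(1-2C\eps)|\vhx[i]|\le|\vhx[j]|\le|\vhx[i]|$ for $j\in X\setminus Z$, $i\in Z\setminus X$, obtained by combining the $\ell_\infty$ part of \eqref{eq:peeler-condition} with the definition of $X$. Where you deviate is in the packaging: the paper splits $\|\vhx-\vhz\|_2^2$ over the four-way partition $(Z\setminus X)\cup(X\setminus Z)\cup(Z\cap X)\cup(Z\cup X)^c$ and uses a term-by-term ``swap'' of the two size-$m$ sums (fixing a bijection between $Z\setminus X$ and $X\setminus Z$), whereas you split simply over $Z\cup Z^c$, reduce the off-support comparison to the identity $\|\vhx_{Z^c}\|_2^2-\|\vhx_{X^c}\|_2^2=\|\vhx_{X\setminus Z}\|_2^2-\|\vhx_{Z\setminus X}\|_2^2$, and then bound this difference by a uniform ``all magnitudes lie in $[(1-O(C\eps))b,b]$'' argument. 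Both routes land on $1/\bigl((1-\eps)^4(1-2C\eps)^2\bigr)\le 1+5C\eps$ in the end; your version is arguably a bit more transparent in separating ``approximation error on $Z$'' from ``$Z$ captures almost as much energy as the optimal set $X$,'' but the underlying estimates are the same. One small polish: for the bound you actually only need the upper bound $|\vhx[i]|\le b$ on $X\setminus Z$ and the lower bound $|\vhx[j]|\ge(1-O(C\eps))b$ on $Z\setminus X$, so the stronger ``common interval'' claim, while true, is more than is required.
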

\begin{proof}
We have
\begin{align}
\| \vhx - \vhz \|_2^2 &= \left(\sum_{i \in Z \setminus X} |\vhx[i] - \vhz[i]|^2 + \sum_{i \in X \setminus Z} |\vhx[i] - \vhz[i]|^2  \right) + \left( \sum_{i \in Z \cap X} |\vhx[i] - \vhz[i]|^2 + \sum_{i \in (Z\cup X)^c} |\vhx[i] - \vhz[i]|^2 \right)  \notag \\
&\leq  \left(\sum_{i \in Z \setminus X} 4\epsilon^2(C+1)^2 |\vhx[i]|^2 + \sum_{i \in X \setminus Z} |\vhx[i]|^2  \right) + \left( \sum_{i \in Z \cap X} 4\epsilon^2(C+1)^2 |\vhx[i]|^2 + \sum_{i \in (Z\cup X)^c} |\vhx[i]|^2 \right),
\label{eq:boundell2}
\end{align}
where we have used the guarantee from \textsc{Peeler} \eqref{eq:peeler-condition} as well as \eqref{eq:x-decrease}.
Now suppose that $i \in Z \setminus X$ and $j \in X \setminus Z$.
By definition of $X$, $\abs{\vhx[i]} \le \abs{\vhx[j]}$.
On the other hand, 
we have
\[ |\vhx[i]| \ge (1-\epsilon)|\vhv[i]| \geq (1-\epsilon)(1 - 2C\eps)|\vhv[j]| \ge (1-\epsilon)^2(1-2C\epsilon)|\vhx[j]|, \]
where the first and the last inequality follows from Claim~\ref{claim:convert-x-v} and the second inequality follows from~\eqref{eq:peeler-condition} and
since $\norm{\infty}{\vhv-\vhz} \ge \abs{\vhv[j]-\vhz[j]} = \abs{\vhv[j]}$.
Since $|Z \setminus X| = |X\setminus Z|$,
 the first term in~\eqref{eq:boundell2} can be bounded by
\begin{align*}
  \sum_{i \in Z \setminus X} 4\epsilon^2(C+1)^2|\vhx[i]|^2 + \sum_{j \in X \setminus Z} |\vhx[j]|^2 
   &\leq \sum_{j \in X \setminus Z} 4\epsilon^2(C+1)^2|\vhx[j]|^2 + \sum_{i \in Z \setminus X} \frac{|\vhx[i]|^2}{(1-\epsilon)^4(1-2C\epsilon)^2}
  \\&\leq 4\epsilon^2(C+1)^2 \|\vhx_{X \setminus Z} \|_2^2 + (1 + 5C\eps) \|\vhx_{Z \setminus X} \|_2^2
\end{align*}
using the fact that $C$ is a sufficiently large absolute constant, and that $\eps$ is sufficiently small compared to $C$.
Adding in the second term in \eqref{eq:boundell2}, we have
\begin{align*}
\| \vhx - \vhz\|_2^2 &\leq \left( 4\epsilon^2(C+1)^2 \|\vhx_{X \setminus Z} \|_2^2 + (1 + 5C\eps) \|\vhx_{Z \setminus X} \|_2^2 \right) + \left( 4\epsilon^2(C+1)^2 \|\vhx_{Z \cap X}\|_2^2 + \| \vhx_{(Z \cup X)^c} \|_2^2 \right) \\
&\leq 4\epsilon^2(C+1)^2 \|\vhx_X\|_2^2 + (1 + 5C\eps) \|\vhx_{X^c}\|_2^2,
\end{align*}
as desired.

\end{proof}

Suppose that \textsc{Peeler} never returns $\texttt{stop}=\true$ in \textsc{Recover}.
In this case, the final output $\vhz$ has support of size $k$ so $X=\supp(\vhx)$ in Claim~\ref{claim:boundell2} and we get
\[
  \norm{2}{\vhx-\vhz}^2 \le 4\epsilon^2(C+1)^2 \norm{2}{\vhx}^2,
\]
which immediately establishes the first error bound in the theorem.

On the other hand, suppose some call to \textsc{Peeler} returns $\texttt{stop}=\true$.
Let $\vhz$ denote the output of \textsc{Peeler} on this final step, $s = \|\vhz\|_0$, and
let $X$ denote the support of the $s$ largest entries of $\vhx$.
Let $h$ be the index of the $(s+1)$'th largest entry in $\vhx$, and let $h'$ be the largest entry in $\vhx_{\supp(\vtz)^c}$. Note that $h' \in \supp(\vhx)$ since the algorithm stopped before hitting all $k$ heavy hitters.
Now
\begin{align*}
  \|\vhx_{X^c}\|_2 &\le \sqrt{k}\|\vhx_{X^c}\|_\infty = \sqrt{k} |\vhx[h]| \le \sqrt{k} |\vhx[h']| \\
                    &\le \sqrt{k} (1+\epsilon)|\vhv[h']| \\
                    &= \sqrt{k} (1+\epsilon)|\vhv[h']-\vtz[h']| \\
                    &\leq \sqrt{k} (1+\epsilon) \| \vhv - \vtz\|_\infty \\
                    &\leq \epsilon(1+\epsilon) \| \vhv \|_\infty \\
                    &\leq \epsilon(1+\epsilon)(1+\epsilon) \| \vhx \|_\infty \leq \epsilon(1+\epsilon)(1+\epsilon) \| \vhx \|_2
\end{align*}
In the above, the second line we have used Claim~\ref{claim:convert-x-v}.
In the second to last line we have used the guarantee from Lemma~\ref{lem:redux-work-horse}, and in the last line we use the fact that $\arg\max_i |\vhv[i]| \in \supp(\vhx)$ which follows from the assumption on $\vhw$ and Claim~\ref{claim:convert-x-v}.

Thus by Claim~\ref{claim:boundell2} we have
\begin{align*}
  \| \vhx - \vhz\|_2^2 
  &\leq 4\epsilon^2(C+1)^2 \|\vhx_X\|_2^2 + (1 + 5C\eps) \|\vhx_{X^c}\|_2^2,
  \\&\leq 4\epsilon^2(C+1)^2 \|\vhx\|_2^2 + (1 + 5C\eps)\epsilon^2(1+\epsilon)^2(1+\epsilon)^2 \|\vhx\|_2^2,
  \\&\leq 9C^2\epsilon^2\|\vhx\|_2^2
\end{align*}
provided that $C$ is sufficiently large.
This establishes the first error bound~\eqref{eq:redux-main-guarantee-1} in the theorem statement.
Now the second error bound~\eqref{eq:redux-main-guarantee-2} follows from the triangle inequality.

\paragraph{Query and Time complexity of the Recover algorithm.}

Finally, we compute the query complexity and running time.  
With our choices of $T_0,T_1$ and $T_2$, it can be checked that
 \textsc{Peeler} uses query complexity 
\[ \poly\left(\frac{k \log(1/\mu) }{\gamma \delta C_0}\right) \cdot (NU^2 + Q(N, 6\delta/C, \delta, \mu C_0 \gamma^2/k^2)) \]
and has running time
\[ \poly\left(\frac{k \log(1/\mu)}{\gamma \delta C_0}\right) \cdot (NU^2 + T(N, 6\delta/C, \delta, \mu C_0 \gamma^2/k^2)). \]
Since
the work of \textsc{Recover} is dominated by running \textsc{Peeler} at most $k$ times,
the same query complexity and running time bounds hold for \textsc{Recover} (since the extra factor of $k$ can be absorbed into the $\poly(k)$ factor).
\end{proof}

\subsection{Preliminaries to the proof of Lemma~\ref{lem:redux-work-horse}}\label{sec:peeler-prelims}
In this section we present a few results which will be useful in the proof of Lemma~\ref{lem:redux-work-horse}.
We begin by showing that $(C_0,C_1,\gamma_0)$-dense orthogonal polynomial families (see Definition~\ref{def:roots-spread}) and $(k, C_1 \gamma)$-sparsely separated vectors (see Definition~\ref{def:sparse-separated}) work well together in the following sense.

\blmm
\label{lem:isolation}
Let $H$ be the support of any $(k,C_1\gamma)$-sparsely separated vector $\vhx$ where $\gamma \ge 2\gamma_0/N$.
Let $h\in H$. Pick $0\le \ell<N$ uniformly at random. Then
\begin{itemize}
\item[(i)]\label{item:isolation-1} With probability at least $C_0\gamma/2$, we have
\begin{equation}
\label{eq:heaviest}
\theta_h\in \brackets{\theta_{\ell}-\frac{\gamma}{4},\theta_{\ell}+\frac{\gamma}{4}},
\end{equation}
and
\item[(ii)]\label{item:isolation-2} Conditioned on~\eqref{eq:heaviest}, the following holds with probability $1$:
\begin{equation}
\label{eq:other-hh}
\set{\theta_i|i\in H\setminus \{h\}} \cap \brackets{\theta_{\ell}-\frac{\gamma}{2},\theta_{\ell}+\frac{\gamma}{2}}=\emptyset.
\end{equation}
\end{itemize}
\elmm
\begin{proof}
We start with part (i). 
Indeed note that for~\eqref{eq:heaviest} to occur, we must have
\[\theta_{\ell}\in \brackets{\theta_h-\frac{\gamma}{4},\theta_h+\frac{\gamma}{4}}.\]
Since the orthogonal polynomial family is $(C_0,C_1,\gamma_0)$-dense, the above happens for at least $\frac{C_0\gamma N}2$ values of $\ell$.
Since $\ell$ was chosen randomly, the above holds with probability $\frac{C_0\gamma}{2}$. Once~\eqref{eq:heaviest} holds, we show that~\eqref{eq:other-hh} follows from $\vhx$ being $(k,C_1\gamma)$-separated. Since the orthogonal polynomial family is $(C_0,C_1,\gamma_0)$-dense, the number of $\ell$ such that
\[\theta_{\ell}\in \brackets{\theta_h-\frac{\gamma}{2},\theta_h+\frac{\gamma}{2}}\]
is at most $C_1\gamma N$.
In particular, any such $\ell$ satisfies $|\ell-h| \le C_1\gamma N$.
Since $\vhx$ is $(k,C_1\gamma)$-separated none of these can be contained in $H\setminus \{h\}$, as desired.
\end{proof}

The \textsc{Peeler} algorithm relies on 
 \em boxcar polynomials, \em which we define as follows (see Figure~\ref{fig:boxcar}).
\bdefn \label{def:boxcar}
Let $d\ge 1$ be an integer and $0 < \gamma \le \frac \pi 2, \theta \in [0,\pi], \eps\ge 0$ be reals.
We say that $p(X)$ is a $(d,\eps,\theta,\gamma)$-{\em box car polynomial} if the following are true:
\begin{enumerate}
\item[(a)]\label{item:BC-a} $p(X)$ has degree $d$,
\item[(b)]\label{item:BC-b} $|p(\cos \phi)|\le \eps$ for any $\phi\in [0,\pi]\setminus[\theta-2\gamma,\theta+2\gamma]$,
\item[(c)]\label{item:BC-c} $|p(\cos \phi)-1|\le \eps$ for any $\phi\in \brackets{\theta-\gamma,\theta+\gamma}$, and
\item[(d)]\label{item:BC-d} $|p(\cos \phi)|\le 1+\eps$ for the remaining values of $\phi$.
\end{enumerate}
\edefn
\begin{figure}
\centering
\begin{tikzpicture}[yscale=3, xscale=7]

\draw (-.1,0) -- (1.1,0);
\node(a) at (0, -.3) {$0$};
\draw (a) to (0, 1.2);
\node(b) at (1, -.3) {$\pi$};
\draw (b) to (1,0);
\node (x) at (.6, -.3) {$\theta$};
\node (x1) at (.4, -.3) {$\theta - 2\gamma$};
\node (x11) at (.5, -.6) {$\theta - \gamma$};
\node (x22) at (.7, -.6) {$\theta + \gamma$};
\node (x2) at (.8, -.3) {$\theta + 2\gamma$};
\draw (x) to (.6, 0);
\draw (x1) to (.4, 0);
\draw (x11) to (.5, 0);
\draw (x2) to (.8, 0);
\draw (x22) to (.7, 0);
\draw[dashed] (x1) to (.4, 1.1);
\draw[dashed] (x11) to (.5, 1.1);
\draw[dashed] (x2) to (.8, 1.1);
\draw[dashed] (x22) to (.7, 1.1);
\node (o) at (-.1,1) {$1$};
\draw (o) to (0,1);
\draw[dashed] (o) to (1.1, 1);
\node (e) at (-.1,.9) {$1 - \eps$};
\draw[red, dashed] (e) to (1.1, .9);
\node (e2) at (-.1,1.1) {$1 + \eps$};
\draw[red, dashed] (e2) to (1.1, 1.1);
\node (ee) at (-.1, .1) {$\eps$};
\draw[red, dashed] (ee) to (1.1, .1);
\node (ee2) at (-.1, -.1) {$-\eps$};
\draw[red, dashed] (ee2) to (1.1, -.1);
\draw [cyan,ultra thick] plot [smooth, tension=.6] coordinates { (0,0) (.2,.05) (.4, 0) (.5, 1.05)  (.6,.92) (.7,.96) (.8,0) (1,0) };
\end{tikzpicture}
\caption{A (cartoon of a) $(d, \eps, \theta, \gamma)$-box car polynomial.}\label{fig:boxcar}
\end{figure}
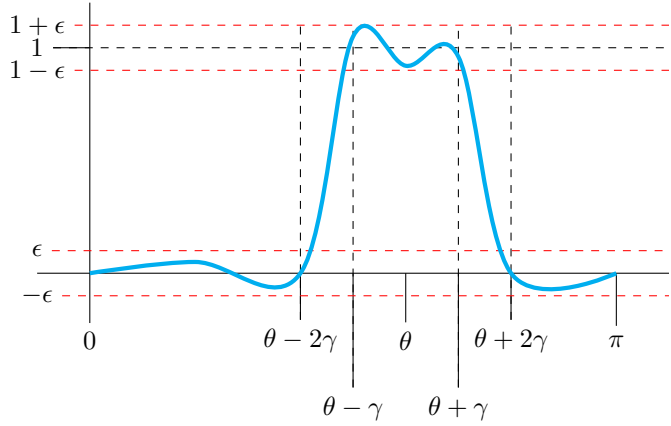
Fortunately for us, such polynomials exist with reasonably low degree:
\blmm
\label{lem:boxcar-lowdeg}
For any  $\theta \in [0, \pi]$, there exists an $\parens{O\parens{\frac{1}{\gamma\eps}},\eps,\theta,\gamma}$-boxcar polynomial.
\elmm
To prove Lemma~\ref{lem:boxcar-lowdeg},
we will need the following result from approximation theory~\cite{R81,jackson}:
\bthm[Jackson's Theorem]
\label{thm:jackson-trig}
There is a constant $C$ such that:
if $f : [0, 2\pi] \to \C$ is a periodic function with Lipschitz constant $L$, then for every $d$ there is a trigonometric polynomial $T$ of degree at most $d$ such that
\begin{equation}
  \label{eq:jackson}
  |f(x) - T(x)| \le C\frac{L}{d}.
\end{equation}
Furthermore if $f$ is even then $T$ is even.
\ethm
Recall that a trigonometric polynomial of degree $d$ has the form $a_0 + \sum_{m=1}^d a_m \cos(mx) + \sum_{m=1}^d b_m \sin(mx)$,
and it is even if all $b_m=0$.
In particular, since $\cos(mx) = T_m(\cos x)$ where $T_m$ is the $m$th Chebyshev polynomial, 
an even trigonometric polynomial $T$ of degree $d$ in $x$ is a polynomial of degree $d$ in $\cos x$,
in other words $T \circ \arccos$ is a polynomial of degree $d$.
Now we can prove Lemma~\ref{lem:boxcar-lowdeg}.

\begin{proof}[Proof of Lemma~\ref{lem:boxcar-lowdeg}]
Consider the following piecewise linear function $f$ such that
\[f(x)=\begin{cases}
\frac{x-\delta+2\gamma}{\gamma} & \text{ if }\delta-2\gamma \le x \le \delta-\gamma\\
1 & \text{ if } \delta-\gamma \le x\le \delta+\gamma\\
1-\frac{x-\delta-\gamma}{\gamma} & \text{ if }\delta+\gamma \le x \le \delta+2\gamma\\
0& \text{ otherwise}.
\end{cases}
\]
and such that $f$ on $[\pi,2\pi]$ is reflected from $[0,\pi]$ so that $f$ is even.
Clearly $f$ has Lipschitz constant $\frac{1}{\gamma}$.
Let $d$ satisfy $\frac{C}{d}\frac{1}{\gamma} \le \epsilon$ and $T$ be a trigonometric polynomial of degree $d$ satisfying Theorem~\ref{thm:jackson-trig}.
Then~\eqref{eq:jackson} directly implies conditions (b), (c) and (d) for the function $p = T \circ \arccos$, which is a polynomial of degree $d$ as noted before.
\end{proof}

\subsection{Proof of Lemma~\ref{lem:redux-work-horse}}\label{sec:peelerpf}
In this section we prove Lemma~\ref{lem:redux-work-horse}.
Fix a large enough $N$.  For the rest of this section, for notational convenience we will use $\vF=\vF_N$.

We begin by describing the algorithm \textsc{Peeler}, which requires setting up some notation.

For every $0\le \ell <N$, let $b_{\ell}(X)$ be an
$\parens{d,\frac{\eps}{\sqrt{k}},\theta_{\ell},\frac{\gamma}{4}}$-boxcar polynomial, for some $d$ which we will choose to be
\[d=O\parens{\frac{\sqrt{k}}{\eps\gamma}}.\]
Define $\vb_\ell=\parens{b_\ell(\evalpts_1),\dots,b_\ell(\evalpts_{N-1})}$.

Given any $\tau\ge 0$, we define the following `truncation' function that is defined for any $x\in\R$:
\[\trunc{\tau}{x}=\begin{cases}
\sign{x}\cdot \tau&\text{ if } \abs{x}>\tau\\
x &\text{ otherwise}.
\end{cases}
\]

For notational convenience, we will let  $\vr = \vv - \vz$ and $\vhr = \vhv - \vhz$ be the residual vector.

Given this notation,
the algorithm \textsc{Peeler} to compute $\vtz$ is presented in Algorithm~\ref{alg:peel}, and is illustrated in Figure~\ref{fig:peel}. 
Next, we go through the various components of Algorithm~\ref{alg:peel} and prove that they work as intended.

\begin{figure}
\centering
\begin{tikzpicture}[xscale=4,yscale=2]
\begin{scope}

\draw (-.1,0) -- (1.1,0);
\node(a) at (0, -.3) {$0$};
\draw (a) to (0, 1.2);
\node(b) at (1, -.3) {$1$};
\draw (b) to (1,0);
\foreach \i in {.3,.5,.8}{
	\node[draw,ultra thick, red, fill=red, circle, scale=0.3](c) at (\i,1 + .2*\i) {};
	\draw[red,ultra thick] (c) to (\i, 0);
}
\foreach \i in {.1, .2, .4, .6, .7,  .9}
{
	\pgfmathparse{round(mod(\i*10,5)) /15  }
       \let\j\pgfmathresult
	\node[draw, thick, red, circle, scale=0.3] (c) at (\i, \j) {};
	\draw[red,thick] (c) to (\i, 0);
}
\node at (-.3,.5) {$\vhx -\vhz$};
\end{scope}
\begin{scope}[yshift=-2cm]

\draw (-.1,0) -- (1.1,0);
\node(a) at (0, -.3) {$0$};
\draw (a) to (0, 1.2);
\node(b) at (1, -.3) {$1$};
\draw (b) to (1,0);
\node (x) at (.6, -.3) {$\alpha_\ell$};
\node (x1) at (.4, 0) {};
\node (x11) at (.5, 0) {};
\node (x22) at (.7, 0) {};
\node (x2) at (.8, 0) {};
\draw (x) to (.6, 0);
\draw (x1) to (.4, 0);
\draw (x11) to (.5, 0);
\draw (x2) to (.8, 0);
\draw (x22) to (.7, 0);
\draw[dashed] (x1) to (.4, 1.1);
\draw[dashed] (x11) to (.5, 1.1);
\draw[dashed] (x2) to (.8, 1.1);
\draw[dashed] (x22) to (.7, 1.1);
\node (o) at (-.1,1) {$1$};
\draw (o) to (0,1);
\draw[dashed] (o) to (1.1, 1);
\node (e) at (0,.9) {};
\draw[red, dashed] (e) to (1.1, .9);
\node (e2) at (0,1.1) {};
\draw[red, dashed] (e2) to (1.1, 1.1);
\node (ee) at (0, .1) {};
\draw[red, dashed] (ee) to (1.1, .1);
\node (ee2) at (0, -.1) {};
\draw[red, dashed] (ee2) to (1.1, -.1);
\draw [cyan,ultra thick] plot [smooth, tension=.6] coordinates { (0,0) (.2,.05) (.4, 0) (.5, 1.05)  (.6,.92) (.7,.96) (.8,0) (1,0) };
\node at (-.3,.5) {$\vb_{\ell}$};
\end{scope}
\begin{scope}[yshift=-4cm]

\draw (-.1,0) -- (1.1,0);
\node(a) at (0, -.3) {$0$};
\draw (a) to (0, 1.2);
\node(b) at (1, -.3) {$1$};
\draw (b) to (1,0);
\foreach \i in {.5}{
	\node[draw,ultra thick, violet, fill=violet, circle, scale=0.3](c) at (\i,1) {};
	\draw[violet,ultra thick] (c) to (\i, 0);
}
\node(d) at (-.1, 1) {$y_\ell$};
\draw[dashed] (d) to (c);
\foreach \i in {.4, .6}
{
	\pgfmathparse{round(mod(\i*10,4)) /15  }
       \let\j\pgfmathresult
	\node[draw, thick, violet, circle, scale=0.3] (c) at (\i, \j) {};
	\draw[violet,thick] (c) to (\i, 0);
}
\foreach \i in {.1, .2, .3, .7, .8, .9}
{
	\pgfmathparse{round(mod(\i*10,5)) /40  }
       \let\j\pgfmathresult
	\node[draw, thick, violet, circle, scale=0.3] (c) at (\i, \j) {};
	\draw[violet,thick] (c) to (\i, 0);
}

\node(d) at (0.5, -.3) {$\alpha_h$};
\draw(d) to (0.5, 0);
\node at (-.3,.5) {$\vhy_\ell = \mathbf{D}_{\vb_{\ell}} (\vhx-\vhz)$};
\end{scope}
\draw[->] (1.2,.5) to node[above]{$\vF^{-1}$} (1.8,.5);
\begin{scope}[xshift=2cm]

\draw (-.1,0) -- (1.1,0);
\node(a) at (0, -.3) {$0$};
\draw (a) to (0, 1.2);
\node(b) at (1, -.3) {$1$};
\draw (b) to (1,0);
\foreach \i in {.1, .2, .3, .4, .5, .6, .7, .8,  .9}
{
	\pgfmathparse{.3 + round(mod(\i*100,7)) /20  }
       \let\j\pgfmathresult
	\node[draw, thick, red, circle, scale=0.3] (c) at (\i, \j) {};
	\draw[red,thick] (c) to (\i, 0);
}
\node at (1,1) {$\vF^{-1}(\vhx -\vhz) = \vx - \vz$};
\end{scope}
\begin{scope}[xshift=2cm, yshift=-4cm]

\draw (-.1,0) -- (1.1,0);
\node(a) at (0, -.3) {$0$};
\draw (a) to (0, 1.2);
\node(b) at (1, -.3) {$1$};
\draw (b) to (1,0);
\foreach \i in {.1, .2, .3, .4, .5, .6, .7, .8,  .9}
{
	\pgfmathparse{.2 + round(mod(\i*67,5)) /19  }
       \let\j\pgfmathresult
	\node[draw, thick, violet, circle, scale=0.3] (c) at (\i, \j) {};
	\draw[violet,thick] (c) to (\i, 0);
}
\node at (1.3,.5) {$\vF^{-1} \vhy_\ell$};
\end{scope}
\draw[->] (1.2,-3.5) to node[above]{$\vF^{-1}$} (1.8,-3.5);
\node[draw,circle,blue](e) at (2.5,-3.6) {};
\draw[blue,->] (e) to[out=90,in=-90] (2.6, -.1);
\draw[blue,->] (e) to[out=90,in=-90] (2.1, -.1);
\draw[blue,->] (e) to[out=90,in=-90] (2.9, -.1);
\draw[blue] (2.3, -3) to[out=45,in=135] (2.7, -3);
\node[blue] at (2.73,-3.02) {$d$};
\end{tikzpicture}
\caption{The intuition behind the \textsc{Peeler} algorithm (Alg.~\ref{alg:peel}).
The boxcar polynomial $\vb_\ell$ (hopefully) isolates a single spike in $\vhx - \vhz$, resulting in $\vhy_\ell \approx y_\ell \ve_h$.
Each entry of $\vF^{-1} \vhy$ can be queried by querying $O(d)$ entries of $\vx - \vz$, using the \textsc{SimulateQueryAccess} algorithm (Alg.~\ref{alg:simulate}).  Using this access, we can use the 1-sparse recovery algorithm on $\vF^{-1} \vhy$ to recover $y_\ell$ and $\mathbf{e_h}$.  Then we add this spike to $\vhz$ and continue.}\label{fig:peel}
\end{figure}
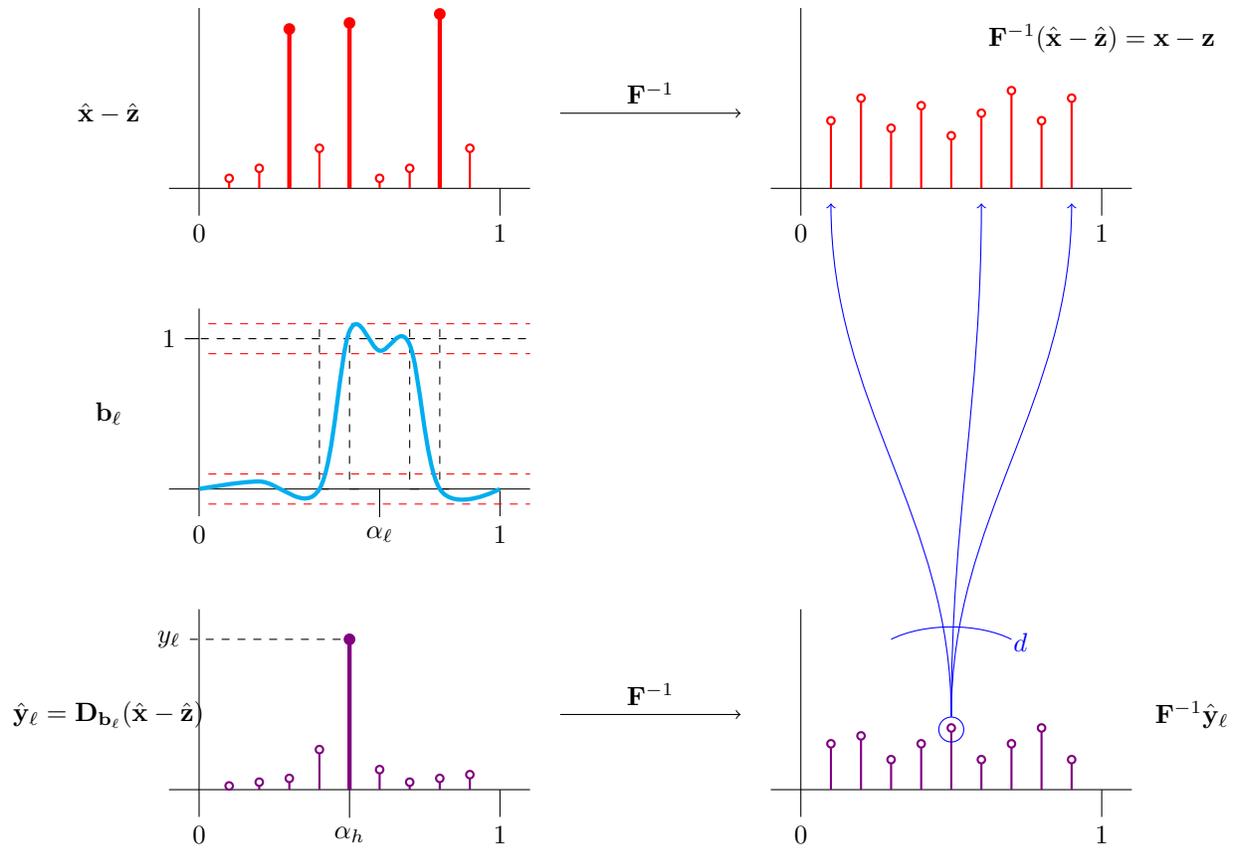

\begin{algorithm}
\caption{$\textsc{Peeler}^{(\vv)}(\vhz, k, \epsilon, \mu,\gamma)$}\label{alg:peel}
\begin{algorithmic}[1]
\Require Query access to $\vv = \vx + \vw = \vF^{-1}\vhx + \vF^{-1}\vhw$ and full access to $\vhz$ so that, for every $i\in\supp(\vhz)$,~\eqref{eq:peeler-condition} is satisfied. Additionally takes as input parameters $k$, $\eps$, $\mu$ and $\gamma$.
\Ensure A vector $\vtz$ and a Boolean value \texttt{stop}.  
\Statex

\State Choose parameters as follows (the constants in the $\Theta(\cdot)$ bounds will be implicitly defined in the proofs).

\State $\mu_0 \gets \Theta\parens{ \frac{ \mu^2 C_0^2 \gamma^2 }{k^2} }$
\State $T_0 \gets \Theta\parens{\frac{\log(1/\mu_0)}{C_0 \gamma}}$
\State $T_1 \gets \Theta\parens{\frac{NU^2 \log(2/\mu_0)}{\eps^2}}$
\Comment{In fact $T_1$ will get set again (to be the same thing) in \textsc{Verify}, but we list it here as well for the reader's convenience.}
\State $T_2 \gets \Theta( k \log_{1/\eps}(k/\eps) )$

\Comment{Notice that $\mu_0 \leq \frac{\mu}{2T_0T_2}$.}

\State $\mathcal{L}\gets \emptyset$
\For{$i=1\dots,T_2$} \label{alg:peeler:outer}
\For{$t=1\dots,T_0$} \label{alg:peeler:inner}
    \State Pick $0\le \ell <N$ uniformly at random. \label{alg:peeler:ell}
    \State Define $\vhy_{\ell}=\vD_{\vb_{\ell}}(\vhx + \vhw -\vhz)$ where $\vb_\ell[i] = b_\ell\parens{\evalpts_i}$.

\Comment{Recall that $b_\ell$ is the boxcar polynomial centered at $\theta_\ell$}

    \State Use \textsc{SimulateQueryAccess} (Alg.~\ref{alg:simulate}) to simulate query access to $\vy_{\ell}=\vF^{-1} \cdot \vhy_\ell$
    \State Let $(y_{\ell},h)$ be the output of the $\parens{N,6\eps,\cnst\eps,\frac {\mu_0}{2}}$ $1$-sparse recovery algorithm $\mathcal{A}$ with query access to $\vy_{\ell}=\vF^{-1}\cdot \vhy_{\ell}$  \label{alg:peeler:1-sparse} 

\Comment{Notice that this defines $h \in [N]$}

    \If{$\theta_h\in \brackets{\theta_{\ell}-\frac\gamma4,\theta_{\ell}+\frac\gamma4}$} \label{alg:peeler:range}
	\If {$\verify^{(\vy_{\ell})}(y_{\ell},h,\mu_0/2)$} \label{alg:peeler:verify} \Comment{Use Alg.~\ref{alg:simulate} to simulate query access to $\vy_{\ell}$}
            \State Add $\parens{y_{\ell},h}$ to  $\mathcal{L}$. \label{alg:peeler:list}
        \EndIf
    \EndIf
\EndFor
\State Let $(\bar{v},\bar{h})\in\mathcal{L}$ be the pair in $\mathcal{L}$ with the largest absolute value of the `$v$' component. \label{alg:peeler:final-vh}
\If{$\bar{h} \not\in\supp(\vhz)$}
    \State \Return{$\vhz+\bar{v}\cdot\ve_{\bar{h}}$, $\texttt{stop}=\false$ }
\Else
\State $\vhz \gets \vhz + \bar{v} \cdot \ve_{\bar{h}}$ \label{alg:peeler:update-z}
\EndIf
\EndFor
\State \Return{ $\vhz$, $\texttt{stop}=\true$ }
\end{algorithmic}
\end{algorithm}

\begin{algorithm}
\caption{\textsc{Preprocessing} (Done once before \textsc{Recover} (Alg.~\ref{alg:recover}) is ever run, and used in \textsc{SimulateQueryAccess} (Alg.~\ref{alg:simulate})).}\label{alg:preprocessing}
\begin{algorithmic}
\Require{A parameter $s$ and a description of an orthogonal polynomial family $p_1, \ldots, p_N$}

\Comment{We should choose $s = d$, where $d = O(\sqrt{k}/\gamma \eps)$ is as defined in the proof of Lemma~\ref{lem:redux-work-horse}.}

\Ensure{Matrices $\vM_0, \ldots, \vM_s$ which will be stored}

\State{Compute the roots $\lambda_1,\ldots, \lambda_N$ of $p_N$ and store them in a data structure as described in Corollary~\ref{cor:jac-roots}.}

\For {$r = 0, \ldots, s+1$}
	\State Let $\vM_r = \vP_N^T \vD_\vw \parens{\vD_{ \mathbf{\lambda}}}^r \vP_N$. 

	\Comment{Here, $\vD_{\mathbf{\lambda}}$ is the diagonal matrix with the evaluation points $\lambda_1, \ldots, \lambda_N$ on the diagonal.}

	\Comment{This can be done in time $O(N^2 \log(N))$ using a fast multiplication algorithm for orthogonal polynomial transforms~\cite{driscoll}.}

\EndFor
\end{algorithmic}
\end{algorithm}
\begin{algorithm}
\caption{$\textsc{SimulateQueryAccess}^{(\vx + \vw)}(\vhz, b_\ell, j)$}\label{alg:simulate}
\begin{algorithmic}
\Require Query access to $\vF^{-1}(\vhx + \vhw) $, the polynomial $b_\ell(X) = \sum_{r=0}^d b_{\ell, r} X^t$, and an index $j$
\Ensure $(\vF^{-1}\vhy_\ell)[j]$, where $\vhy_\ell$ is as in Algorithm~\ref{alg:peel}
\Statex

\For { $i$ so that $|j-i| \leq d$ }
	\State Compute $\nu_i = \sum_{r=0}^d b_{\ell, r} \vM_r[j,i]$
	\Comment{ $\vM_s$ was computed in \textsc{Preprocessing} (Alg.~\ref{alg:preprocessing}) }

	\State Compute $\vz[i] = (\vF^{-1} \vhz)[i]$
	\Comment{ Takes time $O(\|\vhz\|_0) = O(k)$ }
\EndFor

\State \Return{ $\sum_{i : |j-i| \leq d} \nu_i ( \vx[i] + \vw[i] - \vz[i] )$ }
\end{algorithmic}
\end{algorithm}

\begin{algorithm}
\caption{$\verify^{(\vy)}(v,h,\mu)$}\label{alg:verify}
\begin{algorithmic}
\Require $v\in\R$, $h\in [N]$, and query access to $\vy=\vF^{-1}\vhy$
\Ensure $\true$ if $v\cdot \ve_h$ is the only `spike' in $\vhy$ with failure probability $\mu$.

\Statex
\State $T_1 \gets \Theta\left( \frac{NU^2\log(1/\mu)}{\epsilon^2} \right)$
\State Choose $\Omega\subseteq [N]$ of size $T_1$ by sampling elements of $[N]$ uniformly at random with replacement.
\State $g\gets \frac N{T_1} \sum_{j\in\Omega}\parens{\trunc{100|v|U}{\vF^{-1}\parens{\vhy - v\ve_h}[j]}}^2$
\Comment{Estimate $\|\vhy-v\cdot \ve_h\|_2^2$}
\If{$g>\frac {v^2}{1000}$}
\State \Return{\false}
\Else \State \Return{\true}
\EndIf
\end{algorithmic}
\end{algorithm}

\subsubsection{Simulating Query Access}
We begin with the analysis of \textsc{SimulateQueryAccess} (Alg.~\ref{alg:simulate}), which allows \textsc{Peeler} (Alg.~\ref{alg:peel}) to simulate query access to $\vF^{-1} \vhy_\ell$. In \textsc{Peeler}, this is needed both to run the one-sparse recovery algorithm, and to run the $\verify$ algorithm.

We first prove a general property about low-degree polynomials:
\blmm
\label{lem:row-sparse}
Let $b(X)$ be any polynomial of degree $d$.
Then for all $0 \leq j < N$, the only values $0 \leq i < N$ so that
\begin{equation}
\label{eq:orth-poly-int-new}
\sum_{t=0}^{N-1} p_i(\lambda_{t}) p_j(\lambda_{t}) b(\lambda_{t})w_{t}\neq0
\end{equation}
are those so that $|i - j |\leq d$.
In particular, if we define $\vb=\parens{b(\lambda_0),\dots,b(\lambda_{N-1})}$, then each row $j$ of the matrix $\vF^T\vD_{\vb}\vF$ has at most $O(d)$ non-zero values in it, at positions $i$ so that $|i - j| \leq d$.
\elmm
\begin{proof}
Since the orthogonal polynomials $\{p_0, p_1, p_2, \ldots, p_r\}$ form a basis for polynomials of degree at most $r$ (this follows since $\deg(p_i)=i$), it follows from orthogonality conditions that for any $r$ and for any polynomial $f(X)$ of degree strictly less than $r$, we have
\[ \int_{-1}^1 p_{r}(X)f(X)w(X)dX=0.\]
Then the above implies that
\[\int_{-1}^1 p_i(X)p_j(X)b(X)w(X)dX\]
is zero whenever $|i-j|>d$. Further,~\eqref{eq:quadrature} implies that we have
\[\sum_{t=0}^{N-1} p_i(\lambda_{t}) p_j(\lambda_{t}) b(\lambda_{t})w_{t} = \int_{-1}^1 p_i(X)p_j(X)b(X)w(X)dX\]
whenever $i+j+d\le 2N-1$.
Thus, \eqref{eq:orth-poly-int-new} only holds if either $|i - j| \leq d$ or if $i + j + d \geq 2N$.

Because $i,j \in \{0,\ldots,N-1\}$, the only way that $i + j + d \geq 2N$ is if $|i - j| \leq d - 2$, which is already covered by the above.
\end{proof}

With this fact out of the way, we observe that \textsc{SimulateQueryAccess} works as intended.
\begin{prop} Each call to \textsc{SimulateQueryAccess} (Alg.~\ref{alg:simulate}) uses $O(d)$ queries to $\vx+\vw = \vF^{-1}(\vhx+\vhw)$, runs in time $O(kd)$, and returns $\vhy_\ell[j]$.
\end{prop}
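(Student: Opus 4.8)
The plan is to unwind the definitions, identify the coefficients $\nu_i$ built inside \textsc{SimulateQueryAccess} with the entries of the matrix $\vF^T\vD_{\vb_\ell}\vF$, and then read off both correctness and the query bound from the row-sparsity already established in Lemma~\ref{lem:row-sparse}. First I would set $\vz := \vF^{-1}\vhz$ and use that the queryable vector is $\vx+\vw = \vF^{-1}(\vhx+\vhw)$, so $\vhx+\vhw-\vhz = \vF(\vx+\vw-\vz)$. Since $\vhy_\ell = \vD_{\vb_\ell}(\vhx+\vhw-\vhz)$ and $\vF^{-1}=\vF^T$,
\[ (\vF^{-1}\vhy_\ell)[j] \;=\; \bigl(\vF^T\vD_{\vb_\ell}\vF\,(\vx+\vw-\vz)\bigr)[j] \;=\; \sum_{i=0}^{N-1}\vB[j,i]\,\bigl(\vx[i]+\vw[i]-\vz[i]\bigr), \qquad \vB := \vF^T\vD_{\vb_\ell}\vF. \]

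Next I would check that $\vB[j,i]$ is exactly the quantity $\nu_i$ that the algorithm forms. Using $\vF = \vD_\vw^{1/2}\vP_N$ (Definition~\ref{def:OP}) and that diagonal matrices commute (so $\vD_\vw^{1/2}\vD_{\vb_\ell}\vD_\vw^{1/2}=\vD_\vw\vD_{\vb_\ell}$), we get $\vB = \vP_N^T\vD_\vw\vD_{\vb_\ell}\vP_N$, hence $\vB[j,i] = \sum_{t=0}^{N-1} p_j(\lambda_t)\,w_t\,b_\ell(\lambda_t)\,p_i(\lambda_t)$. Expanding $b_\ell(X)=\sum_{r=0}^d b_{\ell,r}X^r$ and swapping the two sums,
\[ \vB[j,i] \;=\; \sum_{r=0}^d b_{\ell,r}\sum_{t=0}^{N-1} p_j(\lambda_t)\,w_t\,\lambda_t^r\,p_i(\lambda_t) \;=\; \sum_{r=0}^d b_{\ell,r}\,\vM_r[j,i] \;=\; \nu_i, \]
where $\vM_r = \vP_N^T\vD_\vw(\vD_{\mathbf{\lambda}})^r\vP_N$ is precisely the matrix stored by \textsc{Preprocessing} (Algorithm~\ref{alg:preprocessing}). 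Since $\deg b_\ell = d$, Lemma~\ref{lem:row-sparse} gives $\vB[j,i]=0$ whenever $|i-j|>d$, so only the $O(d)$ terms with $|i-j|\le d$ survive in the first display --- exactly the terms the algorithm sums. Hence the returned value equals $(\vF^{-1}\vhy_\ell)[j]$, which is the correctness claim.

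For the query complexity: the algorithm touches the vector $\vx+\vw = \vF^{-1}(\vhx+\vhw)$ only at the $O(d)$ coordinates $i$ with $|i-j|\le d$, so it makes $O(d)$ queries. For the running time: the loop runs $O(d)$ iterations; each forms $\nu_i$ as a length-$(d+1)$ inner product against the stored entries $\vM_0[j,i],\dots,\vM_d[j,i]$, and forms $\vz[i] = (\vF^T\vhz)[i] = \sum_{m\in\supp(\vhz)}\vF[m,i]\,\vhz[m]$, a sum over fewer than $k$ terms, each entry of $\vF$ and each arithmetic operation costing $O(1)$ under the precision/preprocessing conventions; a final $O(d)$-term sum combines these. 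This gives $O(kd)$ per call (one may tabulate the row-sparse matrices $\vF^T\vD_{\vb_\ell}\vF = \sum_r b_{\ell,r}\vM_r$ during the polynomial-time preprocessing so that each $\nu_i$ is an $O(1)$ lookup; otherwise the verbatim algorithm costs $O(d(d+k))$).

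I do not expect a real obstacle here: the only substantive step is the algebraic identification $\nu_i = (\vF^T\vD_{\vb_\ell}\vF)[j,i]$, which uses Definition~\ref{def:OP}, the quadrature identity~\eqref{eq:quadrature}, and commutativity of diagonal matrices; together with the already-proved Lemma~\ref{lem:row-sparse} the remainder is index bookkeeping and a per-iteration cost count.
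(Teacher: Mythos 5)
Your proof is correct and follows essentially the same route as the paper's: identify $(\vF^{-1}\vhy_\ell)[j]$ with a row of $\vF^T\vD_{\vb_\ell}\vF$ applied to $\vx+\vw-\vz$, match $\nu_i$ to $(\vF^T\vD_{\vb_\ell}\vF)[j,i]$ via the expansion of $b_\ell$ against the precomputed $\vM_r$, and invoke Lemma~\ref{lem:row-sparse} for the $O(d)$ sparsity. Your parenthetical observation is a genuinely useful catch that the paper glosses over: as written, forming each $\nu_i$ by the $O(d)$-term sum $\sum_r b_{\ell,r}\vM_r[j,i]$ over $O(d)$ indices $i$ costs $O(d^2)$, so the verbatim loop runs in $O(d^2+kd)$ rather than the claimed $O(kd)$; your fix of tabulating the $N$ row-sparse matrices $\vF^T\vD_{\vb_\ell}\vF$ during the $\poly(N)$ preprocessing phase restores the $O(kd)$ bound.
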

\begin{proof}
As in Algorithm~\ref{alg:peel}, define
\[ \vhy_\ell = \vD_{\vb_\ell} (\vhx + \vhw - \vhz). \]
Thus,
\begin{align}
\label{eq:s1-new}
\vF^{-1}\vhy_{\ell}
& = \vF^{T}\cdot \vD_{\vb_{\ell}}\cdot \vF(\vx+\vw - \vz),
\end{align}
where $\vz = \vF^{-1} \vhz$ and in above we have used the definition of $\vhy_\ell$ and the fact that $\vF^{-1} = \vF^T$.

Now we have
\begin{equation}\label{eq:eta}
 (\vF^T \vD_{\vb_\ell} \vF)[j,i] =
\sum_{t=0}^{N-1} p_i(\lambda_{t}) p_j(\lambda_{t}) b_\ell(\lambda_{t})w_{t}
\end{equation}
which by
Lemma~\ref{lem:row-sparse} is only nonzero if $|i -j | \leq d$, since $d$ is the degree of $b_\ell$.
Further, we can expand $b_\ell(\lambda_t) = \sum_{r=0}^d b_{\ell,r} \lambda_t^r$ and observe that when \eqref{eq:eta} is nonzero, then it is equal to $\sum_{r=0}^d b_{r,j} \vM_r[j,i]$.  Thus, we have
\begin{align*}
(\vF^T \vD_{\vb_\ell} \vF)[j,i] =
\begin{cases}
0 & |i-j| > d  \\
\nu_i=\sum_{r=0}^d b_{\ell, r} M_r[j,i] & |i-j| \le d
\end{cases} \\
\end{align*}
since this is exactly how we have set $\nu_i$ in Algorithm~\ref{alg:simulate}.
Therefore by \eqref{eq:s1-new} we have
\[ \vF^{-1}\vhy_\ell[j] = \sum_{i=0}^{N-1} (\vF^T \vD_{\vb_\ell} \vF)[j,i] ( \vx[i] + \vw[i] - \vz[i] ) = \sum_{i=0}^{N-1} \nu_i( \vx[i] + \vw[i] - \vz[i]),\]
which is what is returned in \textsc{SimulateQueryAccess} (Alg.~\ref{alg:simulate}).

In order to compute $\sum_i \nu_i (\vx[i] + \vw[i] - \vz[i])$ we need $O(d)$ queries to $\vx$.  We also need $O(d)$ queries to $\vz$ which we can compute directly from $\vhz$ in time $O(k)$ per query (using the fact that $\vhz$ is $k$-sparse, and that $\vz = \vF^{-1} \vhz$).  This proves the proposition.
\end{proof}

\subsubsection{Correctness of the $\textsc{Peeler}$ algorithm}
In the rest of the proof of Lemma~\ref{lem:redux-work-horse}, we argue that $\vtz=\textsc{Peeler}^{(\vv)}(\vhz,k, \eps,\mu, \gamma)$ satisfies the required properties.

Recalling that $\vhr = \vhv - \vhz$, let $h^*$ be the location of the largest magnitude entry in $\vhr$. 
To prove the correctness of the \textsc{Peeler} algorithm, we will need the following two lemmas (which we will prove later in Sections~\ref{sec:good-spike-yes} and~\ref{sec:bad-spike-no}) that reason about any iteration of the inner loop on line~\ref{alg:peeler:inner} in Algorithm~\ref{alg:peel}.
We first argue that a `good spike' (that is, a value of $h$ so that $|\vhr[h]|$ is large) will pass the $\verify$ check and hence the largest heavy hitter in $\vhr=\vhv-\vhz$ will be included in the set $\lst$:
\begin{lmm}[Good spikes get noticed]
\label{lem:good-spike-yes}
Let $\eps$ be sufficiently small compared to $\cnst$.
Let $h\in [N]$.
Suppose that $\vhr[h]$ satisfies $\abs{\vhr[h]}\ge \frac 12\cdot\|\vhr\|_{\infty}$ and
that $\vhv = \vhx + \vhw$ so that
$\norm{2}{\vhw} \le \eps\norm{\infty}{\vhr}$.
Also suppose that
$\theta_h\in \brackets{\theta_{\ell}-\frac\gamma4,\theta_{\ell}+\frac\gamma4}$ (line~\ref{alg:peeler:range}).
Then in \textsc{Peeler}, with probability $1-\mu_0$ (over $\mathcal{A}$ (line~\ref{alg:peeler:1-sparse}) and \textsc{Verify} (line~\ref{alg:peeler:verify})), $(\tilde{v},h)$ gets added to $\lst$ such that (where $v=\vhr[h]$):
\[\abs{\tilde{v}-v}\le \cnst\eps\cdot\abs{v}.\]
\end{lmm}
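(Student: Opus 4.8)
The plan is to exhibit, for the fixed $\ell$ in the statement, a decomposition $\vhy_\ell = v\,\ve_h + \vw'$ in which $v\,\ve_h$ is $1$-sparse at $h$ and $\|\vw'\|_2$ is small, so that the black-box one-sparse recovery algorithm $\calA$ returns the correct index $h$ together with an accurate estimate $\tilde v$; I then argue that $(\tilde v,h)$ survives both the range test on line~\ref{alg:peeler:range} and the $\verify$ test on line~\ref{alg:peeler:verify}, and hence is added to $\lst$. Write $v := \vhr[h]$ and $H := \supp(\vhx)$. Two preliminary facts are needed. First, $\supp(\vhz)\subseteq H$: if $i\in\supp(\vhz)\setminus H$ then $\vhv[i]=\vhw[i]$, so $|\vhv[i]|\le\|\vhw\|_2\le\eps\|\vhr\|_\infty$, whereas~\eqref{eq:peeler-condition} forces $|\vhv[i]|\ge(1-2\cnst\eps)\|\vhr\|_\infty$, impossible for $\eps$ small. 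Second, $h\in H$: otherwise $\vhx[h]=0$ and $\vhz[h]=0$ (since $h\notin H\supseteq\supp(\vhz)$), so $|\vhr[h]|=|\vhw[h]|\le\|\vhw\|_2\le\eps\|\vhr\|_\infty$, contradicting $|\vhr[h]|\ge\|\vhr\|_\infty/2$. Consequently $\supp(\vhx-\vhz)\subseteq H$ has size at most $k$, and since $\vhx-\vhz=\vhr-\vhw$,
\[ \|\vhr\|_2 \le \|\vhx-\vhz\|_2 + \|\vhw\|_2 \le \sqrt{k}\,\|\vhx-\vhz\|_\infty + \|\vhw\|_2 \le \sqrt{k}\parens{\|\vhr\|_\infty+\|\vhw\|_2}+\|\vhw\|_2 \le 2\sqrt{k}\,\|\vhr\|_\infty \le 4\sqrt{k}\,|v|, \]
using $\|\vhw\|_2\le\eps\|\vhr\|_\infty$, $\|\vhr\|_\infty\le 2|v|$, and $\eps$ small.

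Next I would bound $\vw' := \vhy_\ell - v\,\ve_h$, where $\vhy_\ell[i]=b_\ell(\lambda_i)\vhr[i]$ and $b_\ell$ is the $(d,\eps/\sqrt{k},\theta_\ell,\gamma/4)$-boxcar polynomial; thus $\vw'[h]=(b_\ell(\lambda_h)-1)v$ and $\vw'[i]=b_\ell(\lambda_i)\vhr[i]$ for $i\ne h$. Since $\theta_h\in[\theta_\ell-\gamma/4,\theta_\ell+\gamma/4]$, boxcar property (c) gives $|b_\ell(\lambda_h)-1|\le\eps/\sqrt{k}$, so $\vw'[h]^2\le(\eps^2/k)v^2\le\eps^2 v^2$. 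For $i\ne h$ I split on whether $\theta_i\in S:=[\theta_\ell-\gamma/2,\theta_\ell+\gamma/2]$. If $\theta_i\notin S$, property (b) gives $|b_\ell(\lambda_i)|\le\eps/\sqrt{k}$, hence $\sum_{\theta_i\notin S}b_\ell(\lambda_i)^2\vhr[i]^2\le(\eps^2/k)\|\vhr\|_2^2\le 16\eps^2 v^2$. If $\theta_i\in S$ and $i\ne h$, then Lemma~\ref{lem:isolation}(ii) (applicable since $\vhx$ is $(k,C_1\gamma)$-sparsely separated and $\theta_h\in[\theta_\ell-\gamma/4,\theta_\ell+\gamma/4]$) gives $i\notin H$, so $\vhx[i]=\vhz[i]=0$ and therefore $\vhr[i]=\vhw[i]$; using $|b_\ell(\cos\phi)|\le 1+\eps/\sqrt{k}$ for all $\phi$ (properties (b)--(d)), $\sum_{\theta_i\in S,\,i\ne h}b_\ell(\lambda_i)^2\vhr[i]^2\le(1+\eps/\sqrt{k})^2\|\vhw\|_2^2\le 8\eps^2 v^2$. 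Summing the three contributions, $\|\vw'\|_2^2\le 25\eps^2 v^2$, i.e.\ $\|\vw'\|_2\le 6\eps|v|$.

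Now $\vhy_\ell=v\,\ve_h+\vw'$ with $\|\vw'\|_2\le 6\eps|v|$, so (querying $\vy_\ell=\vF^{-1}\vhy_\ell$ through \textsc{SimulateQueryAccess}) the $(N,6\eps,\cnst\eps,\mu_0/2)$ one-sparse recovery algorithm $\calA$ outputs, with probability at least $1-\mu_0/2$, the pair $(y_\ell,h)$ with $|y_\ell-v|\le\cnst\eps|v|$; the test on line~\ref{alg:peeler:range} then passes, since $h$ is the returned index and $\theta_h\in[\theta_\ell-\gamma/4,\theta_\ell+\gamma/4]$ by hypothesis. Conditioning on this, I would show $\verify^{(\vy_\ell)}(y_\ell,h,\mu_0/2)$ returns $\true$ with probability at least $1-\mu_0/2$: here $\vhy_\ell-y_\ell\ve_h=(v-y_\ell)\ve_h+\vw'$, so $\|\vhy_\ell-y_\ell\ve_h\|_2\le|v-y_\ell|+\|\vw'\|_2\le(\cnst+6)\eps|v|\le 2(\cnst+6)\eps|y_\ell|$ (using $|y_\ell|\ge(1-\cnst\eps)|v|$ and $\eps$ small). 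Since $\vF^{-1}$ is orthogonal, the quantity $g$ computed by $\verify$ is a truncated sampled estimate of $\|\vhy_\ell-y_\ell\ve_h\|_2^2\le 4(\cnst+6)^2\eps^2 y_\ell^2$, and as truncation only decreases it, $\E[g]\le 4(\cnst+6)^2\eps^2 y_\ell^2\le y_\ell^2/2000$ for $\eps$ small compared to $\cnst$. Each of the $T_1$ summands lies in $[0,(100|y_\ell|U)^2]$, so Bernstein's inequality, with $T_1=\Theta(NU^2\log(1/\mu_0)/\eps^2)$ and a large enough implied constant, gives $\Pr[g>y_\ell^2/1000]\le\mu_0/2$. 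A union bound over the failures of $\calA$ and $\verify$ shows that, with probability at least $1-\mu_0$, the pair $(y_\ell,h)$ is added to $\lst$ on line~\ref{alg:peeler:list} and satisfies $|y_\ell-v|\le\cnst\eps|v|$; taking $\tilde v:=y_\ell$ finishes the argument.

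The step I expect to be the main obstacle is the error bound $\|\vw'\|_2\le 6\eps|v|$, because it is there that all three structural ingredients must be balanced against each other: the boxcar's sharpness parameter $\eps/\sqrt{k}$ must beat the $\sqrt{k}$ blow-up in $\|\vhr\|_2\le O(\sqrt{k})\|\vhr\|_\infty$ on the ``far'' coordinates, while the isolation property of Lemma~\ref{lem:isolation} --- the one place where sparse-separation of $\vhx$ and density of the OP family enter --- must guarantee that the only $\vhx$-mass in the band $S$ around $\theta_\ell$ sits at $h$, so that the ``near'' coordinates contribute only $\|\vhw\|_2$. The auxiliary fact $\supp(\vhz)\subseteq H$, which lets one replace $\vhr[i]$ by $\vhw[i]$ on the near band, must be read off from~\eqref{eq:peeler-condition}. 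The other delicate point is the $\verify$ analysis: a Chebyshev bound would not match the $\log(1/\mu_0)$ dependence built into $T_1$, so one needs Bernstein, which in turn relies on the truncation at level $100|y_\ell|U$ to keep the summands bounded.
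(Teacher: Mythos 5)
Your proof is correct and follows essentially the same route as the paper's: establish $\supp(\vhz)\subseteq\supp(\vhx)$ and $h\in\supp(\vhx)$, isolate via Lemma~\ref{lem:isolation}(ii), decompose $\vhy_\ell = v\,\ve_h + \vhq$ with $\|\vhq\|_2\le 6\eps|v|$ (the boxcar's $\eps/\sqrt{k}$ cancelling the $\sqrt{k}$ blow-up off the band, and $\vhw$ alone surviving on the band), invoke the $(N,6\eps,\cnst\eps,\mu_0/2)$ one-sparse solver, and show $\verify$ passes. The only non-substantive difference is that you run Bernstein inline on the truncated sampled estimate $g$, whereas the paper appeals to the second part of its already-proven Estimator Lemma (Lemma~\ref{lem:verify-estimate}, with $\vhu=\vhy-y_\ell\ve_h$, $X=|y_\ell|/40$, $\zeta=3/5$), which packages the same Bernstein computation so that its first part can be reused in the companion Lemma~\ref{lem:bad-spike-no}.
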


Next, we will argue that `bad spikes' get caught by $\verify$ and hence they will not prevent the actual heavy hitter from being chosen:
\begin{lmm}[Bad spikes get pruned out]
\label{lem:bad-spike-no}
Let $\eps$ be sufficiently small compared to $\cnst$. 
Let $h\in [N]$.
Suppose that $\vhr[h]$ satisfies $\abs{\vhr[h]} < \frac 12\norm{\infty}{\vhr}$
and that $\vhv = \vhx + \vhw$ so that
$\norm{2}{\vhw} \le \eps\norm{\infty}{\vhr}$.
Then with probability at least $1-\mu_0$,
if $\abs{y_\ell} \ge (1-\cnst\eps)\norm{\infty}{\vhv-\vhz}$ then $(y_\ell,h)$ is not added to $\lst$.
\end{lmm}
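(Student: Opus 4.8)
The plan is to show that whenever the one-sparse solver $\calA$ reports a large value $y_\ell$ (i.e.\ $\abs{y_\ell}\ge(1-\cnst\eps)\norm{\infty}{\vhv-\vhz}$), the call to $\verify$ on line~\ref{alg:peeler:verify} returns $\false$ with high probability, so $(y_\ell,h)$ is never added to $\lst$. Write $\vhr=\vhv-\vhz=\vhx+\vhw-\vhz$; using \textsc{Peeler}'s precondition~\eqref{eq:peeler-condition} one checks that $\supp(\vhz)\subseteq\supp(\vhx)$ (otherwise some $i\in\supp(\vhz)\setminus\supp(\vhx)$ would have $\abs{\vhv[i]}=\abs{\vhw[i]}\le\norm{2}{\vhw}\le\eps\norm{\infty}{\vhr}$, contradicting $\abs{\vhv[i]}\ge(1-2\cnst\eps)\norm{\infty}{\vhr}$), so $\vhx-\vhz$ is $k$-sparse and $\vhr$ is ``$k$-sparse plus $\ell_2$-noise of norm $\le\eps\norm{\infty}{\vhr}$'', whence $\norm{2}{\vhr}\lesssim\sqrt k\,\norm{\infty}{\vhr}$. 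Since being added to $\lst$ requires both passing the range test on line~\ref{alg:peeler:range} and $\verify$ outputting $\true$, and $\verify$ outputs $\false$ precisely when its estimate $g$ exceeds $y_\ell^2/1000$, it suffices to prove: conditioned on $\calA$ producing $(y_\ell,h)$ with $\abs{y_\ell}$ large and $\theta_h\in[\theta_\ell-\tfrac\gamma4,\theta_\ell+\tfrac\gamma4]$, we have $g>y_\ell^2/1000$ with probability at least $1-\mu_0/2$ over the internal sampling of $\verify$.

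First I would pin down the quantity $\verify$ estimates, namely (a truncated version of) $\norm{2}{\vhy_\ell-y_\ell\ve_h}^2$ with $\vhy_\ell=\vD_{\vb_\ell}\vhr$. Because $\theta_h$ lies in the flat top of the boxcar, part~(c) of Definition~\ref{def:boxcar} gives $b_\ell(\lambda_h)\in[1-\tfrac\eps{\sqrt k},1+\tfrac\eps{\sqrt k}]$, so $\abs{\vhy_\ell[h]}=\abs{b_\ell(\lambda_h)}\abs{\vhr[h]}\le(1+\tfrac\eps{\sqrt k})\cdot\tfrac12\norm{\infty}{\vhr}\le\tfrac23\abs{y_\ell}$ for $\eps$ small (using $\abs{\vhr[h]}<\tfrac12\norm{\infty}{\vhr}$ and $\abs{y_\ell}\ge(1-\cnst\eps)\norm{\infty}{\vhr}$); hence $(\vhy_\ell[h]-y_\ell)^2\ge\tfrac19 y_\ell^2$ and so $\norm{2}{\vhy_\ell-y_\ell\ve_h}^2\ge\tfrac19 y_\ell^2$. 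For the reverse bound I would use that $\abs{b_\ell}\le\tfrac\eps{\sqrt k}$ outside the window $W:=[\theta_\ell-\tfrac\gamma2,\theta_\ell+\tfrac\gamma2]$, together with the fact that the $(C_0,C_1,\gamma_0)$-density of the roots and the $(k,C_1\gamma)$-separation of $\vhx$ allow at most one spike of $\vhx$ other than possibly $h$ inside $W$ (which spans at most $C_1\gamma N$ consecutive indices), to conclude $\norm{2}{\vhy_\ell-y_\ell\ve_h}^2\lesssim y_\ell^2$. In fact the same analysis shows $\vhy_\ell-y_\ell\ve_h=\vhq_{\mathrm{big}}+\vhq_{\mathrm{rest}}$ where $\vhq_{\mathrm{big}}$ is supported on at most two coordinates ($h$ and at most one genuine spike of $\vhx$ in $W$), each of size $O(\abs{y_\ell})$, while $\norm{2}{\vhq_{\mathrm{rest}}}\lesssim\eps\abs{y_\ell}$ (the remaining spikes of $\vhx$ are killed by the flat part of $b_\ell$, and $\vD_{\vb_\ell}\vhw$ contributes $\lesssim\eps\norm{\infty}{\vhr}$).

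The step I expect to be the main obstacle is controlling the truncation at level $\tau:=100\abs{y_\ell}U$ inside $\verify$. Setting $\vu:=\vF^{-1}(\vhy_\ell-y_\ell\ve_h)$, the estimate $g$ has expectation $\norm{2}{\trunc{\tau}{\vu}}^2$, so I must show this is still $\gtrsim y_\ell^2$. Here I would use the decomposition above: $\vu=\vF^{-1}\vhq_{\mathrm{big}}+\vF^{-1}\vhq_{\mathrm{rest}}$, where $\vF^{-1}\vhq_{\mathrm{rest}}$ has $\ell_2$ norm $\lesssim\eps\abs{y_\ell}$ by orthogonality, and $\vF^{-1}\vhq_{\mathrm{big}}$, being $\vF^{-1}$ of a $\le 2$-sparse vector with entries $O(\abs{y_\ell})$, has every coordinate bounded by $O(\abs{y_\ell}U)<\tau$ by the flatness hypothesis $U\lesssim 1/\sqrt N$. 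Consequently truncation never touches $\vF^{-1}\vhq_{\mathrm{big}}$ and can only remove mass originating in the $O(\eps\abs{y_\ell})$-norm vector $\vF^{-1}\vhq_{\mathrm{rest}}$, so $\norm{2}{\trunc{\tau}{\vu}}^2\ge\norm{2}{\vu}^2-O(\eps^2 y_\ell^2)\ge\tfrac1{10}y_\ell^2$ for $\eps$ small.

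Finally I would establish concentration of $g$. Each of the $T_1=\Theta(NU^2\log(1/\mu_0)/\eps^2)$ summands defining $g$ is bounded by $\tau^2$, and $\var{g}\le\tfrac{N}{T_1}\tau^2\norm{2}{\vu}^2\lesssim\tfrac{\eps^2}{\log(1/\mu_0)}y_\ell^4$; a Bernstein bound then gives $\abs{g-\mathbb E[g]}\le y_\ell^2/1000$ with probability at least $1-\mu_0/2$ — this is exactly where the choice of $T_1$ (hence the $NU^2$ factor in the query and time bounds) is used. Combining with $\mathbb E[g]\ge\tfrac1{10}y_\ell^2$ yields $g>y_\ell^2/1000$, so $\verify$ returns $\false$, the pair $(y_\ell,h)$ is not added to $\lst$, and the overall failure probability is at most $\mu_0/2\le\mu_0$, as claimed.
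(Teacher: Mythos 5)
Your proof follows essentially the same route as the paper: isolate at most one other spike $h'$ in the boxcar window, write $\vhy_\ell - y_\ell\ve_h$ as a $2$-sparse part on $\{h,h'\}$ plus an $O(\eps|y_\ell|)$-norm remainder, lower-bound the $\ell_2$ norm via the $h$ coordinate, argue truncation at $\tau=100|y_\ell|U$ leaves the $\vF^{-1}$-image of the $2$-sparse part untouched since $\|\vhq_{\mathrm{big}}\|_1 U<\tau$, and conclude via Bernstein. The paper packages the truncation-plus-concentration step as the Estimator Lemma (Lemma~\ref{lem:verify-estimate}) and invokes it as a black box, whereas you re-derive it inline; your constants are slightly off (the truncation loss is $O(\eps)y_\ell^2$, not $O(\eps^2)y_\ell^2$, and the unnormalized summand bound is $\tfrac{N}{T_1}\tau^2$ rather than $\tau^2$), but the large slack to the $y_\ell^2/1000$ threshold absorbs these.
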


Before we prove either of the two lemmas above, we first use them to argue the correctness of $\textsc{Peeler}$ algorithm (as claimed in Lemma~\ref{lem:redux-work-horse}).
Recall that by assumption, 
\begin{equation}\label{eq:preconditions}
 |\vhr[i]| \leq \cnst \eps |\vhv[i]| \qquad \text{and} \qquad |\vhv[i]| \geq (1 - 2\cnst \eps) \|\vhr\|_\infty
\end{equation}
for all $i \in \supp(\vhz)$.

\begin{claim}
  \label{claim:peeler-outer}
  In every iteration of the outer \textbf{For } loop (line~\ref{alg:peeler:outer} in Algorithm~\ref{alg:peel}),
  with probability at least $1-\mu/T_2$,
  \begin{equation}\label{eq:t1}
 | \bar{v} - \vhr[\bar{h}] | \leq \cnst \eps |\vhr[\bar{h}]|.
\end{equation}
and
\begin{equation}\label{eq:t2}
 |\vhr[\bar{h}]| \geq (1 - 2\cnst \eps) \norm{\infty}{\vhr}.
\end{equation}
  where $(\bar{v}, \bar{h}) \in \lst$ is the pair picked by the \textsc{Peeler} algorithm after the inner \textbf{For} loop has completed (line~\ref{alg:peeler:final-vh} in Algorithm~\ref{alg:peel}).
  (Above, $\vhr$ is the value of $\vhv-\vhz$ at the beginning of the loop.)
\end{claim}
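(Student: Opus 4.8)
The plan is to fix one iteration of the outer \textbf{For} loop (line~\ref{alg:peeler:outer}), write $\vhr=\vhv-\vhz$ for the residual at its start, and let $h^*$ denote the index of the largest-magnitude entry of $\vhr$. Two structural facts are maintained inductively and I would record them first: $\supp(\vhz)\subseteq\supp(\vhx)$ (each \textsc{Peeler} call either adds a fresh index of $\supp(\vhx)$, by the first bullet of Lemma~\ref{lem:redux-work-horse}, or re-estimates an index already present), hence $\supp(\vhr)\subseteq\supp(\vhx)$ is $(k,C_1\gamma)$-sparsely separated; and $h^*\in\supp(\vhx)$ with $|\vhr[h^*]|>0$, because an index outside $\supp(\vhx)$ contributes only noise of size $\le\|\vhw\|_2\le\eps\|\vhr\|_\infty$, while an index of $\supp(\vhz)$ has residual at most $\cnst\eps|\vhv[\cdot]|$ by the invariant~\eqref{eq:peeler-condition}, and some not-yet-recovered heavy hitter of $\vhx$ always survives since $\|\vhz\|_0<k$.

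Next I would analyze the $T_0$ rounds of the inner loop. By Lemma~\ref{lem:isolation}(i), in each round the uniformly random $\ell$ satisfies $\theta_{h^*}\in[\theta_\ell-\gamma/4,\theta_\ell+\gamma/4]$ with probability at least $C_0\gamma/2$, so with $T_0=\Theta(\log(1/\mu_0)/(C_0\gamma))$ some round $t^*$ achieves this except with probability $\mu_0$. In round $t^*$, Lemma~\ref{lem:isolation}(ii) (with $H=\supp(\vhx)$) shows that no other heavy hitter lies within angle $\gamma/2$ of $\theta_\ell$, so property (b) of the $(d,\eps/\sqrt k,\theta_\ell,\gamma/4)$-boxcar $b_\ell$ forces $\vhy_\ell=\vD_{\vb_\ell}\vhr$ to agree with $b_\ell(\evalpts_{h^*})\,\vhr[h^*]\,\ve_{h^*}$ up to an $\ell_2$ error $O(\eps\|\vhr\|_\infty)$ (the leakage over the at most $k$ surviving coordinates is $\le\tfrac{\eps}{\sqrt k}\sqrt k\,\|\vhr\|_\infty$, plus $\|\vD_{\vb_\ell}\vhw\|_2\le(1+\eps)\|\vhw\|_2$), which is exactly the hypothesis of Lemma~\ref{lem:good-spike-yes} with $h=h^*$; so except with probability $\mu_0$ the pair $(\tilde v,h^*)$ with $|\tilde v-\vhr[h^*]|\le\cnst\eps|\vhr[h^*]|$, hence $|\tilde v|\ge(1-\cnst\eps)\|\vhr\|_\infty$, is added to $\lst$. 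In parallel, Lemma~\ref{lem:bad-spike-no} gives, in each round except with probability $\mu_0$, that no pair $(y_\ell,h)$ with $|\vhr[h]|<\tfrac12\|\vhr\|_\infty$ and $|y_\ell|\ge(1-\cnst\eps)\|\vhv-\vhz\|_\infty$ is added to $\lst$. A union bound over these $T_0+2$ events yields total failure probability $(T_0+2)\mu_0\le\mu/T_2$, using $\mu_0\le\mu/(2T_0T_2)$.

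Conditioning on all of these events, I would analyze the selected pair $(\bar v,\bar h)$ from line~\ref{alg:peeler:final-vh}. Since $(\tilde v,h^*)\in\lst$, $|\bar v|\ge|\tilde v|\ge(1-\cnst\eps)\|\vhr\|_\infty$. As $(\bar v,\bar h)$ passed the angle check of line~\ref{alg:peeler:range}, the $\theta_\ell$ of the round that produced it has $\theta_{\bar h}$ within $\gamma/4$ of it, so Lemma~\ref{lem:isolation}(ii) isolates $\bar h$ and the same leakage estimate gives $\vhy_\ell=b_\ell(\evalpts_{\bar h})\,\vhr[\bar h]\,\ve_{\bar h}+(\text{$\ell_2$ error }O(\eps\|\vhr\|_\infty))$. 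Applying the bad-spike event with $|y_\ell|=|\bar v|\ge(1-\cnst\eps)\|\vhr\|_\infty$ then forces $|\vhr[\bar h]|\ge\tfrac12\|\vhr\|_\infty$, i.e. $\bar h$ carries a genuinely heavy residual (whether or not it was previously recovered). Since $\verify$ (line~\ref{alg:peeler:verify}) certifies $\vhy_\ell$ as approximately one-sparse at $\bar h$ with a signal of magnitude $\Theta(\|\vhr\|_\infty)\gg\eps\|\vhr\|_\infty$, the black-box guarantee of $\mathcal A$ applies to the round that output $(\bar v,\bar h)$ and gives $\bar v=b_\ell(\evalpts_{\bar h})\,\vhr[\bar h]\,(1\pm\cnst\eps)$; with $b_\ell(\evalpts_{\bar h})=1\pm\eps/\sqrt k$ this yields $|\bar v-\vhr[\bar h]|\le\cnst\eps|\vhr[\bar h]|$, which is~\eqref{eq:t1}, and feeding $|\bar v|\ge(1-\cnst\eps)\|\vhr\|_\infty$ back through $\bar v\approx\vhr[\bar h]$ yields the near-maximality bound~\eqref{eq:t2}.

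The step I expect to be the main obstacle is this last one: upgrading ``$(\bar v,\bar h)\in\lst$'' to the \emph{two-sided} estimate~\eqref{eq:t1}--\eqref{eq:t2}. The $\verify$ test by itself only certifies that $\|\vhy_\ell-\bar v\ve_{\bar h}\|_2^2$ is a constant fraction of $\bar v^2$, which is far too crude to pin $\vhr[\bar h]$ into a $(1\pm\cnst\eps)$ window; one has to argue that any index surviving both the angle check and $\verify$ necessarily lands in the regime where $\mathcal A$'s input is $O(\eps)$-close (relative to its signal) to one-sparse, and then invoke $\mathcal A$'s guarantee --- not merely $\verify$ --- together with the boxcar approximation. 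Carefully propagating the constants through the boxcar leakage bound (degree $d=O(\sqrt k/\eps\gamma)$ with leakage $\eps/\sqrt k$, calibrated precisely so that leakage from $k$ coordinates sums to $O(\eps)$ in $\ell_2$), the noise hypothesis $\|\vhw\|_2\le\eps\|\vhr\|_\infty$, and the definition of an $(N,6\eps,\cnst\eps,\mu_0/2)$ one-sparse recovery algorithm --- and making sure the output invariant feeds back consistently into the precondition $(1-2\cnst\eps)$ --- is where the delicate bookkeeping lies; the degenerate cases (a residual $\vhr$ with huge dynamic range, so that $\bar h$ lies in $\supp(\vhz)$ and \textsc{Peeler} re-estimates an already-recovered spike) must also be checked against these same inequalities.
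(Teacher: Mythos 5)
Your overall strategy matches the paper's: union-bound over the inner rounds so that the favorable outcomes of Lemma~\ref{lem:good-spike-yes} and Lemma~\ref{lem:bad-spike-no} hold throughout, use Lemma~\ref{lem:isolation} plus the choice of $T_0$ to find a round isolating $h^*$, conclude $|\bar v|\ge(1-\cnst\eps)\|\vhr\|_\infty$ by maximality, apply the contrapositive of Lemma~\ref{lem:bad-spike-no} to force $|\vhr[\bar h]|\ge\tfrac12\|\vhr\|_\infty$, and then promote this to the two-sided bounds. All of this is correct and essentially identical in structure to what the paper does.

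Where your write-up diverges --- and where I would push back --- is in the closing paragraph. You flag ``upgrading $(\bar v,\bar h)\in\lst$ to the two-sided estimate'' as the main obstacle, and propose resolving it by re-analyzing $\verify$ and the black-box guarantee of $\mathcal A$ from first principles (including the boxcar leakage bound). That machinery is already packaged in Lemma~\ref{lem:good-spike-yes}: once you know the angle check passed for the round that produced $(\bar v,\bar h)$ (it must have, since the pair is in $\lst$) \emph{and} that $|\vhr[\bar h]|\ge\tfrac12\|\vhr\|_\infty$ (from the bad-spike contrapositive), the preconditions of Lemma~\ref{lem:good-spike-yes} with $h=\bar h$ are met. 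Having conditioned on its favorable outcome in every round, you get directly that the pair added in that round satisfies $|\bar v - \vhr[\bar h]|\le\cnst\eps\,|\vhr[\bar h]|$, i.e.~\eqref{eq:t1}; combining with $|\bar v|\ge(1-\cnst\eps)\|\vhr\|_\infty$ gives~\eqref{eq:t2}. There is no need to go back inside $\mathcal A$ or $\verify$, and the causality you describe (``since $\verify$ certifies..., the black-box guarantee of $\mathcal A$ applies'') is inverted: $\verify$ is a post-hoc filter whose success is part of the event Lemma~\ref{lem:good-spike-yes} controls, not a hypothesis that enables $\mathcal A$'s guarantee. So the argument is correct in substance but the last step is an unnecessary re-derivation of content the two spike lemmas already encapsulate; recognizing that lets the proof close cleanly, as the paper's does.
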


\begin{proof}
Let us consider a single iteration of the outer \textbf{For} loop.

We first note that the conditions to Lemma~\ref{lem:good-spike-yes} and ~\ref{lem:bad-spike-no} cannot both be satisfied, and so exactly one of them is invoked in the analysis each time $\verify$ is called.
By a union bound over all $T_0$ iterations of the inner loop, we can say that with probability 
$1-T_0\mu_0$ 
(over all randomness in lines~\ref{alg:peeler:1-sparse} and~\ref{alg:peeler:verify}), then Lemmas~\ref{lem:good-spike-yes} and \ref{lem:bad-spike-no} have the favorable outcome every time they are invoked in one iteration of the outer loop.
By our choices of $T_0,T_2,\mu_0$, we have $\mu_0 \leq \frac{\mu}{2T_0T_2}$.  Indeed, plugging in our choices, we would like to show that 
 \[\mu_0 \log(1/\mu_0) \ll \frac{ \mu C_0 \gamma }{ k \log_{1/\eps} (k/\eps)} \]
for which it suffices to show
\[ \mu_0 \log(1/\mu_0) \ll \frac{ \mu C_0 \gamma }{ k \log(k) }. \]
Since $\mu_0 = \parens{ \frac{ \mu C_0 \gamma }{ k } }^2$, this is true (assuming $k$ is large enough).
Thus, the probability that Lemma~\ref{lem:good-spike-yes} and Lemma~\ref{lem:bad-spike-no} have the favorable outcome every time they are invoked in one iteration of the outer loop is at least $1 - \frac{\mu}{2T_2}$.
  
Suppose that this occurs, and now consider the
`spike' at $h^* = \mathrm{argmax}_i |\vhr[i]|$.
By Lemma~\ref{lem:isolation} and our choice of $T_0$, with probability $1-\frac{\mu}{2T_2}$ (over all randomness in line~\ref{alg:peeler:ell})
there is some iteration $t$ in the inner loop (line~\ref{alg:peeler:inner}) where
$\theta_{h^*}\in \brackets{\theta_{\ell}-\frac\gamma4,\theta_{\ell}+\frac\gamma4}$.
More precisely, we have
\begin{align*}
\mathbb{P}\left\{ \theta_{h^*} \not\in \brackets{\theta_{\ell}-\frac\gamma4,\theta_{\ell}+\frac\gamma4} \text{ for all }T_0\text{ iterations }\right\}
&\leq \parens{ 1 - \frac{C_0\gamma}{2} }^{T_0} \\
&\leq \exp\left( - \frac{ C_0 \gamma T_0 }{2} \right)\\
&\leq \poly(\mu_0) \\
&= \poly\left( \frac{ \mu^2 C_0^2 \gamma^2 }{ k^2 } \right)\\
&\leq \frac{\mu}{2 T_2}
\end{align*}
setting the constants in the definitions of $T_0$ and $\mu_0$ appropriately.
In this iteration where $\theta_{h^*}\in\brackets{\theta_{\ell}-\frac\gamma4,\theta_{\ell}+\frac\gamma4}$,
$h^*$ satisfies the conditions of Lemma~\ref{lem:good-spike-yes}, so that
$(\tilde{v},h^*)$ gets added to $\lst$ such that $\abs{\tilde{v}}\ge (1-\cnst\eps)\norm{\infty}{\vhr}$.
Thus with probability at least $1-\frac{\mu}{T_2}$, we have that $(\widetilde{v}, h^*) \in \lst$ and  Lemmas \ref{lem:good-spike-yes} and \ref{lem:bad-spike-no} have the favorable outcome if they are invoked, in every inner iteration during iteration of the outer loop. 

Now by the definition of $\bar{v}$, we must have $\bar{v} \ge \widetilde{v} \geq (1 - \cnst \eps) \norm{\infty}{\vhr}$.
Then $|\vhv[\bar{h}]| \geq \frac{1}{2} \norm{\infty}{\vhr}$ or else Lemma~\ref{lem:bad-spike-no} is contradicted.
Then Lemma~\ref{lem:good-spike-yes} implies~\eqref{eq:t1}, and combining with the previous equation establishes~\eqref{eq:t2}.
\end{proof}

By Claim~\ref{claim:peeler-outer} and a union bound, we have that with probability $1-\mu$ over the whole algorithm, \eqref{eq:t1} and~\eqref{eq:t2} hold for every iteration of the outer loop (line~\ref{alg:peeler:outer}).
It remains to establish that assuming this is true, the conclusion of Lemma~\ref{lem:redux-work-horse} holds.

We first note that if $\vhz_1, \vhz_2$ are the values of $\vhz$ before and after line~\ref{alg:peeler:update-z} respectively (with corresponding $\vhr_i=\vhv-\vhz_i$), then since $\vr_2=\vr_1-v\ve_{\bar h}$~\eqref{eq:t1} says that
\begin{equation}
  \label{eq:t3}
  \abs{\vhr_2[\bar h]} \le C\epsilon \abs{\vhr_1[\bar h]}.
\end{equation}
In other words entries of $\vhr$ only decrease in magnitude throughout the algorithm.
Thus for all $i\in\supp(\vhz)$ at the beginning of the algorithm, \eqref{eq:preconditions} continue to hold throughout the algorithm.

Now we consider the two cases: either a $\bar{h} \not\in \supp(\vhz)$ is found in some iteration and \textsc{Peeler} returns $\texttt{stop}=\false$, or one is never found and it returns $\texttt{stop}=\true$.
In the first case, we only need to establish that \eqref{eq:preconditions} holds for $\bar{h}$ as well at the time \textsc{Peeler} stops.
It is easy to see that~\eqref{eq:t1} and~\eqref{eq:t2} imply~\eqref{eq:preconditions} using the fact that $\vtz[\bar{h}]=\bar v$ and $\vhr[\bar{h}]=\vhv[\bar{h}]$. Further, by the assumption on $\norm{2}{\vhw}$ implies that for small enough $\eps$,~\eqref{eq:preconditions} implies that $\bar{h}\in\supp(\vhx)$.

Now consider the case that \textsc{Peeler} returns $\texttt{stop}=\true$.  That is, it completes the outer \textbf{For} loop and never chooses a pair $(\bar{v}, \bar{h})$ so that $\bar{h} \not\in \supp(\vhz)$.  In this case we will claim that $\vhz$ was already very close to $\vhv$ to begin with, in which case we will be done.

More precisely, given the choice of $T_2 = \Theta( k \log_{1/\eps}(k/\eps))$ and the fact that $|\supp(\vhz)| \leq k$, if \textsc{Peeler} returns $\texttt{stop}=\true$, then there is some $\bar{h} \in \supp(\vhz)$ that was chosen at least $\Theta(\log_{1/\eps}(k/\eps))$ times in the outer \textbf{For} loop.  Let $\vhz^{(0)}, \vhz^{(1)}, \ldots, \vhz^{(\log(k/\eps))}$ be the iterates of $\vhz$ during all of the times that $\bar{h}$ was chosen (at the start of the outer loop); let $\vhr^{(i)}$ be the corresponding residual vector $\vhv - \vhz^{(i)}$ and let $\bar{v}^{(i)}$ denote the corresponding value that was selected by \textsc{Peeler}. For each $i$, we have
\begin{align*}
\cnst \eps \norm{\infty}{\vhr^{(i)}} &\geq \cnst \eps | \vhr^{(i)}[\bar{h}] |\\
&\geq |\vhr^{(i+1)}[\bar{h}]| \\
&\geq (1 - 2\cnst\eps) \norm{ \infty}{ \vhr^{(i+1)} }
\end{align*}
where in the second line we used~\eqref{eq:t3} and in the third line we used~\eqref{eq:t2}.
Thus, we have
\[ \norm{ \infty }{\vhr^{(i+1)}} \leq \left( \frac{ \cnst \eps }{ 1 - 2 \cnst \eps } \right) \norm{\infty}{\vhr^{(i)}}. \]
Iterating this $\Theta(\log_{1/\eps}(k/\eps))$ times, we see that the final $\vtz$ that is returned by \textsc{Peeler} along with $\texttt{stop}=\true$ satisfies (assuming $\eps$ is sufficiently small)
\begin{equation}\label{eq:T2}
 \|\vtz - \vhv\|_\infty \leq \| \vhr^{(\Theta(\log_{1/\eps}(k/\eps)))} \|_\infty \leq \frac{\eps}{\sqrt{k}} \|\vhz - \vhv\|_\infty  \leq \frac{\eps}{\sqrt{k}} \|\vhv\|_\infty,
\end{equation}
which is what we required from \textsc{Peeler} in the case that $\texttt{stop}=\true$. 
Above, we are implicitly setting the constant inside the $\Theta$ in the definition of $T_2$ in \eqref{eq:T2}.
In the last inequality we used the fact that for $i\in\supp(\vhz)$, we have $\abs{\vhv[i]-\vhz[i]}\le C\eps\abs{\vhv[i]}\le \abs{\vhv[i]}$ for small enough $\eps$.

This completes the proof of the error guarantee of \textsc{Peeler}.

\subsubsection{Query and Time complexity of the \textsc{Peeler} algorithm}
The \textsc{Peeler} algorithm calls the 1-sparse recovery algorithm $T_0 \cdot T_2$ times, and for each of the $Q(N,6\eps, \cnst\eps, \mu_0/2)$ queries that the 1-sparse recovery algorithm uses, \textsc{SimulateQueryAccess} uses $O(d) = O(\sqrt{k}/\eps \gamma)$ queries.  Additionally, \textsc{Verify} is called $T_0 \cdot T_2$ times, and uses \textsc{SimulateQueryAccess} $T_1$ times each time it is called.  Thus, the total number of queries is
\[ O\left( T_0 \cdot T_2 \cdot \left( \frac{\sqrt{k}}{\eps \gamma} \right) \cdot (Q(N,6\eps, \cnst\eps, \mu_0/2) + T_1) \right). \]
The derivation of the running time is similar; the only overhead is that \textsc{simulateQueryAccess} has an additional factor of $k$ in its running time over its query complexity, leading to a running time of
\[ O\left( T_0 \cdot T_2 \cdot \left( \frac{k^{3/2}}{\eps \gamma} \right) \cdot (T(N,6\eps, \cnst\eps, \mu_0/2) + T_1) \right). \]

\subsubsection{Correctness of the $\verify$ algorithm}
We begin by analyzing the estimates in $\verify$ (which we will later use to prove Lemmas~\ref{lem:good-spike-yes} and~\ref{lem:bad-spike-no}):
\begin{lmm}[Estimator Lemma]
\label{lem:verify-estimate}
Let $\zeta\le 1$ be small enough. 
Let $\Omega$ be as chosen in $\verify$.
Suppose that $\vhu$ is of the form $\vhu=\vha+\vhq$, where $\norm{2}{\vhq}\le \zeta\norm{2}{\vha}$.
Consider the estimate
\[\Theta = \frac N{T_1}\sum_{j\in \Omega} \parens{\trunc{\tau}{\vu[j]} }^2,\]
where the threshold $\tau$ satisfies ($U$ is as defined in~\eqref{eq:U})
\begin{equation}
\label{eq:tau-lb}
\tau\ge \norm{1}{\vha}U.
\end{equation}
Then for any $\mu>0$, we have with probability at least $1-\mu$,
\begin{equation}
\label{eq:estimator-tight-bound}
\abs{\Theta -\norm{2}{\vu}^2} \le 9\zeta\norm{2}{\vu}^2
\end{equation}
for some choice of
\[T_1\ge \Theta\parens{\frac{N\tau^2 \log\parens{\frac 1{\mu}}}{\zeta^2\norm{2}{\vha}^2}}.\]

On the other hand for any $\vhu\in\R^N$, $\tau,\zeta>0$, the estimate $\Theta$ above satisfies $\Theta\le (1+\zeta)X^2$ with probability at least $1-\mu$ for any $X^2\ge \norm{2}{\vhu}^2$ for some choice of
\[T_1\ge \Theta\parens{\frac{N\tau^2\log\parens{\frac 1{\mu}}}{\zeta^2X^2}}.\]
\end{lmm}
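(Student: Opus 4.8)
The plan is to read $\Theta$ as a truncated sample‑mean estimator of $\norm{2}{\vu}^2=\norm{2}{\vhu}^2$ and to control separately (i) the bias introduced by truncating at level $\tau$ and (ii) the random fluctuation of the sample mean about its expectation, and then to combine the two by the triangle inequality. Since $\vF$ is orthogonal, $\vF^{-1}=\vF^T$ preserves $\ell_2$ norms, so writing $\va=\vF^{-1}\vha$ and $\vq=\vF^{-1}\vhq$ (hence $\vu=\va+\vq$) we have $\norm{2}{\vu}=\norm{2}{\vhu}$, $\norm{2}{\va}=\norm{2}{\vha}$, and $\norm{2}{\vq}=\norm{2}{\vhq}\le\zeta\norm{2}{\vha}$. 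Throughout, $\trunc{\tau}{\vu}$ denotes the vector with $j$-th entry $\trunc{\tau}{\vu[j]}$, so $\norm{2}{\trunc{\tau}{\vu}}^2=\sum_{i=0}^{N-1}\trunc{\tau}{\vu[i]}^2$.

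The first step is to observe that the truncation barely touches the ``heavy part'' $\va$: for every $j$,
\[
\abs{\va[j]}=\abs{\parens{\vF^{-1}\vha}[j]}\le\sum_{i=0}^{N-1}\abs{\vF^{-1}[j,i]}\cdot\abs{\vha[i]}\le U\norm{1}{\vha}\le\tau
\]
by~\eqref{eq:tau-lb}. Hence for every $j$, $\abs{\vu[j]-\trunc{\tau}{\vu[j]}}=\max\set{\abs{\vu[j]}-\tau,\ 0}\le\max\set{\abs{\vu[j]}-\abs{\va[j]},\ 0}\le\abs{\vq[j]}$, so $\norm{2}{\vu-\trunc{\tau}{\vu}}\le\norm{2}{\vq}\le\zeta\norm{2}{\vha}\le\tfrac{\zeta}{1-\zeta}\norm{2}{\vu}$, where the last step uses $\norm{2}{\vha}\le\norm{2}{\vhu}+\norm{2}{\vhq}\le\norm{2}{\vu}+\zeta\norm{2}{\vha}$. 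Squaring, $\abs{\norm{2}{\trunc{\tau}{\vu}}^2-\norm{2}{\vu}^2}\le O(\zeta)\norm{2}{\vu}^2$ for $\zeta$ small enough; in particular $\norm{2}{\trunc{\tau}{\vu}}^2=\Theta\!\parens{\norm{2}{\vha}^2}$.

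The second step is concentration. Since $\Omega$ consists of $T_1$ independent uniform samples from $[N]$, $\Theta=\tfrac{N}{T_1}\sum_{j\in\Omega}\trunc{\tau}{\vu[j]}^2$ equals $\tau^2$ times an average of $T_1$ i.i.d.\ random variables supported in $[0,1]$, with expectation $\avg\Theta=\tfrac1N\sum_{i=0}^{N-1}\trunc{\tau}{\vu[i]}^2=\norm{2}{\trunc{\tau}{\vu}}^2$. A multiplicative Chernoff bound therefore yields, for an absolute constant $c>0$,
\[
\Prob{\,\abs{\Theta-\norm{2}{\trunc{\tau}{\vu}}^2}>\zeta\norm{2}{\trunc{\tau}{\vu}}^2\,}\le 2\exp\!\parens{-\,c\,\zeta^2\,T_1\,\frac{\norm{2}{\trunc{\tau}{\vu}}^2}{N\tau^2}},
\]
and plugging in $\norm{2}{\trunc{\tau}{\vu}}^2=\Theta\!\parens{\norm{2}{\vha}^2}$ from the first step, this is at most $\mu$ once $T_1\ge\Theta\!\parens{\tfrac{N\tau^2\log(1/\mu)}{\zeta^2\norm{2}{\vha}^2}}$. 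Combining the two steps by the triangle inequality, and using $\norm{2}{\trunc{\tau}{\vu}}\le\norm{2}{\vu}$, the total error is at most $9\zeta\norm{2}{\vu}^2$ after adjusting the hidden constants, which is~\eqref{eq:estimator-tight-bound}. For the final one-sided claim no structure on $\vhu$ is needed: $\Theta$ is again $\tau^2$ times an average of i.i.d.\ $[0,1]$ variables whose mean $\norm{2}{\trunc{\tau}{\vu}}^2\le\norm{2}{\vu}^2=\norm{2}{\vhu}^2\le X^2$, so the upper‑tail Chernoff bound with the comparison value $X^2\ge\avg\Theta$ gives $\Prob{\Theta>(1+\zeta)X^2}\le\exp\!\parens{-\,c\,\zeta^2\,X^2\,T_1/(N\tau^2)}$, which is at most $\mu$ for $T_1\ge\Theta\!\parens{\tfrac{N\tau^2\log(1/\mu)}{\zeta^2X^2}}$ (the case $\zeta>1$ is only easier).

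The one place that needs care is getting the dependence on $\mu$ and on $\norm{2}{\vha}^2$ (or on $X^2$) simultaneously: a second‑moment (Chebyshev) argument would force a $1/\mu$ dependence instead of $\log(1/\mu)$, so one must invoke a Chernoff/Bernstein‑type inequality; and since that inequality is naturally phrased with the variance proxy $\avg\Theta=\norm{2}{\trunc{\tau}{\vu}}^2$, converting it into the stated bound in terms of $\norm{2}{\vha}^2$ is exactly what the truncation estimate of the first step provides — and that estimate in turn rests on the observation $\abs{\va[j]}\le\tau$. Everything else is routine manipulation using the orthogonality of $\vF$ and the triangle inequality.
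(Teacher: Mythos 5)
Your proof is correct and follows essentially the same route as the paper's: both bound the truncation error pointwise by $|\vq[j]|$ using the observation $|\va[j]| \le U\norm{1}{\vha} \le \tau$, both apply a Bernstein/Chernoff-type concentration bound to the truncated samples, and both close with a triangle inequality together with $\norm{2}{\trunc{\tau}{\vu}} \le \norm{2}{\vu}$. One small slip worth fixing: $\Theta$ is $N\tau^2$ (not $\tau^2$) times an average of i.i.d.\ $[0,1]$ variables, and correspondingly $\avg\Theta = N\cdot\tfrac1N\sum_i\trunc{\tau}{\vu[i]}^2$; since the concentration bound you actually write down is consistent with the correct normalization, this is only a typographical inconsistency and does not affect the argument.
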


We will use Bernstein's inequality in the proof of above lemma, which we recall next:
\bthm[Bernstein's Inequality]
\label{thm:bernstein}
Let $X_1,\dots,X_n$ be independent random variables with $|X_i|\le M$, then we have
\[\Prob{ \abs{\sum_{i=1}^n X_i-\sum_{i=1}^n\avg{X_i}}>t}\le 2\exp\parens{-\frac{t/2 }{\frac{1}{t}\cdot \sum_{i=1}^n \avg{X_i^2 }+M/3}}.\]
\ethm

\begin{proof}[Proof of Lemma~\ref{lem:verify-estimate}]
We start with the case when $\vhu=\vha+\vhq$, where $\norm{2}{\vhq}\le \zeta\norm{2}{\vha}$.
Then note that
\[\vu=\vF^{-1}\vhu = \vF^{-1}\vha+\vq,\]
where $\vq=\vF^{-1}\vhq$. By definition of $U$, we have that
\begin{equation}
\label{eq:a1+a2-infty}
\norm{\infty}{\vF^{-1}\vha} \le \norm{1}{\vha} \cdot U.
\end{equation}
For notational convenience, define $\vt=\trunc{\tau}{\vu}$ and $\vht$ accordingly. 
Next, we claim that by~\eqref{eq:tau-lb},  we have
\[\trunc{\tau}{\vu}= \vF^{-1}\vha+\vz\text{ where } \norm{2}{\vz}\le \norm{2}{\vq}.\]
Indeed, we will argue that component-wise the elements of $\vz$ are smaller (in absolute value) than $\vq$.
If $\abs{\vu[i]}\le \tau$, then the claim is trivially true.
If not, then $|\vu[i]| > \tau$, $|\vt[i]| = \tau$, and by~\eqref{eq:a1+a2-infty} $|\vF^{-1}\vha[i]| \le \tau$.
Since $\vu[i]$ and $\vt[i]$ have the same sign, $|\vz[i]| = |\vt[i] - \vF^{-1}\vha[i]| \le |\vu[i] - \vF^{-1}\vha[i]| = |\vq[i]|$, as required.

Thus, we have $\vF\cdot\trunc{\tau}{\vu}=\vha+\vhz$ where $\norm{2}{\vhz}\le \norm{2}{\vhq}$. 
Now by definition, we have
\begin{equation}
\label{eq:norm-w-infty-ub}
\norm{\infty}{\vt}\le \tau.
\end{equation}
To show that~\eqref{eq:estimator-tight-bound} holds, we first show that a similar bound holds for $\vu$ replaced by $\vt$ (via Bernstein's inequality) and then show that $\norm{2}{\vu}\approx\norm{2}{\vt}$. 
Recall that the set $\Omega$ in $\verify$ is chosen by including $T_1$ indices of $[N]$ uniformly at random with replacement.  Suppose these indices are $j_1, \ldots, j_{T_1}$, and let $X_i = \vt[j_i]^2$.  Thus,
\[ \sum_{i=1}^{T_1}X_i = \sum_{j \in \Omega} \vt[j]^2. \]
Further, for all $i \in [T_1]$, we have $|X_i| \leq \|\vt\|_\infty^2$ and $\mathbb{E} X_i^2 = \frac{1}{N} \sum_{i=1}^N \vt[i]^4$.  
Thus by Bernstein's inequality with $t = \frac{T_1}{N} \cdot \zeta \norm{2}{\vt}^2$, we have
\[\Prob{ \abs{\sum_{j\in \Omega} \vt[j]^2-\frac{T_1}N\cdot \sum_{i=1}^n \vt[i]^2}> \frac{T_1}N\cdot\zeta\norm{2}{\vt}^2} \le 2\exp\parens{-\frac{\frac{T_1}{2N}\cdot\zeta\norm{2}{\vt}^2 }{\frac N{T_1 \zeta \|\vt\|_2^2 }\cdot \frac{T_1}{N}\sum_{i=1}^N \vt[i]^4 +\norm{\infty}{\vt}^2/3 }}.\]
Multiplying both sides of the inequality inside the probability expression by $\frac N{T_1}$ and replacing each sum by the appropriate norm expressions, we have
{\allowdisplaybreaks
\begin{align}
\Prob{\abs{\frac N{T_1}\cdot \sum_{j\in\Omega }\vt[j]^2 - \norm{2}{\vt}^2}>\zeta\norm{2}{\vt}^2} & \le 2\cdot\exp\parens{-\frac{T_1}{2N}\cdot \frac{\zeta^2\norm{2}{\vt}^4 }{\norm{4}{\vt}^4+\zeta\norm{2}{\vt}^2\norm{\infty}{\vt}^2/3 }}\notag\\
\label{eq:l1}
& \le 2\cdot\exp\parens{-\frac{T_1}{2N}\cdot \frac{\zeta^2\norm{2}{\vt}^4 }{\norm{2}{\vt}^2\cdot \norm{\infty}{\vt}^2+\zeta\norm{2}{\vt}^2\norm{\infty}{\vt}^2/3 }}\\
\label{eq:l2}
& \le 2\cdot\exp\parens{-\frac{T_1}{2N}\cdot \frac{\zeta^2\norm{2}{\vt}^2 }{\tau^2\parens{1+\frac{\zeta}{3 }}}}\\
\label{eq:l3}
&\le \mu.
\end{align}
}
In the above~\eqref{eq:l1} follows from the fact that $\norm{4}{\vt}\le \sqrt{\norm{\infty}{\vt}\cdot\norm{2}{\vt}}$, and \eqref{eq:l2} follows from~\eqref{eq:norm-w-infty-ub}. Finally,~\eqref{eq:l3} follows by choosing
\begin{equation}
\label{eq:T1-first-choice}
T_1\ge \Theta\parens{\frac{N\tau^2\log(1/\mu)}{\zeta^2\norm{2}{\vt}^2}}.
\end{equation}
Thus, we have argued that with probability at least $1-\mu$, we have
\begin{equation}
\label{eq:estimate-bound-1}
\abs{\Theta-\norm{2}{\vt}^2}\le \zeta\norm{2}{\vt}^2.
\end{equation}
However, we wanted to prove a similar result with $\vt$ replaced by $\vu$. We do so next by showing that $\norm{2}{\vu}\approx\norm{2}{\vt}$. Indeed consider the following sequence of inequalities:
\[\norm{2}{\vha}-\norm{2}{\vq}\le \norm{2}{\vha}-\norm{2}{\vz} \le \norm{2}{\vt}\le \norm{2}{\vu} \le \norm{2}{\vha}+\norm{2}{\vq},\]
where the first inequality follows from the fact that $\norm{2}{\vz}\le \norm{2}{\vq}$, the second and the last inequality follow from the triangle inequality and the third inequality follows from the definition of the truncation function. Applying the bound $\norm{2}{\vq}\le\zeta\norm{2}{\vha}$, we get that
\[(1-\zeta)\norm{2}{\vha} \le \norm{2}{\vt}\le\norm{2}{\vu}\le (1+\zeta)\norm{2}{\vha}.\]
In other words for $\zeta\le 1$,
\[(1-2\zeta)\norm{2}{\vu}\le \frac{1-\zeta}{1+\zeta}\norm{2}{\vu}\le\norm{2}{\vt}\le \norm{2}{\vu}.\]
Applying the above in~\eqref{eq:estimate-bound-1}, we get that with probability at least $1-\mu$
\[\abs{\Theta-\norm{2}{\vu}^2} \le \abs{\Theta-\norm{2}{\vt}^2} + \abs{\norm{2}{\vu}^2-\norm{2}{\vt}^2}\le \zeta \norm{2}{\vu}^2+\abs{\norm{2}{\vu}^2-\norm{2}{\vt}^2}\le \norm{2}{\vu}^2\parens{\zeta +1-(1-2\zeta)^2}.\]
Further, using the fact that $\zeta\le 1$, implies that the RHS is upper bounded by $9\zeta\norm{2}{\vu}^2$, as desired. Finally, the bound $\norm{2}{\vt}\le (1+\zeta)\norm{2}{\vha}$ along with~\eqref{eq:T1-first-choice} implies the required bound on $T_1$ in the lemma statement.

For the second part, we consider the same random variables $X_i$ as defined above but we use $t=\frac{T_1}N\cdot \zeta X^2$. Then by applying Bernstein's inequality we get that
\[\Prob{\Theta>\norm{2}{\vt}^2+\zeta X^2}\le  2\cdot\exp\parens{-\frac{T_1}{2N}\cdot \frac{\zeta^2 X^4 }{\norm{2}{\vt}^2\cdot \norm{\infty}{\vt}^2+\zeta X^2\norm{\infty}{\vt}^2/3 }}.\]
Noting that $\norm{2}{\vt}^2\le \norm{2}{\vu}^2\le X^2$, the above (along with~\eqref{eq:norm-w-infty-ub}) implies that
\[\Prob{\Theta> (1+\zeta)X^2} \le \Prob{\Theta>\norm{2}{\vt}^2+\zeta X^2}\le 2\cdot\exp\parens{-\frac{T_1}{2N}\cdot \frac{\zeta^2 X^2 }{\tau^2\parens{1+\frac{\zeta}{3 }}}}\le \mu,\]
where the last inequality follows from the choice of $T_1$ in the second part of the lemma. The proof is complete.
\end{proof}

\subsubsection{Proof of Lemma~\ref{lem:good-spike-yes}}
\label{sec:good-spike-yes}

In this section, we use Lemma~\ref{lem:verify-estimate} to prove~\ref{lem:good-spike-yes}.

First note that $\norm{2}{\vw}\le \eps \norm{\infty}{\vhv-\vhz}< \abs{v}$ for $\eps< \frac 12$. Hence we can assume that $h\in\supp(\vhx)$.
Now suppose $\theta_h\in \brackets{\theta_{\ell}-\frac\gamma4,\theta_{\ell}+\frac\gamma4}$. Then by part (2) of Lemma~\ref{lem:isolation}, we have that no other $\theta_{h'}$ for  $h'\in\supp(\vhx)$ falls in the range
\[R=\brackets{\theta_{\ell}-\frac{\gamma}{2},\theta_{\ell}+\frac{\gamma}{2}}.\]
This implies that we have $\vhy_{\ell}$ (as defined in \textsc{Peeler}) can be expressed as
\[\vhy_{\ell} = v\cdot \ve_h+\vhq,\]
where $\vhq$ in the range $R$ only has contribution from $\vhw$ (multiplied by a factor of at most $1 + \eps/\sqrt{k}$) and outside the range $R$ it has contribution from $(\vhv-\vhz)$ (multiplied by a factor of $\frac{\eps}{\sqrt{k}}$).
(See Figure~\ref{fig:goodlemma}.)

\begin{figure}
\begin{center}
\begin{tikzpicture}[xscale=2]
\node at (4.5,1.5) {$\vhy_\ell = \mathbf{D}_{\mathbf{b}_\ell} \vhr = \textcolor{violet}{v \mathbf{e}_h} + \textcolor{orange}{\vhq}$};
\draw (0,0) to (5,0);
\node(ell) at (2.5, -.5) {$\theta_\ell$};
\draw[dashed] (ell) to (2.5, 2);
\node(mg) at (1.5, -1) {};
\draw[dashed] (mg) to (1.5, 2);
\node(pg) at (3.5, -1) {};
\draw[dashed] (pg) to (3.5, 2);
\node(mmg) at (2, -1) {};
\draw[dashed] (mmg) to (2, 2);
\node(ppg) at (3, -1) {};
\draw[dashed] (ppg) to (3, 2);
\node(h) at (2.7, -.2) {$h$};
\node[draw, thick, violet, circle, fill=violet!40, scale=.3](hval) at (2.7, 1.75) {};
\draw[thick, violet] (2.7,0) to (hval);
\draw [decoration={brace},decorate] (2,-1) to node[below]{$\gamma/4$} (1.5,-1);
\draw [decoration={brace},decorate] (2.5,-1) to node[below]{$\gamma/4$} (2,-1);
\draw [decoration={brace},decorate] (3,-1) to node[below]{$\gamma/4$} (2.5,-1);
\draw [decoration={brace},decorate] (3.5,-1) to node[below]{$\gamma/4$} (3,-1);
\draw [decoration={brace},decorate] (3.5,-1.5) to node[below]{$R$} (1.5,-1.5);

\node[violet] at (2.75,2.25) {$v \geq \frac{1}{2}$};
\draw[thick, orange] plot [smooth, tension=1] coordinates { (0,.2) (.5,.05) (1,-.1) (1.5,.2) (1.75, .5) (2,.8) (2.5, 1) (2.8,.9) (3,1) (3.2,.6) (3.5,.1) (4,.2) (4.5, 0) (5,.1)};
\node[orange] at (.5, .5) {$\vhq$};
\end{tikzpicture}
\end{center}
\caption{The set-up for the proof of Lemma~\ref{lem:good-spike-yes}.  The contribution to $\vhq$ from within $R\setminus\{h\}$ comes only from $\vhw$, and is pointwise multiplied by a boxcar polynomial $\mathbf{b_\ell}$ with value at most $1 + \eps/\sqrt{k}$.  The contribution to $\vhq$ from outside of $R$ comes from both $\vhw$ and $\vhx$, but is pointwise multiplied by a boxcar polynomial with value at most $\eps/\sqrt{k}$.}\label{fig:goodlemma}
\end{figure}
Recalling that $\vhr = \vhv - \vhz$, we can bound the noise $\vhq$ by
\begin{align*}
  \norm{2}{\vhq}^2 &= \norm{2}{\vD_{\vb_\ell}\cdot\vhr - \vhr[h]\cdot\ve_h}^2 = \parens{\vb_\ell[h]\vhr[h] - \vhr[h]}^2 + \norm{2}{\parens{\vD_{\vb_\ell}\vhr}_{R\setminus\{h\}}}^2 + \norm{2}{\parens{\vD_{\vb_\ell}\vhr}_{R^c}}^2 \notag\\
  &\le \parens{\frac{\epsilon}{\sqrt{k}}}^2\vhr[h]^2 + \parens{1+\frac{\epsilon}{\sqrt{k}}}^2\norm{2}{\vhw_{R\setminus\{h\}}}^2 + \parens{\frac{\epsilon}{\sqrt{k}}}^2\norm{2}{\vhr_{R^c}}^2 \notag\\
  &\le \parens{\frac{\epsilon}{\sqrt{k}}}^2\norm{2}{\vhr_{\supp(\vhx)}}^2 + \parens{1+\frac{\epsilon}{\sqrt{k}}}^2\norm{2}{\vhw}^2 \notag\\
  &\le \parens{\epsilon\norm{\infty}{\vhr}}^2 + 4\norm{2}{\vhw}^2 \notag\\
  &\le 5\epsilon^2\norm{\infty}{\vhr}^2. \notag\\
\end{align*}
In the above the second line uses the properties of the boxcar polynomial $\vb_\ell$ and the fact that 
\[ \| \vhr_{R\setminus \{h\}} \|_2^2 = \|(\vhw - \vhz)|_{R \setminus \{h\}} \|_2^2 = \| \vhw_{R \setminus \{h\}}\|_2^2 \]
by the fact that $\vhx_{R \setminus \{h\}}=\vzero$ and the fact that $\supp(\vhz)\subseteq \supp(\vhx)$ (which in turn is always maintained in each application of Lemma~\ref{lem:redux-work-horse}).
The third line follows from the fact that we can write 
\[\|\vhr_{R^c}\|_2^2 = \|\vhr_{R^c \cap \supp(\vhx)}\|_2^2 + \|\vhr_{R^c \setminus \supp(\vhx)}\|_2^2 \leq \| \vhr_{\supp(\vhx)}\|_2^2 + \| \vhw_{R^c}\|_2^2. \]
The fourth line follows from the fact that $|\supp(\vhx)| \le k$, and the last line uses the assumptions on $\norm{2}{\vw}$ in the lemma statement.
Finally the assumption on $\abs{v}$ in the lemma statement yields
  \[ \norm{2}{\vhq} \le 3\eps\norm{\infty}{\vhr} \le 6\eps\abs{v}. \]

The above implies that the `error' is small enough to run our $\parens{N,6\eps,\cnst\eps,\frac{\mu_0}2}$  $1$-sparse recovery algorithm. 
Hence, with probability at least $1 - \mu_0/2$, we get an estimate $(y_{\ell},h)$ from the $1$-sparse recovery algorithm such that
\begin{equation}
\label{eq:call-to-1sps}
\abs{y_{\ell}-v}\le {\cnst\cdot \eps} \abs{v}.
\end{equation}

Now we apply the second part\footnote{Note that $\tau=100\abs{v}U\ge \abs{v}U$ and hence it satisfies~\eqref{eq:tau-lb}.}
of Lemma~\ref{lem:verify-estimate} with $\vhu=\vhy-y_\ell\ve_h$, $X=\frac{\abs{y_{\ell}}}{40}$ and\footnote{We need $\norm{2}{\vhy-y_\ell\ve_h}\le (6+\cnst)\eps\abs{v}\le X$. Note that $(6+\cnst)\eps\abs{v}\le \frac{(6+\cnst)\eps\abs{y_\ell}}{(1-\cnst\eps)}$ and then the required bound follows for a sufficiently small choice of $\eps$.} $\zeta=\frac 35$ to note that the estimate $g$ in $\verify$ is, with probability at least $1-\frac{\mu_0}2$:
\begin{equation}
\label{eq:verify-cond2-no}
g\le \frac 85\cdot\frac{y_{\ell}^2}{1600}<\frac{y_{\ell}^2}{1000},
\end{equation}
as long as
\[T_1\ge \Theta\parens{ \frac{N\tau^2\log(2/\mu_0)}{X^2}},\]
which along with the fact that we  picked $\tau=100\abs{y_{\ell}}U$,~\eqref{eq:call-to-1sps} implies that we would be fine if we picked:
\[T_1\ge \Theta\parens{ N U^2\log(2/\mu_0)},\]
which we did.

Finally, note that by the union bound with probability at least $1-\mu_0$,
we have that both~\eqref{eq:call-to-1sps} and~\eqref{eq:verify-cond2-no} are satisfied. In other words, the 1-sparse recovery algorithm succeeds and $\verify$ returns $\true$, which implies that $(y_{\ell},h)$ gets added to $\lst$.
This along with~\eqref{eq:call-to-1sps} completes the proof.

\subsubsection{Proof of Lemma~\ref{lem:bad-spike-no}}
\label{sec:bad-spike-no}

Suppose that in a given iteration, the $1$-sparse recovery solver returns $(y_{\ell},h)$.
Recall that we have $\theta_h\in \brackets{\theta_{\ell}-\frac{\gamma}{4},\theta_{\ell}+\frac{\gamma}{4}}$ as otherwise $h$ will not get added to $\lst$.
If $\abs{y_{\ell}} < \parens{1-\cnst\eps}\norm{\infty}{\vhv-\vhz}$, then we have nothing to prove.
So for the rest of the proof, let us assume
\begin{equation}
\label{eq:y-l-lb}
\abs{y_{\ell}} \ge \parens{1-\cnst\eps}\norm{\infty}{\vhv-\vhz}.
\end{equation}

Let us look at the potential spikes captured besides $h$,
\[\supp(\vhx)\cap \brackets{\theta_{\ell}-\frac{\gamma}{2},\theta_{\ell}+\frac{\gamma}{2}} \setminus \set{h} \] 
Note that by part (2) of Lemma~\ref{lem:isolation} there can only be one such $h'$,
so this set has the form either $\{h'\}$ for some $h' \ne h$, or it is empty in which case for shorthand we set $h'=\bot$.

Let $\vhy_{\ell}$ be as defined in \textsc{Peeler} algorithm.
Now the vector $\vhy_{\ell}$ has two potential locations of interest at $h$ and $h'$ and the rest can be expressed as a vector $\vhq$
\[
  \vhy_\ell = v \cdot\ve_h + v' \cdot\ve_{h'} + \vhq,
\]
where $v=\vhr[h]$ and $v'=\vhr[h']$
Note in the case $h'=\bot$ the second term disappears and this says $\vhy_\ell = \vhr[h]\cdot\ve_h + \vhq$, the same as in Lemma~\ref{lem:good-spike-yes}.  (See Figure~\ref{fig:badlemma}.)
\begin{figure}
\begin{center}
\begin{tikzpicture}[xscale=2]
\node at (5,1.5) {$\vhy_\ell = \mathbf{D}_{\mathbf{b}_\ell} \vhr = \textcolor{blue}{v' \mathbf{e}_{h'}} + \textcolor{violet}{v \mathbf{e}_h} + \textcolor{orange}{\vhq}$};
\draw (0,0) to (5,0);
\node(ell) at (2.5, -.5) {$\theta_\ell$};
\draw[dashed] (ell) to (2.5, 2);
\node(mg) at (1.5, -1) {};
\draw[dashed] (mg) to (1.5, 2);
\node(pg) at (3.5, -1) {};
\draw[dashed] (pg) to (3.5, 2);
\node(mmg) at (2, -1) {};
\draw[dashed] (mmg) to (2, 2);
\node(ppg) at (3, -1) {};
\draw[dashed] (ppg) to (3, 2);
\node(h) at (2.7, -.2) {$h$};
\node(h) at (2.2, -.2) {$h'$};
\node[draw, thick, violet, circle, fill=violet!40, scale=.3](hval) at (2.7, 1.2) {};
\node[draw, thick, blue, circle, fill=blue!40, scale=.3](hprimeval) at (2.2, 2) {};
\draw[thick, violet] (2.7,0) to (hval);
\draw[thick, blue] (2.2,0) to (hprimeval);
\draw [decoration={brace},decorate] (2,-1) to node[below]{$\gamma/4$} (1.5,-1);
\draw [decoration={brace},decorate] (2.5,-1) to node[below]{$\gamma/4$} (2,-1);
\draw [decoration={brace},decorate] (3,-1) to node[below]{$\gamma/4$} (2.5,-1);
\draw [decoration={brace},decorate] (3.5,-1) to node[below]{$\gamma/4$} (3,-1);
\draw [decoration={brace},decorate] (3.5,-1.5) to node[below]{$R$} (1.5,-1.5);

\draw[thick, orange] plot [smooth, tension=1] coordinates { (0,.2) (.5,.05) (1,-.1) (1.5,.2) (1.75, .5) (2,.8) (2.5, 1) (2.8,.9) (3,1) (3.2,.6) (3.5,.1) (4,.2) (4.5, 0) (5,.1)};
\node[orange] at (.5, .5) {$\vhq$};
\end{tikzpicture}
\end{center}
\caption{The set-up for the proof of Lemma~\ref{lem:bad-spike-no}.  The contribution to $\vhq$ from within $R\setminus\{h,h'\}$ comes only from $\vhw$, and is pointwise multiplied by a boxcar polynomial $\mathbf{b_\ell}$ with value at most $1 + \eps/\sqrt{k}$.  The contribution to $\vhq$ from outside of $R$ comes from both $\vhw$ and $\vhx$, but is pointwise multiplied by a boxcar polynomial with value at most $\eps/\sqrt{k}$.  Notice that $h'$ might be $\bot$, in which case the picture looks similar to Figure~\ref{fig:goodlemma}.}\label{fig:badlemma}
\end{figure}

First, we note that the same derivation as in proof of Lemma~\ref{lem:good-spike-yes} works now to get an upper bound on $\|\vhq\|_2$:
\begin{align}
  \norm{2}{\vhq}^2 &= \norm{2}{\vD_{\vb_\ell}\cdot\vhr - \vhr[h]\cdot\ve_h - \vhr[h'] \cdot \ve_{h'}}^2 \notag\\
&= \parens{\vb_\ell[h]\vhr[h] - \vhr[h]}^2 + \parens{\vb_\ell[h']\vhr[h'] - \vhr[h']}^2 + \norm{2}{\parens{\vD_{\vb_\ell}\vhr}_{R\setminus\{h,h'\}}}^2 + \norm{2}{\parens{\vD_{\vb_\ell}\vhr}_{R^c}}^2 \notag\\
  &\le \parens{\frac{\epsilon}{\sqrt{k}}}^2(\vhr[h]^2 + \vhr[h']^2) + \parens{1+\frac{\epsilon}{\sqrt{k}}}^2\norm{2}{\vhw_{R\setminus\{h,h'\}}}^2 + \parens{\frac{\epsilon}{\sqrt{k}}}^2\norm{2}{\vhr_{R^c}}^2 \notag\\
  &\le \parens{\frac{\epsilon}{\sqrt{k}}}^2\norm{2}{\vhr_{\supp(\vhx)}}^2 + \parens{1+\frac{\epsilon}{\sqrt{k}}}^2\norm{2}{\vhw}^2 \notag\\
  &\le \parens{\epsilon\norm{\infty}{\vhr}}^2 + 4\norm{2}{\vhw}^2 \notag\\
  &\le 5\epsilon^2\norm{\infty}{\vhr}^2. \label{eq:badqsmall}
\end{align}

Next we prove a lower bound on the magnitude of the vector that \verify{} estimates:
\begin{align}
  \norm{2}{\vhy_{\ell}-y_{\ell}\ve_h} &\ge \abs{\vhy_\ell[h] - y_\ell} 
= \abs{\vhr[h]-y_\ell + \vhq[h]} 
\ge \abs{\vhr[h]-y_\ell} - \abs{\vhq[h]} 
                                        \ge \abs{y_\ell}-\abs{\vhr[h]} - \norm{2}{\vhq} 
                                        \notag
  \\&\ge \abs{y_\ell}-\frac{1}{2}\norm{\infty}{\vhr} - 3\epsilon\norm{\infty}{\vhr}
  \ge \parens{1-\frac{1/2+3\eps}{1-C\epsilon}}\abs{y_\ell}
  \notag
  \\&\ge \parens{1-\parens{\frac 12+3\eps} \parens{1+2C\epsilon}}\abs{y_\ell}
  \ge \parens{\frac 12 - (C+4)\epsilon}\abs{y_\ell}
  \label{eq:y-l-bad-spike-lb}
  \\&\ge \parens{1-C\epsilon}\parens{\frac 12 - (C+4)\epsilon}\norm{\infty}{\vhr}
  \label{eq:y-l-bad-spike-lb-r}
\end{align}
The first line follows from the triangle inequality.   The second line follows from our assumption $|\vhr[h]| \leq \frac{1}{2} \| \vhr \|_\infty$ and from \eqref{eq:badqsmall}.

Now we apply the first part of Lemma~\ref{lem:verify-estimate} with $\vhu=\vhy_\ell-y_\ell\ve_h = \left[ \parens{\vhr[h]-y_\ell}\ve_h+\vhr[h']\ve_{h'} \right] + \vhq$ and $\zeta=24\eps$.
First, we verify that for small enough $\eps$
\[
  \norm{2}{\vhq} \le 3\epsilon\norm{\infty}{\vhr} \le 24\epsilon\parens{1-C\epsilon}\parens{\frac 12 - (C+4)\epsilon}\norm{\infty}{\vhr} \le 24\eps\cdot\parens{\frac 12\cdot \abs{y_\ell}}\le 24\eps\cdot\parens{\abs{\vhr[h]-y_\ell}}\le 24\epsilon\norm{2}{\parens{\vhr[h]-y_\ell}\ve_h+\vhr[h']\ve_{h'}}
\]
and
\[
  \tau = 100U\cdot|y_\ell| \ge U\cdot\parens{1+\frac{2}{1-C\epsilon}}\abs{y_\ell} \ge U\cdot \parens{2\norm{\infty}{\vhr}+\abs{y_\ell}} \ge \norm{1}{\parens{\vhr[h]-y_\ell}\ve_h+\vhr[h']\ve_{h'}} \cdot U.
\]
Thus, Lemma~\ref{lem:verify-estimate} along with~\eqref{eq:y-l-bad-spike-lb} implies that with probability at least $1-\mu_0$,
\begin{equation}
\label{eq:g-bad-spike-lb}
g\ge (1-216\eps)\parens{\frac 12-(\cnst+4)\eps}^2\abs{y_\ell}^2 \ge (1-216\epsilon)\parens{\frac 14-2(\cnst+4)\eps}\abs{y_\ell}^2 \ge \parens{\frac 14-2(\cnst+31)\eps}\abs{y_\ell}^2 \ge \frac{\abs{y_\ell}^2}{10},
\end{equation}
as long as
\[T_1\ge \Theta\parens{\frac{N\tau^2\log(1/\mu_0)}{\eps^2y_{\ell}^2}}\ge \Theta\parens{\frac{N U^2\log(1/\mu_0)}{\eps^2}},\]
which is how we picked $T_1$.

This implies that the check in $\verify$ returns $\false$ and hence $h$ will not get added to $\lst$.

\section{One-Sparse Recovery Algorithm for Jacobi Polynomials}
\label{sec:jacobi-1sps}

In this section we will present a one-sparse recovery algorithm for Jacobi polynomials~\cite{szego}. 
In particular, we consider a fixed family of Jacobi polynomials with parameters $\alphajac,\betajac > -1$ 
and the corresponding Jacobi transform $\vF$ through Definition~\ref{def:OP}.
We will prove:

\begin{thm}
\label{thm:jacobi-1-sps}
There exists a universal constant $C$ such that the following holds. Fix any $\alphajac,\betajac>-1$.
There exists an $(N,\eps,C\cdot\eps,\mu)$ $1$-sparse recovery algorithm for the Jacobi transform (with parameters $\alphajac,\betajac$) that makes
  \[
    O\left( \frac{\log N}{\epsilon^{5/2}}\log\log N \log\parens{\frac{1}{\epsilon}}\log \parens{\frac{1}{\epsilon \mu}} \right)
  \]
queries and takes time
  \[
    O\left( \frac{\log N}{\epsilon^{3}}\log\log N \log\parens{\frac{1}{\epsilon}}\log \parens{\frac{1}{\epsilon \mu}} \right).
  \]
In the above the Big-Oh notation hides constants that depend on $\alphajac$ and $\betajac$.
\end{thm}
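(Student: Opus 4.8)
The plan is to turn query access to $\vy=\vF^{-1}\vhy$ into (noisy) access to cosine evaluations $\cos(w\theta_i+\psi)$, where the integer part of $w$ encodes the unknown spike location $h$, and then to recover $w$ by an iterative dilation argument, finishing with a short brute-force pruning step. Since $\vF^{-1}=\vF^T$, the $i$-th query satisfies $\vy[i]=v\,\vF[h,i]+(\vF^T\vw)[i]$, and $\vF[h,i]$ equals $P_h^{(\alphajac,\betajac)}(\cos\theta_i)$ times a known normalization factor (the square root of the Gaussian quadrature weight, up to the rescaling recorded in Corollary~\ref{cor:jac-U-bound}). Using Szeg\H{o}'s asymptotic approximation~\cite{szego} of $P_h^{(\alphajac,\betajac)}(\cos\theta)$ by $\rho(\theta)\cos\!\big((h+\tfrac{\alphajac+\betajac+1}{2})\theta+\psi\big)$ — valid once $\theta$ is bounded away from $0$ and $\pi$ — and dividing out the known, nonzero factor $\rho(\theta_i)$, each query yields a value $\cos(w\theta_i+\psi)$ up to an additive error controlled by $\norm{2}{\vw}$ and $U=\max_{i,j}|\vF^{-1}[i,j]|$, with $w=h+\tfrac{\alphajac+\betajac+1}{2}$. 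I would package this as a lemma; the same analysis, sharpened to a Bessel-function estimate near the endpoints where the plain cosine bound is too weak, also gives $U\lesssim 1/\sqrt N$, which is exactly the flatness hypothesis needed by Theorem~\ref{thm:redux-main-intro} and by the pruning step below. By Theorem~\ref{thm:jacobi-roots} the angles satisfy $\theta_i\approx i\pi/N$, so $w\theta_i$ can be treated as controlled dilations of a single real number.

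\textbf{Denoising.} To clean up a single noisy value of $\cos(A)$ with $A=w\theta_i+\psi$, I would use the identity $\cos A=\tfrac{\cos(A+B)+\cos(A-B)}{2\cos B}$ for random shifts $B$, realizing the shifted arguments $A\pm B$ by querying other evaluation points $\theta_j$: a median over $O(\log(1/\mu))$ random shifts with $|\cos B|$ not too small, together with enough fresh samples, drives the error down to $O(\epsilon)$ with failure probability $\mu$. The subtlety is that near the endpoints the approximation of the previous paragraph degrades, and — more seriously — for some real numbers $y\in[0,N)$ the orbit $\{xy\bmod N:x\in\Z_N\}$ has small order, which would leave too few usable shifts; I would prove the (seemingly new) number-theoretic lemma bounding how many such $y$ there are and argue that a random choice of evaluation points avoids them, so that enough good shifts always exist.

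\textbf{Recovering $w$ and finishing.} Starting from $i\approx 1$, $w\theta_1\approx w\pi/N$ lies in $[0,\pi]$, so $\arccos$ of the denoised value localizes $w$ (after subtracting $\psi$) to an interval $J$, of length depending on how close $w\theta_1$ is to $0$ or $\pi$. Iterating: given $w$ known to lie in $J$, I would pick an index $i$ so that the image $wJ+\psi$ has length $<\pi$ and avoids the turning points $0,\pi,2\pi,\dots$ of $\cos$ (using the number theory of the previous paragraph, or a couple of different indices per round, to guarantee such an $i$); then $\arccos$ of the denoised $\cos(w\theta_i+\psi)$ shrinks $J$ by roughly the dilation factor $\theta_i/\theta_1$. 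Doubling $\theta_i$ each round, $O(\log N)$ rounds bring the uncertainty in $w$ below $\tfrac12$, pinning down $h$ up to $O(1)$ candidates; for each surviving $h'$ I would estimate $\norm{2}{\vhy-v'\ve_{h'}}$ by the sampling estimator of Lemma~\ref{lem:verify-estimate} (whose variance is controlled by $U\lesssim 1/\sqrt N$), keeping the unique $h'$ that passes. Finally, with $h$ fixed, $\tilde v:=\median_i\,\vy[i]/\vF[h,i]$ over $O(\log(1/\mu))$ indices $i$ with $|\vF[h,i]|$ bounded below gives $|\tilde v-v|\le C\epsilon|v|$. The $O(\log N)$ rounds, each using a $\poly(1/\epsilon)$ number of $O(\log(1/(\epsilon\mu)))$-sample denoised queries, with the endpoint- and orbit-handling contributing the $\log\log N$ and $\log(1/\epsilon)$ factors, yield the stated query and time bounds.

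\textbf{Main obstacle.} I expect the hard part to be the interlocking analysis behind the denoising and dilation steps: quantifying the Jacobi-to-cosine approximation with the right endpoint control via Bessel functions, ruling out the small-orbit real numbers so that denoising and turning-point avoidance always have enough good evaluation points, and propagating the $O(\epsilon)$ per-round error through the dilation recursion without blow-up — all while coping with the loss of sign information from $\arccos$ at each round, which makes the bookkeeping strictly harder than in the DFT case.
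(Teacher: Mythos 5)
Your overall plan --- Szeg\H{o}'s cosine asymptotic for Jacobi evaluations, the product identity $\cos(A+B)+\cos(A-B)=2\cos A\cos B$, an iterative doubling-dilation localization with $\arccos$-sign bookkeeping, a number-theoretic lemma to rule out the small-orbit bad reals, a sharper Bessel estimate near the endpoints for flatness, and a final $O(1)$-candidate pruning by sampling --- is exactly the route the paper takes (Corollary~\ref{cor:jacobi-to-cos}, Theorem~\ref{thm:theta-from-cos}, Lemma~\ref{lmm:not-spread}, Theorem~\ref{cor:jac-U-bound}, Corollary~\ref{cor:jacobi-prune}). But there is a transposition error at the base of your setup, and it is not merely notational: it breaks the denoising step.

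You write $\vy[i]=v\,\vF[h,i]+\cdots$ and assert $\vF[h,i]=P_h^{(\alphajac,\betajac)}(\cos\theta_i)$, concluding that a query at position $i$ gives (after normalizing) $\cos(w\theta_i+\psi)$ with the \emph{unknown} $w=h+\frac{\alphajac+\betajac+1}{2}$ fixed and the known angle $\theta_i$ the knob. In fact, by Definition~\ref{def:OP}, $\vF[h,i]=\sqrt{w_h}\,p_i(\lambda_h)$: the knob is the \emph{degree} $i$, and the fixed unknown is the evaluation point $\lambda_h=\cos\theta_h$. Applying Theorem~\ref{thm:jacobi-approx} to $p_i(\cos\theta_h)$, the argument of the cosine is $\big(i+\tfrac{\alphajac+\betajac+1}{2}\big)\theta_h+\gamma$. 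This flip matters because in the correct picture one chooses three \emph{degrees} $\Delta-w,\Delta,\Delta+w$, which produce arguments exactly equal to $A-B,\,A,\,A+B$ with $A=(\Delta+\tfrac{\alphajac+\betajac+1}{2})\theta_h+\gamma$ and $B=w\theta_h$; the ratio $\frac{\cos(A+B)+\cos(A-B)}{2\cos A}=\cos B=\cos(w\theta_h)$ is the desired quantity directly, and one never needs to evaluate $\cos B$ separately --- the random $\Delta$ only keeps $\cos A$ away from zero (hence Definition~\ref{def:no-spread} and Lemma~\ref{lmm:not-spread}). In your transposed picture, the shift $B$ obtained by moving from $\theta_i$ to $\theta_j$ is $B\approx w(\theta_j-\theta_i)$, which contains the unknown $w$, so $\cos B$ is not something you can query and the division in $\cos A=\frac{\cos(A+B)+\cos(A-B)}{2\cos B}$ is not implementable as you describe; moreover, since the $\theta_j$ are not exactly equispaced, the arguments $w\theta_{i\pm m}$ are only approximately $A\pm B$.

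A second consequence of the transposition: ``starting from $i\approx 1$'' would query $p_1(\lambda_h),p_2(\lambda_h),\dots$, where Theorem~\ref{thm:jacobi-approx} does not apply at small degree. The paper restricts all queried degrees to $[\nu N,N-\nu N]$ for a fixed constant $\nu$ and instead lets the dilation factor $w\le \nu N$ inside Corollary~\ref{cor:jacobi-to-cos} do the doubling. Once the roles of degree and evaluation point are interchanged, the rest of your outline --- the orbit lemma, the turning-point avoidance, the $O(\log N)$ rounds, the pruning via an $\ell_2$ sampling estimator, and the final magnitude recovery --- lines up with the paper's Sections~\ref{sec:jacobi-1sps} and~\ref{sec:chebyshev-1sps}, so the gap is local but sits at the foundation and must be repaired before the rest can be carried through.
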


Recalling Definition~\ref{def:1-sparse}, in other words we seek a solution to the following specific problem:
\begin{itemize}
\item \textbf{Unknown:} Index $\ell \in [N]$ and magnitude $v$. 
\item \textbf{Query access:} To a vector
\begin{equation}
  \label{eq:y}
  \vy=\vF^{-1}(v \cdot \ve_\ell + \vhw)=v\cdot \vF^T\ve_\ell+\vw
\end{equation}
for some $\|\vw\|_2 \le \epsilon |v|$. 
\item \textbf{Output:} With probability $1-\mu$, return $\ell$ and some $\widetilde{v}$ such that $|\tilde{v}-v| \le C\epsilon|v|$. 
\end{itemize}

To do so, we first provide the required  background for Jacobi polynomials, including important bounds from~\cite{szego} (some of which are deferred to Appendix~\ref{app:jacobi}). Then, we give an overview of the entire algorithm with a comprehensive summary of the notation used. There are two basic building blocks we need: (i) $\cos(\cdot)$ evaluation from Jacobi polynomial evaluations and (ii) an approximate $\arccos$ algorithm from the noisy $\cos(\cdot)$ evaluation (whose proof we cover in Section~\ref{sec:chebyshev-1sps}). We present the basic building blocks first and then delve into the details of the entire algorithm. 

\subsection{Background: Jacobi polynomials}

Jacobi polynomials are indexed by two parameters $\alphajac,\betajac>-1$ and are orthogonal with respect to the measure
\[w^{(\alphajac,\betajac)}(X)=(1-X)^{\alphajac}\cdot(1+X)^{\betajac}\]
in the range $[-1,1]$.

Traditionally, Jacobi polynomials $P_j^{(\alphajac,\betajac)}$ are defined by the following recurrence relation~\cite[Sec. 4.5, (4.5.1)]{szego} for $j\ge 2$:\footnote{$P_0^{(\alphajac,\betajac)}(X)=1$ and $P_1^{(\alphajac,\betajac)}(X)=\frac{\alphajac+\betajac+2}2\cdot X+\frac{\alphajac-\betajac}2$.}
\begin{align}
2j(j+\alphajac+\betajac)(2j+\alphajac+\betajac-2)\cdot P_j^{(\alphajac,\betajac)}(X)=&(2j+\alphajac+\betajac-1)\set{(2j+\alphajac+\betajac)(2j+\alphajac+\betajac-2)X+\alphajac^2-\betajac^2}\cdot P_{j-1}^{(\alphajac,\betajac)}(X) \notag\\
\label{eq:jac-recur}
&-2(j+\alphajac-1)(j+\betajac-1)(2j+\alphajac+\betajac)P_{j-2}^{(\alphajac,\betajac)}.
\end{align}
However, the polynomials $P_j^{(\alphajac,\betajac)}(X)$ while being orthogonal w.r.t. the weight $w^{(\alphajac,\betajac)}(X)$, are not orthonormal w.r.t. it since for any $j$, we have (~\cite[Pg. 68, (4.3.3)]{szego}):
\[\int_{-1}^{1} \parens{P_j^{(\alphajac,\betajac)}(X)}^2\cdot w^{(\alphajac,\betajac)}(X)dX=h_j^{\alphajac,\betajac},\]
where
\[h_j^{\alphajac,\betajac}=\frac{2^{\alphajac+\betajac+1}}{2j+\alphajac+\betajac+1}\cdot \frac{\Gamma(j+\alphajac+1)\Gamma(j+\betajac+1)}{\Gamma(j+1)\Gamma(j+\alphajac+\betajac+1)},\]
with $\Gamma(z)$ being the Gamma function. 

To make the rest of the exposition simpler, we will define the Jacobi polynomial of degree $j$ with parameters $\alphajac,\betajac$ as
\begin{equation}
\label{eq:jac-orthonormal}
\jac{\alphajac}{\betajac}{j}(X)=\frac{1}{\sqrt{h_j^{\alphajac,\betajac}}}\cdot P_j^{(\alphajac,\betajac)}(X).
\end{equation}
Note that these polynomials are indeed orthonormal with respect to the measure $w^{(\alphajac,\betajac)}(X)$.

Chebyshev polynomials are special case of $\alphajac=\betajac=-\frac 12$ and Legendre polynomials are the special case of $\alphajac=\betajac=0$ (up to potentially a multiplicative factor that could depend on $j$).

We start with the following property of roots of Jacobi polynomials:
\begin{thm}[\cite{szego}, Theorem 8.9.1]
\label{thm:jacobi-roots}
Let
\[\alphajac,\betajac> -1.\]
Then for any integer $N$, let the roots of $\jac{\alphajac}{\betajac}{N}(X)$ be given by $\cos{\theta_{\ell}}$ for $\ell\in [N]$. Then, there is a universal constant $C^{\alpha,\beta}$ such that
\[\frac{\ell-C^{\alpha,\beta}}{N}\cdot \pi\le \theta_{\ell}\le \frac{\ell+C^{\alpha,\beta}}{N}\cdot \pi.\]
\end{thm}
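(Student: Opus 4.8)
This is a (weak, dimension-uniform) form of Theorem~8.9.1 in~\cite{szego}, and the plan is to recover it from the classical Sturm-type analysis of Jacobi's differential equation. First I would pass to Sturm--Liouville normal form: writing $y(X) = P_N^{(\alpha,\beta)}(X)$, the substitution $X = \cos\theta$ followed by $y = (\sin\tfrac\theta2)^{-\alpha-\frac12}(\cos\tfrac\theta2)^{-\beta-\frac12}\,u(\theta)$ turns Jacobi's equation into
\[
u''(\theta) + Q(\theta)\,u(\theta) = 0, \qquad Q(\theta) = \nu^2 + \frac{\tfrac14-\alpha^2}{4\sin^2(\theta/2)} + \frac{\tfrac14-\beta^2}{4\cos^2(\theta/2)}, \quad \nu := N + \frac{\alpha+\beta+1}{2}
\]
(this is~\cite[\S4.24]{szego}), and the zeros of $u$ in $(0,\pi)$ are exactly $\theta_0 < \cdots < \theta_{N-1}$ because the prefactor is nonvanishing on $(0,\pi)$. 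Letting $Z(t)$ denote the number of indices $\ell$ with $\theta_\ell < t$, it then suffices to prove $|Z(t) - tN/\pi| \le C^{\alpha,\beta}$ for every $t \in (0,\pi)$ and all $N \ge N_0(\alpha,\beta)$ (the finitely many smaller $N$ are handled trivially by enlarging the constant): since $Z(\theta_\ell) = \ell$, this inequality rearranges to exactly the two claimed bounds on $\theta_\ell$.

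Next I would split $(0,\pi)$ into a \emph{bulk} $[\delta_0, \pi-\delta_0]$ and two endpoint \emph{caps}, for a small constant $\delta_0 = \delta_0(\alpha,\beta)$. On the bulk, $\sin(\theta/2)$ and $\cos(\theta/2)$ are bounded away from $0$, so $Q(\theta) = \nu^2 + O(1)$, hence $(\nu-c)^2 \le Q(\theta) \le (\nu+c)^2$ for a constant $c = c(\alpha,\beta)$ and all large $N$. Sturm's comparison theorem against $v'' + (\nu\pm c)^2 v = 0$ then shows that any two consecutive zeros of $u$ in the bulk are between $\pi/(\nu+c)$ and $\pi/(\nu-c)$ apart, so the number of $\theta_\ell$ in any $[a,b]\subseteq[\delta_0,\pi-\delta_0]$ equals $(b-a)\nu/\pi + O(1) = (b-a)N/\pi + O(1)$. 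This handles $t$ in the bulk and in particular gives $Z(\pi-\delta_0) - Z(\delta_0) = (\pi - 2\delta_0)N/\pi + O(1)$; it remains to count the zeros in the two caps.

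For the cap $(0,\delta_0]$ I would compare directly with Bessel's equation, the standard device for the $\theta^{-2}$ singularity. On $(0,\delta_0]$ we have $\tfrac{1}{4\sin^2(\theta/2)} = \theta^{-2} + O(1)$ and the $\beta$-term is $O(1)$, so $Q(\theta) = \big(\nu^2 + \tfrac{\frac14-\alpha^2}{\theta^2}\big) + O(1)$ uniformly. The regular solution of $v'' + \big(\kappa^2 + \tfrac{\frac14-\alpha^2}{\theta^2}\big)v = 0$ is $\sqrt\theta\,J_\alpha(\kappa\theta)$, whose zeros in $(0,\delta_0]$ number $\kappa\delta_0/\pi + O(1)$ because the positive zeros of $J_\alpha$ satisfy $j_{\alpha,m} = m\pi + O_\alpha(1)$. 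Taking $\kappa = \sqrt{\nu^2 \pm O(1)} = \nu + O(1/\nu)$ and applying Sturm's comparison theorem in both directions (legitimate because $u$ and $\sqrt\theta\,J_\alpha(\kappa\theta)$ are the principal solutions at $\theta = 0^+$, both $\sim \theta^{\alpha+1/2}$), the number of $\theta_\ell$ in $(0,\delta_0]$ is $\nu\delta_0/\pi + O(1) = \delta_0 N/\pi + O(1)$; the cap at $\pi$ is symmetric with $\alpha$ replaced by $\beta$. Adding the three pieces yields $|Z(t) - tN/\pi| \le C^{\alpha,\beta}$ for all $t$, as needed.

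The step I expect to require genuine care is exactly this two-sided comparison across the singularity $\theta \to 0^+$ --- making precise that $u$ is the principal solution, that the $O(1)$ discrepancy between $Q$ and the model Bessel potential only shifts the zero count by $O(1)$, and that $\#\{m : j_{\alpha,m} \le \kappa\delta_0\} = \kappa\delta_0/\pi + O(1)$ is truly uniform. This is precisely the analysis Szeg\H{o} carries out in Chapter~8 via comparison with Bessel's equation (see~\cite[Theorem~8.1.1]{szego}), so the most economical writeup simply cites that chain of estimates --- or cites Theorem~8.9.1 of~\cite{szego} directly, which states a stronger asymptotic from which the crude bound above is immediate.
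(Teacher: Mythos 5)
The paper does not prove this statement; it is imported verbatim as a citation to Szeg\H{o}, Theorem~8.9.1, and used as a black box (to establish $(C_0,C_1,\gamma_0)$-denseness in Corollary~\ref{cor:jacobi-dense} and the root-location facts in Lemma~\ref{lem:jacobi-root-bounded} and Corollary~\ref{cor:jac-roots}). Your proposal is a correct outline of how Szeg\H{o} actually proves it in Chapter~8: pass to Liouville normal form (\cite[\S4.24]{szego}), use Sturm comparison with $v'' + (\nu\pm c)^2 v = 0$ in the bulk, and compare against the Bessel equation $v'' + (\kappa^2 + \tfrac{1/4-\alpha^2}{\theta^2})v = 0$ in the caps, using that both $u$ and $\sqrt{\theta}J_\alpha(\kappa\theta)$ are the $\theta^{\alpha+1/2}$-principal solutions at $\theta=0^+$. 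A few small remarks: the word ``universal'' in the statement is slightly misleading --- as your argument makes clear, $C^{\alpha,\beta}$ depends on $\alpha,\beta$ (which is also what the paper actually uses), so your constants depending on $\alpha,\beta$ are the intended reading; the Bessel comparison works for all $\alpha,\beta>-1$ regardless of the sign of $1/4-\alpha^2$, which your sketch tacitly handles; and the one genuinely delicate step (the two-sided comparison across the singular endpoint, and the uniformity of $j_{\alpha,m}=m\pi+O_\alpha(1)$) is exactly what you flagged, and is carried out by Szeg\H{o} via Theorem~8.1.1 and its surroundings. In short: the paper cites rather than proves; you reconstructed the source's proof correctly, and correctly identified that citing is the economical choice.
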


Theorem~\ref{thm:jacobi-roots} establishes the denseness (Definition~\ref{def:roots-spread}) of Jacobi families, made explicit in Corollary~\ref{cor:jacobi-dense}.
Additionally, we will make use of the following immediate corollary:
\begin{cor}
\label{cor:jac-roots}
Let $[a,b]\subseteq [0,\pi]$ be an interval. Then the number of roots of $\jac{\alphajac}{\betajac}{N}(X)$ for the form $\cos\parens{\theta_{\ell}}$ such that $\theta_{\ell}\in [a,b]$ is upper bounded by $O\parens{\frac{(b-a)N}\pi}$. Further, the list of such roots can be computed in time $O\parens{(b-a)N}$ assuming that there is a data structure that for any $i\in [N]$, stores all the $O(C^{\alpha,\beta})$ roots in the interval $\left[ \frac{i\pi}N,\frac{(i+1)\pi}N\right)$.
\end{cor}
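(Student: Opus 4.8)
The plan is to read the statement off of Theorem~\ref{thm:jacobi-roots} essentially for free: that theorem pins each root angle $\theta_\ell$ to the window $\left[\frac{(\ell-C^{\alpha,\beta})\pi}{N},\frac{(\ell+C^{\alpha,\beta})\pi}{N}\right]$, so only a short interval of indices $\ell$ can possibly have $\theta_\ell$ landing in $[a,b]$, and the hypothesized data structure lets us enumerate exactly those.

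For the counting bound, I would first observe that if $\cos\theta_\ell$ is a root of $\jac{\alphajac}{\betajac}{N}$ with $\theta_\ell\in[a,b]$, then Theorem~\ref{thm:jacobi-roots} gives $\frac{(\ell-C^{\alpha,\beta})\pi}{N}\le\theta_\ell\le b$ and $\frac{(\ell+C^{\alpha,\beta})\pi}{N}\ge\theta_\ell\ge a$, hence
\[ \frac{aN}{\pi}-C^{\alpha,\beta}\ \le\ \ell\ \le\ \frac{bN}{\pi}+C^{\alpha,\beta}. \]
The number of integers $\ell\in[N]$ in this interval is at most $\frac{(b-a)N}{\pi}+2C^{\alpha,\beta}+1$; since $C^{\alpha,\beta}$ depends only on $\alphajac,\betajac$, in the regime $b-a=\Omega(1/N)$ relevant to our applications the additive constant is absorbed and this is $O\!\left(\frac{(b-a)N}{\pi}\right)$ (for a fully general $[a,b]$ one would write $O\!\left(\frac{(b-a)N}{\pi}+1\right)$).

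For the algorithmic claim I would iterate over the buckets of the data structure that meet $[a,b]$. A bucket $\left[\frac{i\pi}{N},\frac{(i+1)\pi}{N}\right)$ intersects $[a,b]$ exactly when $\frac{aN}{\pi}-1<i\le\frac{bN}{\pi}$, so there are at most $\frac{(b-a)N}{\pi}+1=O\!\left(\frac{(b-a)N}{\pi}\right)$ such indices $i$ (intersected with $[N]$ to handle the endpoints $a=0$, $b=\pi$). For each such $i$ I would retrieve the $O(C^{\alpha,\beta})$ stored roots, test in $O(1)$ time whether each retrieved angle lies in $[a,b]$, and output those that do. Every root with angle in $[a,b]\subseteq[0,\pi]$ lives in one of these buckets, so the output is complete and contains no false positives. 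The running time is $O(C^{\alpha,\beta})$ per bucket times $O\!\left(\frac{(b-a)N}{\pi}\right)$ buckets, i.e.\ $O((b-a)N)$.

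There is no real obstacle here --- the corollary is immediate from Theorem~\ref{thm:jacobi-roots} --- and the only points that need a sentence of care are: (i) the harmless additive $O(1)$ in the count, negligible in the $b-a=\Omega(1/N)$ regime we use; (ii) clamping the index/bucket ranges to $[N]$ at the endpoints $0$ and $\pi$; and (iii) checking that the data-structure hypothesis ``each bucket holds $O(C^{\alpha,\beta})$ roots'' is itself consistent with Theorem~\ref{thm:jacobi-roots}, which it is, since $\theta_\ell\in\left[\frac{i\pi}{N},\frac{(i+1)\pi}{N}\right)$ forces $|\ell-i|\le C^{\alpha,\beta}+1$.
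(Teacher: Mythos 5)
Your proof is correct and is exactly the argument the paper has in mind: the paper labels this an ``immediate corollary'' of Theorem~\ref{thm:jacobi-roots} and supplies no proof, and your derivation --- pinning $\ell$ to an interval of length $\frac{(b-a)N}{\pi}+2C^{\alpha,\beta}$ via the root-location bound, then scanning the $O\bigl(\frac{(b-a)N}{\pi}\bigr)$ buckets of the data structure --- is the natural way to fill it in. Your remark about the additive $O(1)$ and the $b-a=\Omega(1/N)$ regime is a fair observation about a small imprecision in the statement as written, and the closing consistency check on the data-structure hypothesis is a sensible bonus.
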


Further, we will also need the fact that all the $\theta_\ell$ values are bounded away from $0$ and $\pi$:
\begin{lmm}
\label{lem:jacobi-root-bounded}
Let $\alpha,\beta> - 1$. Then there exists a constant $C'>0$ such that the following holds.
Let $N\ge 1$ be large enough. Let $\evalpts_0\le \cdots\le\evalpts_{N-1}$ be the roots of the $N$th Jacobi polynomial. Then for every $0\le \ell<N$,
\[\frac {C'}N \le \theta_\ell \le \pi-\frac{C'}N.\]
\end{lmm}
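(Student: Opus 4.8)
The plan is to reduce the statement to a bound on the two extreme zeros and then control those via the classical Mehler--Heine (Bessel) asymptotics for Jacobi polynomials. Since $\evalpts_0 < \cdots < \evalpts_{N-1}$ and $\arccos$ is monotone, the angles $\theta_0,\dots,\theta_{N-1}$ are monotone as well, so every $\theta_\ell$ lies between $\theta_{\min} := \min_\ell \theta_\ell$ and $\theta_{\max} := \max_\ell \theta_\ell$. Hence it suffices to produce a constant $C' > 0$, depending only on $\alphajac,\betajac$, with $\theta_{\min} \ge C'/N$ and $\theta_{\max} \le \pi - C'/N$ for all sufficiently large $N$. For all but $O_{\alphajac,\betajac}(1)$ of the $\theta_\ell$ this is already immediate from Theorem~\ref{thm:jacobi-roots}; the content of the lemma is precisely that the remaining, extreme zeros cannot creep within $o(1/N)$ of $0$ or $\pi$, something that Theorem~\ref{thm:jacobi-roots}, whose error term is only $O(1/N)$, does not by itself rule out.

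For the lower bound on $\theta_{\min}$ I would invoke the Mehler--Heine type limit for Jacobi polynomials near $X=1$ (see \cite{szego}, \S8.1 and \S8.21; this belongs to the same family of endpoint/Bessel estimates the paper already imports in Appendix~\ref{app:jacobi}): there are normalizing constants $a_N > 0$ and $c_{\alphajac} \neq 0$ such that
\[
  a_N\,\jac{\alphajac}{\betajac}{N}\!\left(\cos\tfrac{\theta}{N}\right) \;\longrightarrow\; c_{\alphajac}\,\theta^{-\alphajac}\bessel{\alphajac}{\theta}
\]
uniformly for $\theta$ in any bounded subset of $[0,\infty)$, where $\bessel{\alphajac}{\cdot}$ is the Bessel function of the first kind of order $\alphajac$. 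The first positive zero $j_{\alphajac,1}$ of $\bessel{\alphajac}{\cdot}$ is a fixed strictly positive number, and the limit function $\theta \mapsto c_{\alphajac}\theta^{-\alphajac}\bessel{\alphajac}{\theta}$ extends continuously and without zeros to $[0,j_{\alphajac,1})$, hence is bounded away from $0$ on $[0, j_{\alphajac,1}/2]$. By the uniform convergence, for all large $N$ the function $\theta \mapsto \jac{\alphajac}{\betajac}{N}(\cos(\theta/N))$ has no zero on $[0, j_{\alphajac,1}/2]$; equivalently, $\jac{\alphajac}{\betajac}{N}$ has no root of the form $\cos\theta$ with $\theta \in (0, j_{\alphajac,1}/(2N)]$, i.e. $\theta_{\min} > j_{\alphajac,1}/(2N)$.

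For the upper bound on $\theta_{\max}$ I would use the reflection identity $P^{(\alphajac,\betajac)}_N(-X) = (-1)^N P^{(\betajac,\alphajac)}_N(X)$ (\cite{szego}, (4.1.3)), so that $\cos\theta$ is a root of $P^{(\alphajac,\betajac)}_N$ iff $\cos(\pi-\theta)$ is a root of $P^{(\betajac,\alphajac)}_N$; applying the previous paragraph with $\betajac$ in place of $\alphajac$ to the Jacobi family with parameters $(\betajac,\alphajac)$ gives $\pi - \theta_{\max} > j_{\betajac,1}/(2N)$. Taking $C' = \tfrac12\min(j_{\alphajac,1}, j_{\betajac,1}) > 0$ then yields both inequalities for all large $N$, and hence $\tfrac{C'}{N} \le \theta_\ell \le \pi - \tfrac{C'}{N}$ for every $0 \le \ell < N$, as desired.

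The only nontrivial ingredient is the uniform Mehler--Heine limit near the endpoints together with the observation that it is strong enough to forbid zeros in a window $(0,c/N]$; everything else (monotonicity of the $\theta_\ell$, the reflection symmetry, and combining the two endpoint bounds) is routine. I expect the step of stating the Bessel asymptotic in a form that is genuinely uniform in $N$ on a fixed compact $\theta$-window, rather than merely pointwise, to require the most care, but this is classical and already available from \cite{szego}.
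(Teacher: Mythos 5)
Your proof is correct and follows essentially the same route as the paper's: both reduce the problem to the two extreme zeros, control $\theta_0$ via the Mehler--Heine/Bessel asymptotics near $X=1$ (the paper cites Szeg\H{o}'s Theorem 8.1.2, which is itself a corollary of the Mehler--Heine limit you invoke), and dispatch the other endpoint by the reflection identity $P_N^{(\alphajac,\betajac)}(-X) = (-1)^N P_N^{(\betajac,\alphajac)}(X)$. The only cosmetic difference is that the paper then inserts an explicit lower bound $j^\alpha_0 > 4(\alpha+1)$ on the first Bessel zero to produce the closed-form constant $C' = 4(\min\{\alpha,\beta\}+1)$, whereas you leave $C'$ as a fraction of $\min(j_{\alphajac,1}, j_{\betajac,1})$; both are valid.
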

\begin{proof}
Theorem 8.1.2 in~\cite{szego} states that for large enough $N$, we have
\[N\cdot \theta_0=j^{\alpha}_0+o(1),\]
where $j^{\alpha}_0$ is the first positive root of $\bessel{\alpha}{X}$. It is known that (see e.g. equation (5.3) in~\cite{bessel-root-lb}):
\[j^{\alpha}_0>4(\alpha+1).\]
Further, by Lemma~\ref{lem:jacobi-neg-x}, we have that
\[\pi-\theta_{N-1}=j^{\beta}_0\cdot(1+o(1)).\]
Thus, the lemma follows if we pick
\[C'=4\parens{\min\set{\alpha,\beta}+1},\]
which is strictly positive since $\alpha,\beta>-1$.
The proof is complete.
\end{proof}

Next, we will need an asymptotic bound on values of Jacobi polynomials\footnote{The bounds are generally stated for $P_j^{(\alphajac,\betajac)}(X)$-- we have updated it for $\jac{\alphajac}{\betajac}{j}(X)$ by adding the required factor of $\frac{1}{\sqrt{h_j^{\alphajac,\betajac}}}$ to the approximation.}:
\begin{thm}[\cite{szego}, Theorem 8.21.13]
\label{thm:jacobi-approx}
Let $\alphajac,\betajac>-1$. 
There is a constant $c>0$ such that the following holds.  
For any integer $j\ge 1$, let
\begin{equation}
\label{eq:theta-bounds}
\frac cj \le \theta \le \pi-\frac cj.
\end{equation}
Then, there exists a constant $c'$ such that
\[\jac{\alphajac}{\betajac}{j}(\cos{\theta})=\frac{1}{\sqrt{h_j^{\alphajac,\betajac}}}\cdot \frac{1}{\sqrt{j}}\cdot \kap{\alpha}{\beta}{\theta}\cdot \parens{\cos\parens{n\theta+\gamma}\pm \frac{c'}{j\sin{\theta}}},\]
where
\[n = j+\addjac,\]
\[\kap{\alpha}{\beta}{\theta}= \frac{1}{\sqrt{\pi}\cdot \parens{\sin{\frac\theta 2}}^{\alphajac+\frac 12}\parens{\cos\frac\theta 2}^{\betajac+\frac 12}},\]
and
\[\gamma= \frac{-(\alphajac+\frac 12)\pi}{2}.\]
\end{thm}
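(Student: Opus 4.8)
The plan is to derive this asymptotic from the Jacobi differential equation, brought into Sturm--Liouville normal form, to which the Liouville--Green (WKB) method applies; the unknown amplitude and phase are then fixed by a turning-point (Bessel) analysis near the endpoint $\theta = 0$, and the resulting formula is propagated across the whole range by a Volterra (variation-of-parameters) estimate with Gronwall's inequality. Throughout we freely replace $j$ by $n=j+\addjac$ in the hypothesis $\theta\ge c/j$, which only changes $c$ by a bounded factor.

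\smallskip
First I would recall that $y=P_j^{(\alpha,\beta)}(x)$ solves $(1-x^2)y''+[\beta-\alpha-(\alpha+\beta+2)x]y'+j(j+\alpha+\beta+1)y=0$. Setting $x=\cos\theta$ and $y(\cos\theta)=\parens{\sin\tfrac\theta2}^{-\alpha-\frac12}\parens{\cos\tfrac\theta2}^{-\beta-\frac12}u(\theta)$, a direct computation gives the normal form
\[
u''(\theta)+\parens{n^2+\phi(\theta)}u(\theta)=0, \qquad n=j+\addjac, \qquad \phi(\theta)=\frac{\frac14-\alpha^2}{4\sin^2(\theta/2)}+\frac{\frac14-\beta^2}{4\cos^2(\theta/2)},
\]
the shift $j(j+\alpha+\beta+1)\mapsto n^2$ arising from $n^2=j(j+\alpha+\beta+1)+\tfrac{(\alpha+\beta+1)^2}{4}$ with the extra constant folded into $\phi$. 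Here $\phi$ is smooth on $(0,\pi)$ with $\abs{\phi(\theta)}\lesssim\sin^{-2}\theta$; as a sanity check, $\phi\equiv0$ for Chebyshev ($\alpha=\beta=-\tfrac12$) and $\phi=\tfrac1{4\sin^2\theta}$ for Legendre ($\alpha=\beta=0$), matching the classical formulas $T_j(\cos\theta)=\cos j\theta$ and $P_j(\cos\theta)\sim\sqrt{\tfrac{2}{\pi j\sin\theta}}\cos((j+\tfrac12)\theta-\tfrac\pi4)$.

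\smallskip
Next I would fix the constants near $\theta=0$. Since $y$ is a polynomial, $y(\cos\theta)$ is an even analytic function of $\theta$ at $0$, so among the two independent solutions of the normal-form equation, $u$ is a constant multiple of the one whose leading behaviour is $\theta^{\alpha+\frac12}$. On $(0,\delta]$, using $4\sin^2(\theta/2)=\theta^2(1+O(\theta^2))$ and $4\cos^2(\theta/2)=4(1+O(\theta^2))$, the equation agrees to relative error $O(\theta^2)$ with the Bessel-type equation $v''+(n^2+\tfrac{\frac14-\alpha^2}{\theta^2})v=0$, whose regular solution is $\sqrt\theta\,\bessel{\alpha}{n\theta}$; Langer's turning-point method makes this rigorous on $[c/n,\delta]$, and the multiplicative constant is pinned by the Mehler--Heine limit $j^{-\alpha}P_j^{(\alpha,\beta)}(\cos(z/j))\to(z/2)^{-\alpha}\bessel{\alpha}{z}$ (equivalently by $P_j^{(\alpha,\beta)}(1)=\binom{j+\alpha}{j}$), yielding $u(\theta)=\sqrt{\theta/2}\,\bessel{\alpha}{n\theta}(1+O(\theta))$ there. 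Inserting $\bessel{\alpha}{z}=\sqrt{2/(\pi z)}\parens{\cos\parens{z-\tfrac{\alpha\pi}2-\tfrac\pi4}+O(1/z)}$ (valid since $n\theta\ge c$), this becomes
\[
u(\theta)=\frac1{\sqrt{\pi n}}\parens{\cos\parens{n\theta+\gamma}+O\parens{\tfrac1{n\theta}}}, \qquad \gamma=-\tfrac{\alpha\pi}2-\tfrac\pi4=\tfrac{-(\alpha+\frac12)\pi}2,
\]
for $c/n\le\theta\le\delta$, which also determines $u(\delta),u'(\delta)$ up to relative error $O(1/n)$. I would then propagate to $[\delta,\pi-c/n]$ via the Volterra representation $u(\theta)=u_0(\theta)-\tfrac1n\int_\delta^\theta\sin(n(\theta-s))\phi(s)u(s)\,ds$, with $u_0$ solving $u_0''+n^2u_0=0$ carrying the same Cauchy data at $\delta$: on $[\delta,\pi-\delta]$ the potential is bounded, so Gronwall gives additive error $O(n^{-3/2})$; on $[\pi-\delta,\pi-c/n]$, where $\int_\theta^{\pi-\delta}\abs{\phi}=O(1/(\pi-\theta))=O(1/\sin\theta)$, the accumulated error is $O\parens{\tfrac1{n\sin\theta}}\cdot\tfrac1{\sqrt{\pi n}}$, finite precisely because $\theta\le\pi-c/n$ (alternatively one covers $\theta$ near $\pi$ through the symmetry $P_j^{(\alpha,\beta)}(x)=(-1)^jP_j^{(\beta,\alpha)}(-x)$, which also cross-checks the constants). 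Hence for all $c/n\le\theta\le\pi-c/n$,
\[
u(\theta)=\frac1{\sqrt{\pi n}}\parens{\cos\parens{n\theta+\gamma}+O\parens{\tfrac1{n\sin\theta}}}.
\]

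\smallskip
Finally, undoing the substitution gives $P_j^{(\alpha,\beta)}(\cos\theta)=\parens{\sin\tfrac\theta2}^{-\alpha-\frac12}\parens{\cos\tfrac\theta2}^{-\beta-\frac12}u(\theta)$; since $\tfrac1{\sqrt\pi}\parens{\sin\tfrac\theta2}^{-\alpha-\frac12}\parens{\cos\tfrac\theta2}^{-\beta-\frac12}=\kap{\alpha}{\beta}{\theta}$, $\tfrac1{\sqrt n}=\tfrac1{\sqrt j}(1+O(1/j))$ (the $O(1/j)$ absorbable into the error, as the main term is $O(1)$ off the endpoints), and $\tfrac1{n\sin\theta}=\Theta(\tfrac1{j\sin\theta})$, dividing through by $\sqrt{h_j^{\alpha,\beta}}$ yields exactly $\jac{\alpha}{\beta}{j}(\cos\theta)=\tfrac1{\sqrt{h_j^{\alpha,\beta}}}\cdot\tfrac1{\sqrt j}\cdot\kap{\alpha}{\beta}{\theta}\cdot\parens{\cos(n\theta+\gamma)\pm\tfrac{c'}{j\sin\theta}}$, as claimed. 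The hard part is the turning-point step: controlling the solution uniformly through the region $\theta\approx1/n$ and verifying that the Bessel and trigonometric asymptotics overlap on a window where both error terms are $o(1)$ — which is exactly where the hypothesis $\theta\ge c/n$ is used — together with the bookkeeping that regularity of $y$ at $x=1$ selects the $\bessel{\alpha}{\cdot}$ branch (producing the phase $-\tfrac{\alpha\pi}2-\tfrac\pi4$ rather than $+\tfrac{\alpha\pi}2-\tfrac\pi4$) and the exact amplitude $\tfrac1{\sqrt{\pi n}}$.
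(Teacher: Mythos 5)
This theorem is not proved in the paper at all: it is imported verbatim (up to the $1/\sqrt{h_j^{\alphajac,\betajac}}$ normalization) from Szeg\H{o}'s Theorem 8.21.13, so the only meaningful comparison is with the classical argument in that source. Your outline---Liouville normal form of the Jacobi equation, Bessel comparison at $\theta=0$ with the constant pinned by Mehler--Heine, the $J_{\alphajac}$ asymptotics giving the phase $\gamma=-(\alphajac+\tfrac12)\pi/2$, and Volterra propagation plus the reflection $P_j^{(\alphajac,\betajac)}(-x)=(-1)^jP_j^{(\betajac,\alphajac)}(x)$ for $\theta$ near $\pi$---is exactly that classical route (the Hilb-type formula of Theorem~\ref{thm:jacobi-approx-bessel} combined with Bessel asymptotics, the same machinery the paper itself uses in Appendix~\ref{app:jacobi-more}), and it is sound in outline; the one imprecision is the intermediate ``$1+O(\theta)$'' factor, which should be the Hilb-type additive error $\sqrt{\theta}\,O(j^{-3/2})$ in order to yield the stated $c'/(j\sin\theta)$ bound, though the final error form you write down is the correct one.
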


\begin{rmk}
When $\alpha$ and $\beta$ are clear from context (which will be the case for the rest of this section), we will just use $\kappa(\theta)$ instead of $\kap{\alpha}{\beta}{\theta}$.
\end{rmk}

We will need the following bound on the normalization factor $h_j^{\alphajac,\betajac}$:
\begin{lmm}
\label{lem:norm-fact-bound}
For large enough $j$, we have
\[j\cdot h_j^{\alphajac,\betajac}=\Theta_{\abs{\alphajac},\abs{\betajac}}(1),\]
where $\Theta_{x,y}(\cdot)$ hides factors that depends on $x$ and $y$.
\end{lmm}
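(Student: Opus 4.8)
The plan is to start from the closed form
\[
  h_j^{\alphajac,\betajac}=\frac{2^{\alphajac+\betajac+1}}{2j+\alphajac+\betajac+1}\cdot\frac{\Gamma(j+\alphajac+1)\Gamma(j+\betajac+1)}{\Gamma(j+1)\Gamma(j+\alphajac+\betajac+1)}
\]
and multiply through by $j$, so that the task reduces to showing that
\[
  \frac{j}{2j+\alphajac+\betajac+1}\cdot\frac{\Gamma(j+\alphajac+1)\Gamma(j+\betajac+1)}{\Gamma(j+1)\Gamma(j+\alphajac+\betajac+1)}
\]
converges to a positive constant depending only on $\alphajac,\betajac$ (the factor $2^{\alphajac+\betajac+1}$ is already such a constant). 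The first factor clearly tends to $\tfrac12$, so everything hinges on the ratio of Gamma functions.

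The key step is the standard asymptotic $\Gamma(j+a)/\Gamma(j+b)=j^{a-b}\bigl(1+O(1/j)\bigr)$ as $j\to\infty$, valid for fixed real $a,b$ (this follows from Stirling's formula, or one may cite it directly as a classical fact about the Gamma function). Applying it with the pairing $\Gamma(j+\alphajac+1)/\Gamma(j+\alphajac+\betajac+1)=j^{-\betajac}\bigl(1+O(1/j)\bigr)$ and $\Gamma(j+\betajac+1)/\Gamma(j+1)=j^{\betajac}\bigl(1+O(1/j)\bigr)$, the two powers of $j$ cancel and the ratio of Gamma functions tends to $1$. Hence $j\cdot h_j^{\alphajac,\betajac}\to 2^{\alphajac+\betajac}$, which is a positive constant depending only on $\alphajac$ and $\betajac$; in particular it is $\Theta_{|\alphajac|,|\betajac|}(1)$ for all large $j$.

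I do not expect any real obstacle here; the only mild care needed is to make sure the pairing of the four Gamma factors is chosen so that the shifts inside each ratio are fixed constants (so that the asymptotic applies cleanly) and the powers of $j$ cancel exactly. One should also note that $\alphajac,\betajac>-1$ guarantees all four arguments $j+\alphajac+1$, $j+\betajac+1$, $j+1$, $j+\alphajac+\betajac+1$ are eventually positive, so the Gamma values are well-defined and positive for large $j$, which is all that is claimed. If one prefers a self-contained argument, one can instead invoke $\log\Gamma(j+a)=\bigl(j+a-\tfrac12\bigr)\log j-j+\tfrac12\log(2\pi)+O(1/j)$ and observe that the $a$-dependent part contributes $a\log j$ to $\log\Gamma(j+a)$, so the combination $\log\Gamma(j+\alphajac+1)+\log\Gamma(j+\betajac+1)-\log\Gamma(j+1)-\log\Gamma(j+\alphajac+\betajac+1)$ telescopes to $O(1/j)$, giving the same conclusion.
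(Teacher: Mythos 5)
Your proposal is correct and follows essentially the same route as the paper: both reduce to showing the Gamma-function ratio tends to a constant via Stirling's approximation. The only cosmetic difference is that you package the asymptotics as the classical ratio formula $\Gamma(j+a)/\Gamma(j+b)=j^{a-b}(1+O(1/j))$ and pair the factors to cancel, whereas the paper expands Stirling's formula for each of the four Gamma values explicitly and then cancels the powers of $j$ and $e$ by hand.
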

\begin{proof}
We will use Stirling's approximation for the Gamma function: for $z\ge 1$ we have
\[\Gamma(z)=\sqrt{\frac{2\pi}z}\cdot \parens{\frac{z}{e}}^z\parens{1+O\parens{\frac 1z}}.\]
By definition of $h_j^{\alphajac,\betajac}$, we have
\[j\cdot h_j^{\alphajac,\betajac}=\frac{j\cdot2^{\alphajac+\betajac+1}}{2j+\alphajac+\betajac+1}\cdot \frac{\Gamma(j+\alphajac+1)\Gamma(j+\betajac+1)}{\Gamma(j+1)\Gamma(j+\alphajac+\betajac+1)}.\]
Note that for large enough $j$, the first fraction above is $\Theta_{\alphajac,\betajac}(1)$ so to complete the proof we will argue that the second fraction is also $\Theta_{\abs{\alphajac},\abs{\betajac}}(1)$. Indeed, we will apply Stirling's approximation to the Gamma function (where for notational convenience we will ignore the constant factors in the Stirling's approximation) and get
\begin{align*}
  \frac{\Gamma(j+\alphajac+1)\Gamma(j+\betajac+1)}{\Gamma(j+1)\Gamma(j+\alphajac+\betajac+1)}
  &= \frac{\parens{\frac{\parens{j+\alphajac+1}^{j+\alphajac+\frac 12}}{e^{j+\alphajac+1}}}\cdot \parens{\frac{\parens{j+\betajac+1}^{j+\betajac+\frac 12}}{e^{j+\betajac+1}}}   }{\parens{\frac{\parens{j+1}^{j+\frac 12}}{e^{j+1}}}\cdot \parens{\frac{\parens{j+\alphajac+\betajac+1}^{j+\alphajac+\betajac+\frac 12}}{e^{j+\alphajac+\betajac+1}}}} \cdot \frac{\sqrt{j+1}\sqrt{j+\bar{\alpha}+\bar{\beta}+1}}{\sqrt{j+\bar{\alpha}+1}\sqrt{j+\bar{\beta}+1}}\cdot (1 + o(1))\\
&=\frac{j^{j+\alphajac+\frac 12+j+\betajac+\frac 12 - j-\frac 12-j-\alphajac-\betajac-\frac 12}}{e^{-j-1-j-\alphajac-\betajac-1+j+\alphajac+1+j+\betajac+1}}\cdot \frac{\parens{1+\frac{\alphajac+1}{j}}^{j+\alphajac+\frac 12}\cdot \parens{1+\frac{\betajac+1}{j}}^{j+\betajac+\frac 12}}{\parens{1+\frac{1}{j}}^{j+\frac 12}\cdot \parens{1+\frac{\alphajac+\betajac+1}{j}}^{j+\alphajac+\betajac+\frac 12}} \cdot (1+o(1)) \\
&=\frac{\parens{1+\frac{\alphajac+1}{j}}^{j+\alphajac+\frac 12}\cdot \parens{1+\frac{\betajac+1}{j}}^{j+\betajac+\frac 12}}{\parens{1+\frac{1}{j}}^{j+\frac 12}\cdot \parens{1+\frac{\alphajac+\betajac+1}{j}}^{j+\alphajac+\betajac+\frac 12}} \cdot (1+o(1)).
\end{align*}
The proof is complete by noting that each of the remaining terms is $e^{O(\abs{\alphajac}+\abs{\betajac})}$ for large enough $j$, which is $\Theta_{\abs{\alphajac},\abs{\betajac}}(1)$, as desired,
\end{proof}

Finally, we will need the following result in our analysis.
\begin{thm}
\label{cor:jac-U-bound}
Let $\alpha,\beta> - 1$.
Let $N\ge 1$ be large enough.
Let $\evalpts_0,\dots,\evalpts_{N-1}$ be the roots of the $N$th Jacobi polynomial. Then define $\vF$ such that for every $0<\ell,j<N$, we have
\begin{equation}
  \label{eq:F-entries}
\vF[\ell,j]=\jac{\alpha}{\beta}{j}(\evalpts_{\ell})\cdot\sqrt{w_{\ell}},
\end{equation}
where
\[w_{\ell}=\frac 1{\sum_{j=0}^{N-1} \jac{\alpha}{\beta}{j}(\evalpts_{\ell})^2}.\]
Then for every $0\le \ell<N$, we have
\begin{equation}
\label{eq:w-ell-ub}
\frac 1{w_\ell} = O\parens{N\cdot\kappa^2(\theta_\ell)}
\end{equation}
and
\begin{equation}
\label{eq:jacobi-U}
\max_{0<\ell,j<N} \abs{\vF[\ell,j]}=\frac{O_{\abs{\alpha},\abs{\beta}}(1)}{\sqrt{N}}.
\end{equation}
\end{thm}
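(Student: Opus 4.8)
The plan is to derive both statements from a single two-sided estimate on the Christoffel function evaluated at the node $\evalpts_\ell$:
\begin{equation}\label{eq:christoffel-node}
 \frac{1}{w_\ell} \;=\; \sum_{j=0}^{N-1}\jac{\alpha}{\beta}{j}(\evalpts_\ell)^2 \;=\; \Theta_{\abs{\alpha},\abs{\beta}}\parens{N\,\kappa^2(\theta_\ell)}.
\end{equation}
The upper bound in~\eqref{eq:christoffel-node} is exactly~\eqref{eq:w-ell-ub}. To obtain~\eqref{eq:jacobi-U} from the \emph{lower} bound, I will also prove a pointwise estimate
\begin{equation}\label{eq:jac-pointwise}
 \jac{\alpha}{\beta}{j}(\evalpts_\ell)^2 \;=\; O_{\abs{\alpha},\abs{\beta}}\parens{\kappa^2(\theta_\ell)}\qquad (0\le j<N),
\end{equation}
after which $\abs{\vF[\ell,j]}^2 = \jac{\alpha}{\beta}{j}(\evalpts_\ell)^2 w_\ell = \jac{\alpha}{\beta}{j}(\evalpts_\ell)^2 / \sum_{j'}\jac{\alpha}{\beta}{j'}(\evalpts_\ell)^2$, combined with the lower bound in~\eqref{eq:christoffel-node} and with~\eqref{eq:jac-pointwise}, gives $\abs{\vF[\ell,j]}^2 = O_{\abs{\alpha},\abs{\beta}}(1/N)$, which is~\eqref{eq:jacobi-U}.

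For~\eqref{eq:jac-pointwise} and the upper bound in~\eqref{eq:christoffel-node}, the starting point is Lemma~\ref{lem:jacobi-root-bounded}, which puts $\theta_\ell\in[C'/N,\pi-C'/N]$, so that $\sin\theta_\ell \gtrsim 1/N$. Fix $\ell$ and split the sum over $j$ at the threshold $j_\ell = \Theta(1/\theta_\ell)+\Theta(1/(\pi-\theta_\ell)) \le O(N)$. For $j>j_\ell$ the hypothesis~\eqref{eq:theta-bounds} of Theorem~\ref{thm:jacobi-approx} holds; using Lemma~\ref{lem:norm-fact-bound} to replace $\frac{1}{\sqrt{h_j^{\alpha,\beta}}}\cdot\frac{1}{\sqrt{j}}$ by $\Theta_{\abs{\alpha},\abs{\beta}}(1)$, that theorem gives
\[ \jac{\alpha}{\beta}{j}(\cos\theta_\ell) = \Theta_{\abs{\alpha},\abs{\beta}}(1)\cdot\kappa(\theta_\ell)\parens{\cos(n\theta_\ell+\gamma)\pm\frac{c'}{j\sin\theta_\ell}}, \]
and since $j>j_\ell$ forces $\frac{c'}{j\sin\theta_\ell}=O(1)$, each such term is $O_{\abs{\alpha},\abs{\beta}}(\kappa^2(\theta_\ell))$. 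For the $O(j_\ell)$ indices $j\le j_\ell$ — the transition/Mehler--Heine regime where $j\theta_\ell=O(1)$ — I invoke the sharper Bessel-function asymptotics for Jacobi polynomials collected in Appendix~\ref{app:jacobi} to get the same bound $\jac{\alpha}{\beta}{j}(\cos\theta_\ell)^2 = O_{\abs{\alpha},\abs{\beta}}(\kappa^2(\theta_\ell))$ there. Together these prove~\eqref{eq:jac-pointwise}, and summing the $\le N$ terms gives the upper bound in~\eqref{eq:christoffel-node}.

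For the lower bound in~\eqref{eq:christoffel-node} I restrict the sum to the ``bulk'' indices $j_\ell<j<N$ and, again by Theorem~\ref{thm:jacobi-approx} and Lemma~\ref{lem:norm-fact-bound}, estimate $\sum_{j_\ell<j<N}\jac{\alpha}{\beta}{j}(\cos\theta_\ell)^2 = \Theta_{\abs{\alpha},\abs{\beta}}(1)\,\kappa^2(\theta_\ell)\sum_{j_\ell<j<N}\parens{\cos(n\theta_\ell+\gamma)\pm\frac{c'}{j\sin\theta_\ell}}^2$. Writing $\cos^2=\frac12(1+\cos 2(\cdot))$, the $\cos 2(\cdot)$ contribution is a geometric sum of size $O(1/\sin\theta_\ell)$ and the cross/error contributions are lower order, so once $\theta_\ell$ is farther than a suitable multiple of $1/N$ from $\{0,\pi\}$ the main term $\frac12(N-j_\ell)=\Omega(N)$ dominates and $\sum_j\jac{\alpha}{\beta}{j}(\cos\theta_\ell)^2 = \Omega(N\kappa^2(\theta_\ell))$. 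In the complementary near-endpoint case, $\theta_\ell = \Theta(1/N)$ or $\pi-\theta_\ell=\Theta(1/N)$, the cosine approximation no longer self-averages, and I instead lower-bound $\sum_{j=1}^{N}\jac{\alpha}{\beta}{j}(\cos\theta_\ell)^2$ directly using the Mehler--Heine/Bessel asymptotics of Appendix~\ref{app:jacobi}, comparing it to $\kappa^2(\theta_\ell)\,\theta_\ell^{-1}\int_0^{N\theta_\ell}u\,J_\alpha(u)^2\,du$ with $N\theta_\ell$ bounded below by $C'$ (so that the integral is a positive constant). This yields~\eqref{eq:christoffel-node} and hence both claims.

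The main obstacle is the transition/endpoint regime $j\sin\theta_\ell=O(1)$: Theorem~\ref{thm:jacobi-approx} is useless there, so both the uniform-in-$j$ pointwise bound~\eqref{eq:jac-pointwise} and the matching lower bound on the Christoffel sum at nodes within $O(1/N)$ of $\pm 1$ must be extracted from the finer Bessel-function estimates for Jacobi polynomials — precisely the ``sharper bound $\ldots$ in terms of Bessel functions'' flagged in the technical overview — which are developed in Appendix~\ref{app:jacobi}; everything else is bookkeeping with Lemmas~\ref{lem:norm-fact-bound} and~\ref{lem:jacobi-root-bounded}.
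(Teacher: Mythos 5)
Your proposal is correct, and the upper-bound half of the argument (pointwise bound on $\jac{\alpha}{\beta}{j}(\evalpts_\ell)$ via the Bessel asymptotics, then sum) coincides with the paper's. But your lower bound on $\sum_j\jac{\alpha}{\beta}{j}(\evalpts_\ell)^2$ takes a genuinely different route. You estimate the sum directly, splitting $[0,N)$ at a threshold $j_\ell\approx 1/\theta_\ell$, using the cosine approximation and a geometric-series cancellation of $\sum_j\cos(2n_j\theta_\ell+2\gamma)$ (of size $O(1/\sin\theta_\ell)$) in the bulk, and a Riemann-sum comparison to $\int_0^{N\theta_\ell}uJ_\alpha(u)^2\,du$ near the endpoints. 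The paper instead exploits that $\evalpts_\ell$ is a \emph{root} of $p_N$: the Christoffel--Darboux identity (Corollary~\ref{cor:kernel-x=y}) collapses the entire kernel sum $K_{N-1}(\evalpts_\ell,\evalpts_\ell)$ to the single product $\frac{1}{a_N}p'_N(\evalpts_\ell)p_{N-1}(\evalpts_\ell)$, and Szeg\H{o}'s identity (4.5.7) together with Lemma~\ref{lem:jac-deriv} converts this to a square of a single Jacobi polynomial $(P_{N-1}^{(\alpha+1,\beta+1)}(\evalpts_\ell))^2$, which is then lower-bounded using the Bessel asymptotics and the root condition (forcing $|\cos|$ small and hence $|\sin|$ close to $1$). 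Only in the near-endpoint regime $\theta_\ell\in[c/N,C_0/N]$ does the paper fall back to a direct-sum estimate similar in spirit to yours (though over a fixed window of $z_j$-values where $|\sqrt{z}J_\alpha(z)|$ is bounded below, rather than via a Riemann sum). What the CD route buys is that there is no cancellation argument to control: you need $\sin\theta_\ell$ well separated from $0$ to make the geometric sum negligible against the $\Theta(N)$ main term, and you need to track three classes of error (the geometric sum, the cross term $\sum 2|\cos|\cdot\text{error}$, and the squared errors), while the CD approach reduces everything to a one-point estimate. What your route buys is uniformity — no algebraic identities for orthogonal polynomials beyond the asymptotics themselves — and a cleaner picture of why the Christoffel function is $\Theta(N\kappa^2(\theta_\ell))$. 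Both are correct; do check, when you flesh out the bulk geometric-series estimate, that the cross term $\sum_{j>j_\ell}\frac{c'}{j\sin\theta_\ell}=O(\log(N/j_\ell)/\sin\theta_\ell)$ is controlled by choosing the endpoint threshold $C_0$ large enough, exactly as the paper does when it fixes its $C_0$.
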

It turns out that the above result does not quite follow from Theorem~\ref{thm:jacobi-approx} and it needs some extra work to prove~\eqref{eq:jacobi-U} for the entries where equation~\eqref{eq:theta-bounds} is not satisfied. In particular, we need a more general version of Theorem~\ref{thm:jacobi-approx} (which involves the Bessel function). The proof of Theorem~\ref{cor:jac-U-bound} is deferred to Appendix~\ref{app:jac-U-bound} (with the more general version of Theorem~\ref{thm:jacobi-approx} proven in Appendix~\ref{app:jacobi-more}).

\subsection{Overview of the algorithm}

The main idea will be to use Theorem~\ref{thm:theta-from-cos} to identify $\theta_\ell$ accurately enough (Section~\ref{sec:estimate-ell}), which identifies $\ell$.
Given $\ell$, $v$ can be estimated with a simple sampling argument (Algorithm~\ref{alg:check}).

In order to estimate $\theta$ accurately enough, Theorem~\ref{thm:theta-from-cos} will require querying $\cos(w\theta_\ell)$ for $w$ up to $\Omega(N)$, say $w \le \nu N$ for some (universal, to be chosen later) constant $\nu$.

These queries in turn will be provided by Corollary~\ref{cor:jacobi-to-cos}, which computes $\cos(w\theta)$ from queries $\jac{\alphajac}{\betajac}{j}(\cos \theta)$ for $j = \Delta-w,\Delta,\Delta+w$ for certain $\Delta$ that satisfy a technical condition (equation~\eqref{eq:cos-denom-large}).

\begin{itemize}
\item In order for $\jac{\alphajac}{\betajac}{j}(\cdot)$ to be meaningful (related to $\cos$ values) via Theorem~\ref{thm:jacobi-approx}, either $j$ must be sufficiently large enough or alternatively $\theta_\ell$ cannot be too close to the boundaries of $[0,\pi]$ (equation~\eqref{eq:theta-bounds}).
By restricting the queries to $j \ge \nu N$ (enforced by only querying $\Delta \in [2\nu N, N - 2\nu N]$), there will be a constant number of possible bad values of $\ell$.

\item Because we can't determine if \eqref{eq:cos-denom-large} is true a priori, we will instead query for many random values of $\Delta$ in the allowed range.
For most values of $\ell$, there are enough good values of $\Delta$ (Definition~\ref{def:no-spread}).
There will be a small number of bad values of $\ell$, located in Lemma~\ref{lmm:not-spread}.
\end{itemize}

Finally, Algorithm~\ref{alg:check} (with bounds given in Corollary~\ref{cor:jacobi-prune}) allows us to check if $\ell$ is any specific value.
Thus the final algorithm first checks if $\ell$ is any of the possible bad values; if not, it narrows down $\ell$ using Theorem~\ref{thm:theta-from-cos} and Corollary~\ref{cor:jacobi-to-cos}.
This gives a final window for $\theta_\ell$ of size $O(1/\nu N)$, with $O(1/\nu)$ possible values of $\ell$ that can be checked with Algorithm~\ref{alg:check}.

\paragraph{Summary of notation}
In Table~\ref{table:jacobi-notation} we summarize notation for frequently used constants and parameters in this section, including a brief description and where they are defined and used.

\begin{table}[h]
  \caption{Summary of notation used in the 1-sparse Jacobi solver.}
  \centering
\begin{tabularx}{\linewidth}{@{}lX@{}}
\toprule
\multicolumn{2}{l}{\textbf{Fixed or latent parameters}} \\
  $\alphajac, \betajac$ & parameters of the fixed Jacobi family we are considering \\
  $\vF$ & Fixed orthogonal polynomial evaluation matrix (defined in Theorem~\ref{cor:jac-U-bound}). \\
  $v, \ell, \theta_\ell$ & unknowns to be determined \\
  $\vw, \vy$ & algorithm is given query access to $\vy = v \cdot \vF^T\ve_\ell + \vw$ \\
  $\epsilon$ & fixed noise level so that $\norm{2}{\vw} \le \epsilon |v|$ \\
\midrule
\multicolumn{2}{l}{\textbf{Universal constants and parameters for the fixed Jacobi family}} \\
  $C^{\alphajac,\betajac}$ & constant in the Jacobi polynomial root distribution (Theorem~\ref{thm:jacobi-roots})\\
  $c$ & Provides condition for Theorem~\ref{thm:jacobi-approx} to hold (equation~\eqref{eq:theta-bounds}) \\
  $c'$ & constant in the Jacobi evaluation approximation Theorem~\ref{thm:jacobi-approx} \\
  $n, \kappa, \gamma$ & parameters in the Jacobi evaluation approximation Theorem~\ref{thm:jacobi-approx} \\
\midrule
\multicolumn{2}{l}{\textbf{Constants we will fix}} \\
  $\nu$ & A constant given to Theorem~\ref{thm:theta-from-cos} so that it yields a final range for $\theta_\ell$ of size $O(1/\nu N)$.
  Independently, we will only query $\jac{\alphajac}{\betajac}{j}$ for $j \ge \nu N$, so that Theorem~\ref{thm:jacobi-approx} holds for any $\theta_\ell$ except for a range of size $O(1/\nu N)$. 
  It is set in Lemma~\ref{lem:jac-to-cos-correct}. \\
  $\delta_0$ & A constant determining the noise $\epsilon_0$ that Theorem~\ref{thm:theta-from-cos} sees, via equation~\eqref{eq:jacobi-to-cos-approx}.
  It is set in Lemma~\ref{lem:jac-to-cos-correct} as a function of $\epsilon$. \\
\midrule
\multicolumn{2}{l}{\textbf{Indices and other variables}} \\
  $w$ & A ``blow-up'' factor indexing the values $\cos(w \theta_\ell)$ that will be queried during Theorem~\ref{thm:theta-from-cos}. The algorithm will query $w \le \nu N$ to narrow the range for $\theta_\ell$ sufficiently. \\
  $j$ & Indexes the Jacobi polynomials $\jac{\alphajac}{\betajac}{j}(x)$. The algorithm will query $\vy[j]$ for $j \in [\nu N, N-\nu N]$ so that Theorem~\ref{thm:jacobi-approx} applies. \\
  $\Delta$ & Randomly chosen value (in $[2\nu N, N-2\nu N]$) to deduce $\cos(w \theta)$ from $J_{\Delta-w}, J_{\Delta}, J_{\Delta+w}$ (Corollary~\ref{cor:jacobi-to-cos}) \\
  $R$ & Number of rounds to perform a given subroutine that has a constant probability of failure \\
  $\rho_\epsilon$ & A constant depending on $\epsilon$ (Definition~\ref{def:rho}) \\
  $\mu$ & Error probability parameter of subroutines, as well as of the entire 1-sparse recovery algorithm \\
  \bottomrule
\end{tabularx}
  \label{table:jacobi-notation}
\end{table}

\subsection{$\cos(\cdot)$ values from Jacobi polynomial evaluation}

We use Theorem~\ref{thm:jacobi-approx} to get the following lemma, which provides us with noisy estimates of $\cos(w \theta)$ from querying values of $\jac{\alphajac}{\betajac}{\square}(\cos \theta)$:
\begin{cor}
\label{cor:jacobi-to-cos}
Consider a fixed $w \in [N]$ and $\theta \in [0, \pi]$.
Then for any integer $\Delta > w$ such that $j=\Delta-w$ satisfies~\eqref{eq:theta-bounds},
\begin{equation}
\label{eq:cos-denom-large}
\abs{\cos\parens{\parens{\Delta+\addjac}\theta+\gamma}}\ge \delta_0,
\end{equation}
and we have access to values $\widetilde{{\jac{\alphajac}{\betajac}{j}}}(\cos{\theta})$ for $j\in\set{\Delta-w,\Delta,\Delta+w}$ such that
\begin{equation}
\label{eq:approx-jac-error-bound}
\widetilde{\jac{\alphajac}{\betajac}{j}}(\cos{\theta}) = \jac{\alphajac}{\betajac}{j}(\cos{\theta}) \pm \kappa(\theta)\eps,
\end{equation}
we have
\begin{equation}
\label{eq:jacobi-to-cos-approx}
\frac{\parens{\sqrt{h_{\Delta+w}^{\alphajac,\betajac}\cdot (\Delta+w)}}\cdot \widetilde{\jac{\alphajac}{\betajac}{\Delta+w}}(\cos{\theta}) + \parens{\sqrt{h_{\Delta-w}^{\alphajac,\betajac}(\Delta-w)}}\cdot \widetilde{\jac{\alphajac}{\betajac}{\Delta-w}}(\cos{\theta})  } { 2\sqrt{h_{\Delta}^{\alphajac,\betajac}\cdot \Delta}\cdot \widetilde{\jac{\alphajac}{\betajac}{\Delta}}(\cos{\theta})  }
=
\frac{ \cos(w\theta) \pm \frac{ O(1/\delta_0)}{(\Delta-w)\sin{\theta}}\pm O(\eps/\delta_0)   }{ 1\pm \frac{ O(1/\delta_0)}{\Delta\sin{\theta}} \pm O(\eps/\delta_0)  }.
\end{equation}
\end{cor}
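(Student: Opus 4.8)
The plan is to apply Theorem~\ref{thm:jacobi-approx} separately to each of the three indices $j \in \{\Delta-w,\Delta,\Delta+w\}$, recombine the resulting cosine approximations through a single product‑to‑sum identity, and then carefully propagate the two sources of error: the intrinsic $\pm c'/(j\sin\theta)$ term in Theorem~\ref{thm:jacobi-approx} and the query error $\pm\kappa(\theta)\eps$ from \eqref{eq:approx-jac-error-bound}.

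First I would observe that since $\Delta-w \le \Delta \le \Delta+w$, the hypothesis that $j=\Delta-w$ satisfies \eqref{eq:theta-bounds} forces all three indices to satisfy it as well (the constraint only becomes easier as $j$ grows), so Theorem~\ref{thm:jacobi-approx} applies to each; in particular $\theta \in (0,\pi)$ so $\kappa(\theta)\ne 0$. Abbreviating $A = \Delta + \addjac$, the quantity ``$n$'' appearing in Theorem~\ref{thm:jacobi-approx} for the index $\Delta \pm w$ is exactly $A \pm w$, and for the index $\Delta$ it is $A$. Multiplying the approximation of Theorem~\ref{thm:jacobi-approx} through by $\sqrt{h_j^{\alphajac,\betajac}\,j}$, substituting the noisy evaluation via \eqref{eq:approx-jac-error-bound}, and using Lemma~\ref{lem:norm-fact-bound} (which gives $\sqrt{h_j^{\alphajac,\betajac}\,j} = O(1)$ for the relevant large indices) to absorb the query error, I obtain for each $j$
\[ \sqrt{h_j^{\alphajac,\betajac}\,j}\cdot \widetilde{\jac{\alphajac}{\betajac}{j}}(\cos\theta) = \kappa(\theta)\left(\cos(n_j\theta+\gamma) \pm \frac{O(1)}{j\sin\theta} \pm O(\eps)\right), \]
where $n_j$ is the corresponding integer shift ($A-w$, $A$, or $A+w$).

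Next I would add the two equations for $j=\Delta+w$ and $j=\Delta-w$: the cosine terms combine via $\cos((A+w)\theta+\gamma) + \cos((A-w)\theta+\gamma) = 2\cos(A\theta+\gamma)\cos(w\theta)$, and since $1/(\Delta+w) \le 1/(\Delta-w)$ the two $O(1)/(j\sin\theta)$ errors merge into $\pm O(1)/((\Delta-w)\sin\theta)$ while the two $O(\eps)$ terms merge into a single $O(\eps)$. This yields the numerator of \eqref{eq:jacobi-to-cos-approx} as $\kappa(\theta)\bigl(2\cos(A\theta+\gamma)\cos(w\theta) \pm O(1)/((\Delta-w)\sin\theta) \pm O(\eps)\bigr)$, while the $j=\Delta$ equation gives the denominator as $2\kappa(\theta)\bigl(\cos(A\theta+\gamma) \pm O(1)/(\Delta\sin\theta) \pm O(\eps)\bigr)$. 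Cancelling the common nonzero factor $2\kappa(\theta)$, then dividing both numerator and denominator by $\cos(A\theta+\gamma) = \cos((\Delta+\addjac)\theta+\gamma)$ — legitimate since $\lvert\cos((\Delta+\addjac)\theta+\gamma)\rvert \ge \delta_0 > 0$ by hypothesis \eqref{eq:cos-denom-large} — converts each $O(1)$ and $O(\eps)$ into $O(1/\delta_0)$ and $O(\eps/\delta_0)$, which is exactly \eqref{eq:jacobi-to-cos-approx}.

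There is no genuine analytic obstacle here; the one non‑routine ingredient is the product‑to‑sum identity, and the only things needing care are (i) checking that the factor $\kappa(\theta)$ is identical in all three expansions so that it cancels cleanly, (ii) confirming that Lemma~\ref{lem:norm-fact-bound} applies to the indices $\Delta \pm w$ in the parameter regime where the corollary is used (namely $\Delta$, and hence all three indices, bounded away from $0$ and $N$, so ``large enough''), and (iii) the bookkeeping of which of two combined $\pm$‑terms dominates.
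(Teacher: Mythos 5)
Your proposal is correct and follows essentially the same path as the paper's proof: apply Theorem~\ref{thm:jacobi-approx} to each of the three indices, normalize via Lemma~\ref{lem:norm-fact-bound}, combine the $\Delta\pm w$ terms with the product-to-sum identity $\cos(A+B)+\cos(A-B)=2\cos A\cos B$, and divide through by $2\kappa(\theta)\cos\bigl((\Delta+\addjac)\theta+\gamma\bigr)$ using \eqref{eq:cos-denom-large} to pick up the $1/\delta_0$ factors. The only cosmetic differences are that you split the cancellation of $2\kappa(\theta)$ from the division by $\cos A$ into two steps and that you make explicit (what the paper leaves implicit) that the hypothesis on $j=\Delta-w$ automatically covers $\Delta$ and $\Delta+w$, both of which are fine.
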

\begin{proof}
This follows by applying Theorem~\ref{thm:jacobi-approx} and the following elementary identity in a straightforward way:
\begin{equation}
  \label{eq:cosA+B}
  \cos{A}\cos{B}=\frac 12\cdot \parens{\cos(A+B)+\cos(A-B)}.
\end{equation}

For notational convenience define
\[A = \parens{\Delta+\addjac}\theta+\gamma,\]
and
\[B= w\theta.\]
Now consider the following relations:
\begin{align}
&\frac{\parens{\sqrt{h_{\Delta+w}^{\alphajac,\betajac}\cdot (\Delta+w)}}\cdot \widetilde{\jac{\alphajac}{\betajac}{\Delta+w}}(\cos{\theta}) + \parens{\sqrt{h_{\Delta-w}^{\alphajac,\betajac}(\Delta-w)}}\cdot \widetilde{\jac{\alphajac}{\betajac}{\Delta-w}}(\cos{\theta})  } { 2\sqrt{h_{\Delta}^{\alphajac,\betajac}\cdot \Delta}\cdot \widetilde{\jac{\alphajac}{\betajac}{\Delta}}(\cos{\theta})  }\notag\\
\label{eq:jac-to-cos-s1}
=&\frac{\parens{\sqrt{h_{\Delta+w}^{\alphajac,\betajac}\cdot (\Delta+w)}}\cdot \jac{\alphajac}{\betajac}{\Delta+w}(\cos{\theta}) + \parens{\sqrt{h_{\Delta-w}^{\alphajac,\betajac}(\Delta-w)}}\cdot \jac{\alphajac}{\betajac}{\Delta-w}(\cos{\theta}) \pm O(\kappa(\theta)\epsilon) } { 2\sqrt{h_{\Delta}^{\alphajac,\betajac}\cdot \Delta}\cdot \jac{\alphajac}{\betajac}{\Delta}(\cos{\theta}) \pm O(\kappa(\theta)\epsilon) }\notag\\
&~~~~~~~~~~=
\frac{ \kappa(\theta)\parens{\cos(A+B)+\cos(A-B)\pm \frac{c'}{(\Delta-w)\sin{\theta}}\pm \frac{c'}{(\Delta+w)\sin{\theta}} \pm O(\epsilon) } }{2\kappa(\theta)\parens{\cos{A}\pm \frac{c'}{\Delta\sin{\theta}} \pm O(\epsilon) } }\\
\label{eq:jac-to-cos-s2}
&~~~~~~~~~~=
\frac{ \frac{\cos(A+B)+\cos(A-B)}{2\cos{A}} \pm \frac{O(1/\delta_0)}{(\Delta-w)\sin{\theta}} \pm O(\epsilon/\delta_0)}{ 1\pm \frac{O(1/\delta_0)}{\Delta\sin{\theta}} \pm O(\epsilon/\delta_0) }\\
\label{eq:jac-to-cos-s3}
&~~~~~~~~~~=
\frac{ \cos(w\theta) \pm \frac{ O(1/\delta_0)}{(\Delta-w)\sin{\theta}} \pm O(\epsilon/\delta_0)  }{ 1\pm \frac{ O(1/\delta_0)}{\Delta\sin{\theta}}  \pm O(\epsilon/\delta_0) },
\end{align}
as desired.
In the above the first equation follows from Lemma~\ref{lem:norm-fact-bound},~\eqref{eq:jac-to-cos-s1} follows from Theorem~\ref{thm:jacobi-approx},~\eqref{eq:jac-to-cos-s2} follows from~\eqref{eq:cos-denom-large}, and~\eqref{eq:jac-to-cos-s3} follows from~\eqref{eq:cosA+B}. (Note that $\cos{A}\ne 0$ by~\eqref{eq:cos-denom-large}.)

\end{proof}

\subsection{Computing $\theta$ from noisy $\cos(\cdot)$ evaluation}

We being with a definition:
\begin{defn}\label{def:rho}
  Let $\rho_{\epsilon_0}$ be such that for any $\theta \in [0,\pi]$,
  \[
    \arccos\left( \frac{\cos \theta \pm \epsilon_0}{1 \pm \epsilon_0} \right) = \theta \pm \rho_{\epsilon_0}.
  \]
\end{defn}

In Section~\ref{sec:chebyshev-1sps}, we will prove the following result:
\begin{thm}
\label{thm:theta-from-cos} There is an algorithm $\approxcos$ such that the following holds.
Let $\theta\in (0,\pi)$. Then for any integer $\tau\ge 1$ and small enough $\epsilon_0$ so that $0 < \rho_{\epsilon_0} < \pi/22$, given access to evaluations
\begin{equation}
\label{eq:theta-from-cos-query}
\frac{\cos(w\theta)\pm \eps_0}{1\pm \eps_0},
\end{equation}
where $w\in\set{x_t\cdot 2^{t-1}|\text{ for every } 1\le t\le \tau\text{ with }x_t\in[1,3/2]}$,  then $\approxcos(\tau,\eps_0)$ (for $\tau\le \floor{\log_2(2N/3)}$) returns a range $S_{\tau}\subseteq [0,\pi]$ such that
\begin{itemize}
\item $\theta\in S_{\tau}$; and
\item $\abs{S_{\tau}}\le \frac{2\rho_{\eps_0}}{2^{\tau}}$.
\end{itemize}
 Finally,  $\approxcos(\tau,\eps_0)$ runs in $O(\tau)$ time and makes $O(\tau)$ queries. 
\end{thm}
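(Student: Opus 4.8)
The goal is to recover $\theta\in(0,\pi)$ to within a window of size $\frac{2\rho_{\eps_0}}{2^\tau}$ using $O(\tau)$ noisy queries of the form $\frac{\cos(w\theta)\pm\eps_0}{1\pm\eps_0}$ for carefully chosen ``blow-up'' factors $w$. The natural approach is \emph{iterative dilation}: maintain an interval $S_t\ni\theta$ whose length halves (roughly) at each step, using at each step one query at a frequency $w_t\approx 2^{t-1}$. I would start from $S_0$: query at $w=1$ to get $\cos\theta\pm\eps_0$ over $1\pm\eps_0$, apply $\arccos$, and by Definition~\ref{def:rho} obtain $\theta\in S_1$ with $|S_1|\le 2\rho_{\eps_0}$. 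The inductive step is the heart: given that $\theta$ lies in a known interval of length $\approx \rho_{\eps_0}/2^{t-2}$, choose $w_t\approx 2^{t-1}$ so that $w_t\theta$ ranges over an interval of length a bit less than $\pi$ (say at most $2\rho_{\eps_0}\le\pi/11$ after absorbing constants); then $w_t\theta\bmod 2\pi$ sits in a single ``branch'' where $\arccos\circ\cos$ is well-behaved, and querying $\cos(w_t\theta)$ and applying $\arccos$ pins down $w_t\theta$ to within $\rho_{\eps_0}$, hence $\theta$ to within $\rho_{\eps_0}/w_t\approx \rho_{\eps_0}/2^{t-1}$, a factor-$2$ improvement.

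**Key steps in order.** (1) Set up the invariant: after step $t$, we have an interval $S_t$ with $\theta\in S_t$ and $|S_t|\le 2\rho_{\eps_0}/2^t$. (2) Base case $t=1$ as above. (3) Inductive step: given $S_{t-1}$ of length $\le 2\rho_{\eps_0}/2^{t-1}$, pick the dilation factor. We want $w_t\cdot|S_{t-1}|$ to be small enough that $\{w_t\theta'\bmod 2\pi : \theta'\in S_{t-1}\}$ stays within a half-period of $\cos$ where we can unambiguously invert, but $w_t$ as large as $\approx 2^{t-1}$ so that dividing the recovered uncertainty by $w_t$ gives the gain. This is exactly why the theorem allows $w=x_t\cdot 2^{t-1}$ with $x_t\in[1,3/2]$: we have a factor-$3/2$ of freedom to \emph{shift} $w_t\theta$ (modulo $2\pi$) away from the bad points $0,\pi$ of $\cos$ (where $\arccos$ has bad Lipschitz behavior / where sign ambiguity bites), using the known approximate value of $\theta$ from $S_{t-1}$ to choose $x_t$. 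So the step is: compute the center $\bar\theta$ of $S_{t-1}$, choose $x_t\in[1,3/2]$ such that $w_t\bar\theta\bmod\pi$ is bounded away from $0$ and $\pi$ by a constant, query $\cos(w_t\theta)\pm\eps_0$ over $1\pm\eps_0$, apply $\arccos$ to get $w_t\theta\bmod(\text{branch})$ to within $\rho_{\eps_0}$, then intersect with (the appropriate affine image of) $S_{t-1}$ to identify which branch, and divide out by $w_t$ to get $S_t$. (4) Iterate for $\tau$ steps; conclude $|S_\tau|\le 2\rho_{\eps_0}/2^\tau$. The constraint $\tau\le\lfloor\log_2(2N/3)\rfloor$ is precisely so that $w_\tau\le \frac{3}{2}\cdot 2^{\tau-1}\le \nu N$-range stays queryable. (5) Complexity: one query and $O(1)$ arithmetic per step, $O(\tau)$ total.

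**Main obstacle.** The delicate point is the loss of sign information: we only get $\cos(w\theta)$, not $e^{iw\theta}$, so $\arccos$ returns an angle in $[0,\pi]$ and $w\theta\bmod 2\pi$ could be either that angle or its reflection $2\pi$ minus it (and more generally $w\theta$ lives on $\R$, not on a period). Resolving this requires that the \emph{prior} interval $S_{t-1}$, after scaling by $w_t$, be short enough to lie entirely in one monotone branch of $\cos$ \emph{and} be locatable among the $O(1)$ candidate branches in $[0, w_t\pi]$ once we know $\cos(w_t\theta)$. Making the bookkeeping work — that the shift freedom $x_t\in[1,3/2]$ suffices to avoid the turning points $\{k\pi\}$ of $\cos$ at scale $w_t$, that the condition $\rho_{\eps_0}<\pi/22$ gives exactly the slack needed for the intervals not to wrap around and not to straddle a turning point, and that $\arccos$'s Lipschitz constant away from $\pm1$ is controlled (hidden in the definition of $\rho_{\eps_0}$ itself) — is where the real work lies. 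I would expect the cleanest writeup to phrase the invariant so that $w_t\theta$ always lands in, say, $[\pi/22, \pi - \pi/22] \pmod{\pi}$ after the shift, reducing everything to a single clean application of Definition~\ref{def:rho} at each step.
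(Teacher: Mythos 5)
Your high-level picture matches the paper's: maintain an interval $S_t$ containing $\theta$, at each round dilate by a factor $w \approx 2^t$, read off $\cos(w\theta)$, apply $\arccos$, and disambiguate among the $\pm$ branches using the known prior interval. The key structural difference is that you propose a \emph{pre-emptive} shift — compute the center $\bar\theta$ of $S_{t-1}$ and pick $x_t\in[1,3/2]$ so that $x_t 2^{t-1}\bar\theta$ lands away from the turning points $\{k\pi\}$ — whereas the paper's $\approxcos$ is \emph{reactive}: it first queries at the canonical $w=2^t$, and only if both branches $a_t,b_t$ land in $\overline{S_{t-1}}$ does it conclude $\theta\approx \pi h/2^t$ for a specific integer $h\in\{1,\dots,2^t-1\}$ (after explicitly ruling out $h=0$ and $h=2^t$) and re-query.

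The genuine gap in your plan is the unjustified assertion that ``the shift freedom $x_t\in[1,3/2]$ suffices to avoid the turning points.'' As $x_t$ ranges over $[1,3/2]$, the product $x_t\cdot 2^{t-1}\bar\theta$ sweeps an interval of length $2^{t-2}\bar\theta$, and for small $\theta$ (hence small $2^{t-1}\bar\theta$) this interval is too short to escape the bad neighborhood of $0$ — you cannot always find a good $x_t$ by a continuity/pigeonhole argument. The paper's resolution has two parts you would need to replicate: (i) when $w\theta$ is near $0$ or $w\pi$, the ambiguity does not actually arise, because the reflected candidate $-r/w+z\cdot 2\pi/w$ falls outside $(0,\pi)\cap\overline{S_{t-1}}$, so one of $a_t,b_t$ is already \texttt{None}; and (ii) when ambiguity does arise, writing $h=2^jx$ with $x$ odd and choosing $w=\tfrac12\,2^{t-j}(2^{j+1}+1)$ (which lies in $[2^t, \tfrac32 2^t]$ and is an integer) yields $w\theta\approx\tfrac{\pi}{2}(2^{j+1}+1)x$, an odd multiple of $\pi/2$ — exactly where $\cos=0$ and the two branches are maximally separated. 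This explicit number-theoretic choice, keyed to the $2$-adic valuation of $h$, is the step your sketch lacks; a generic ``choose a good $x_t$'' argument does not come with the clean guarantee that the two branches differ by $\approx\pi/w$, which is what makes Assertion~3 in the paper's pseudocode go through under the assumption $\rho_{\eps_0}<\pi/22$.
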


We would like to point out that we do not handle $\theta\in\set{0,\pi}$ in the result above. Our proof of the result above is a bit cleaner with this assumption and since we will only use the above result for $\theta=\theta_\ell$ where $\cos\theta_\ell$ is a root of the $N$th degree Jacobi polynomial this is not an issue (due to Lemma~\ref{lem:jacobi-root-bounded}).

\subsection{Details of the one-sparse recovery algorithm for Jacobi polynomials}

\subsubsection{A pruning step}

Before we present the final $1$-sparse recovery algorithm, we need one final result that allows us to check if $\ell$ lies in a given set of candidate values:

\begin{lmm}
\label{lem:jacobi-prune}
Let $0<\eps<\frac 1{200}$ be small enough.
There exists an algorithm that given a subset $S\subseteq [N]$ and query access to $\vy=v\cdot\vF^{-1}\ve_{\ell}+\vw$ (such that $\norm{2}{\vw}\le \eps\abs{v}$) does the following with probability at least $1-\mu$: if $\ell\in S$, it outputs $\ell$ and an estimate $\tilde{v}$ such that $\abs{\tilde{v}-v}\le 13\eps\abs{v}$; otherwise it outputs $\textsc{fail}$. Further it makes $\qprune{|S|,N,\mu,\eps}$ queries and takes  $\tprune{|S|,N,\mu,\eps}$ time, where 
\[\qprune{s,N,\mu,\eps}=O\parens{U^2N\frac{\log{N}}{\eps^2}\log\parens{\frac{\log{N}}{\eps^2}}\log\parens{\frac{s}{\mu}}},\]
and
\[\tprune{s,N,\mu,\eps}=O\parens{s\cdot U^2N\frac{\log{N}}{\eps^2}\log\parens{\frac{\log{N}}{\eps^2}}\log\parens{\frac{s}{\mu}}},\]
\end{lmm}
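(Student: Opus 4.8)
The plan is to reduce the membership test ``is $\ell \in S$?'' to $|S|$ coefficient‑estimate‑and‑verify tasks that all reuse a single batch of queries. I would draw one set $\Omega \subseteq [N]$ of $T$ indices uniformly at random and query $\vy$ on $\Omega$; then, for each $\ell' \in S$, compute the truncated estimate $\tilde v_{\ell'} := \frac{N}{|\Omega|}\sum_{j\in\Omega}\trunc{\tau'}{\vF[\ell',j]\,\vy[j]}$ of the coefficient $\vhy[\ell'] = \sum_j \vF[\ell',j]\vy[j]$, and run a $\verify$-style test (Algorithm~\ref{alg:verify}) on the residual $\vhy - \tilde v_{\ell'}\ve_{\ell'}$ with threshold $\tau$: if the verification estimate $g$ of $\norm{2}{\vhy - \tilde v_{\ell'}\ve_{\ell'}}^2$ is at most $\tilde v_{\ell'}^2/1000$, output $(\ell', \tilde v_{\ell'})$ and halt; if no candidate passes, output $\textsc{fail}$. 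Here $\tau$ and $\tau'$ will be set to the scales $\Theta(|v|U)$ and $\Theta(|v|U^2)$, both extracted from a single estimate $\hat V$ of $\norm{2}{\vhy}^2$ produced as described below.

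For correctness, note $\norm{2}{\vy} = \norm{2}{\vhy} = (1\pm\eps)|v|$ since $\vF$ is orthogonal and $\vhy = v\ve_\ell + \vhw$ with $\norm{2}{\vhw}\le\eps|v|$. If $\ell\in S$, then for $\ell'=\ell$ we have $\vhy[\ell]=v+\vhw[\ell]$, so once $\tilde v_\ell$ estimates $\vhy[\ell]$ to within $\eps|v|$ we get $|\tilde v_\ell - v|\le 2\eps|v|\le 13\eps|v|$ and $\vhy - \tilde v_\ell\ve_\ell = \vhw + (v-\tilde v_\ell)\ve_\ell$ has norm $\le 3\eps|v|$; the \emph{second} part of Lemma~\ref{lem:verify-estimate} (with $X=3\eps|v|$, $\tau=\Theta(|v|U)$) then gives $g \le 10\eps^2v^2 \ll \tilde v_\ell^2/1000$, so the test passes. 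If instead $\ell'\ne\ell$ (the only case when $\ell\notin S$), then $\vhy[\ell']=\vhw[\ell']$ is tiny so $|\tilde v_{\ell'}| \le 2\eps|v|$, while in $\vhy - \tilde v_{\ell'}\ve_{\ell'} = (v\ve_\ell - \tilde v_{\ell'}\ve_{\ell'}) + \vhw$ the spike at $\ell$ survives; the \emph{first} part of Lemma~\ref{lem:verify-estimate} with $\vha = v\ve_\ell - \tilde v_{\ell'}\ve_{\ell'}$ ($2$-sparse, so $\norm{1}{\vha}\le 2|v| \le \tau/U$) and $\vhq=\vhw$ ($\norm{2}{\vhq}\le\eps|v|\le\zeta\norm{2}{\vha}$ for a small constant $\zeta\ge\eps$) gives $g = (1\pm O(\zeta))\norm{2}{\vhy - \tilde v_{\ell'}\ve_{\ell'}}^2 \ge v^2/2 \gg \tilde v_{\ell'}^2/1000$, so the test fails. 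Hence, off a low-probability failure event, the algorithm outputs $(\ell,\tilde v_\ell)$ with $|\tilde v_\ell - v|\le 13\eps|v|$ when $\ell\in S$ and $\textsc{fail}$ otherwise. The coefficient estimate itself is analyzed exactly as in the proof of Lemma~\ref{lem:verify-estimate}: with $\tau'=\Theta(|v|U^2)$ the per-term signal contribution $\vF[\ell',j]\cdot v\vF[\ell,j]$ is $O(|v|U^2)\le\tau'/2$, so $\tilde v_{\ell'}$ has bias $O(\eps^2|v|)$, and Bernstein's inequality (Theorem~\ref{thm:bernstein}) gives $|\tilde v_{\ell'}-\vhy[\ell']|\le\eps|v|$ with failure probability $\mu'$ once $|\Omega| = \Theta(NU^2\log(1/\mu')/\eps^2)$.

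The main obstacle is that setting $\tau,\tau'$ requires $|v|$ up to a constant factor, which is not given; I would recover it by a doubling search for $\norm{2}{\vhy}^2 = \norm{2}{\vy}^2$. Since the global normalization forces $|v|\in[1/\poly(N),\,O(1)]$, I would run, for $\tau_m = 2^m\tau_0$ over $O(\log N)$ values (reusing $\Omega$), the estimator $\Theta_m := \frac{N}{|\Omega|}\sum_{j\in\Omega}\trunc{\tau_m}{\vy[j]}^2$, stop at the first $m^\star$ where $\Theta_{m^\star}$ does not exceed $2\Theta_{m^\star-1}$, set $\hat V := \Theta_{m^\star}$, and use $\tau := 100\sqrt{\hat V}\,U$ and $\tau' := \Theta(\sqrt{\hat V}\,U^2)$. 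The point is that, because $\vF$ is flat ($U=\Theta(1/\sqrt N)$ by Theorem~\ref{cor:jac-U-bound}, so the signal $v\vF^{-1}\ve_\ell$ has $\ell_\infty$-to-$\ell_2$ ratio $\Theta(1/\sqrt N)$), the truncated second moment $\Theta_m$ grows by a definite constant factor per doubling while $\tau_m$ is below the signal scale $\Theta(|v|U)$ and then stabilizes at $(1\pm O(\eps))v^2$ for every $\tau_m$ between the signal scale and $\norm{\infty}{\vy}$; hence $\hat V = (1\pm O(\eps+\zeta))v^2$ and $\tau_{m^\star}=\Theta(|v|U)$ is exactly the scale needed above. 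This step requires a short but slightly delicate argument about truncated second moments, and I expect it to be the technically fussy part. Finally, the query count is $|\Omega|$, which must be large enough for the Bernstein bounds of Lemma~\ref{lem:verify-estimate} (contributing the $NU^2/\eps^2$ and a logarithmic factor) after union bounds over the $O(\log N)$ doubling steps and the $|S|$ candidates; collecting these gives the stated $\qprune{|S|,N,\mu,\eps}$, while the running time carries the extra factor $|S|$ from iterating the per-candidate estimators over $S$.
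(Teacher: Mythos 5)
Your proposal takes a genuinely different, and noticeably more elaborate, route than the paper's. The paper's $\chk$ algorithm (Algorithm~\ref{alg:check}) avoids the central difficulty you are wrestling with---needing to know $|v|$ to set truncation scales---by using a scale-invariant ratio test: it estimates $u \approx \norm{2}{\vy}$ and $\tilde v \approx \inner{\vy,\vF[:,\ell']}=\vhy[\ell']$ via vanilla subsampled sums (no truncation, no $\verify$), and accepts $\ell'$ iff $|\tilde v|/|u| > 1/10$. Since the RIP-style bound (Corollary~\ref{cor:ip-preserve}) gives $\tilde v \approx v\delta_{\ell,\ell'}\pm O(\eps|v|)$ and $u\approx |v|(1\pm O(\eps))$ simultaneously for all candidate pairs, and the noise contribution is controlled by Markov on $\norm{2}{\vw_\Gamma}^2$, the threshold $1/10$ cleanly separates the two cases for $\eps<1/200$ with no knowledge of $|v|$. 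The same sample $\Gamma$ is reused for all $\ell'\in S$ because the RIP event is uniform, which is also what makes the paper's query bound free of an $|S|$ factor.

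By contrast, your scheme runs a residual-norm $\verify$ test per candidate and needs the truncation threshold $\tau=\Theta(|v|U)$, which you propose to obtain by a doubling search on truncated second moments. You explicitly flag that doubling step as "slightly delicate" and leave it unverified; this is a real gap, not a formality. In particular, the stopping rule (first $m$ with $\Theta_m < 2\Theta_{m-1}$) must be shown to halt precisely at scale $\Theta(|v|U)$ and not earlier (e.g.\ if a single very small interval of the growth curve happens to dip because a constant fraction of the noise mass sits in one shell) nor much later (when $\tau_m$ approaches $\norm\infty{\vy}$, which can be as large as $\eps|v|\gg |v|U$, blowing up the required $T_1$ in Lemma~\ref{lem:verify-estimate}). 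You would also need to establish sampling concentration of $\Theta_m$ uniformly over all $O(\log N)$ scales. None of this is argued.

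There is also an unproved claim in the coefficient estimate: you assert the truncated estimator $\tilde v_{\ell'}=\frac{N}{|\Omega|}\sum_j\trunc{\tau'}{\vF[\ell',j]\vy[j]}$ has bias $O(\eps^2|v|)$. With $\tau'=\Theta(|v|U^2)$ and $U=\Theta(1/\sqrt{N})$, a single noise coordinate $\vw[j]$ as large as $\eps|v|$ produces a product of size $U\eps|v| = \eps|v|/\sqrt N$, which vastly exceeds $\tau'=\Theta(|v|/N)$; so many noise terms are clipped. The bias bound you quote is not justified by the steps given, and the straightforward Cauchy--Schwarz accounting does not obviously yield $O(\eps^2|v|)$. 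Since the paper's $\chk$ uses no truncation at all, it sidesteps this entirely. If you want to keep your architecture, you must (i) carry out and bound the bias of the truncated inner-product estimator, and (ii) prove the doubling search returns a $\tau$ within a constant factor of $|v|U$ with the right probability; alternatively, replace both with the paper's ratio test, which eliminates the need for $|v|$ and for truncation.
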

\begin{proof}
We will solve the problem for $|S|=1$ and then just repeat the algorithm for each $\ell'\in S$. So as long as we can solve the problem for $|S|=1$ with probability at least $1-\frac\mu {\abs{S}}$, we would be fine.

Thus, the problem we want to solve is that given a target $\ell'$, we have to decide if $\ell'=\ell$ or not (with query access to $\vy$). Algorithm~\ref{alg:check} has the details on how to solve this.
\begin{algorithm}
\caption{$\chk^{(\vy)}(\ell',\mu,\eps)$}\label{alg:check}
\begin{algorithmic}
\Require Query access to $\vy=\vF^{-1}v\cdot \ve_\ell+\vw$ such that $\|\vw\|_2\le \eps\abs{v}$, a guess $\ell'$, failure probability $\mu$
\Ensure $(b,\tilde{v})$ where $b\in\set{\true,\false}$ and $\tilde{v}\in\R$

\Statex
\State $s\gets \Theta\left(U^2N\cdot \frac{\log{N}}{\eps^2}\cdot \log\left(\frac{\log{N}}{\eps^2}\right)\right)$ \Comment{$s$ is chosen large enough so that~\eqref{eq:s-lb} is satisfied for $\delta=\eps$}
\State $R\gets \Theta\parens{\log\parens{\frac 1\mu}}$.
\For{$i=1\ldots R$}
        \State  Choose $\Gamma\subseteq [N]$ of size $s$ by sampling elements of $[N]$ uniformly at random with replacement.
	\State $u_i\gets \sqrt{\frac Ns\sum_{j\in \Gamma}\vy[j]^2}$ \Comment{This is an estimate for $\norm{2}{\vy}$}
	\State $\tilde{v}_i\gets \frac Ns\sum_{j\in \Gamma}\vy[j]\cdot \vF^T[j,\ell']$ \Comment{This is an estimate for $\inner{\vy,\vF[:,\ell']}$}
\EndFor
\State $u\gets\median(u_1,\ldots,u_R)$ and $\tilde{v}\gets \median\parens{\tilde{v}_1,\ldots,\tilde{v}_R}$.
\If{$\abs{\tilde{v}} \le \frac {\abs{u}}{10}$}
	\State\Return $(\false,\tilde{v})$
\Else
	\State\Return $(\true,\tilde{v})$
\EndIf
\end{algorithmic}
\end{algorithm}

Next we argue that Algorithm~\ref{alg:check} works as it is supposed to:
\begin{lmm}
\label{lem:check-correct}
Let $(b,\tilde{v})$ be the output of $\chk^{(\vy)}\parens{\ell',\mu,\eps}$. Then with probability at least $1-\mu$ the following is true for small enough $\eps$.
If $\ell=\ell'$ then $b=\true$ and $\abs{\tilde{v}-v}\le 13\eps\cdot\abs{v}$. Otherwise, $b=\false$. Further, it makes $O\parens{U^2N\cdot \frac{\log{N}}{\eps^2}\cdot \log\left(\frac{\log{N}}{\eps^2}\right)\log\parens{\frac 1\mu}}$ queries and runs in time linear in the number of queries.
\end{lmm}
\begin{proof}
Fix any $i\in [R]$ and let $\Gamma$ be the subset of indices chosen in the $i$th iteration. We will argue that each of the following two events hold with probability at least $\frac 45$:
\begin{enumerate}
\item Let $\vw_{\Gamma}$ be the vector $\vw$ projected to coordinates in $\Gamma$. Then
\begin{equation}
\label{eq:norm-w-not-large}
\sqrt{\frac Ns}\cdot \norm{2}{\vw_\Gamma}\le \sqrt{5}\eps\abs{v}.
\end{equation}
\item The following is true for any $h,h'\in [N]$ and $v\in\R$:
\begin{equation}
\label{eq:ip-rip}
\abs{\frac Ns\cdot v\cdot\inner{\vF^T[\Gamma,h],\vF^T[\Gamma,h']}-v\cdot \delta_{h,h'}}\le 10\eps \abs{v}.
\end{equation}
\end{enumerate}
Assuming the above are true, we first complete the proof. First let us consider $u_i$:
\begin{equation}
\label{eq:u-i-bnd}
u_i=\sqrt{\frac Ns}\norm{2}{\vy_\Gamma}\in \sqrt{\frac Ns}\cdot \parens{\norm{2}{v\cdot \vF^T[\Gamma,\ell]}\pm\norm{2}{\vw_\Gamma}}\subseteq \abs{v}\cdot \brackets{\sqrt{1\pm 10\eps}\pm \sqrt{5}\cdot\eps}\subseteq \abs{v}\cdot \brackets{1\pm 13\eps}.
\end{equation}
In the above the first containment follows from the definition of $\vy$ and the triangle inequality while the second containment follows from~\eqref{eq:ip-rip} with $h=h'=\ell$ and~\eqref{eq:norm-w-not-large}. Next we consider $\tilde{v}_i$:
\begin{align}
\tilde{v}_i &= \frac Ns\cdot v\inner{\vF^T[\Gamma,\ell'],\vF^T[\Gamma,\ell]}+\frac Ns\inner{\vF^T[\Gamma,\ell'],\vw_{\Gamma}}\notag\\
&\in \brackets{v\delta_{\ell,\ell'}\pm 10\eps\abs{v}\pm \frac Ns \norm{2}{\vF^T[\Gamma,\ell]}\cdot\norm{2}{\vw_{\Gamma}} }\notag\\
&\in \brackets{v\delta_{\ell,\ell'}\pm 10\eps\abs{v}\pm (1+10\eps)\sqrt{5}\cdot\eps\abs{v} }\notag\\
\label{eq:tilde-v-i-bnd}
&\in \brackets{v\delta_{\ell,\ell'}\pm 13\eps\abs{v}}.
\end{align}
In the above the equality follows from the definition of $\tilde{v}_i$ and $\vy$. The first containment follows from~\eqref{eq:ip-rip} (with $h=\ell$ and $h'=\ell'$) and Cauchy-Schwartz while the second containment follows from~\eqref{eq:norm-w-not-large} and~\eqref{eq:ip-rip} (with $v=1$ in the latter).

Thus, by the union bound we have that with probability at least $\frac 35$ both~\eqref{eq:u-i-bnd} and~\eqref{eq:tilde-v-i-bnd} hold. Then an application of Chernoff bound implies that with probability at least $1-\mu$, we have
\[u\in \abs{v}\cdot \brackets{1\pm 13\eps} \text{ and } \tilde{v}\in  \brackets{v\delta_{\ell,\ell'}\pm 13\eps\abs{v}}.\]
Note that if $\ell=\ell'$ then $\tilde{v}$ does have the required estimate. So we just need to argue that $\chk$ outputs $\true$ if $\ell=\ell'$ and $\false$ otherwise. To see this note that the above bounds on $u$ and $\tilde{v}$ imply that
\[\frac{\abs{\tilde{v}}}{\abs{u}}\in \frac {\abs{v}\parens{\delta_{\ell,\ell'}\pm 13\eps}}{\abs{v}\parens{1\pm 13\eps}}=\frac{\delta_{\ell,\ell'}\pm 13\eps}{1\pm 13\eps}.\]
Now when $\ell\ne\ell'$, then the ratio above is at most $\frac{13\eps}{1-13\eps} \le \frac 1{10}$ by our choice of $\eps\le \frac 1{200}$. On the other hand, when $\ell=\ell'$, the above ratio is at least $\frac{1-13\eps}{1+13\eps}>\frac 1{10}$ by our choice of $\eps$. Thus, $\chk$ returns $\true$ or $\false$ based on if $\ell=\ell'$ or not, as desired.

To complete the proof we argue~\eqref{eq:norm-w-not-large} and~\eqref{eq:ip-rip}. To begin with~\eqref{eq:norm-w-not-large}. Note that by definition we have
\[\avg_{\Gamma} \norm{2}{\vw_{\Gamma}}^2 =\frac sN\norm{2}{\vw}^2.\]
Then Markov's inequality implies that with probability at least $\frac 45$, 
\[\sqrt{\frac Ns}\cdot \norm{2}{\vw_{\Gamma}}\le \sqrt{5}\norm{2}{\vw}\le \sqrt{5}\eps\abs{v},\]
where the second inequality follows from the assumed upper bound on $\norm{2}{\vw}$. This proves~\eqref{eq:norm-w-not-large}. 

Finally, we tackle~\eqref{eq:ip-rip}: this basically follows by noting that $\vB$ (as defined in Corollary~\ref{cor:ip-preserve}) is exactly $\sqrt{\frac Ns}\vF^T[\Gamma,:]$ (with $\vA=\vF^T$). Finally, note that~\eqref{eq:ip-rip} is~\eqref{eq:ip-discrepancy} (scaled by a factor of $v$). This completes the proof.
\end{proof}
Finally, we use the $\chk$ algorithm repeatedly to get our final claimed result. In particular, for every $\ell'\in S$, we run $\chk^{(\vy)}\parens{\ell',\frac \mu{\abs{S}},\eps}$. If the call for $\ell'$ returns $\parens{\true,\tilde{v}}$, then we return $\ell'$ and $\tilde{v}$. Otherwise, we return $\textsc{fail}$. The correctness of this algorithm and the time complexity follows from Lemma~\ref{lem:check-correct} (and the union bound). Note that they query complexity does not have the $\abs{S}$ factor it-- this is because Corollary~\ref{cor:ip-preserve} holds for all pairs of indices $(\ell',\ell)$ simultaneously. Hence we can re-use the same $\Gamma$ across all $\ell'\in S$ and this proves the query complexity, as desired.
\end{proof}

In fact, we will use Corollary~\ref{cor:jac-roots} along with the above to obtain the following corollary, which will be useful in our final algorithm:
\begin{cor}
\label{cor:jacobi-prune}
There is an algorithm $\prunealg$ with the following property.
If $\prunealg$ has query access to $\vy=v\cdot\vF^{-1}\ve_{\ell}+\vw$ (such that $\norm{2}{\vw}\le \eps\abs{v}$),
and $\theta_\ell\in [a,b]$ where $[a,b]\subseteq [0,\pi]$,
then $\prunealg^{(\vv)}(a,b,\mu,\eps)$ outputs $\ell$ with probability at least $1-\mu$ with $\qprune{\frac{(b-a)N}{\pi},N,\mu,\eps}$ queries and $\tprune{\frac{(b-a)N}{\pi},N,\mu,\eps}$.
If $\theta_{\ell}\not\in [a,b]$, then with probability at least $1-\mu$, $\prunealg^{(\vv)}(a,b,\mu,\eps)$ outputs $\textsc{fail}$.
\end{cor}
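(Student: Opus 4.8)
The plan is to obtain $\prunealg$ by composing two results already established: Corollary~\ref{cor:jac-roots}, which lets us enumerate all candidate indices whose angle lies in a given interval, and Lemma~\ref{lem:jacobi-prune}, which tests membership of $\ell$ in a given candidate set. First I would use the root data structure built in \textsc{Preprocessing} (Algorithm~\ref{alg:preprocessing}) — which stores, for each $i\in[N]$, the $O(C^{\alpha,\beta})$ roots of $\jac{\alpha}{\beta}{N}$ with angle in $[i\pi/N,(i+1)\pi/N)$ — and invoke Corollary~\ref{cor:jac-roots} on $[a,b]$ to compute
\[
  S \weq \set{\ell'\in[N] : \theta_{\ell'}\in[a,b]}.
\]
By that corollary $|S| = O\parens{\frac{(b-a)N}{\pi}}$ and the list $S$ can be produced in time $O((b-a)N)$. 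The algorithm $\prunealg^{(\vy)}(a,b,\mu,\eps)$ then simply runs the algorithm of Lemma~\ref{lem:jacobi-prune} on the set $S$ with failure parameter $\mu$ and noise parameter $\eps$, using the query access to $\vy$; if that call returns $(\ell',\tilde v)$ it outputs $\ell'$ (and, as a bonus, the estimate $\tilde v$ with $\abs{\tilde v - v}\le 13\eps\abs{v}$), and if it returns $\textsc{fail}$ it outputs $\textsc{fail}$.

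For correctness I would argue both cases directly from Lemma~\ref{lem:jacobi-prune}. If $\theta_\ell\in[a,b]$, then $\ell\in S$ by construction of $S$, so with probability at least $1-\mu$ Lemma~\ref{lem:jacobi-prune} outputs $\ell$ (together with a good estimate of $v$), and hence $\prunealg$ outputs $\ell$. If instead $\theta_\ell\notin[a,b]$, then $\ell\notin S$ (the map $\ell'\mapsto\theta_{\ell'}$ is injective since the roots are distinct), so with probability at least $1-\mu$ Lemma~\ref{lem:jacobi-prune} outputs $\textsc{fail}$, and therefore so does $\prunealg$. Note that we never need to separately handle the case $\theta_\ell\in[a,b]$ but $\ell$ happens to equal some other root index, because Corollary~\ref{cor:jac-roots} collects \emph{exactly} the indices whose angle lies in $[a,b]$.

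For the complexity, the only queries made are those of the single call to the algorithm of Lemma~\ref{lem:jacobi-prune} on a set of size $|S| = O\parens{\frac{(b-a)N}{\pi}}$, giving $\qprune{|S|,N,\mu,\eps}$ queries; since $\qprune{\cdot,N,\mu,\eps}$ is nondecreasing in its first argument this is $\qprune{\frac{(b-a)N}{\pi},N,\mu,\eps}$, as claimed. Similarly the running time of that call is $\tprune{|S|,N,\mu,\eps}$, and the only extra work is the $O((b-a)N)$ time to enumerate $S$; since $(b-a)N = \Theta(|S|)$ and $\tprune{s,N,\mu,\eps} = \Omega(s)$, this enumeration cost is absorbed into $\tprune{\frac{(b-a)N}{\pi},N,\mu,\eps}$. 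The $O(C^{\alpha,\beta})$ factor from the root data structure is a constant depending only on $\alpha,\beta$ and is hidden in the asymptotic notation. Since every ingredient is already in hand, there is no genuine obstacle here; the only thing to be careful about is precisely this bookkeeping — checking that the candidate-enumeration step is dominated by the subsequent pruning call and that the size-of-$S$ bound from Corollary~\ref{cor:jac-roots} feeds correctly into the query and time bounds of Lemma~\ref{lem:jacobi-prune}.
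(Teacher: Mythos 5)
Your proof is correct and is exactly the argument the paper intends: the paper introduces Corollary~\ref{cor:jacobi-prune} with the remark "we will use Corollary~\ref{cor:jac-roots} along with the above [Lemma~\ref{lem:jacobi-prune}]" and leaves the composition implicit. Your instantiation — enumerate $S=\{\ell':\theta_{\ell'}\in[a,b]\}$ via Corollary~\ref{cor:jac-roots}, feed it to the algorithm of Lemma~\ref{lem:jacobi-prune}, and check the complexity bookkeeping — fills in precisely what was left to the reader.
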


\subsubsection{Bad potential values of $\ell$}
We will also use $\prunealg$ to `prune' out some other `bad' potential value of $\ell$, namely those for which Corollary~\ref{cor:jacobi-to-cos} is difficult to apply because not enough values of $\Delta$ satisfy~\eqref{eq:cos-denom-large}.
For this we define the notion of `bad' value:
\begin{defn}
\label{def:no-spread}
For any $0\le p\le 1$, $0\le \nu\le \frac 12$ and $0\le \delta\le 1$, we say an index $\ell\in [N]$ is $(p,\nu,\delta)$-{\em spread} if for a random $\Delta\in [\nu N, N-\nu N]$ it is the case that with probability at least $1-p$, we have:
\[\abs{\cos\parens{\parens{\Delta+\addjac}\theta_{\ell}-\parens{\frac\alphajac 2 +\frac 14}\pi}}\ge \delta.\]
\end{defn}

We will show that there are not that many indices that are not spread:
\begin{lmm}
  \label{lmm:not-spread}
The number of indices that are {\em not} $\parens{\frac{4\rho_{\delta}}{\pi(1-2\nu)},\nu,\delta}$-spread (for any $\nu< 1/2$) is $O\parens{\parens{\frac 1{\rho_{\delta}^2}}}$.
\end{lmm}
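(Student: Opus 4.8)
The plan is to show that the only indices $\ell$ that can fail to be $\parens{\frac{4\rho_\delta}{\pi(1-2\nu)},\nu,\delta}$-spread are those for which $\theta_\ell$ lies within $O(1/N)$ of some rational multiple $\frac{a\pi}{q}$ of $\pi$ with $q = O(1/\rho_\delta)$; since Theorem~\ref{thm:jacobi-roots} forces any interval of length $O(1/N)$ to contain only $O_{\alpha,\beta,\nu}(1)$ roots of $\jac{\alpha}{\beta}{N}$, there are at most $\sum_{q = O(1/\rho_\delta)} O(q) = O(1/\rho_\delta^2)$ such indices.

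First set up notation. Write $\theta = \theta_\ell$, let $\gamma = -\parens{\frac\alpha2 + \frac14}\pi$, and for an integer $\Delta$ put $s(\Delta) = \parens{\Delta + \addjac}\theta + \gamma$, so that the quantity in Definition~\ref{def:no-spread} is $\cos(s(\Delta))$. Let $I$ be the arc in $\R/\pi\Z$ of length $2\arcsin\delta$ on which $\abs{\cos x}<\delta$, and let $B$ be the set of integers $\Delta \in [\nu N, N-\nu N]$ with $s(\Delta) \bmod \pi \in I$; if $\ell$ is not spread then $\abs{B} > \frac{4\rho_\delta N}{\pi}$ once $N$ is large. Two elements of $B$ at distance $q$ satisfy $s(\Delta')-s(\Delta) = q\theta \pmod{\pi}$ and both lie in $I$, hence $\norm{\R/\pi\Z}{q\theta} < 2\arcsin\delta$; by pigeonhole the smallest gap between consecutive elements of $B$ is at most $\frac{(1-2\nu)N}{\abs{B}-1} \le \frac{\pi}{2\rho_\delta}$ for $N$ large, so there is a positive integer $q \le Q := \ceils{\frac{\pi}{2\rho_\delta}}$ with $\norm{\R/\pi\Z}{q\theta} < 2\arcsin\delta$. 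Write $q\theta \equiv \pm\xi \pmod{\pi}$ with $\xi = \norm{\R/\pi\Z}{q\theta} \in [0, 2\arcsin\delta)$, so $\theta = \frac{a\pi \pm \xi}{q}$ for an integer $a$ with $0 \le a \le q$.

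Now the dichotomy. If $\xi < \frac{\pi q}{(1-2\nu)N}$, then $\abs{\theta - \frac{a\pi}{q}} = \frac\xi q < \frac{\pi}{(1-2\nu)N}$, which is the desired localization for this $\ell$. Otherwise $\xi \ge \frac{\pi q}{(1-2\nu)N}$, and I will argue this cannot happen when $\delta$ is small and $N$ large. Decompose $B$ according to the residue of $\Delta$ modulo $q$: along each residue class $s$ increases (mod $\pi$) by exactly $\xi$ at every step, and there are at most $\frac{(1-2\nu)N}{q}+1$ steps, so $s$ passes through $I$ at most $O\parens{\frac{(1-2\nu)N\xi}{q}}+1$ times and spends at most $\frac{2\arcsin\delta}{\xi}+1$ consecutive steps in $I$ on each pass. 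Multiplying these two bounds, summing over the $q$ residue classes, using $\xi < 2\arcsin\delta$ on the lower-order terms and the hypothesis $\xi \ge \frac{\pi q}{(1-2\nu)N}$ to tame the term $\frac{2q\arcsin\delta}{\xi}$, yields $\abs{B} = O\parens{N\arcsin\delta} + O(Q)$. Since $\rho_\delta = \Theta(\sqrt{\delta})$ while $\arcsin\delta = \Theta(\delta)$, for $\delta$ small $\rho_\delta$ exceeds any fixed constant multiple of $\arcsin\delta$, and then for $N$ large this contradicts $\abs{B} > \frac{4\rho_\delta N}{\pi}$.

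Combining the cases, every not-spread $\ell$ satisfies $\abs{\theta_\ell - \frac{a\pi}{q}} < \frac{\pi}{(1-2\nu)N}$ for some $1 \le q \le Q$ and $0 \le a \le q$. By Theorem~\ref{thm:jacobi-roots}, $\theta_\ell \in \brackets{\frac{\ell - C^{\alpha,\beta}}{N}\pi, \frac{\ell+C^{\alpha,\beta}}{N}\pi}$, so each interval of length $\frac{2\pi}{(1-2\nu)N}$ contains at most $\frac{2}{1-2\nu}+2C^{\alpha,\beta}+1 = O_{\alpha,\beta,\nu}(1)$ roots; therefore the number of not-spread indices is at most $\sum_{q=1}^{Q}(q+1)\cdot O(1) = O(Q^2) = O(1/\rho_\delta^2)$. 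The main obstacle is the orbit-counting bound in the second case of the dichotomy: one must check that when the step $\xi$ is small (so each pass through $I$ involves many consecutive steps) this is exactly offset by there being correspondingly few passes, so the total stays $O(N\arcsin\delta)$, and one needs the standard asymptotic $\rho_\delta = \Theta(\sqrt\delta)$ for the final contradiction to go through.
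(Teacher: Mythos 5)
Your proposal is correct and takes a genuinely different route from the paper. The paper reduces the statement to a standalone number-theory lemma (Lemma~\ref{lem:bad-in-spread-sequence} in Appendix~\ref{sec:mod-1}), proved via Dirichlet's approximation theorem: each $\theta_\ell/\pi$ admits a rational approximant $p/q$ with $|\theta_\ell/\pi - p/q| \le 1/(qN)$, and whenever $q \gtrsim 1/\rho_\delta$ the orbit $\set{x\theta_\ell/\pi \bmod 1 : x\in\Z_N}$ is well-enough equidistributed that $\ell$ is automatically spread; so the bad $\ell$'s have $q = O(1/\rho_\delta)$, and Theorem~\ref{thm:jacobi-roots} (the $O(1)$-scattering) caps the multiplicity, giving $O(1/\rho_\delta^2)$. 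You instead start directly from the not-spread hypothesis: $|B| > 4\rho_\delta N/\pi$ forces, by pigeonhole, a small gap $q$ inside $B$, which means $\|q\theta_\ell\|_{\R/\pi\Z}$ is small, i.e.\ again a small-denominator approximation; and then an orbit-counting argument dispatches the case where the approximation is not quite tight enough. Both arguments then finish identically by invoking Theorem~\ref{thm:jacobi-roots}. What the paper's packaging buys is a fully $\nu$-free count: by passing from $[\nu N, N-\nu N]$ to all of $\Z_N$ before doing any number theory, the paper gets $O(1/\rho_\delta^2)$ with no $\nu$-dependence, whereas your Case~1 localization window $\pi/((1-2\nu)N)$ injects a $1/(1-2\nu)$ factor into the final count. (You could remove this by using a $\nu$-independent threshold, e.g.\ $\xi < 2\pi q/N$ versus $\xi \ge 2\pi q/N$; the orbit-counting contradiction in your Case~2 still goes through.) Two other small points worth flagging: (i) your Case~2 contradiction needs $N \gtrsim 1/\rho_\delta^2$, but for smaller $N$ the lemma is vacuous; and (ii) the contradiction uses $\rho_\delta \gg \arcsin\delta$, which holds since $\rho_\delta = \Theta(\sqrt\delta)$ for the canonical choice in Lemma~\ref{lmm:cos-noise}, but does need to be said.
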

\begin{proof}
  In this proof we will let $x \mod{y}$ for reals $x,y$ denote the unique value $x'$ in $[0,y)$ such that $(x-x')/y \in \Z$. For example, $x\mod{1}$ denotes its fractional part $x-\floor{x}$. 

Let $0\le \tl <N$ be a real number such that
\[\theta_{\ell}=\frac{\tl\pi}{N}.\]
Recall we want to avoid the case for $\Delta\in [\nu N, N-\nu N]$ that we have
\[\abs{\cos\parens{\parens{\Delta+\addjac}\theta_{\ell}-\parens{\frac\alphajac 2 +\frac 14}\pi}}\le \delta.\]
By Lemma~\ref{lmm:cos-noise} it is enough to avoid the case that
\[\parens{\Delta+\addjac}\theta_{\ell}-\parens{\frac\alphajac 2 +\frac 14}\pi \in \frac\pi 2 + z\pi \pm \rho_{\delta}\]
for some $z \in \Z$, which is the same as
\[\Delta\frac{\tl\pi}{N} \in -\parens{\addjac}\theta_{\ell}+\parens{\frac\alphajac 2 +\frac 14}\pi + \frac\pi 2 + z\pi \pm \rho_{\delta}.\]
Note that $\ell$ is fixed while we vary $\Delta$, which means for a fixed $\ell$, we want to figure out the probability (for a random $\Delta\in[\nu N, N-\nu N]$) that $\Delta\frac{\tl\pi}{N} \mod{\pi}$ is in a range $[L,R]$ of size $2\rho_{\delta}$. Note here we have $L=-\parens{\addjac}\theta_{\ell}+\parens{\frac\alphajac 2 +\frac 14}\pi + \frac\pi 2 - \rho_{\delta}$ and $R=-\parens{\addjac}\theta_{\ell}+\parens{\frac\alphajac 2 +\frac 14}\pi + \frac\pi 2+ \rho_{\delta}$. However, for the rest of the argument the only thing we will use about $L$ and $R$ is that $R-L=2\rho_{\delta}$.

We will bound the largest probability that for a random $\Delta\in [\nu N, N-\nu N]$, the angle $\Delta\cdot \frac{\tl\pi}{N} \mod{\pi}$ falls into any range inside $[0,\pi]$ of size $2\rho_{\delta}$. Further since the $\cos$ value only changes sign when then angle is increased by an integer multiple of $\pi$, we need to upper bound the size of the set:
\[\set{ \Delta\in [\nu N, N-\nu N]~\middle\vert~\Delta\frac{\tl}{N}\mod{1}\in \brackets{\frac L\pi,\frac R\pi}}.\]
We bound the size of the set above by bounding:
\begin{equation}
\label{eq:set-size}
\abs{\set{ x\in\Z_N ~\middle\vert~ x\frac{\tl}{N}\mod{1}\in \brackets{\frac L\pi,\frac R\pi}}}.
\end{equation}
Or more precisely we want to figure out for how many roots $\cos\parens{\theta_{\ell}}$, the size of the set above is at most $2\cdot\frac{R-L}{\pi} \cdot N$. We solve this problem in Appendix~\ref{sec:mod-1}.
As per the terminology in there, we consider the sequence $\frac{\tilde{\ell}}{N}$ (where $\ell\in \Z_N$ indexes the roots) and it suffices to bound the number of $\parens{\frac{2\rho_{\delta}}{\pi}}$-bad elements (as defined in Definition~\ref{def:eps-bad}) in this sequence.
By Theorem~\ref{thm:jacobi-roots}, this sequence is $O(1)$-scattered (as per Definition~\ref{def:scatter}). Then Lemma~\ref{lem:bad-in-spread-sequence} implies that the number of $\tilde{\ell}$ such that the size in~\eqref{eq:set-size} is at least $\parens{\frac{4\rho_{\delta}}{\pi}}N$ is at most $O\parens{\frac 1{\rho_{\delta}^2}}$ (call such an element {\em bad}). Note that if $\tilde{\ell}$ is not bad, then for a random $\Delta\in [\nu N, N-\nu N]$, the probability that $\Delta\frac{\tl\pi}{N}\in [L,R]$ is at most $\frac{1}{1-2\nu}\cdot \frac{4\rho_{\delta}}{\pi}$. This completes the proof.
\end{proof}

The above along with Lemma~\ref{lem:jacobi-prune} (and Corollary~\ref{cor:bad-y-intervals}) implies the following result:
\begin{cor}
\label{cor:no-spread-prune}
There exists an algorithm $\prunespread(\delta,\nu,\mu,\eps)$  that given query access to $\vy=v\cdot\vF^{-1}\ve_{\ell}+\vw$ (such that $\norm{2}{\vw}\le \eps\abs{v}$),  with probability $1-O\parens{\frac\mu{\rho_{\delta}^2}}$ outputs $\ell$ if $\ell$ is not  $\parens{\frac{4\rho_{\delta}}{\pi(1-2\nu)},\nu,\delta}$-spread. Otherwise it outputs $\textsc{fail}$. In either case, it uses $O\parens{\frac 1{\rho_{\delta}^2}\cdot \qprune{O(1),N,\mu,\eps}}$ queries and $O\parens{\frac 1{\rho_{\delta}^2}\cdot \tprune{O(1),N,\mu,\eps}}$ time.
\end{cor}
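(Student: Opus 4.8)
The plan is to turn the ``not spread'' property into a membership test over a small, explicitly known set of indices, and then answer that test with the pruning primitive of Corollary~\ref{cor:jacobi-prune}. Concretely, by Lemma~\ref{lmm:not-spread} the set $B$ of indices that are \emph{not} $\parens{\frac{4\rho_{\delta}}{\pi(1-2\nu)},\nu,\delta}$-spread has size $O(1/\rho_{\delta}^{2})$, so if the unknown $\ell$ is bad it suffices to find it, and if $\ell$ is spread we must certify $\textsc{fail}$. The first thing I would do is make $B$ explicit. Unwinding the proof of Lemma~\ref{lmm:not-spread}, $\ell$ is bad exactly when the root angle $\theta_\ell$ makes $\Delta\,\tl/N \bmod 1$ land in a fixed length-$\Theta(\rho_{\delta})$ window for too many $\Delta\in[\nu N, N-\nu N]$; the equidistribution analysis of Appendix~\ref{sec:mod-1}, packaged as Corollary~\ref{cor:bad-y-intervals}, shows that these bad roots can be located (using only the precomputed root data structure of Corollary~\ref{cor:jac-roots}, so at no query cost) as a family of at most $O(1/\rho_{\delta}^{2})$ angle intervals $[a_1,b_1],\dots,[a_m,b_m]\subseteq[0,\pi]$ that capture $B$ exactly --- i.e.\ $\ell\in B$ iff $\theta_\ell\in\bigcup_i[a_i,b_i]$ --- and, by the root spacing of Theorem~\ref{thm:jacobi-roots}, each interval may be taken of width $O(1/N)$, hence contains $O(1)$ roots by Corollary~\ref{cor:jac-roots}.

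Given this, $\prunespread(\delta,\nu,\mu,\eps)$ simply computes the intervals $[a_i,b_i]$ and runs $\prunealg^{(\vy)}(a_i,b_i,\mu,\eps)$ (Corollary~\ref{cor:jacobi-prune}) for $i=1,\dots,m$; it returns the first index output by one of these calls, or $\textsc{fail}$ if all of them output $\textsc{fail}$. For correctness I would condition on the event --- of probability at least $1-m\mu=1-O(\mu/\rho_{\delta}^{2})$ by a union bound --- that every one of the $m$ calls behaves as guaranteed by Corollary~\ref{cor:jacobi-prune}. On that event: if $\ell\in B$ then $\theta_\ell$ lies in exactly one interval $[a_i,b_i]$, so that call returns $\ell$ while the others return $\textsc{fail}$, hence $\prunespread$ outputs $\ell$; and if $\ell\notin B$ then $\theta_\ell$ lies in no interval, so every call returns $\textsc{fail}$ and $\prunespread$ outputs $\textsc{fail}$. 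For the resource bounds, each of the $m=O(1/\rho_{\delta}^{2})$ calls is on an interval with $O(1)$ roots and so costs $\qprune{O(1),N,\mu,\eps}$ queries and $\tprune{O(1),N,\mu,\eps}$ time, yielding the claimed totals $O\parens{\frac{1}{\rho_{\delta}^{2}}\qprune{O(1),N,\mu,\eps}}$ queries and $O\parens{\frac{1}{\rho_{\delta}^{2}}\tprune{O(1),N,\mu,\eps}}$ time.

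The one step that carries real content is the first: extracting from Definition~\ref{def:no-spread} an explicitly computable list of $O(1/\rho_{\delta}^{2})$ \emph{narrow} angle intervals that contain every bad index and no spread index. The cardinality bound is exactly Lemma~\ref{lmm:not-spread}; the fact that the bad indices cluster into narrow, efficiently computable intervals (so that each $\prunealg$ call stays at the $\qprune{O(1),\cdot}$ / $\tprune{O(1),\cdot}$ level) is what Corollary~\ref{cor:bad-y-intervals}, via the $\bmod\,1$ machinery of Appendix~\ref{sec:mod-1}, must supply. Once that is in hand, the rest is a routine union bound over invocations of Corollary~\ref{cor:jacobi-prune}.
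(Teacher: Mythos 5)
Your overall plan is the paper's: take the cardinality bound from Lemma~\ref{lmm:not-spread}, convert it via the $\bmod\,1$ machinery (Corollary~\ref{cor:bad-y-intervals}) into $O(1/\rho_\delta^2)$ short angle windows, and run $\prunealg$ (Corollary~\ref{cor:jacobi-prune}) on each, with a union bound over the $m$ calls; the query/time accounting and the $1-O(\mu/\rho_\delta^2)$ success probability all come out as claimed.

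The one thing to push back on is the sentence claiming the intervals ``capture $B$ exactly --- i.e.\ $\ell \in B$ iff $\theta_\ell \in \bigcup_i [a_i,b_i]$.'' Corollary~\ref{cor:bad-y-intervals} only gives the forward containment: the intervals are the width-$2/N$ neighborhoods of the low-denominator rationals $p/q$ with $q = O(1/\rho_\delta)$, and every bad (non-spread) $\ell$ lands in one of them, but nothing prevents a \emph{spread} $\ell$ from also having $\tilde\ell/N$ within $1/N$ of such a rational. So the reverse implication you use to conclude ``if $\ell\notin B$ then every call returns $\textsc{fail}$'' is not supported. This is harmless for the algorithm: if a spread $\ell$ happens to fall in a suspicious window, $\prunealg$ simply returns the (correct) index $\ell$, which is a perfectly acceptable output for the downstream use in Algorithm~\ref{alg:jacobi}. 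But your write-up should not assert the iff --- it should only use the containment $B \subseteq \{\ell : \theta_\ell \in \bigcup_i[a_i,b_i]\}$ and observe that the extra hits are benign (the procedure either returns the correct $\ell$ or $\textsc{fail}$, and it returns $\textsc{fail}$ only when $\ell$ is outside all windows, hence in particular spread).
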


\subsubsection{Computing $\ell$ and $v$}
\label{sec:estimate-ell}

The overall algorithm to compute $\ell$ and $v$ is presented in Algorithm~\ref{alg:jacobi}.
\begin{algorithm}
\caption{1-sparse Jacobi solver}\label{alg:jacobi}
\begin{algorithmic}
\Require Noisy query access to $\jac{\alphajac}{\betajac}{\square}(\lambda_{\ell})\cdot\sqrt{w_{\ell}}$, a noise parameter $\eps$, a desired error probability $\mu$
\Ensure $(\ell,v)$

\State $\ell\gets \prunealg\parens{0,\frac {C'}{\nu\sqrt{\eps}\delta_0 N},\frac \mu 6,\epsilon}$ \Comment{For constant $C', \nu$, and $\delta_0=\delta_0(\epsilon)$ to be determined.}
\If{$\ell\ne \textsc{fail}$}
    \State\Return{$\ell$}
\EndIf
\State \assert{1:} $\frac{ O(1/\delta_0)}{\nu N \sin{\theta_{\ell}}} \le \sqrt{\eps}$ and $\theta_\ell \ge \frac{c}{\nu N}$  
\Statex

\State $\ell\gets \prunealg\parens{\pi-\frac{c}{\nu N},\pi,\frac\mu 6,\epsilon}$
\If{$\ell\ne \textsc{fail}$}
    \State\Return{$\ell$}
\EndIf
\State \assert{2:} $\frac c{\nu N} \le \theta_{\ell} \le \pi-\frac{c}{\nu N}$ \Comment{$\theta_{\ell}$ now satisfies~\eqref{eq:theta-bounds} for $j\ge \nu N$.}
\Statex

\State $\ell\gets \prunespread\parens{\delta_0,2\nu,O\parens{\rho_{\delta_0}^2\cdot \mu},\eps}$
\If{$\ell\ne \textsc{fail}$}
    \State\Return{$\ell$}
\EndIf
\State \assert{3:} $\ell$ is $\parens{\frac{4\rho_{\delta_0}}{\pi(1-2\nu)},2\nu,\delta_0}$-spread. 
\Statex

\State $[a,b]\gets \approxcos\parens{\log(\nu N)-1,\sqrt{\eps}}$ \Comment{Whenever the algorithm needs access to $\cos(j\theta_{\ell})$ call $\jacforcos\parens{j,2\nu N,O\parens{\log\log{N}+\log\parens{\frac 1\mu}}}$}
\State \assert{5:} $\ell\in [a,b]$ and $|b-a|\le O\parens{\frac{\sqrt[4]{\eps}}{\nu N}}$.
\Statex

\State $\ell \gets \prunealg\parens{a,b,\frac\mu 6,\epsilon}$
\State \assert{6:} $\ell \neq \textsc{fail}$ is not returned.
\State $(\_, v) \gets \textsc{Check}(\ell, \frac{\mu}{6}, \epsilon)$
\State \Return{$(\ell,v)$}

\end{algorithmic}
\end{algorithm}

Algorithm~\ref{alg:jacobi} needs access to Algorithm~\ref{alg:jac-to-cos}, which is based on Corollary~\ref{cor:jacobi-to-cos}.
\begin{algorithm}
\caption{$\jacforcos(w,N',R)$}\label{alg:jac-to-cos}
\begin{algorithmic}
\Require Noisy query access to $\jac{\alphajac}{\betajac}{\square}(\lambda_{\ell})$, parameters $w\le N'$ and $R$ 
\Ensure An estimate $\frac{\cos(w\theta_{\ell})\pm O(\sqrt{\eps})}{1\pm O(\sqrt{\eps})}$.

\For{$i=1\dots R$}
    \State Pick random $\Delta\in [N', N-N']$
    \State $D \gets \text{sign}(\vy[\Delta]) \cdot \max(|\vy[\Delta]|, \frac{\epsilon}{N^{3/2}})$
    \Comment Ensure that we do not divide by something too small
    \State $v_i\gets \frac{\sqrt{h_{\Delta-w}(\Delta-w)}\vy[\Delta-w] + \sqrt{h_{\Delta-w}(\Delta+w)}\vy[\Delta+w]}{2\sqrt{h_{\Delta}(\Delta)}D}$
\EndFor
\State\Return{Median of $v_1,\dots, v_R$}

\end{algorithmic}
\end{algorithm}

\paragraph{Proof of correctness of  Algorithm~\ref{alg:jacobi}.}

We first argue that \assert{1-6} in Algorithm~\ref{alg:jacobi} and~\ref{alg:jac-to-cos} hold with high (enough) probability.

\begin{lmm}
\label{lem:1-jacobi-assert-1}
With probability at least $1-\frac\mu 6$, \assert{1} holds.
\end{lmm}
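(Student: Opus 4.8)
The plan is to deduce \assert{1} from the guarantee that Corollary~\ref{cor:jacobi-prune} provides for the very first line of Algorithm~\ref{alg:jacobi}, and then to discharge the remaining, purely deterministic, inequality using the elementary estimate $\sin\theta\ge\tfrac{2}{\pi}\min(\theta,\pi-\theta)$ on $[0,\pi]$ together with Lemma~\ref{lem:jacobi-root-bounded}.

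First I would apply Corollary~\ref{cor:jacobi-prune} to the call $\ell\gets\prunealg\parens{0,\tfrac{C'}{\nu\sqrt{\eps}\delta_0 N},\tfrac{\mu}{6},\eps}$, i.e.\ with $[a,b]=\brackets{0,\tfrac{C'}{\nu\sqrt{\eps}\delta_0 N}}$ and failure parameter $\tfrac{\mu}{6}$. Outside an event of probability at most $\tfrac{\mu}{6}$ (this is the only randomness used before \assert{1}) the call behaves correctly, so there are two cases. If $\theta_\ell\le\tfrac{C'}{\nu\sqrt{\eps}\delta_0 N}$ then $\prunealg$ returns the true index $\ell$ and Algorithm~\ref{alg:jacobi} returns it immediately, never reaching \assert{1}; so there is nothing to prove in this case. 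If $\theta_\ell>\tfrac{C'}{\nu\sqrt{\eps}\delta_0 N}$ then $\prunealg$ returns $\textsc{fail}$ and the algorithm reaches \assert{1} carrying the extra deterministic guarantee $\theta_\ell>\tfrac{C'}{\nu\sqrt{\eps}\delta_0 N}$. It therefore suffices to show that this lower bound on $\theta_\ell$ implies both conjuncts of \assert{1}.

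The conjunct $\theta_\ell\ge\tfrac{c}{\nu N}$ follows as soon as the (free) constant $C'$ is chosen with $C'\ge c$, since $\delta_0=\delta_0(\eps)\le 1$ and $\eps\le 1$ give $\sqrt{\eps}\delta_0\le 1$ and hence $\tfrac{C'}{\nu\sqrt{\eps}\delta_0 N}\ge\tfrac{C'}{\nu N}\ge\tfrac{c}{\nu N}$. For the conjunct $\tfrac{O(1/\delta_0)}{\nu N\sin\theta_\ell}\le\sqrt{\eps}$, write $A$ for the absolute constant hidden in the $O(1/\delta_0)$; the claim is exactly $\sin\theta_\ell\ge\tfrac{A}{\delta_0\,\nu N\sqrt{\eps}}$. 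By Lemma~\ref{lem:jacobi-root-bounded}, $\theta_\ell\in\brackets{\tfrac{c_0}{N},\,\pi-\tfrac{c_0}{N}}\subseteq(0,\pi)$ for an absolute $c_0>0$, so $\sin\theta_\ell\ge\tfrac{2}{\pi}\min(\theta_\ell,\pi-\theta_\ell)$. When $\theta_\ell\le\pi/2$ the pruning bound already gives $\sin\theta_\ell\ge\tfrac{2}{\pi}\theta_\ell>\tfrac{2}{\pi}\cdot\tfrac{C'}{\nu\sqrt{\eps}\delta_0 N}$, which is at least $\tfrac{A}{\delta_0\,\nu N\sqrt{\eps}}$ once $C'\ge\tfrac{\pi}{2}A$. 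When $\theta_\ell>\pi/2$ the argument is entirely symmetric, using that the pruning step adjacent to $\pi$ forces $\pi-\theta_\ell$ above the same threshold.

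The only step that is not mechanical is fixing the constants so that the lower bound on $\min(\theta_\ell,\pi-\theta_\ell)$ coming out of the pruning, which is of order $\tfrac{1}{\nu\sqrt{\eps}\delta_0 N}$, dominates the target $\tfrac{A}{\delta_0\nu N\sqrt{\eps}}$; concretely this pins down $C'$ (relative to $A$ and $c$), and is consistent with the choice $\delta_0=\Theta(\sqrt{\eps})$ dictated by~\eqref{eq:jacobi-to-cos-approx} and fixed in Lemma~\ref{lem:jac-to-cos-correct}. The one genuine subtlety, and the thing I would check most carefully, is that the pruning intervals near \emph{both} endpoints of $[0,\pi]$ are taken of width $\Theta\parens{\tfrac{1}{\nu\sqrt{\eps}\delta_0 N}}$ rather than merely $\Theta\parens{\tfrac{1}{\nu N}}$, since Lemma~\ref{lem:jacobi-root-bounded} only supplies the weaker $\Theta(1/N)$ spacing and that is used solely to keep $\theta_\ell$ strictly inside $(0,\pi)$. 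Finally, a union bound over the single $\prunealg$ failure event yields the probability $1-\tfrac{\mu}{6}$ in the statement.
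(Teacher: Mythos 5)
Your overall approach matches the paper's: charge the $\mu/6$ failure to the single call to $\prunealg$ via Corollary~\ref{cor:jacobi-prune}, note that if the call returns $\textsc{fail}$ then deterministically $\theta_\ell > C'/(\nu\sqrt{\eps}\delta_0 N)$, and pick $C'$ large enough (at least $c$ and at least $\tfrac{\pi}{2}A$) so that both conjuncts of \assert{1} follow. That part is correct, and you are right to spell out the hidden constant $A$ and the elementary $\sin\theta\ge\tfrac{2}{\pi}\min(\theta,\pi-\theta)$ step; the paper's own proof compresses all of this into the phrase ``$\sin(x)\approx x$ for small enough $x$.''

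However, your treatment of the case $\theta_\ell>\pi/2$ is not supported by Algorithm~\ref{alg:jacobi} as written. At the moment \assert{1} is reached only the \emph{first} $\prunealg$ call has executed; the pruning adjacent to $\pi$ is the second call $\prunealg\parens{\pi-\tfrac{c}{\nu N},\pi,\tfrac{\mu}{6},\eps}$, which both occurs \emph{after} \assert{1} and covers an interval of width only $c/(\nu N)$, which (since $\delta_0=\Theta(\sqrt{\eps})$) is a factor of $\Theta(1/\eps)$ narrower than the width $\Theta\parens{\tfrac{1}{\nu\sqrt{\eps}\delta_0 N}}$ your symmetric argument would need. Lemma~\ref{lem:jacobi-root-bounded} only gives $\pi-\theta_\ell=\Omega(1/N)$, which is likewise too weak. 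So the ``entirely symmetric'' step does not go through from what is on the page. You correctly flag this as the thing to check most carefully, and indeed it is worth noting that the paper's own proof of this lemma only argues the lower bound $\theta_\ell\ge\Omega\parens{\tfrac{1}{\nu\sqrt{\eps}\delta_0 N}}$ and never addresses $\theta_\ell$ near $\pi$, so the concern you raise is not an artifact of your reconstruction; the argument as printed elides the obtuse case, which would need either an earlier and wider prune near $\pi$ or a separate reason that $\pi-\theta_\ell$ cannot be too small at this point in the algorithm.
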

\begin{proof}
We need to have the following
\[\frac{ O(1/\delta_0)}{\nu N \sin{\theta_{\ell}}} \le \sqrt{\eps}.\]
        In other words we need (for large enough $N$),
        \begin{equation}
        \label{eq:alpha-l-lb-1}
        \theta_{\ell} \ge \Omega\parens{\frac{1}{\nu\sqrt{\eps}\delta_0N}},
        \end{equation}
where in the above we have used that fact that $\sin(x)\approx x$ for small enough $x$.
But the above is handled by the first call to $\prunealg$ by making sure $C'$ is large enough. \assert{1} fails only if the call to $\prunealg$ fails, which happens with probability at most $\frac\mu 6$ (as per Corollary~\ref{cor:jacobi-prune}).

In order to further ensure that $\theta_\ell \ge \frac{c}{\nu N}$, it suffices to also require that $C' \ge c$ (and hence $C' \ge c \sqrt{\epsilon} \delta_0$).
\end{proof}

\begin{lmm}
\label{lem:1-jacobi-assert-2}
Conditioned on \assert{1} being true, with probability at least $1-\frac\mu 6$, \assert{2} holds.
\end{lmm}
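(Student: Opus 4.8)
The plan is to observe that \assert{2} decomposes into a lower bound and an upper bound on $\theta_\ell$, where the lower bound comes for free and the upper bound is precisely what the second call to $\prunealg$ enforces. The lower bound $\theta_\ell \ge \frac{c}{\nu N}$ is literally the second conjunct of \assert{1}, which we are conditioning on (recall that the constant $C'$ was chosen with $C' \ge c$ in the proof of Lemma~\ref{lem:1-jacobi-assert-1} precisely for this), so nothing remains to be done there. Hence the entire content of the lemma is to argue that, with probability at least $1 - \frac\mu 6$, the algorithm does not proceed past the second $\prunealg$ call with $\theta_\ell > \pi - \frac{c}{\nu N}$.

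For this I would invoke Corollary~\ref{cor:jacobi-prune} applied to the interval $[a,b] = [\pi - \frac{c}{\nu N}, \pi] \subseteq [0,\pi]$ (valid for large enough $N$) with failure parameter $\frac\mu 6$, splitting on whether $\theta_\ell$ lies in this interval. If $\theta_\ell \in [\pi - \frac{c}{\nu N}, \pi]$, then with probability at least $1 - \frac\mu 6$ the call outputs $\ell$, so Algorithm~\ref{alg:jacobi} returns $\ell$ and halts before \assert{2} is ever reached; the only way the algorithm can proceed past this point with the upper bound violated is the event of probability at most $\frac\mu 6$ that $\prunealg$ erroneously returns $\textsc{fail}$. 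If instead $\theta_\ell \notin [\pi - \frac{c}{\nu N}, \pi]$, i.e.\ $\theta_\ell < \pi - \frac{c}{\nu N}$, then with probability at least $1 - \frac\mu 6$ the call returns $\textsc{fail}$, the algorithm proceeds, and \assert{2} holds (the lower bound from \assert{1}, the upper bound by assumption on $\theta_\ell$); with the remaining probability at most $\frac\mu 6$ the call returns an incorrect index, in which case the algorithm halts with a wrong output but again never reaches the \assert{2} line. In both cases, conditioned on \assert{1}, the probability that the algorithm reaches the statement following the second $\prunealg$ call with \assert{2} false is at most $\frac\mu 6$, as desired.

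I do not expect a genuine obstacle here; the only subtlety is bookkeeping: one must adopt the operational reading of ``\assert{2} holds'' (either the algorithm has already correctly returned, or it continues with the stated bounds on $\theta_\ell$), and then note that the ``$\prunealg$ returns a wrong index'' branch, while it corrupts the final output, does not cause \assert{2} itself to be reached-and-false, so it is safely absorbed into the $\frac\mu 6$ budget rather than needing a separate accounting.
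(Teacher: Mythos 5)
Your proof is correct and follows the same route as the paper's, which is just a one-sentence invocation of Corollary~\ref{cor:jacobi-prune} applied to the second $\prunealg$ call; you have simply spelled out the case analysis (in-interval vs. out-of-interval, and what the failure event means operationally for the assertion line) that the paper leaves implicit.
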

\begin{proof}
By the second call to $\prunealg$ we have that $\theta_{\ell}\le \pi-\frac c{\nu N}$ with probability $1-\frac \mu 6$ (as per Corollary~\ref{cor:jacobi-prune}).
\end{proof}

The following follows directly from Corollary~\ref{cor:no-spread-prune}:
\begin{lmm}
\label{lem:1-jacobi-assert-3}
With probability at least $1-\frac\mu 6$, \assert{3} holds.
\end{lmm}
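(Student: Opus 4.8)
The plan is to deduce this immediately from Corollary~\ref{cor:no-spread-prune}, together with the simple observation that \assert{3} is reached only when the preceding call to $\prunespread$ in Algorithm~\ref{alg:jacobi} returns $\textsc{fail}$.

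First I would check that the hypotheses of Corollary~\ref{cor:no-spread-prune} are in force: throughout Section~\ref{sec:jacobi-1sps} we have query access to $\vy = v\cdot\vF^{-1}\ve_\ell + \vw$ with $\norm{2}{\vw}\le\eps\abs{v}$, which is exactly the input that $\prunespread$ requires. Then I would apply Corollary~\ref{cor:no-spread-prune} to the call $\prunespread\parens{\delta_0,2\nu,\Theta(\rho_{\delta_0}^2\mu),\eps}$ made just before \assert{3}, i.e.\ with threshold $\delta_0$, spread-parameter $2\nu$ substituted for the corollary's spread-parameter, and failure parameter $\mu' = \Theta(\rho_{\delta_0}^2\mu)$. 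The corollary guarantees that if $\ell$ is not $\parens{\frac{4\rho_{\delta_0}}{\pi(1-2\nu)},2\nu,\delta_0}$-spread, then $\prunespread$ outputs $\ell$ (and not $\textsc{fail}$) except with probability $O(\mu'/\rho_{\delta_0}^2) = O(\mu)$, which can be driven below $\mu/6$ by taking the hidden constant in the definition of $\mu'$ small enough.

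The remaining ingredient is the reachability observation: Algorithm~\ref{alg:jacobi} returns immediately whenever $\prunespread$ outputs a value other than $\textsc{fail}$, so the line carrying \assert{3} is executed only on runs where $\prunespread$ returned $\textsc{fail}$. Consequently the event ``\assert{3} is violated'' (namely, that line is reached and yet $\ell$ is not spread) is contained in the event ``$\ell$ is not spread but $\prunespread$ nevertheless returned $\textsc{fail}$,'' whose probability is at most $\mu/6$ by the previous paragraph. I do not expect any real obstacle here; the only care needed is the bookkeeping of the hidden constant so that $O(\mu'/\rho_{\delta_0}^2)$ collapses to $\le\mu/6$, and confirming that the notion of ``spread'' appearing in \assert{3} is precisely what Corollary~\ref{cor:no-spread-prune} certifies once its spread-parameter is instantiated to $2\nu$.
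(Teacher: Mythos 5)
Your proposal is correct and takes essentially the same approach as the paper: the paper's proof of this lemma is the single line ``follows directly from Corollary~\ref{cor:no-spread-prune},'' and you have simply spelled out that deduction (the reachability observation about $\prunespread$ returning $\textsc{fail}$, plus the probability accounting with the hidden constant in the third argument of the $\prunespread$ call). One small bookkeeping remark, inherited from the paper itself rather than introduced by you: instantiating Corollary~\ref{cor:no-spread-prune} with its spread-parameter set to $2\nu$ literally yields the condition $\parens{\frac{4\rho_{\delta_0}}{\pi(1-4\nu)},2\nu,\delta_0}$-spread, whereas \assert{3} (and your proposal) write $\frac{4\rho_{\delta_0}}{\pi(1-2\nu)}$; this constant-factor slip in the first coordinate is already present in the paper's statement of \assert{3} and is harmless for the downstream use in Lemma~\ref{lem:jac-to-cos-correct}, but it is worth noticing when doing the substitution carefully.
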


Before arguing about \assert{5}, we first argue about correctness of Algorithm~\ref{alg:jac-to-cos}.
\begin{lmm}
\label{lem:jac-to-cos-correct}
Assume \assert{1-3} are true and
let $w\le \nu N$. Finally let $\delta_0=\frac{\sqrt{\eps}}{5000}$ and $\nu =\frac 1 8$. Then a call to $\jacforcos(w,2\nu N,R)$ returns $\frac{\cos(w\theta_{\ell})\pm O(\sqrt{\eps})}{1\pm O(\sqrt{\eps})}$ with probability $\exp\parens{-\Omega(R)}$. Further this call makes $3R$ queries and has time complexity $O(R)$.
\end{lmm}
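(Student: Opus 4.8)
The plan is to analyze a single iteration of the loop in Algorithm~\ref{alg:jac-to-cos} and then finish with a standard median-of-independent-trials argument. Since $\vF[\ell,j]=\jac{\alphajac}{\betajac}{j}(\cos\theta_\ell)\sqrt{w_\ell}$, we have $\vy[j]=v\sqrt{w_\ell}\,\jac{\alphajac}{\betajac}{j}(\cos\theta_\ell)+\vw[j]$, and the common factor $v\sqrt{w_\ell}$ cancels between the numerator and denominator defining $v_i$; hence, \emph{provided} the truncation does nothing in that round (i.e.\ $D=\vy[\Delta]$), $v_i$ equals exactly the left-hand side of~\eqref{eq:jacobi-to-cos-approx} with $\widetilde{\jac{\alphajac}{\betajac}{j}}(\cos\theta_\ell):=\jac{\alphajac}{\betajac}{j}(\cos\theta_\ell)+\vw[j]/(v\sqrt{w_\ell})$. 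I will call the $i$-th round \emph{good} if its random $\Delta$ satisfies (a)~$\abs{\cos((\Delta+\addjac)\theta_\ell+\gamma)}\ge\delta_0$ (this is~\eqref{eq:cos-denom-large}, using $\gamma=-(\frac{\alphajac}{2}+\frac14)\pi$), and (b)~each of $\abs{\vw[\Delta-w]},\abs{\vw[\Delta]},\abs{\vw[\Delta+w]}$ is at most $c_1\eps\abs{v}/\sqrt N$ for a suitable absolute constant $c_1$. The proof then has three steps: (i)~a round is good with probability $\ge 2/3$; (ii)~in a good round $v_i$ lies in the interval $I:=\set{(\cos(w\theta_\ell)+x)/(1+y):\abs{x},\abs{y}\le c_2\sqrt\eps}$, which (being the image of a box under a continuous map) is an interval of length $O(\sqrt\eps)$; (iii)~a Chernoff bound makes more than half the $R$ rounds good with probability $1-e^{-\Omega(R)}$, and the median of values a strict majority of which lie in an interval lies in that interval.

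For step~(i): event~(a) fails with probability $O(\rho_{\delta_0})$ by \assert{3} and Definition~\ref{def:no-spread} (the range $[2\nu N,N-2\nu N]$ from which $\Delta$ is drawn is exactly the range appearing there), and since $\delta_0=\sqrt\eps/5000\to 0$ we have $\rho_{\delta_0}\to 0$ by Definition~\ref{def:rho}, so this probability is $\le 1/10$ for $\eps$ small enough. For event~(b), the index $\Delta$, and likewise each of $\Delta\pm w$, ranges over a set of distinct indices of $[0,N)$ of size $(1-4\nu)N=N/2$, so $\avg[\vw[\Delta]^2],\avg[\vw[\Delta\pm w]^2]$ are each at most $\frac{2}{N}\norm{2}{\vw}^2\le 2\eps^2v^2/N$; Markov's inequality and a union bound over the three coordinates give~(b) with probability $\ge 9/10$ for an appropriate $c_1$. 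Hence a round is good with probability $\ge 1-1/10-1/10\ge 2/3$.

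For step~(ii), fix a good round. The hypotheses of Corollary~\ref{cor:jacobi-to-cos} hold: $j=\Delta-w\ge 2\nu N-\nu N=\nu N$ together with \assert{2} gives~\eqref{eq:theta-bounds}; \eqref{eq:cos-denom-large} is event~(a); and the approximation error in~\eqref{eq:approx-jac-error-bound} is $\abs{\vw[j]/(v\sqrt{w_\ell})}\le c_1\eps/\sqrt{Nw_\ell}$, which by $1/w_\ell=O(N\kappa^2(\theta_\ell))$ from~\eqref{eq:w-ell-ub} is $O(\eps)\,\kappa(\theta_\ell)$, i.e.\ \eqref{eq:approx-jac-error-bound} holds with its parameter set to $\eps'=O(\eps)$. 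Next I must check $D=\vy[\Delta]$, i.e.\ $\abs{\vy[\Delta]}>\eps/N^{3/2}$: Theorem~\ref{thm:jacobi-approx} with Lemma~\ref{lem:norm-fact-bound} gives $\abs{\jac{\alphajac}{\betajac}{\Delta}(\cos\theta_\ell)}\ge\Omega(\kappa(\theta_\ell))\bigl(\delta_0-\frac{c'}{\Delta\sin\theta_\ell}\bigr)$, and \assert{1} forces $\frac{c'}{\Delta\sin\theta_\ell}=O(\sqrt\eps\,\delta_0)=O(\eps)\le\delta_0/2$, so $\abs{\jac{\alphajac}{\betajac}{\Delta}(\cos\theta_\ell)}\ge\Omega(\kappa(\theta_\ell)\delta_0)$; using $\sqrt{w_\ell}\,\kappa(\theta_\ell)\ge\Omega(1/\sqrt N)$ (again from~\eqref{eq:w-ell-ub}), the precision assumption $\abs{v}\ge 1/N$ (which we invoke here), and $\abs{\vw[\Delta]}=O(\eps\abs{v}/\sqrt N)$, we get $\abs{\vy[\Delta]}\ge\Omega(\abs{v}\delta_0/\sqrt N)\ge\Omega(\sqrt\eps/N^{3/2})>\eps/N^{3/2}$ for $\eps$ small. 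Now Corollary~\ref{cor:jacobi-to-cos} applies, giving $v_i=\bigl(\cos(w\theta_\ell)\pm\frac{O(1/\delta_0)}{(\Delta-w)\sin\theta_\ell}\pm O(\eps'/\delta_0)\bigr)/\bigl(1\pm\frac{O(1/\delta_0)}{\Delta\sin\theta_\ell}\pm O(\eps'/\delta_0)\bigr)$; since $\Delta,\Delta-w\ge\nu N$, the two $\sin\theta_\ell$ terms are $\le\sqrt\eps$ by \assert{1} once the constant $C'$ of the first $\prunealg$ call is chosen to dominate the implied constant, while $O(\eps'/\delta_0)=O(\eps/(\sqrt\eps/5000))=O(\sqrt\eps)$; hence $v_i\in I$.

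Finally, for step~(iii) and the complexity: step~(i) and a multiplicative Chernoff bound give that strictly more than $R/2$ of the rounds are good with probability $1-e^{-\Omega(R)}$, and in that event $\median(v_1,\dots,v_R)\in I$, which has the claimed form $(\cos(w\theta_\ell)\pm O(\sqrt\eps))/(1\pm O(\sqrt\eps))$; each round issues three queries $\vy[\Delta-w],\vy[\Delta],\vy[\Delta+w]$ and performs $O(1)$ arithmetic operations (the scaling ratios $\sqrt{h_{\Delta\pm w}(\Delta\pm w)/(h_\Delta\Delta)}$ are evaluable to within a factor $1\pm O(1/N)$ in $O(1)$ time via Stirling's approximation exactly as in the proof of Lemma~\ref{lem:norm-fact-bound}, a negligible extra error), so $\jacforcos$ makes $3R$ queries and runs in $O(R)$ time. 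The main obstacle is the noise bound in step~(ii): at a worst-case coordinate $\vw[j]/(v\sqrt{w_\ell})$ can be as large as $\Theta(\eps\sqrt N)\,\kappa(\theta_\ell)$, which is useless for Corollary~\ref{cor:jacobi-to-cos}; the whole point of condition~(b) is that at a \emph{random} coordinate $\abs{\vw[j]}$ is only at the root-mean-square level $O(\eps\abs{v}/\sqrt N)$, and it is precisely the bound $1/w_\ell=O(N\kappa^2(\theta_\ell))$ of Theorem~\ref{cor:jac-U-bound} that converts this into the needed $O(\eps)\,\kappa(\theta_\ell)$ estimate—together with the constant-chasing that makes the choices $\delta_0=\sqrt\eps/5000$ and $\nu=1/8$ simultaneously yield $O(\eps'/\delta_0)=O(\sqrt\eps)$, $\frac{O(1/\delta_0)}{\nu N\sin\theta_\ell}=O(\sqrt\eps)$, and $\Pr[\text{round good}]\ge 2/3$.
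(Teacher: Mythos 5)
Your proof is correct and follows the same skeleton as the paper's: condition on a randomly chosen $\Delta$ being good (both \assert{3} and a Markov bound on the noise at $\Delta,\Delta\pm w$), invoke Corollary~\ref{cor:jacobi-to-cos} in that event with per-coordinate error $O(\kappa(\theta_\ell)\epsilon)$ obtained from~\eqref{eq:w-ell-ub}, and then use a Chernoff bound plus the median-of-intervals trick. Your constant-chasing (probabilities $1/10$, $9/10$ versus the paper's $5/6$, $17/18$) differs in detail but not in spirit, and your bookkeeping to extract $O(\sqrt\epsilon)$ from $O(\epsilon/\delta_0)$ and $O(1/\delta_0)/(\nu N\sin\theta_\ell)$ matches.

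The one place you genuinely diverge is the treatment of the clipping $D \gets \mathrm{sign}(\vy[\Delta])\cdot\max(|\vy[\Delta]|,\epsilon/N^{3/2})$. You prove that in a good round the clip \emph{cannot fire}, by combining the lower bound $|\cos A|\ge\delta_0$ from event~(a), the bound $c'/(\Delta\sin\theta_\ell)\le\delta_0/2$ from \assert{1}, and the lower bound $\sqrt{w_\ell}\,\kappa(\theta_\ell)=\Omega(1/\sqrt N)$ implied by~\eqref{eq:w-ell-ub}, together with $|v|\ge 1/N$, to get $|\vy[\Delta]|\gtrsim\sqrt\epsilon/N^{3/2}>\epsilon/N^{3/2}$. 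The paper instead does not argue the clip is inactive; it simply bounds the \emph{extra} perturbation the clipping can introduce to $\widetilde{\jac{\alphajac}{\betajac}{\Delta}}(\cos\theta_\ell)$ by $O(\epsilon/(|v|N\sqrt{Nw_\ell}))=O(\kappa(\theta_\ell)\epsilon)$ (again via $|v|\ge 1/N$ and~\eqref{eq:w-ell-ub}) and absorbs it into the error term of~\eqref{eq:approx-jac-error-bound}. The paper's route is slightly more robust (it is agnostic as to whether the clip fires) and marginally shorter; yours establishes a slightly stronger conclusion (the clip is truly inert) at the cost of needing the lower bound on $|\jac{\alphajac}{\betajac}{\Delta}(\cos\theta_\ell)|$, which you correctly derive from Theorem~\ref{thm:jacobi-approx}. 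Both are sound.

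One small thing you flag that the paper leaves implicit is how to evaluate the ratios $\sqrt{h_{\Delta\pm w}(\Delta\pm w)/(h_\Delta\Delta)}$ in $O(1)$ time; your Stirling-approximation remark is a legitimate way to justify it and does not alter the error analysis, since the resulting multiplicative $1\pm O(1/N)$ perturbation is dominated by the existing $O(\sqrt\epsilon)$ terms.
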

\begin{proof} The claim on the query complexity follows from the statement of Algorithm~\ref{alg:jac-to-cos}. The rest of argument follows from showing that one can apply Corollary~\ref{cor:jacobi-to-cos}. 
Next, we show that for any $i\in [R]$, $v_i$ has the correct estimate with probability at least $\frac 23$. Then an application of the Chernoff bound would prove the claimed result.

Fix any round $i\in [R]$. We would like to show that~\eqref{eq:jacobi-to-cos-approx} holds with high probability.
Note that the value of $v_i$ in Algorithm~\ref{alg:jac-to-cos} does not change if multiply the numerator and denominator by $\frac 1{v\sqrt{w_\ell}}$. I.e. WLOG we will assume that we have query access to $\frac {\vy[w]}{v\sqrt{w_\ell}}$ and the $\frac \eps{N^{3/2}}$ term in definition of $D$ would be replaced by $\frac \eps{v\sqrt{w_\ell}N^{3/2}}$.
It suffices to show that all the conditions are met, where the queries $\widetilde{{\jac{\alphajac}{\betajac}{w}}}(\cos{\theta_{\ell}})$ in equation~\eqref{eq:approx-jac-error-bound} will be provided by $\vy[w]/(v\sqrt{w_\ell})$ (due to the normalization~\eqref{eq:F-entries}.) In particular, we will show that the following conditions are satisfied:

\begin{itemize}
  \item[\eqref{eq:theta-bounds}:] This follows from \assert{2} and the fact that all of $\Delta-w,\Delta,\Delta+w$ are at least $\nu N$.
  This is achieved by making sure $w\le \nu N$ (which is guaranteed by the call to $\approxcos$ having $\log(\nu N)-1$ as its first parameter) and making sure $\Delta\in [2\nu M, N-2\nu N]$ (guaranteed by the call $\jacforcos(w,2\nu N,R)$).

  \item[\eqref{eq:cos-denom-large}:] \assert{3} implies that this condition is met with probability at least 
  \[1 - \frac{4\rho_{\delta_0}}{\pi(1-2\nu)} \ge 1-\frac{16\sqrt{5\delta_0}}{\pi} \ge \frac {5}{6}\] 
  for random $\Delta$. (In the above, the first inequality follows from Lemma~\ref{lmm:cos-noise} and our choice of $\nu$ while the last inequality follows from our choice of $\delta_0$ and the fact that $\eps\le 1$.)

  \item[\eqref{eq:approx-jac-error-bound}:]
  We analyze the error $\widetilde{{\jac{\alphajac}{\betajac}{w}}}(\cos{\theta_{\ell}}) - \jac{\alphajac}{\betajac}{w}(\cos{\theta_\ell}) = \vy[w]/(v\sqrt{w_\ell}) - \jac{\alphajac}{\betajac}{w}(\cos{\theta_\ell}) = \vw[w]/(v\sqrt{w_\ell})$ (by equations~\eqref{eq:y} and \eqref{eq:F-entries}).
  The expected squared error $\vw[\Delta]^2$ for a random position $\Delta$ is at most $\frac{\epsilon^2|v|^2}{N(1-4\nu)}$.
  Thus, by Markov with probability at least $\frac {17}{18}$, the error $\vw[\Delta]\le \frac{\sqrt{18}\eps\abs{v}}{\sqrt{N(1-4\nu)}}\le \frac{6\eps\abs{v}}{\sqrt{N}}$.
  The error $\vw[\Delta]/(v\sqrt{w_\ell})$ is therefore $\frac{6\eps}{\sqrt{Nw_{\ell}}}$, which by Theorem~\ref{cor:jac-U-bound} (more specifically~\eqref{eq:w-ell-ub}) 
is $O\parens{\kappa(\theta_{\ell})\cdot \eps}$. 
  Similarly, for a randomly chosen $\Delta$ (and fixed $w$), the errors $\vw[\Delta-w]$ and $\vw[\Delta+w]$ satisfy the same bounds.
  By the union bound,~\eqref{eq:approx-jac-error-bound} is satisfied for all three with probability at least $\frac 5 6$. 
  Finally, we note that modifying the denominator changes $\vy[\Delta]/(v\sqrt{w_\ell})$ by by at most another $\frac{\eps}{|v|\cdot N\cdot \sqrt{Nw_{\ell}}}\le \frac \eps{\sqrt{Nw_\ell}} = O(\kappa(\theta_\ell)\cdot\epsilon)$ (where in the inequality we used our assumption that $\abs{v}\ge \frac 1N$-- recall the discussion on precision in Section~\ref{sec:sr-def}).
\end{itemize}
Thus, we can apply~\eqref{eq:jacobi-to-cos-approx} to conclude that with probability at least $\frac 23$:
\[v_i= \frac{ \cos(w\theta_{\ell}) \pm \frac{ O(1/\delta_0)}{(\Delta-w)\sin{\theta_{\ell}}}\pm O(\eps/\delta_0)   }{ 1\pm \frac{ O(1/\delta_0)}{\Delta\sin{\theta_{\ell}}} \pm O(\eps/\delta_0)}.\]
This proof is complete by noting that all the error terms are $O(\sqrt{\eps})$ due to \assert{1} (and the fact that $\Delta - w \ge \nu N$ by construction) and our choice of $\delta_0$.
\end{proof}

We now argue that
\begin{lmm}
\label{lem:1-jacobi-assert-5}
Conditioned on \assert{1-3} being true,
with probability at least $1-\frac\mu 6$, \assert{5} holds.
\end{lmm}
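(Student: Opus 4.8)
The plan is to invoke Theorem~\ref{thm:theta-from-cos} with $\tau = \log(\nu N)-1$ and noise level $\eps_0 = \Theta(\sqrt\eps)$, feeding it the cosine values produced by $\jacforcos$, and then read off both conclusions of \assert{5} from the guarantee that $\approxcos$ returns a range $S_\tau \ni \theta_\ell$ with $|S_\tau|\le 2\rho_{\eps_0}/2^{\tau}$. First I would verify the hypotheses of Theorem~\ref{thm:theta-from-cos}: since $\cos\theta_\ell$ is a root of the $N$th Jacobi polynomial, Lemma~\ref{lem:jacobi-root-bounded} (equivalently \assert{2}) gives $\theta_\ell\in(0,\pi)$, so the theorem is applicable; with $\nu=1/8$ we have $\tau = \log_2 N - 4 \le \floor{\log_2(2N/3)}$ for large $N$; and for $\eps$ below an absolute constant the resulting $\rho_{\eps_0}$ satisfies $0<\rho_{\eps_0}<\pi/22$ (using that $\rho_{\eps_0}\to 0$ as $\eps_0\to 0$, by Lemma~\ref{lmm:cos-noise}).

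Next I would supply the cosine oracle. Theorem~\ref{thm:theta-from-cos} queries values $\frac{\cos(w\theta_\ell)\pm\eps_0}{1\pm\eps_0}$ at arguments of the form $w = x_t 2^{t-1}$ with $1\le t\le\tau$ and $x_t\in[1,3/2]$, and every such $w$ obeys $w \le \tfrac32\cdot 2^{\tau-1} = \tfrac38\nu N < \nu N$. In Algorithm~\ref{alg:jacobi} each such query is answered by a call to $\jacforcos(w,2\nu N,R)$; its hypotheses are precisely \assert{1}, \assert{2}, \assert{3} together with the fixed choices $\delta_0=\sqrt\eps/5000$ and $\nu=1/8$, so by Lemma~\ref{lem:jac-to-cos-correct} this call returns $\frac{\cos(w\theta_\ell)\pm O(\sqrt\eps)}{1\pm O(\sqrt\eps)}$ except with probability $\exp(-\Omega(R))$. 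Hence, conditioned on all of these calls succeeding, the oracle seen by $\approxcos$ is exactly of the form required by Theorem~\ref{thm:theta-from-cos} with $\eps_0 = \Theta(\sqrt\eps)$.

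It then remains to control the failure probability and the window length. $\approxcos(\tau,\eps_0)$ makes $O(\tau) = O(\log N)$ cosine queries, each routed through one $\jacforcos$ call with fresh randomness, so a union bound bounds the total failure probability by $O(\log N)\cdot\exp(-\Omega(R))$; this is at most $\mu/6$ by the choice $R=\Theta(\log\log N+\log(1/\mu))$ used in the algorithm (the union bound is valid even if $\approxcos$ queries adaptively, since at most $\tau$ calls occur in any execution). On this event Theorem~\ref{thm:theta-from-cos} gives $[a,b]=S_\tau$ with $\theta_\ell\in[a,b]$ and $|b-a| \le 2\rho_{\eps_0}/2^{\tau} = 4\rho_{\eps_0}/(\nu N)$, since $2^{\tau} = \nu N/2$. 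Finally, by Lemma~\ref{lmm:cos-noise} the map $\arccos$ has only square-root sensitivity, so $\rho_{\eps_0} = O(\sqrt{\eps_0}) = O(\eps^{1/4})$, whence $|b-a| = O(\eps^{1/4}/(\nu N))$, which together with $\ell\in[a,b]$ is \assert{5}.

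The step I expect to be the main (if minor) obstacle is the constant-chasing in this last chain: one must confirm that the $O(\sqrt\eps)$ slack emerging from $\jacforcos$, once passed through $\arccos$, inflates only to $O(\eps^{1/4})$ rather than something larger, and that the constant hidden in $\eps_0=\Theta(\sqrt\eps)$ is still small enough to keep $\rho_{\eps_0}<\pi/22$; this amounts to choosing the global smallness threshold on $\eps$ consistently with the one already imposed by Lemma~\ref{lem:jac-to-cos-correct}. Everything else is a direct composition of Theorem~\ref{thm:theta-from-cos}, Lemma~\ref{lem:jac-to-cos-correct}, and a union bound.
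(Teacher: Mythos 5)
Your proposal is correct and follows essentially the same route as the paper's proof: a union bound over the $O(\log N)$ calls to $\jacforcos$ (each succeeding with probability $1-\exp(-\Omega(R))$ by Lemma~\ref{lem:jac-to-cos-correct}), followed by an application of Theorem~\ref{thm:theta-from-cos} using Lemma~\ref{lem:jacobi-root-bounded} to check $\theta_\ell\in(0,\pi)$. You fill in more of the arithmetic (verifying $w<\nu N$, $\tau\le\floor{\log_2(2N/3)}$, and the $O(\eps^{1/4}/(\nu N))$ window size via $\rho_{\eps_0}=O(\sqrt{\eps_0})$ and $2^\tau=\nu N/2$) than the paper states explicitly, but the logic is identical.
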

\begin{proof}
There are $O(\log{N})$ calls to  $\jacforcos(w,2\nu N,R)$ for $R=O(\log\log{N} + \log(1/\mu))$. Thus, by Lemma~\ref{lem:jac-to-cos-correct} we know that each call gives back the desired output with probability at least $1-O(\mu/\log{N})$.
Thus, by the union bound, all the calls work as intended with probability at least $1-\frac\mu 6$. Then \assert{5} follows from Theorem~\ref{thm:theta-from-cos} (and noting that by Lemma~\ref{lem:jacobi-root-bounded} we have $\theta_\ell\in (0,\pi)$ as needed by Theorem~\ref{thm:theta-from-cos}).
\end{proof}

\begin{lmm}
\label{lem:1-jacobi-assert-6}
Conditioned on \assert{5} being true,
with probability at least $1-\frac\mu 6$, \assert{6} holds.
\end{lmm}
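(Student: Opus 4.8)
The plan is to obtain \assert{6} as an immediate consequence of the soundness/completeness guarantee of $\prunealg$ (Corollary~\ref{cor:jacobi-prune}) applied to the interval $[a,b]$ produced by the preceding call to $\approxcos$, exactly in the style of the proofs of Lemmas~\ref{lem:1-jacobi-assert-1} and~\ref{lem:1-jacobi-assert-2}. Conditioned on \assert{5}, we have $\theta_\ell \in [a,b]$ with $|b-a| \le O\parens{\frac{\sqrt[4]{\eps}}{\nu N}}$, and the hypothesis $\norm{2}{\vw} \le \eps\abs{v}$ needed by Corollary~\ref{cor:jacobi-prune} holds by the problem setup of Theorem~\ref{thm:jacobi-1-sps}; moreover $[a,b] \subseteq [0,\pi]$ since $\approxcos$ returns a subinterval of $[0,\pi]$ by Theorem~\ref{thm:theta-from-cos}. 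Hence the `$\theta_\ell \in [a,b]$' case of Corollary~\ref{cor:jacobi-prune} applies and tells us that $\prunealg\parens{a,b,\frac\mu 6,\epsilon}$ outputs $\ell$ — and in particular does not output \textsc{fail} — with probability at least $1-\frac\mu 6$, which is precisely \assert{6}.

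The remaining bookkeeping is to verify that this call fits the resource budget of Theorem~\ref{thm:jacobi-1-sps}. Since $\nu$ is the fixed constant $\tfrac18$ from Lemma~\ref{lem:jac-to-cos-correct} and $\eps \le 1$, we have $|b-a| = O\parens{\frac{1}{N}}$, so by Corollary~\ref{cor:jac-roots} the number of roots of the $N$th Jacobi polynomial whose angle lies in $[a,b]$ is $O\parens{\frac{(b-a)N}{\pi}} = O(1)$. Therefore the call costs at most $\qprune{O(1),N,\frac\mu 6,\eps}$ queries and $\tprune{O(1),N,\frac\mu 6,\eps}$ time, which is absorbed into the overall bounds.

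I do not expect any genuine obstacle here: the lemma is essentially a bookkeeping step that pushes the `interval containing $\theta_\ell$' guarantee of \assert{5} through the already-established correctness guarantee of $\prunealg$. The only care required is to ensure that the $\frac\mu6$ failure probability of this invocation is one of the (constantly many) $\frac\mu6$-failure events that are union-bounded at the end of the analysis of Algorithm~\ref{alg:jacobi}, so that the total failure probability of the one-sparse Jacobi solver remains at most $\mu$.
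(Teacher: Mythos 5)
Your proposal is correct and follows the same route as the paper's (very terse) proof, which simply cites Corollary~\ref{cor:jacobi-prune} together with Lemma~\ref{lem:jac-to-cos-correct}: conditioned on \assert{5} placing $\theta_\ell$ in $[a,b]$, the guaranteed-output case of Corollary~\ref{cor:jacobi-prune} for $\prunealg(a,b,\mu/6,\eps)$ is exactly \assert{6}. The paper's additional citation of Lemma~\ref{lem:jac-to-cos-correct} is just to pin down the constant $\nu$ (and hence that $[a,b]$ has width $O(1/N)$, so the call has the expected cost); your resource-accounting paragraph supplies that detail explicitly, so nothing is missing.
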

\begin{proof}
    This follows from Lemma~\ref{lem:jac-to-cos-correct} and Corollary~\ref{cor:jacobi-prune}. 
\end{proof}

The following is a direct consequence of Lemma~\ref{lem:check-correct}:
\begin{lmm}
  \label{lem:1-jacobi-v}
  With probability at least $1-\frac{\mu}{6}$, the value $\tilde{v}$ returned by Algorithm~\ref{alg:jacobi} satisfies $\abs{\tilde{v}-v} \le 13\epsilon|v|$.
\end{lmm}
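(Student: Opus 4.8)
The plan is to read the claim off directly from Lemma~\ref{lem:check-correct}, which is exactly what the phrasing ``direct consequence'' signals. The last three lines of Algorithm~\ref{alg:jacobi} set $\ell$ to the output of $\prunealg(a,b,\tfrac\mu6,\epsilon)$, call $(\_,v)\gets\textsc{Check}(\ell,\tfrac\mu6,\epsilon)$, and return $v$; this $v$ is the $\tilde v$ of the lemma statement. First I would observe that, on the event that \assert{5} holds (so $\theta_\ell\in[a,b]$, by Lemma~\ref{lem:1-jacobi-assert-5}), Corollary~\ref{cor:jacobi-prune} guarantees that with probability at least $1-\tfrac\mu6$ this $\prunealg$ call outputs the true spike index $\ell$ (and not $\textsc{fail}$)---this is precisely the content of \assert{6}/Lemma~\ref{lem:1-jacobi-assert-6}. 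Hence, conditioned on \assert{1}--\assert{6}, the index passed into $\textsc{Check}$ is the true index of the one-sparse spike.

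Next I would apply Lemma~\ref{lem:check-correct} with guess $\ell'=\ell$ equal to the true index and failure probability $\tfrac\mu6$. Since $\epsilon$ is taken sufficiently small (in particular $\epsilon<\tfrac1{200}$ suffices for the bounds in Lemma~\ref{lem:check-correct}), that lemma yields: with probability at least $1-\tfrac\mu6$, $\textsc{Check}^{(\vy)}(\ell,\tfrac\mu6,\epsilon)$ returns $(\true,\tilde v)$ with $\abs{\tilde v-v}\le 13\epsilon\abs{v}$. The value $\tilde v$ is exactly what Algorithm~\ref{alg:jacobi} returns as $v$, so the desired estimate holds. One small point worth spelling out is that the discarded first coordinate ``$\_$'' of the output of $\textsc{Check}$ is the Boolean $\true$, which is guaranteed by Lemma~\ref{lem:check-correct} whenever the guess equals the true index; so no extra case analysis on that flag is needed.

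I expect essentially no genuine obstacle here; the only care required is bookkeeping of which randomness is being bounded. The probability that the earlier stages fail to pin down the correct $\ell$ is already tracked by Lemmas~\ref{lem:1-jacobi-assert-1}--\ref{lem:1-jacobi-assert-6} and folded into the union bound used in the proof of Theorem~\ref{thm:jacobi-1-sps}; here the only fresh source of randomness to control is the internal sampling inside the final $\textsc{Check}$ call, whose failure probability is $\tfrac\mu6$ by construction. Thus the statement follows by invoking Lemma~\ref{lem:check-correct} on this last call alone, which is why the lemma is stated with the single $1-\tfrac\mu6$ guarantee.
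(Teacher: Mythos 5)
Your proposal is correct and takes essentially the same route as the paper, which simply declares the lemma a direct consequence of Lemma~\ref{lem:check-correct}; you have supplied the missing bookkeeping (that, conditioned on \assert{1}--\assert{6}, the index handed to $\textsc{Check}$ is the true spike index, so the $13\epsilon|v|$ guarantee with failure probability $\mu/6$ follows immediately). The one small thing worth adding for completeness is the early-return case: if the algorithm exits from one of the first three $\prunealg$ calls, the estimate $\tilde v$ is produced by the $\chk$ call inside $\prunealg$ (cf.\ Lemma~\ref{lem:jacobi-prune}), which gives the identical $13\epsilon|v|$ bound with failure probability $\mu/6$, so the conclusion still holds there.
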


Lemma~\ref{lem:1-jacobi-assert-1},~\ref{lem:1-jacobi-assert-2},~\ref{lem:1-jacobi-assert-3},~\ref{lem:1-jacobi-assert-5},~\ref{lem:1-jacobi-assert-6}, and~\ref{lem:1-jacobi-v} implies the following result:
\begin{lmm}
\label{lem:jacobi-id-correct}
With probability at least $1-\mu$, Algorithm~\ref{alg:jacobi} correctly outputs $\ell$ and a $\tilde{v}$ such that $\abs{\tilde{v}-v} \le 13\epsilon|v|$.
\end{lmm}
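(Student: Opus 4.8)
The plan is to combine Lemmas~\ref{lem:1-jacobi-assert-1}, \ref{lem:1-jacobi-assert-2}, \ref{lem:1-jacobi-assert-3}, \ref{lem:1-jacobi-assert-5}, \ref{lem:1-jacobi-assert-6} and \ref{lem:1-jacobi-v} by a union bound, and then to check that when all of the good events hold Algorithm~\ref{alg:jacobi} follows a control flow that outputs the correct answer. First I would let $E$ denote the intersection of the six events ``\assert{1} holds'', ``\assert{2} holds'', ``\assert{3} holds'', ``\assert{5} holds'', ``\assert{6} holds'', and ``$\chk$ in the last line returns $\tilde v$ with $|\tilde v - v|\le 13\eps|v|$''. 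Each of these six events is the conclusion of one of the cited lemmas, with failure probability at most $\mu/6$; since the conditional statements (Lemmas~\ref{lem:1-jacobi-assert-2}, \ref{lem:1-jacobi-assert-5}, \ref{lem:1-jacobi-assert-6}) only condition on strictly earlier events in this list, a standard union bound (using $\Pr[B^c]\le\Pr[B^c\mid A]+\Pr[A^c]$ and budgeting $\Pr[A^c]$ in its own earlier term) gives $\Pr[E^c]\le 6\cdot\mu/6=\mu$, so $\Pr[E]\ge 1-\mu$.

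It then remains to argue that $E$ forces the output to be the correct pair $(\ell,v)$. I would trace the algorithm step by step. By Corollary~\ref{cor:jacobi-prune}, on $E$ each call $\prunealg(a,b,\cdot,\eps)$ returns $\ell$ precisely when $\theta_\ell\in[a,b]$ and returns $\textsc{fail}$ otherwise, and by Corollary~\ref{cor:no-spread-prune} the $\prunespread$ call returns $\ell$ precisely when $\ell$ fails to be $\bigl(\tfrac{4\rho_{\delta_0}}{\pi(1-2\nu)},2\nu,\delta_0\bigr)$-spread. Hence: if $\theta_\ell$ lies in the first interval $[0,\tfrac{C'}{\nu\sqrt\eps\delta_0 N}]$, the first call returns the correct $\ell$ and the algorithm halts correctly; otherwise \assert{1} holds; if $\theta_\ell$ lies in $[\pi-\tfrac{c}{\nu N},\pi]$, the second call returns the correct $\ell$ and halts; otherwise \assert{2} holds; if $\ell$ is not spread, $\prunespread$ returns the correct $\ell$; otherwise \assert{3} holds. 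In the remaining case $\ell$ has survived all three pruning steps, and \assert{1}--\assert{3} are exactly the hypotheses under which (by Lemma~\ref{lem:jac-to-cos-correct}) every call to $\jacforcos$ returns a valid estimate $\tfrac{\cos(w\theta_\ell)\pm O(\sqrt\eps)}{1\pm O(\sqrt\eps)}$, which is precisely the form of query access Theorem~\ref{thm:theta-from-cos} requires; therefore $\approxcos$ outputs $[a,b]$ with $\theta_\ell\in[a,b]$ and $|b-a|=O(\sqrt[4]\eps/\nu N)$ (\assert{5}). Since $[a,b]$ has width $O(1/\nu N)$ it contains only $O(1/\nu)$ roots of the $N$th Jacobi polynomial (Corollary~\ref{cor:jac-roots}), so the final $\prunealg(a,b,\tfrac\mu6,\eps)$ call recovers $\ell$ correctly (\assert{6}), after which $\chk(\ell,\tfrac\mu6,\eps)$ returns $v$ with $|\tilde v - v|\le 13\eps|v|$ by Lemma~\ref{lem:check-correct}. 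This establishes the lemma.

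Since the argument is essentially bookkeeping on top of the already-proved component lemmas, I do not expect a genuine obstacle. The only points that require mild care are (i) organizing the union bound so that the conditioning present in Lemmas~\ref{lem:1-jacobi-assert-2}, \ref{lem:1-jacobi-assert-5} and \ref{lem:1-jacobi-assert-6} does not cause double counting, and (ii) checking that $\prunespread$ invoked with failure parameter $O(\rho_{\delta_0}^2\mu)$ indeed fails with probability at most $\mu/6$ --- which is exactly what Corollary~\ref{cor:no-spread-prune} provides, since its stated failure probability is $O(1/\rho_{\delta_0}^2)$ times the passed-in parameter.
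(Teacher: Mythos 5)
Your proposal is correct and takes essentially the same approach as the paper, which simply states that the six component lemmas together imply the result; you have just spelled out the union-bound bookkeeping and the control-flow argument that the paper leaves implicit.
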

Note that the above establishes the correctness of Algorithm~\ref{alg:jacobi}.

\paragraph{Time and Query Complexity.}

\begin{lmm}
\label{lem:jacobi-id-coomplexity}
Algorithm~\ref{alg:jacobi} makes
  \[
    O_{\abs{\alphajac},\abs{\betajac}}(1)
    \cdot O\left( \frac{\log N}{\epsilon^{5/2}}\log\log N \log\parens{\frac 1\eps}\log \parens{\frac{1}{\epsilon \mu}} \right)
  \]
queries and takes time
  \[
    O_{\abs{\alphajac},\abs{\betajac}}(1)
    \cdot O\left( \frac{\log N}{\epsilon^3}\log\log N \log\parens{\frac 1\eps}\log \parens{\frac{1}{\epsilon \mu}} \right).
  \]
\end{lmm}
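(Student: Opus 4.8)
The plan is to account for each of the subroutine calls in Algorithm~\ref{alg:jacobi} separately and then take the maximum (up to constant factors) since there are only a constant number of them. Recall the relevant per-call costs established earlier: $\prunealg$ on an interval $[a,b]$ costs $\qprune{(b-a)N/\pi, N, \mu, \eps}$ queries and $\tprune{(b-a)N/\pi, N, \mu, \eps}$ time (Corollary~\ref{cor:jacobi-prune}), where $\qprune{s,N,\mu,\eps}=O(U^2 N \frac{\log N}{\eps^2}\log(\frac{\log N}{\eps^2})\log(\frac s\mu))$ and $\tprune{}$ carries an extra factor of $s$; $\prunespread$ costs $O(\rho_{\delta_0}^{-2})$ times a $\qprune{O(1),\cdot}$ (resp. $\tprune{O(1),\cdot}$) call (Corollary~\ref{cor:no-spread-prune}); $\approxcos$ makes $O(\tau)$ queries and runs in $O(\tau)$ time with $\tau=\log(\nu N)-1=O(\log N)$, but each of its queries is serviced by a call to $\jacforcos(w,2\nu N,R)$ with $R=O(\log\log N+\log(1/\mu))$; and by Lemma~\ref{lem:jac-to-cos-correct} each such $\jacforcos$ call makes $3R$ queries and takes $O(R)$ time. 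Finally $\textsc{Check}$ (Lemma~\ref{lem:check-correct}) costs $O(U^2 N \frac{\log N}{\eps^2}\log(\frac{\log N}{\eps^2})\log(\frac1\mu))$ queries and linear time in that.

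The first step is to plug in the parameter choices fixed in Lemma~\ref{lem:jac-to-cos-correct}, namely $\nu=\tfrac18$ and $\delta_0=\tfrac{\sqrt\eps}{5000}$, together with the bound $U=O_{|\alphajac|,|\betajac|}(1)/\sqrt N$ from Theorem~\ref{cor:jac-U-bound}, so that $U^2 N = O_{|\alphajac|,|\betajac|}(1)$. This immediately collapses every $\qprune{}$-type quantity to $O_{|\alphajac|,|\betajac|}(1)\cdot \frac{\log N}{\eps^2}\log(\frac{\log N}{\eps^2})\log(\frac s\mu)$ queries, and since $\log(\frac{\log N}{\eps^2})=O(\log\log N+\log(1/\eps))$ this is $O_{|\alphajac|,|\betajac|}(1)\cdot \frac{\log N}{\eps^2}\log\log N\log(1/\eps)\log(s/\mu)$. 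Next, one uses the relationship $\rho_{\delta_0}=\Theta(\sqrt{\delta_0})=\Theta(\eps^{1/4})$ (from Lemma~\ref{lmm:cos-noise}, which gives $\rho_\delta \asymp \sqrt\delta$ for small $\delta$), so $\rho_{\delta_0}^{-2}=\Theta(1/\sqrt\eps)$; thus the $\prunespread$ call contributes an extra $1/\sqrt\eps$ factor relative to a single $\qprune{O(1),\cdot}$, giving $O_{|\alphajac|,|\betajac|}(1)\cdot \frac{\log N}{\eps^{5/2}}\log\log N\log(1/\eps)\log(1/\mu)$ (the internal failure probability passed is $O(\rho_{\delta_0}^2\mu)=O(\sqrt\eps\,\mu)$, whose logarithm is $\log(1/\eps)+\log(1/\mu)=O(\log(1/(\eps\mu)))$). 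The $\approxcos$/$\jacforcos$ contribution is $O(\log N)\cdot 3R = O(\log N(\log\log N+\log(1/\mu)))$ queries, which is dominated by the $\prunespread$ bound. For the $\prunealg$ calls, the interval widths are $O(\frac1{\nu\sqrt\eps\delta_0 N})=O(1/\eps N)$ and $O(1/\nu N)$ and (from \assert{5}) $O(\frac{\eps^{1/4}}{\nu N})$, so $s=(b-a)N/\pi=O(1/\eps)$ in the worst case; these are again dominated. Collecting the dominant term ($\prunespread$) gives the stated query bound.

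For the running time one repeats the same bookkeeping but with the extra factor of $s$ in $\tprune{}$ and the extra $k$-free factor in $\textsc{Check}$: the $\prunespread$ call costs $O(\rho_{\delta_0}^{-2})\cdot \tprune{O(1),\cdot}$, and $\tprune{O(1),\cdot}$ has the same form as $\qprune{O(1),\cdot}$ (the $s=O(1)$ factor is absorbed), so its time is $O_{|\alphajac|,|\betajac|}(1)\cdot \frac1{\sqrt\eps}\cdot\frac{\log N}{\eps^2}\log\log N\log(1/\eps)\log(1/(\eps\mu)) = O_{|\alphajac|,|\betajac|}(1)\cdot \frac{\log N}{\eps^{5/2}}\cdots$; the $\prunealg\parens{a,b,\cdot}$ call with $s=O(1/\eps)$ gives time $O_{|\alphajac|,|\betajac|}(1)\cdot \frac1\eps\cdot\frac{\log N}{\eps^2}\log\log N\log(1/\eps)\log(1/(\eps\mu)) = O_{|\alphajac|,|\betajac|}(1)\cdot\frac{\log N}{\eps^3}\cdots$, which is the dominant term and yields the claimed $O_{|\alphajac|,|\betajac|}(1)\cdot\frac{\log N}{\eps^3}\log\log N\log(1/\eps)\log(1/(\eps\mu))$ running time. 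The routine part is simply verifying that every other term (the $\approxcos$/$\jacforcos$ loop, $\textsc{Check}$, and the first two $\prunealg$ calls) is no larger.

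The main obstacle — though it is more bookkeeping than genuine difficulty — is tracking the failure-probability arguments that get threaded into the $\log(\cdot/\mu)$ factors: the $\prunespread$ call in Algorithm~\ref{alg:jacobi} is invoked with error parameter $O(\rho_{\delta_0}^2\mu)$ precisely so that, after Corollary~\ref{cor:no-spread-prune} amplifies it by $O(1/\rho_{\delta_0}^2)$, the overall failure probability is $O(\mu)$; one must check that $\log(1/(\rho_{\delta_0}^2\mu)) = O(\log(1/(\eps\mu)))$ so that it folds into the stated bound, and similarly that the $\jacforcos$ rounds $R=O(\log\log N+\log(1/\mu))$ absorb into $\log\log N\cdot\log(1/(\eps\mu))$ without blowing up. Once these are confirmed, summing the (constant number of) contributions and keeping the dominant $\prunealg\parens{a,b,\cdot}$ time term and the $\prunespread$ query term finishes the proof.
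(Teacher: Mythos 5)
Your proposal is correct and takes essentially the same route as the paper: tally the per-subroutine costs, collapse $U^2N = O_{|\alphajac|,|\betajac|}(1)$ via Theorem~\ref{cor:jac-U-bound}, identify $\prunespread$ (with the extra $\rho_{\delta_0}^{-2}=\Theta(1/\sqrt\eps)$ factor) as dominant for query complexity, and identify the first $\prunealg$ call (width $O(1/(\eps N))$, hence $s = O(1/\eps)$) as dominant for time because of the extra $s$-factor in $\tprune{}$, and fold the failure-probability arguments into the single $\log(1/(\eps\mu))$ factor. One small wording slip: when you write ``the $\prunealg(a,b,\cdot)$ call with $s = O(1/\eps)$'' you presumably mean the first call $\prunealg\parens{0,\frac{C'}{\nu\sqrt\eps\delta_0 N},\cdot,\cdot}$, not the literal call on the interval $[a,b]$ returned by $\approxcos$ (which has $s = O(\eps^{1/4})$); the arithmetic you do is for the first call, so the conclusion is unaffected.
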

\begin{proof}
We analyze the various function calls in  Algorithm~\ref{alg:jacobi}, which dominate the runtime.
There are three calls to $\prunealg$ each with query complexity $\qprune{\parens{s,N,\frac \mu 6,\eps}}$ and time complexity $\tprune{\parens{s,N,\frac \mu 6,\eps}}$.
The first call has $s=O\parens{\frac{1}{\epsilon}}$ (using the fact that $c,C',\nu$ are constants and our setting of $\delta_0=\Theta\parens{\sqrt{\epsilon}}$), the second has $s=O(1)$, and the third has $s=O(\epsilon^{1/4})$.
These are dominated by the first call. 
The single call to $\prunespread$, via Corollary~\ref{cor:no-spread-prune} has query and time complexities of $O\parens{\frac{1}{\sqrt{\eps}}\cdot \qprune{O(1),N,O(\sqrt{\epsilon}\mu),\eps}}$ and  $O\parens{\frac{1}{\sqrt{\eps}}\cdot \tprune{O(1),N,O(\sqrt{\epsilon}\mu),\eps}}$ respectively.
Finally $\approxcos$ makes $O(\log{N})$ calls to $\jacforcos$ with $R=O\parens{\log\log{N}+\log\parens{\frac 1\mu}}$.
The rest of $\approxcos$ takes $O(\log{N})$ time.\footnote{The number of calls and time complexity follow from Theorem~\ref{thm:theta-from-cos}.}
Finally, by Lemma~\ref{lem:jac-to-cos-correct}, we have that each call to  $\jacforcos$ takes time and has query complexity $O(R)$. Summing everything up, we get query complexity of
  \[\qprune{\parens{O\parens{\frac 1\eps},N,\mu/6,\eps}} + \frac 1{\sqrt{\eps}}\cdot \qprune{O(1),N,O(\sqrt{\epsilon}\mu),\eps} + O\parens{\log{N}\parens{\log\log{N}+\log\parens{\frac 1\mu}}}\]
  (and time complexity with equal parameters but to $\tprune{}$ instead of $\qprune{}$).

  Plugging the value of $\qprune{}$ from Lemma~\ref{lem:jacobi-prune} yields a query complexity of
  \begin{align*}
   & O\left(\frac 1{\sqrt{\eps}}\cdot U^2N \frac{\log N}{\epsilon^2}\log\left( \frac{\log N}{\epsilon^2} \right) \log \frac{1}{\epsilon \mu} \right)
    + O\parens{\log{N}\parens{\log\log{N}+\log\parens{\frac 1\mu}}}  \\
  &\ \   = O\parens{ U^2N\frac{\log N}{\eps^{5/2}}\log\log{N}\log\parens{\frac{1}{\epsilon}}\log\parens{\frac 1{\eps\mu}}}
  \end{align*}
  and time complexity
  \begin{align*}
  &  O\left( U^2N \frac{1}{\epsilon} \frac{\log N}{\epsilon^2}\log\left( \frac{\log N}{\epsilon^2} \right) \log \frac{1}{\epsilon \mu} \right)
    + O\parens{\log{N}\parens{\log\log{N}+\log\parens{\frac 1\mu}}} \\
  &\ \   = O\parens{ U^2N\frac{\log N}{\eps^3}\log\log{N}\log\parens{\frac{1}{\epsilon}}\log\parens{\frac 1{\eps\mu}}}.
  \end{align*}
  Theorem~\ref{cor:jac-U-bound} (in particular, its implication that $U^2N=O(1)$)  yields the statement.
\end{proof}

Lemma~\ref{lem:jacobi-id-correct} and~\ref{lem:jacobi-id-coomplexity}
implies the final result, Theorem~\ref{thm:jacobi-1-sps}.

\section{A $k$-sparse Recovery Algorithm for Jacobi Polynomials}\label{sec:k-sps-jacobi}
Finally, we use our $1$-sparse recovery algorithm (i.e. Theorem~\ref{thm:jacobi-1-sps}) along with Theorem~\ref{thm:redux-main} in order to obtain a $k$-sparse recovery algorithm for Jacobi polynomials.  In order to apply Theorem~\ref{thm:redux-main}, we first need to derive some properties of Jacobi polynomials.

Theorem~\ref{thm:jacobi-roots} implies that that any family of Jacobi polynomials is dense as defined in Definition~\ref{def:roots-spread}.
\begin{cor}
  \label{cor:jacobi-dense}
  A family of Jacobi polynomials, with an associated constant $C^{\alpha,\beta}$ defined as in Theorem~\ref{thm:jacobi-roots},
  is $(\frac{1}{2\pi},\frac{3}{2\pi}, 8C^{\alpha,\beta} \pi)$-dense.
\end{cor}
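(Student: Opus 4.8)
The plan is to derive Corollary~\ref{cor:jacobi-dense} essentially verbatim from Theorem~\ref{thm:jacobi-roots}, which locates the $\ell$-th root of the $d$-th Jacobi polynomial in angle space within an additive $C^{\alpha,\beta}\pi/d$ of the equispaced grid point $\ell\pi/d$. Fix $d$ large, an index $i\in[d]$, and a scale $\gamma\ge\gamma_0/d=8C^{\alpha,\beta}\pi/d$; write $C:=C^{\alpha,\beta}$ and $\theta_\ell:=\arccos\lambda_\ell$, so that $\frac{(\ell-C)\pi}{d}\le\theta_\ell\le\frac{(\ell+C)\pi}{d}$. Because the $\theta_\ell$ are strictly increasing, the set $\{\ell':\theta_{\ell'}\in[\theta_i-\gamma/2,\theta_i+\gamma/2]\}$ is a contiguous block $\{a,a+1,\dots,b\}$ of indices, nonempty since $i$ belongs to it, so the whole task is to estimate $M:=b-a+1$, which is exactly the quantity appearing in Definition~\ref{def:roots-spread}.

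For the upper bound $M\le C_1\gamma d$ with $C_1=\tfrac{3}{2\pi}$, I would combine $\theta_b-\theta_a\le\gamma$ with the lower estimate $\theta_b-\theta_a\ge\frac{(b-C)\pi}{d}-\frac{(a+C)\pi}{d}=\frac{(M-1-2C)\pi}{d}$ coming from Theorem~\ref{thm:jacobi-roots}; this gives $M\le\frac{\gamma d}{\pi}+2C+1$, and since $\gamma d\ge 8C\pi$ the correction $2C+1$ is a small fraction of $\frac{\gamma d}{\pi}$ (normalizing $C^{\alpha,\beta}$ to be at least $\tfrac12$, which is harmless, makes the $+1$ negligible too), leaving $M\le\tfrac{3}{2\pi}\gamma d$. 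For the lower bound I would walk outward from $\theta_i$: using $\theta_{i+j}\le\frac{(i+j+C)\pi}{d}$ and $\theta_i\ge\frac{(i-C)\pi}{d}$, every index $i+j\in[d]$ with $0\le j\le\frac{\gamma d}{2\pi}-2C$ satisfies $\theta_{i+j}\le\theta_i+\gamma/2$, and symmetrically on the left. When $i$ is in the interior both arms survive and $M\ge\frac{\gamma d}{\pi}-O(C)$; when $i$ is within $O(\gamma d)$ of $0$ or $d-1$ only one arm survives, but then the window $[\theta_i-\gamma/2,\theta_i+\gamma/2]$ overhangs the boundary of $[0,\pi]$, so the target $C_0\gamma d$ with $C_0=\tfrac{1}{2\pi}$ is precisely the ``half-window'' count, and the block still reaches index $0$ (resp.\ $d-1$) together with $\approx\frac{\gamma d}{2\pi}$ further indices, giving $M\ge\frac{\gamma d}{2\pi}-O(C)$.

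The delicate point---and the part I expect to be the main obstacle---is the bookkeeping of the additive $O(C^{\alpha,\beta})$ and $O(1)$ terms at the boundary, since there the main term and the target $C_0\gamma d$ both equal $\frac{\gamma d}{2\pi}$ up to these corrections; matching them is exactly what forces $\gamma_0$ to scale with $C^{\alpha,\beta}$ (it guarantees $C^{\alpha,\beta}\le\gamma d/(8\pi)$, so every correction term is a bounded fraction of $\gamma d$), and it may require being mildly generous with the constants or invoking that $d$ is large. Everything else is routine: translating the angle window into an index window via Theorem~\ref{thm:jacobi-roots}, and counting integers in an interval.
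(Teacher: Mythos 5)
Your approach is essentially the paper's: translate the angle window into an index window via Theorem~\ref{thm:jacobi-roots} and count integers, with the choice $\gamma_0 = 8C^{\alpha,\beta}\pi$ guaranteeing that the additive $O(C^{\alpha,\beta})$ errors are a bounded fraction of $\gamma d/\pi$. The paper phrases the counting via a two-sided containment $\frac{\pi}{d}\ell \pm (\frac{\gamma}{2}-\frac{\pi}{d}C^{\alpha,\beta}) \subseteq \theta_\ell \pm \frac{\gamma}{2} \subseteq \frac{\pi}{d}\ell \pm (\frac{\gamma}{2}+\frac{\pi}{d}C^{\alpha,\beta})$ rather than your contiguous-block / outward-walk argument, but the content is the same. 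One remark: the boundary case you correctly flag as delicate (when $i$ is near $0$ or $d-1$ and the window overhangs $[0,\pi]$, so that at minimal $\gamma$ the attainable count $\approx \gamma d/(2\pi) - O(C^{\alpha,\beta})$ falls just short of the target $C_0\gamma d = \gamma d/(2\pi)$) is in fact glossed over in the paper's proof as well — the paper's lower bound $\frac{d}{\pi}(b-a) - 2C^{\alpha,\beta}$ implicitly assumes the shrunk interval lies inside $[0,\pi]$ — so your analysis is, if anything, more candid about where the slack in the constants must come from.
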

\begin{proof}
Consider the roots $\cos\parens{\theta_0},\dots, \cos\parens{\theta_{d-1}}$ of degree $d$ Jacobi polynomial.
  Theorem~\ref{thm:jacobi-roots} says that for all $i \in [d]$,
  \begin{equation}
    \label{eq:jacobi-roots}
    \theta_i \in \frac{\pi}{d}\parens{i \pm C^{\alpha,\beta}}.
  \end{equation}
  First, we can bound the number of roots falling in any interval $I = [a,b]$.
  Equation~\eqref{eq:jacobi-roots} implies that any root $\theta_i \in [a,b]$ satisfies $\frac{i}{d}\pi \in \left[ a-\frac{\pi}{d}C^{\alpha,\beta}, b+\frac{\pi}{d}C^{\alpha,\beta} \right]$, and therefore there are at most
  \[
    \frac{d}{\pi}(b-a) + 2C^{\alpha,\beta}
  \]
  roots in $I$.
  Similarly, any $i$ such that $\frac{i}{d}\pi \in \left[ a+\frac{\pi}{d}C^{\alpha,\beta}, b-\frac{\pi}{d}C^{\alpha,\beta} \right]$ is a root, so there are at least
  \[
    \frac{d}{\pi}(b-a) - 2C^{\alpha,\beta}
  \]
  roots in $I$.

  Now choose $\gamma_0 = 8C^{\alpha,\beta} \pi$ in the statement of Definition~\ref{def:roots-spread}.
  Fix any $\ell \in [d]$.
 
  For any\footnote{For clarity, note that this is not the constant $\gamma$ defined in Theorem~\ref{thm:jacobi-approx}, but a fresh variable in the statement of Definition~\ref{def:roots-spread}.}
$\gamma \ge \gamma_0/d$, we are interested in bounding the number of roots falling in $\theta_\ell \pm \gamma/2$.
  Note that $\frac{\gamma}{2} \ge C^{\alpha,\beta}\frac{\pi}{d}$ by our choice of $\gamma_0$.
  Therefore the interval in question satisfies the containments
  \[
    \frac{\pi}{d}\ell \pm \left( \frac{\gamma}{2} - \frac{\pi}{d}C^{\alpha,\beta} \right) \subseteq \theta_\ell \pm \frac{\gamma}{2} \subseteq \frac{\pi}{d}\ell \pm \left( \frac{\gamma}{2} + \frac{\pi}{d}C^{\alpha,\beta} \right).
  \]
  The number of roots $r$ lying in this interval therefore satisfies
    \[\frac{d\gamma}{\pi} - 4C^{\alpha,\beta}
         \le r \le \frac{d\gamma}{\pi} + 4C^{\alpha,\beta}.\]
Now, note that since $d\gamma\ge d\gamma_0\ge \frac {8C^{\alpha,\beta}}\pi$, we have
    \[d\gamma\frac{1}{2\pi} \le d\gamma\left( \frac{1}{\pi} - \frac{1}{2\pi} \right) \le d\gamma\left( \frac{1}{\pi} - \frac{4C^{\alpha,\beta}}{d\gamma} \right)\]
and
    \[ d\gamma\left( \frac{1}{\pi} + \frac{4C^{\alpha,\beta}}{d\gamma} \right) \le d\gamma\left( \frac{1}{\pi} + \frac{1}{2\pi} \right) = d\gamma\frac{3}{2\pi},\]
  as desired.

\end{proof}

Finally, our following main result (which is the formal version of Theorem~\ref{thm:jacobi-intro}) follows by
applying Theorem~\ref{thm:redux-main} along with Theorem~\ref{cor:jac-U-bound} and Corollary~\ref{cor:jacobi-dense} (with $C_0=\frac 1{2\pi}$, $C_1=\frac 3{2\pi}$ and $\gamma_0=8C^{\alphajac,\betajac}\pi$).
\begin{cor}[$k$-sparse recovery for Jacobi Transform]
\label{cor:jacobi-k-sps}
Fix parameters $\alphajac,\betajac>-1$.
The following is true for small enough $\eps$ and some constant $C$.
Let $\vJ^{(\alphajac,\betajac)}_N$ denote the $N\times N$ Jacobi transform (as defined in~\eqref{eq:F-entries}) with associated constant $C^{\alpha,\beta}$ in Theorem~\ref{thm:jacobi-roots}.
Then there exists an algorithm with the following property.
Let $\vhx = \vJ^{(\alphajac,\betajac)}_N \vx$ be $(k, \frac{3}{2\pi}\gamma)$-sparsely separated for $\gamma \ge 8C^{\alpha,\beta} \pi / N$, and let $\vw \in \mathbb{R}^N$ with $\vhw = \vJ^{(\alphajac,\betajac)}_N \vw$ be a ``noise'' vector so that $\norm{2}{\vhw}\le \frac \delta{2C}\smll(k,\vhx)$.

Then there exists an algorithm with the following property. for any $0 < \mu < 1$, the algorithm
with probability at least $1-\mu$, 
 outputs $\vhz$ so that $\|\vhx-\vhz\|_{2} \le O(\delta)\|\vhx\|_{2}$.
Further, the algorithm makes at most
\[
  \poly\left(\frac{k\log(1/\mu)}{\gamma \delta }\right) \cdot O\left( \frac{\log N}{\delta^{5/2}} \log\log N \log\left( \frac{1}{\delta} \right)\log\left( \frac{k}{\delta\mu\gamma} \right)\right)
\]
queries to $\vx + \vw$, runs in time at most
\[
  \poly\left(\frac{k\log(1/\mu)}{\gamma \delta }\right) \cdot O\left( \frac{\log N}{\delta^3} \log\log N \log\left( \frac{1}{\delta} \right)\log\left( \frac{k}{\delta\mu\gamma} \right)\right).
\]
\end{cor}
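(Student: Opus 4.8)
The plan is to assemble Corollary~\ref{cor:jacobi-k-sps} purely as a ``plug-in'' of the three main ingredients proved earlier: the generic reduction (Theorem~\ref{thm:redux-main}), the $1$-sparse recovery algorithm for Jacobi polynomials (Theorem~\ref{thm:jacobi-1-sps}), and the structural facts about Jacobi polynomials needed to verify the hypotheses of Theorem~\ref{thm:redux-main} (namely denseness, via Corollary~\ref{cor:jacobi-dense}/Corollary~\ref{cor:jacobi-dense} with the explicit constants, and flatness, via Theorem~\ref{cor:jac-U-bound}).

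First I would fix the parameter identifications. Set $C_0 = \frac{1}{2\pi}$, $C_1 = \frac{3}{2\pi}$, and $\gamma_0 = 8 C^{\alpha,\beta}\pi$, so that Corollary~\ref{cor:jacobi-dense} says the Jacobi family is $(C_0, C_1, \gamma_0)$-dense and hence Theorem~\ref{thm:redux-main} applies to $\vF_N = \vJ^{(\alphajac,\betajac)}_N$. Next, Theorem~\ref{cor:jac-U-bound} gives $U = \max_{i,j}|\vF^{-1}_N[i,j]| = O_{|\alpha|,|\beta|}(1/\sqrt N)$, so that $NU^2 = O_{|\alpha|,|\beta|}(1)$; this is what collapses the $NU^2$ terms appearing in the query/time bounds of Theorem~\ref{thm:redux-main} into constants. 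The hypothesis of Theorem~\ref{thm:redux-main} also requires a $\parens{N, \frac{6\delta}{C}, \delta, \frac{\mu_0}{2}}$ one-sparse recovery algorithm for $\vF_N$, with $\mu_0 = \Theta\parens{\mu^2 C_0^2 \gamma^2/k^2}$; I would supply this using Theorem~\ref{thm:jacobi-1-sps}, which provides an $(N, \eps, C\eps, \mu')$ one-sparse recovery algorithm for \emph{any} noise level $\eps$ and failure probability $\mu'$. Taking $\eps = 6\delta/C$ (the constant $C$ from Theorem~\ref{thm:jacobi-1-sps}, adjusted so that $C\cdot\frac{6\delta}{C} = 6\delta \le \delta$ fails — so actually one must match $C$ in Theorem~\ref{thm:redux-main}'s notation with the universal constant from Theorem~\ref{thm:jacobi-1-sps}; concretely one chooses the constant $\cnst$ in Theorem~\ref{thm:redux-main} to be at least $6$ times the universal constant of Theorem~\ref{thm:jacobi-1-sps}) and $\mu' = \mu_0/2$, Theorem~\ref{thm:jacobi-1-sps} yields query complexity $Q = \poly\parens{\frac{\log(N/\mu_0)}{\delta}} = \poly\parens{\log N, \log(k/\mu\gamma), 1/\delta}$ and similarly for the time complexity $T$, with the precise shape $O\parens{\frac{\log N}{\delta^{5/2}}\log\log N \log(1/\delta)\log(1/(\delta\mu_0))}$ for $Q$ and $O\parens{\frac{\log N}{\delta^3}\log\log N \log(1/\delta)\log(1/(\delta\mu_0))}$ for $T$.

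The remaining step is bookkeeping: substitute these $Q$, $T$, $NU^2 = O(1)$, and $\mu_0 = \Theta(\mu^2 \gamma^2/(C_0^{-2}k^2)) = \Theta(\mu^2\gamma^2/k^2)$ (absorbing the constant $C_0$) into the conclusion of Theorem~\ref{thm:redux-main}. This gives that with probability $\ge 1-\mu$, \textsc{Recover} outputs $\vhz$ with $\|\vhx - \vhz\|_2 \le 3\delta\|\vx\|_2 = 3\delta\|\vhx\|_2$ (using that $\vF_N$ is orthogonal so $\|\vx\|_2 = \|\vhx\|_2$), which is the claimed $O(\delta)\|\vhx\|_2$ bound; and the query bound becomes $\poly\parens{\frac{k\log(1/\mu)}{\gamma\delta C_0}} \cdot (NU^2 + Q) = \poly\parens{\frac{k\log(1/\mu)}{\gamma\delta}} \cdot O\parens{\frac{\log N}{\delta^{5/2}}\log\log N \log(1/\delta)\log(k/(\delta\mu\gamma))}$, with the analogous replacement of $Q$ by $T$ and $\delta^{5/2}$ by $\delta^3$ for the running time. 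I also need to check the noise hypothesis matches: Theorem~\ref{thm:redux-main} requires $\|\vhw\|_2 \le \frac{\delta}{2\cnst}\smll(k,\vhx)$, which is exactly the stated hypothesis $\|\vhw\|_2 \le \frac{\delta}{2C}\smll(k,\vhx)$ after identifying the constants.

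The only genuine obstacle here is the \textbf{constant-matching bookkeeping}: one must check that the universal constant $C$ in Theorem~\ref{thm:jacobi-1-sps} (which governs the estimation error $|\tilde v - v| \le C\eps|v|$), the constant $\cnst$ required by Theorem~\ref{thm:redux-main} to be ``sufficiently large,'' and the constant $C'$ from Theorem~\ref{thm:redux-main} are mutually compatible — i.e. that the $1$-sparse algorithm we have in hand has a small enough multiplicative error constant to meet the ``$C \ge C'$'' requirement, possibly after rescaling $\delta$ by an absolute constant (which is harmless since the final guarantee is only claimed up to $O(\delta)$). One also has to confirm that $\gamma \ge \gamma_0/N$ holds, which is exactly the stated hypothesis $\gamma \ge 8C^{\alpha,\beta}\pi/N$, and that $\delta < \delta_0$ for the absolute constant $\delta_0$ of Theorem~\ref{thm:redux-main}, which is subsumed in ``for small enough $\eps$.'' Everything else is a direct quotation of earlier results with no new mathematical content.
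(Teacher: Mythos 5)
Your proof is correct and follows exactly the paper's route: the paper's ``proof'' of this corollary is essentially the one sentence ``apply Theorem~\ref{thm:redux-main} with Theorem~\ref{cor:jac-U-bound} and Corollary~\ref{cor:jacobi-dense} (with $C_0=\frac{1}{2\pi}$, $C_1=\frac{3}{2\pi}$, $\gamma_0=8C^{\alpha,\beta}\pi$),'' and you spell out the same plug-in with the correct parameter identifications. Your parenthetical about the constant $C$ goes through a moment of confusion but lands on the right resolution (choose the ``sufficiently large'' $\cnst$ in Theorem~\ref{thm:redux-main} to dominate six times the universal constant of Theorem~\ref{thm:jacobi-1-sps} so that $C\cdot\frac{6\delta}{\cnst}\le\delta$), so no genuine gap remains.
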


\section{Proof of Theorem~\ref{thm:theta-from-cos}}
\label{sec:chebyshev-1sps}

In this section, we complete the final missing piece by proving Theorem~\ref{thm:theta-from-cos}.
We are looking for an unknown $\theta \in (0, \pi)$.
The queries have the form
\[T_w(\theta)=\frac{\cos(w \theta) \pm \eps}{1\pm \eps}.\]
Querying at $w=1$ gives a range (depending on the size of the noise) of possible $\theta$.

Now at beginning of every stage $t$ of the algorithm, we have a candidate interval $S_t$ which has width of order $2^{-t}$, which we know that $\theta$ lies in.
By querying at $w = \Theta(2^t)$, this interval is ``dilated'' so that the query $T_w(\theta)$ gives more information about $\theta$, and $S_t$ can be narrowed to $S_{t+1}$.

There is one obstacle in that knowing $\cos(w \theta)$ only reveals $w \theta$ up to sign.
When the two possibilities are not well separated (that is, when $\cos(w \theta)$ is close to $\pm 1$), we do not receive enough information to narrow $S_t$.
These cases can be overcome by re-querying at a different suitably chosen $w$.

\paragraph{\approxcos}

We begin with a bound on how much the noise in the query $\cos(\theta)$ affects the argument~$\theta$.
\begin{lmm}\label{lmm:cos-noise}
  If $\epsilon$ is sufficiently small, then $\rho_\epsilon \le 2\sqrt{5\epsilon}$ satisfies Definition~\ref{def:rho}.
\end{lmm}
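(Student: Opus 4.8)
The plan is to bound directly how far $\arccos$ can move its argument when $\cos\theta$ is subjected to the perturbation in Definition~\ref{def:rho}. Fix $\theta\in[0,\pi]$ and write a generic query value as $y=\frac{\cos\theta+a}{1+b}$ with $|a|,|b|\le\epsilon$; I will only need $\epsilon\le\frac12$, which is what ``sufficiently small'' will amount to. First I would control the additive distortion: since $y-\cos\theta=\frac{a-b\cos\theta}{1+b}$ and $|\cos\theta|\le1$, we get $|y-\cos\theta|\le\frac{2\epsilon}{1-\epsilon}$. If $y\notin[-1,1]$ I would replace it by its projection $y'$ onto $[-1,1]$ (this is exactly what $\arccos$ does in this regime, and since $\cos\theta\in[-1,1]$ the projection only decreases the distance to $\cos\theta$), so in every case $\phi:=\arccos(y')\in[0,\pi]$ satisfies $|\cos\phi-\cos\theta|\le\frac{2\epsilon}{1-\epsilon}$.

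Next I would convert this into a bound on $|\phi-\theta|$ via the identity $\cos\phi-\cos\theta=-2\sin\frac{\phi+\theta}{2}\sin\frac{\phi-\theta}{2}$. Writing $p=\frac{\phi+\theta}{2}$ and $q=\frac{\phi-\theta}{2}$, one checks $p-|q|=\min(\phi,\theta)\ge0$ and $(\pi-|q|)-p=\pi-\max(\phi,\theta)\ge0$, so $p\in[|q|,\pi-|q|]$; on that subinterval of $[0,\pi]$ the concave function $\sin$ attains its minimum at the (symmetric) endpoints, giving $\sin p\ge\sin|q|=|\sin q|$. Hence $|\cos\phi-\cos\theta|=2\sin p\,|\sin q|\ge2\sin^2 q=2\sin^2\frac{|\phi-\theta|}{2}$. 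Combining with the previous step, $\sin\frac{|\phi-\theta|}{2}\le\sqrt{\epsilon/(1-\epsilon)}\le1$, and since $\frac{|\phi-\theta|}{2}\in[0,\pi/2]$ this yields $|\phi-\theta|\le2\arcsin\sqrt{\epsilon/(1-\epsilon)}$.

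Finally I would use $\arcsin x\le\frac{\pi}{2}x$ on $[0,1]$ to get $|\phi-\theta|\le\pi\sqrt{\epsilon/(1-\epsilon)}$, and verify $\pi\sqrt{\epsilon/(1-\epsilon)}\le2\sqrt{5\epsilon}$: dividing by $\sqrt{\epsilon}$ and squaring, this is exactly $\pi^2/(1-\epsilon)\le20$, which holds whenever $\epsilon\le1-\pi^2/20$, in particular for all $\epsilon\le\frac12$. Thus $\rho_\epsilon=2\sqrt{5\epsilon}$ is an admissible value, which is the claim. I do not expect a real obstacle here; the only delicate point is the passage from $|\cos\phi-\cos\theta|$ to $|\phi-\theta|$ near the endpoints $0$ and $\pi$, where $\arccos$ is not Lipschitz, and that is precisely why the argument goes through the squared-sine inequality $|\cos\phi-\cos\theta|\ge2\sin^2\frac{|\phi-\theta|}{2}$ rather than a naive derivative bound on $\arccos$.
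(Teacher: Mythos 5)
Your proof is correct, and it follows the same two-step outline as the paper's: first bound the additive perturbation $|y - \cos\theta|$, then translate that to an angular bound of the form $O(\sqrt{\cdot})$.  Where the paper simply asserts the second step ``by Taylor expansion,'' you actually prove it: the sum-to-product identity $\cos\phi - \cos\theta = -2\sin\frac{\phi+\theta}{2}\sin\frac{\phi-\theta}{2}$, combined with the observation that on $[0,\pi]$ the concavity of sine forces $\sin\frac{\phi+\theta}{2} \ge \bigl|\sin\frac{\phi-\theta}{2}\bigr|$, yields the clean uniform inequality $|\cos\phi-\cos\theta| \ge 2\sin^2\frac{|\phi-\theta|}{2}$, which is exactly the right tool because $\arccos$ fails to be Lipschitz at $\pm 1$.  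You also handle the clipping of $y$ to $[-1,1]$ explicitly, which the paper leaves implicit.  The final arithmetic ($\pi\sqrt{\epsilon/(1-\epsilon)} \le 2\sqrt{5\epsilon}$ iff $\epsilon \le 1 - \pi^2/20$, comfortably covered by $\epsilon \le 1/2$) closes the loop.  Net effect: same bound, same structure, but your version makes the critical $\arccos$ estimate rigorous and uniform in $\theta$ rather than relying on a local Taylor heuristic.
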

\begin{proof}

  Note that $\frac{1}{1 \pm \epsilon} \in 1 \pm 2\epsilon$ for $\epsilon < 1 $, so
  \begin{align*}
    \frac{\cos \theta \pm \epsilon}{1 \pm \epsilon}
    &\in (\cos \theta \pm \epsilon)(1 \pm 2\epsilon)
    \\&\in (\cos \theta \pm \epsilon) \pm 2\epsilon(1 \pm \epsilon)
    \\&\in \cos \theta \pm 5\epsilon
  \end{align*}

  By Taylor expansion,
  \[
    \arccos( \cos(\theta) \pm \gamma ) = \theta \pm 2\sqrt{\gamma}.
  \]
  for any $\theta \in [0, \pi]$.
  This completes the claim.
\end{proof}

\begin{algorithm}
\caption{$\approxcos(\tau,\eps)$}\label{alg:cheb}
\begin{algorithmic}
\Require Query access to $T_{\square}(\theta)$, and parameters $\eps,\tau$
\Ensure Range $S_{\tau}$ such that $\theta\in S_{\tau}$

\State{$a_0,b_0 \gets \refine(w=1, S=[0,\pi],\eps)$}
\State{\textbf{Assertion 1:}  $a_0 \neq$\texttt{\,None} }
\State{$S_0 \gets \left[ a_0 - \rho_\epsilon  , a_0 +  \rho_\epsilon \right]$ }

\For{$t=1,\ldots,\tau$}
	\State{$a_t, b_t \gets \refine(w=2^t,S=S_{t-1},\eps)$}

		\Comment{\refine~is in Algorithm~\ref{alg:refine}.}
	\If{ Both $a_t,b_t \neq$ \texttt{None} }

		\State{\textbf{Assertion 2:} There is a unique $h \in \{1, \ldots, 2^t-1 \}$ so that $\frac{ h \pi }{2^t } \in S_{t-1} \pm 4\rho2^{-t}$}
		\State{Let $h =2^j \cdot x$ where $x$ is odd and $j < t$.}
		\State{$w \gets \frac{1}{2}\cdot 2^{t-j} \cdot (2^{j+1}+1)$}
		\State{$a_t,b_t \gets \refine(w, S_{t-1},\eps)$	}

	\EndIf
	\State{ \textbf{Assertion 3:} Exactly one of $a_t, b_t$ is not \texttt{None}.  Call that one $c_t$.}
	\State{$S_t \gets \left[ c_t - \frac{ \rho_\epsilon }{2^t} , c_t + \frac{ \rho_\epsilon }{2^t} \right]$ }
\EndFor
\State \Return{$S_\tau$}

\end{algorithmic}
\end{algorithm}
\begin{algorithm}
\caption{$\refine(w,S,\eps)$}
\label{alg:refine}
\begin{algorithmic}
	\Require A `stretch factor' $w \in \{0,\ldots,N-1\}$, query access to $T_{\square}(\theta)$, a current interval $S$ and a parameter $\eps$
	\Ensure Points $a$ and $b$, which are guesses for $\theta$. They can also be \texttt{None}.

	\State{$r \gets \arccos\parens{T_w(\theta)}$}
	\State{$\bar{S} \gets (S \pm \frac{\rho_\epsilon}{w}) \cap (0, \pi)$}
	\State{Let $a$ be any point in $\bar{S} \cap \left\{ \frac{r}{w} + z \cdot \frac{ 2\pi}{w} \,|\, z \in \mathbb{Z} \right\}$ or \texttt{None} if that intersection is empty.}
	\State{Let $b$ be any point in $\bar{S} \cap \left\{ -\frac{r}{w} + z \cdot \frac{ 2\pi}{w} \,|\, z \in \mathbb{Z} \right\}$ or \texttt{None} if that intersection is empty.}

	\Comment{The difference is the minus sign in front of $r$ in the definition of $b$.}

	\Return{ $a,b$ }
\end{algorithmic}
\end{algorithm}

We are now ready to prove Theorem~\ref{thm:theta-from-cos}.
\begin{proof}[Proof of Theorem~\ref{thm:theta-from-cos}]
We first note that the claim on the query locations follows from statement of Algorithm~\ref{alg:cheb}.

We will proceed by induction on $t$, with the inductive hypothesis that:
\begin{enumerate}
	\item[(a)] $\theta \in S_t$.
	\item[(b)] The width of $S_t$ is at most $|S_t| \leq \frac{ 2\rho_\epsilon }{2^t}$.
\end{enumerate}
Note that the above with $t=\tau$ proves the claim on correctness of Algorithm~\ref{alg:cheb}.

For notational convenience, for the rest of the proof 
\begin{defn}
We will use $\rho$ to denote $\rho_{\eps}$.
We will also extend the interval notation to let $\pm x \pm h$ denote $-x \pm h \cup x \pm h = [-x - h, -x + h] \cup [x - h, x + h]$.
\end{defn}

Let $\phi(x)$ be the value $x + z \cdot 2\pi$ for some integer $z$ so that $\phi(x) \in [-\pi, \pi]$.
We begin with an observation:
\begin{observation}\label{obs:close}
In $\refine(w,S,\eps)$,
\[ r = \pm \phi\left( w \cdot \theta \right) \pm \rho. \]
\end{observation}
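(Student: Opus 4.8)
The plan is to unwind the definitions inside $\refine$ and collapse the whole statement into a single application of Definition~\ref{def:rho}. Recall that $\refine(w,S,\eps)$ sets $r \gets \arccos(T_w(\theta))$, where by hypothesis the query oracle returns $T_w(\theta)=\frac{\cos(w\theta)\pm\eps}{1\pm\eps}$, i.e. $T_w(\theta)=\frac{\cos(w\theta)+\eta_1}{1+\eta_2}$ for some $\eta_1,\eta_2\in[-\eps,\eps]$. So the first step is just to substitute this form in and remember that we must land the argument of $\arccos$ in the regime covered by Definition~\ref{def:rho}, namely an angle in $[0,\pi]$.

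The key reduction is that $\cos$ only ``sees'' $w\theta$ up to its symmetries. First, by $2\pi$-periodicity, $\cos(w\theta)=\cos(\phi(w\theta))$, where $\phi(w\theta)\in[-\pi,\pi]$ is the representative defined just before the observation. Second, since $\cos$ is even, $\cos(\phi(w\theta))=\cos(|\phi(w\theta)|)$, and now $|\phi(w\theta)|\in[0,\pi]$, exactly the range where Definition~\ref{def:rho} is stated. Hence $T_w(\theta)=\frac{\cos(|\phi(w\theta)|)\pm\eps}{1\pm\eps}$, and applying Definition~\ref{def:rho} with its ``$\theta$'' instantiated as $|\phi(w\theta)|$ gives $r=\arccos(T_w(\theta))=|\phi(w\theta)|\pm\rho_\eps=|\phi(w\theta)|\pm\rho$. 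Finally, $|\phi(w\theta)|$ equals $\phi(w\theta)$ when $\phi(w\theta)\ge 0$ and equals $-\phi(w\theta)$ otherwise, so in either case $r$ lies in $(\phi(w\theta)\pm\rho)\cup(-\phi(w\theta)\pm\rho)$, which is precisely the claim $r=\pm\phi(w\theta)\pm\rho$ in the extended interval notation.

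I do not expect any serious obstacle: the argument is bookkeeping with the periodicity and parity of $\cos$ plus one invocation of Definition~\ref{def:rho}. The only point warranting a sentence of care is that $\arccos$ should be applied to a value in $[-1,1]$ so that $r$ is well defined; here one notes, exactly as in the proof of Lemma~\ref{lmm:cos-noise}, that $\frac{\cos(|\phi(w\theta)|)\pm\eps}{1\pm\eps}$ is within $O(\eps)$ of $\cos(|\phi(w\theta)|)\in[-1,1]$, so that after the harmless clipping of $\arccos$ to $[-1,1]$ the bound from Definition~\ref{def:rho} is unaffected. With that dispatched, the observation follows immediately.
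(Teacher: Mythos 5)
Your proof is correct and matches the paper's argument in substance: the paper also reduces modulo $2\pi$ to get $\phi(w\theta)\in[-\pi,\pi]$, then splits into the two cases $\phi(w\theta)\in[0,\pi]$ and $\phi(w\theta)\in[-\pi,0]$ (using evenness of $\cos$ in the second case) and applies Lemma~\ref{lmm:cos-noise}, which is exactly Definition~\ref{def:rho}. Writing the two cases uniformly as $|\phi(w\theta)|$ is a purely cosmetic repackaging, not a different route.
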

\begin{proof}
Since $w\theta$ can be outside of $[-\pi,\pi]$, we use the fact that $\cos(w\theta)=\cos\parens{\phi(w\theta)}$.
If $\phi(w \cdot \theta) \in [0, \pi]$, then Lemma~\ref{lmm:cos-noise} gives
\[ r = \arccos\left( \frac{\cos \left( \phi( w \cdot \theta ) \right) \pm \eps}{1 \pm \epsilon} \right) = \phi(w \cdot \theta) \pm \rho. \]
Otherwise $\phi(w \cdot \theta) \in [-\pi, 0]$ and
\[ r = \arccos\left( \frac{\cos \left( \phi( w \cdot \theta ) \right) \pm \eps}{1 \pm \epsilon} \right) = \arccos\left( \frac{\cos \left( -\phi( w \cdot \theta ) \right) \pm \eps}{1 \pm \epsilon} \right) = - \phi(w \cdot \theta) \pm \rho. \]
\end{proof}

\paragraph{Base case.}
Now we establish the base case, for $t=0$.  Item (b) holds by construction of $S_0$, so the only things to check are that (a) holds, and also that the \textbf{Assertion 1} in the pseudocode is correct.

By Observation~\ref{obs:close},
in \refine$(1, [0,\pi],\eps)$,
we set
\[ r = \pm \phi(\theta) \pm \rho.\]
Since $\theta \in [0,\pi)$ and $r \in [0,\pi]$, the sign must be positive, and we have
\[ r = \theta \pm \rho. \]
We claim that we will set $a_0$ to be $a_0 = r$.
Certainly this value lives in $\bar{S} = [0, \pi]$, which ensures that $a_0$ will not be set to \texttt{None} (and hence \textbf{Assertion 1} holds).
Moreover, no other value $r + z \cdot 2\pi$ will be used for $a_0$, since for any $z \neq 0$, $r + z \cdot 2 \pi$ does not live in $[0, \pi]$.
Thus $a_0$ is as claimed.

Then $|a_0 - \theta| \leq \rho$, which means that the choice of
$S_0 = [ a_0 - \rho , a_0 + \rho ]$
indeed satisfies (a).

\paragraph{Establishing the inductive hypothesis.}
With the base case out of the way we proceed by induction.

\begin{claim}\label{claim:oneclose} Suppose that the inductive hypothesis holds for $t-1$, and suppose that $w \leq \frac{3}{2} \cdot 2^t$, and that $\rho$ is sufficiently small (say, $\rho < \pi/4$).  Then consider
\[ a_t, b_t \gets \refine(w, S=S_{t-1},\eps) \]
Then at least one of the following holds:
\[ |a_t - \theta| \leq \frac{\rho}{w} \qquad \text{or} \qquad |b_t - \theta| \leq \frac{\rho}{w}. \]
\end{claim}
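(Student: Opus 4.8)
## Proof Plan for Claim~\ref{claim:oneclose}

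The plan is to track what \refine{} does with the stretch factor $w$, starting from the inductive hypothesis that $\theta \in S_{t-1}$ and $|S_{t-1}| \le 2\rho/2^{t-1} = 4\rho/2^t$. First I would unwind the definition of $\bar S$: since $w \le \tfrac32 \cdot 2^t$, we have $\rho/w \ge \tfrac{2\rho}{3\cdot 2^t}$, so $\bar S = (S_{t-1} \pm \rho/w) \cap (0,\pi)$ contains $S_{t-1}$ and in particular contains $\theta$. The width of $\bar S$ is at most $|S_{t-1}| + 2\rho/w \le 4\rho/2^t + 2\rho/w$; using $w \le \tfrac32 2^t$ again this is at most a constant multiple of $\rho/w$ (something like $(8/3 + 2)\rho/w = \tfrac{14}{3}\rho/w$, but I would not fuss over the exact constant — the point is that it is $O(\rho/w)$ and, crucially, strictly less than $2\pi/w$ once $\rho$ is small enough, say $\rho < \pi/4$). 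This last inequality is what guarantees that the arithmetic progressions $\{r/w + z\cdot 2\pi/w\}$ and $\{-r/w + z \cdot 2\pi/w\}$ each meet $\bar S$ in at most one point, so that the choices of $a_t$ and $b_t$ are essentially forced.

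Next I would apply Observation~\ref{obs:close}: in $\refine(w, S_{t-1}, \eps)$ we have $r = \pm\phi(w\theta) \pm \rho$. Write $\phi(w\theta) = w\theta - 2\pi z_0$ for the appropriate integer $z_0$; then $w\theta$ itself equals $\pm\phi(w\theta) + 2\pi z_0$, which is $\pm(r \mp \rho) + 2\pi z_0$ — i.e. $\theta$ lies in $\{ \tfrac{r}{w} + z\tfrac{2\pi}{w} : z \in \Z\} \pm \tfrac{\rho}{w}$ if the sign in Observation~\ref{obs:close} was positive, or in $\{ -\tfrac{r}{w} + z\tfrac{2\pi}{w} : z \in \Z\} \pm \tfrac{\rho}{w}$ if it was negative. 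In the first case, the point of that progression within $\rho/w$ of $\theta$ lies in $\bar S$ (since $\theta \in S_{t-1} \subseteq \bar S$ and $\bar S$ was inflated by exactly $\rho/w$ on each side), so the intersection defining $a_t$ is nonempty and the (unique) point $a_t$ picked satisfies $|a_t - \theta| \le \rho/w$. In the second case the same reasoning applies to $b_t$. Either way one of the two claimed inequalities holds, which is exactly the conclusion.

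The main obstacle — the only place requiring real care — is the two "uniqueness" facts bundled into the argument: (i) that $\bar S$ has width strictly less than $2\pi/w$, so each progression contributes at most one candidate point, and (ii) that whichever progression actually contains a point near $\theta$, that point is the one \refine{} selects (rather than some spurious other point of $\bar S \cap (\text{progression})$, which by (i) cannot exist). Both reduce to the width bound $|\bar S| = O(\rho/w) < 2\pi/w$, which needs $\rho$ bounded by an absolute constant; the hypothesis $\rho < \pi/4$ in the claim statement is exactly there to make this go through, and I would verify the constant arithmetic ($4\rho/2^t + 2\rho/w$ versus $2\pi/w$, using $2^t \ge \tfrac23 w$) carefully at that single point. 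Everything else is bookkeeping with the $\phi(\cdot)$ reduction and the definition of $\bar S$.
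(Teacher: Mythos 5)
Your proof is correct and follows essentially the same route as the paper: invoke Observation~\ref{obs:close} to get $r = \pm\phi(w\theta)\pm\rho$, split on the sign to place $\theta$ within $\rho/w$ of a point of one of the two arithmetic progressions, observe that this point then lies in $\bar S$, and use the width bound $|\bar S| < 2\pi/w$ to see that the progression meets $\bar S$ in exactly that point. One small arithmetic slip worth fixing: from $|S_{t-1}| \le 4\rho/2^t$ and $w \le \tfrac32\cdot 2^t$ one gets $4\rho/2^t \le 6\rho/w$, so $|\bar S| \le 8\rho/w$ (not $\tfrac{14}{3}\rho/w$); with the correct constant, the condition $8\rho/w < 2\pi/w$ is precisely $\rho < \pi/4$, which explains why that particular threshold appears in the hypothesis.
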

\begin{proof}
By Observation~\ref{obs:close},
\[ r = \pm \phi( w \cdot \theta ) \pm \rho \]
which means that for some $z \in \mathbb{Z}$, and some sign $\zeta \in \{-1,1\}$,
\[ \zeta \cdot \frac{r}{w} + z \cdot \frac{2\pi}{w} = \theta \pm \frac{ \rho}{w}. \]
Suppose that $\zeta = +1$.
Then
\[ \left|\frac{r}{w} + z \cdot \frac{ 2\pi}{w} - \theta\right| \leq \frac{\rho}{w}, \]
which implies that (using the inductive hypothesis that $\theta \in S_{t-1}$) we have
\[ \frac{r}{w} + z \cdot \frac{2 \pi }{w} \in \overline{S_{t-1}}. \]
Moreover, this value of $z$ is the unique one with this property, since
\[ | \overline{S_{t-1}} | \leq \frac{4\rho}{2^t} + \frac{ 2\rho}{w} \leq \frac{ 6\rho }{w} + \frac{ 2\rho }{w} < \frac{ 2\pi }{w}  \]
using the fact that $w \leq \frac{3}{2}\cdot 2^t$ and that $\rho < \pi/4$ is sufficiently small.
Thus, for any other $z'$,
\[ \frac{r}{w} + z' \cdot \frac{2\pi}{w} \not\in \overline{S_{t-1}}. \]
This implies that, in the case where $\zeta = +1$, we will choose this $z$ in our definition of $a_t$, and hence
\[ a_t = \frac{r}{w} + z \cdot \frac{2\pi}{w} = \theta \pm \frac{ \rho }{w}. \]

Similarly, if $\zeta = -1$, we will choose
\[ b_t = \theta \pm \frac{ \rho }{w}. \]

Since $\zeta$ is either $-1$ or $+1$, one of the two cases will hold and this proves the claim.
\end{proof}

With this claim out of the way, we will establish the inductive hypotheses for the next round.  Again (b) holds by construction of $S_t$, so we focus on (a).
Consider $a_t,b_t = \refine(w=2^t, S=S_{t-1},\eps)$, the output of the first call to \refine.

\textbf{Case 0: at least one of $a_t$, $b_t$ is \texttt{None}.} If at least one of $a_t, b_t$ is \texttt{None},
the Claim~\ref{claim:oneclose} implies that one of them (say, $c_t$) is not \texttt{None} and that
\[ |c_t - \theta| \leq \frac{ \rho }{2^t}, \]
and so in particular $\theta \in S_t$, establishing the inductive hypothesis (a).

\textbf{Case 1: both $a_t$ and $b_t$ are not \texttt{None}.}
Since they are both in $\overline{S_{t-1}}$, by the inductive hypothesis $|a_t - b_t| \leq \abs{\overline{S_{t-1}}}\le  |S_{t-1}|+2\rho 2^{-t} \leq 2\rho{2^{-(t-1)}} +2\rho 2^{-t} = 6\rho{2^{-t}}$.
This means that, if $z$ is chosen in the definition of $a$ and $z'$ for $b$, then
\[ | a_t - b_t | = \left| \frac{r}{w} + z \frac{ 2\pi}{ w} - \frac{-r}{w} - z' \frac{2\pi }{w} \right| \leq \frac{ 6\rho }{ 2^t }. \]
Thus,
\[ \left| 2r + (z - z') 2\pi \right| \leq 6\rho, \]
or
\[ |r + (z-z') \pi | \leq 3\rho, \]
which implies that (since $r \in (0,\pi)$ and $(z-z')$ is an integer),
\[ r \in (0, 3\rho] \cup [ \pi - 3\rho, \pi]. \]
On the other hand, by Observation~\ref{obs:close},
\[ r = \pm \phi(2^t \cdot \theta) \pm \rho, \]
so, it follows that for some integer $h$,
\begin{equation}\label{eq:closetopi}
 \theta = \frac{ \pi \cdot h }{2^t} \pm 4 \frac{\rho}{2^t}.
\end{equation}
Thus $\frac{\pi h}{2^t} = \theta \pm 4\frac{\rho}{2^t} \in S_{t-1} \pm 4\frac{\rho}{2^t}$.
This interval has width at most $12\rho2^{-t}$, so this integer $h$ can be found from $S_{t-1}$ assuming that $\rho < \pi/12$ is sufficiently small (and this $h$ is unique).

\paragraph{Assertion 2 holds.}
Since $\theta \in (0,\pi)$, this implies that $h \in \{0,1,\ldots, 2^t\}$.
In fact, to prove \textbf{Assertion 2}, we will show that $h$ cannot be $0$ or $2^t$.

Suppose for the sake of contradiction that $h$ is zero.
Then $\theta \in [-4\rho 2^{-t}, 4\rho 2^{-t}]$, which means that $\theta \in (0, 4\rho 2^{-t}]$, since $\theta \in (0, \pi)$.
This in turn implies that 
\begin{equation}
\label{eq:S-incl}
S_{t-1} \subseteq (0,\pi) \cap (\theta \pm 4\rho2^{-t}) \subseteq [-4\rho2^{-t}, 8\rho2^{-t}]
\end{equation}
where the first inclusion follows from the fact that $|S_{t-1}|\le 4\rho2^{-t}$.

By Observation~\ref{obs:close}, $r = \phi(2^t \cdot \theta) \pm \rho \in (0, 5\rho]$ (provided $4\rho < \pi$). We note that we also used the fact that $r\in (0,\pi)$ as well as $\theta\ge 0$ to conclude that the sign in front of $\phi$ should be positive.

Note that if $14\rho < 2\pi$, then $\frac{2\pi-r}{2^t} > \frac{9\rho}{2^t}$.
In the call to {\refine}, since $\overline{S_{t-1}} \subseteq (0, 9\rho2^{-t}]$ (which in turn follows from~\eqref{eq:S-incl}), this means $b_t$ would have been set to \texttt{None}.\footnote{Note that $\frac {-r}{2^t}+z\cdot \frac{2\pi}{2^t}<0$ for any $z\le 0$ (since $r>0$) and is $> \frac{9\rho}{2^t}$ for any $z\ge 1$ (since we showed it for $z=1$ and $z>1$ can only give a larger value) so this exhausts all possibilities.} 
Thus the case $h = 0$ could not have happened.
In the case $h = 2^t$, we similarly bound
\begin{align*}
  \theta &\in [\pi-4\rho2^{-t}, \pi] \\
  S_{t-1} &\in [\pi-8\rho2^{-t}, \pi+4\rho2^{-t}] \\
  \overline{S_{t-1}} &\in [\pi-9\rho2^{-t}, \pi] \\
  r &= (\pm\phi(2^t \cdot \theta) \pm \rho) \cap (0, \pi) \in (0, 5\rho].
\end{align*}
The claim on $r$ is due to the following argument. Note that $2^t\theta\in[2^t\pi-4\rho,2^t\pi]$, which implies $\phi\parens{2^t\theta}\in [-4\rho,0]$, which in turn implies that $-\phi\parens{2^t\theta}\in (0,4\rho]$ and this is the interval that gets `picked up' when intersecting with $(0,\pi)$.

We now argue that both $\frac{r}{2^t} + 2^{t-1}\frac{2\pi}{2^t}$ and $\frac{r}{2^t} + (2^{t-1}-1)\frac{2\pi}{2^t}$ are not in $\overline{S_{t-1}}$.
Indeed  $\frac{r}{2^t} + 2^{t-1}\frac{2\pi}{2^t}>\pi$ and hence $\frac{r}{2^t} + 2^{t-1}\frac{2\pi}{2^t}\not\in \overline{S_{t-1}}$. Next we argue that 
\[\frac{r}{2^t} + (2^{t-1}-1)\frac{2\pi}{2^t} < \pi-9\rho2^{-t},\]
which implies that the LHS is not in $\overline{S_{t-1}}$ as needed. Now note that the above inequality is true if $r-2\pi <-9\rho$, which is true if $\rho <\pi/7$ (where we also used the fact that $r\le 5\rho$). The above also implies that for $z\in\Z$, $\frac r{2^t} +z\cdot \frac{2\pi}{2^t}$ is also not in $\overline{S_{t-1}}$ since the value is $>\pi$ for $z\ge 2^{t-1}$
(since it is true for $z=2^{t-1}$ as $r>0$) and the value is $<\pi-9\rho2^{-t}$ for $z\le 2^{t-1}-1$ (since it is true for $z=2^{t-1}-1$).
This implies that $a_t$ would have been \texttt{None}. 
So $h=2^t$ is also impossible.

Therefore $h \in \{1, \dots, 2^{t}-1\}$, and write $h = 2^j \cdot x$ where $x$ is odd and $j < t$.
This establishes the \textbf{Assertion 2} in the pseudocode.

\paragraph{Assertion 3 holds.}
Now, we re-choose
\begin{equation}
\label{eq:w-val}
 w = \frac{1}{2}2^{t-j}(2^{j+1}+1). 
\end{equation}
Since $t > j$, $w$ is an integer.
Further, $2^t \leq w \le \frac{3}{2} \cdot 2^t$, and so (because of our bound on $\tau$ in Theorem~\ref{thm:theta-from-cos}), $w$ is an integer between $0$ and $N-1$ inclusive and thus is a valid query.

By \eqref{eq:closetopi},
\[ \theta = \frac{ \pi \cdot 2^j \cdot x }{2^t} \pm 4 \frac{\rho}{2^t}, \]
so we have
\begin{align*}
 w \cdot \theta &= \frac{1}{2} \cdot (2^{j+1}+1) x \cdot \pi \pm 4 w \frac{\rho}{2^t} \\
&\in \frac{x'}{2} \pi\pm 6 \rho,
\end{align*}
for some odd integer $x'$,
using the fact that $w \leq \frac 32 \cdot2^t$.

Now consider the $r$ that is defined in the second call to \refine, with $w$ as in~\eqref{eq:w-val}.
By Observation~\ref{obs:close} again,
\[ r = \pm \phi( w \cdot \theta) \pm \rho
= \pm \phi\left( \frac{x'}{2} \pi \pm 6 \rho \right) \pm \rho
= \pm \frac{\pi}{2} \pm 7 \rho. \]
Now for any choice of $z, z'$,
\begin{align*}
  \abs{\parens{\frac{r}{w} + z\frac{2\pi}{w}} - \parens{-\frac{r}{w} + z'\frac{2\pi}{w}}}
  &= \abs{\frac{2r}{w} + (z-z')\frac{2\pi}{w}}
  \\&= \abs{\frac{\pm\pi \pm 14\rho}{w} + (z-z')\frac{2\pi}{w}}
  \\&= \abs{\frac{\pm 14\rho + \pi(2z''+1)}{w}}
  \\&\ge \frac{\pi-14\rho}{w}
\end{align*}
where $z'' \in \mathbb{N}$.
This implies that there cannot both be an $a_t, b_t$ in $\overline{S_{t-1}}$ as long as
\[  \frac{\pi-14\rho}{w} > \frac{ 6\rho }{w}+\frac{2\rho}{w} \ge \frac{ 4\rho }{2^t}+\frac{2\rho}{w} \ge |\overline{S_{t-1}}|, \]
which holds provided $\rho < \pi/22$.
Thus at least one of $a_t,b_t$ in the second call to \refine\ is \texttt{None}, and by Claim~\ref{claim:oneclose} the other (call it $c_t$) exists and satisfies
\[ | c_t - \theta | \leq \frac{ \rho }{w} \leq \frac{ \rho }{ 2^t}. \]
This establishes \textbf{Assertion 3}.
Thus, when we define $S_t = [c_t - \rho/2^t, c_t + \rho/2^t]$, we guarantee that $\theta \in S_t$, establishing the inductive hypothesis (a) for the next round.

\paragraph{Query and Runtime analysis.} It is easy to see that Algorithm~\ref{alg:cheb} makes at most $2\tau$ calls to Algorithm~\ref{alg:refine}, which implies that it makes at most $2\tau$ queries.
It is also easy to check that other than calls to $\refine$, the rest of $\approxcos$ runs in time $O(\tau)$. To finish the proof, we claim that each call to $\refine$ takes $O(1)$ time.\footnote{This assumes computing the $\arccos$ values takes $O(1)$ time.}
The main operation is computing the values $a$ and $b$.
To compute $a$, it suffices to find any $z\in\Z$ satisfying $L \le \frac{r}{w} + z \cdot \frac{2\pi}{w} \le R$, or any integer in $[wL/2\pi - r/2\pi, wR/2\pi - r/2\pi]$. The case for $b$ is similar.
\end{proof}

\section*{Acknowledgments}
We would like to thank Mark Iwen for useful conversations. Thanks also to Stefan Steinerberger for showing us the proof of Lemma~\ref{lem:bad-in-spread-sequence} and graciously allowing us to use it, and to  Cl\'{e}ment Canonne for helpful comments on our manuscript.  

ACG is partially funded by a Simons Foundation Fellowship. AR is partially funded by NSF grant CCF-1763481.  MW is partially funded by NSF CAREER award CCF-1844628.
AG and CR gratefully acknowledge the support of DARPA under Nos.\ FA87501720095 (D3M) and FA86501827865 (SDH), NIH under No.\ U54EB020405 (Mobilize), NSF under Nos.\ CCF1763315 (Beyond Sparsity) and CCF1563078 (Volume to Velocity), ONR under No.\ N000141712266 (Unifying Weak Supervision), the Moore Foundation, NXP, Xilinx, LETI-CEA, Intel, Google, NEC, Toshiba, TSMC, ARM, Hitachi, BASF, Accenture, Ericsson, Qualcomm, Analog Devices, the Okawa Foundation, and American Family Insurance, Google Cloud, Swiss Re, Stanford Bio-X SIG Fellowship, and members of the Stanford DAWN project: Intel, Microsoft, Teradata, Facebook, Google, Ant Financial, NEC, SAP, VMWare, and Infosys. The U.S.\ Government is authorized to reproduce and distribute reprints for Governmental purposes notwithstanding any copyright notation thereon. Any opinions, findings, and conclusions or recommendations expressed in this material are those of the authors and do not necessarily reflect the views, policies, or endorsements, either expressed or implied, of DARPA, NIH, ONR, or the U.S.\ Government.

\bibliographystyle{acm}
\bibliography{OP-sparse}

\begin{thebibliography}{10}

\bibitem{cheb-approx}
\url{https://en.wikipedia.org/wiki/Approximation_theory#Chebyshev_approximation}.

\bibitem{jacobi}
\url{https://en.wikipedia.org/wiki/Jacobi_polynomials#Differential_equation}.

\bibitem{chebfun}
\url{http://www.chebfun.org}.

\bibitem{Blasiok2019}
{\sc B{\l}asiok, J., Lopatto, P., Luh, K., and Marcinek, J.}
\newblock {Sparse Reconstruction from Hadamard Matrices: A Lower Bound}.
\newblock {\em arXiv 1903.12135\/} (2019).

\bibitem{rip-B14}
{\sc Bourgain, J.}
\newblock {\em An Improved Estimate in the Restricted Isometry Problem}.
\newblock Springer International Publishing, Cham, 2014, pp.~65--70.

\bibitem{jacobi-2}
{\sc {Bremer}, J., {Pang}, Q., and {Yang}, H.}
\newblock {Fast Algorithms for the Multi-dimensional Jacobi Polynomial
  Transform}.
\newblock {\em arXiv e-prints\/} (Jan 2019), arXiv:1901.07275.

\bibitem{jacobi-1}
{\sc {Bremer}, J., and {Yang}, H.}
\newblock {Fast algorithms for Jacobi expansions via nonoscillatory phase
  functions}.
\newblock {\em arXiv e-prints\/} (Mar 2018), arXiv:1803.03889.

\bibitem{Choi2018}
{\sc Choi, B., Iwen, M., and Krahmer, F.}
\newblock {Sparse Harmonic Transforms: A New Class of Sublinear-time Algorithms
  for Learning Functions of Many Variables}.
\newblock {\em arXiv 1808.04932\/} (2018).

\bibitem{quadrature}
{\sc Cook, J.~D.}
\newblock Orthogonal polynomials and gaussian quadrature.
\newblock \url{https://www.johndcook.com/OrthogonalPolynomials.pdf}.

\bibitem{DaoDeSaRe2017}
{\sc Dao, T., De~Sa, C.~M., and R\'{e}, C.}
\newblock Gaussian quadrature for kernel features.
\newblock In {\em Advances in Neural Information Processing Systems 30},
  I.~Guyon, U.~V. Luxburg, S.~Bengio, H.~Wallach, R.~Fergus, S.~Vishwanathan,
  and R.~Garnett, Eds. Curran Associates, Inc., 2017, pp.~6107--6117.

\bibitem{driscoll}
{\sc Driscoll, J.~R., Healy, Jr., D.~M., and Rockmore, D.~N.}
\newblock Fast discrete polynomial transforms with applications to data
  analysis for distance transitive graphs.
\newblock {\em SIAM J. Comput. 26}, 4 (Aug. 1997), 1066--1099.

\bibitem{Foucart2013}
{\sc Foucart, S., and Rauhut, H.}
\newblock {\em {A Mathematical Introduction to Compressive Sensing}}.
\newblock Springer Science {\&} Business Media, August 2013.

\bibitem{GGIMS}
{\sc Gilbert, A.~C., Guha, S., Indyk, P., Muthukrishnan, S., and Strauss, M.}
\newblock Near-optimal sparse fourier representations via sampling.
\newblock In {\em Proceedings of the Thiry-fourth Annual ACM Symposium on
  Theory of Computing\/} (New York, NY, USA, 2002), STOC '02, ACM,
  pp.~152--161.

\bibitem{FourierSurvey2014}
{\sc Gilbert, A.~C., Indyk, P., Iwen, M., and Schmidt, L.}
\newblock {Recent Developments in the Sparse Fourier Transform}.
\newblock {\em IEEE Signal Processing Magazine\/} (2014).

\bibitem{HIKP12}
{\sc Hassanieh, H., Indyk, P., Katabi, D., and Price, E.}
\newblock Nearly optimal sparse fourier transform.
\newblock In {\em Proceedings of the forty-fourth annual ACM symposium on
  Theory of computing\/} (2012), ACM, pp.~563--578.

\bibitem{legendre-sparse}
{\sc Hu, X., Iwen, M., and Kim, H.}
\newblock Rapidly computing sparse legendre expansions via sparse fourier
  transforms.
\newblock {\em Numerical Algorithms 74}, 4 (2017).

\bibitem{HuaSarkar1990}
{\sc {Hua}, Y., and {Sarkar}, T.~K.}
\newblock Matrix pencil method for estimating parameters of exponentially
  damped/undamped sinusoids in noise.
\newblock {\em IEEE Transactions on Acoustics, Speech, and Signal Processing
  38}, 5 (May 1990), 814--824.

\bibitem{IK14}
{\sc Indyk, P., and Kapralov, M.}
\newblock Sample-optimal fourier sampling in any constant dimension.
\newblock In {\em 2014 IEEE 55th Annual Symposium on Foundations of Computer
  Science\/} (2014), IEEE, pp.~514--523.

\bibitem{IKP14}
{\sc Indyk, P., Kapralov, M., and Price, E.}
\newblock (nearly) sample-optimal sparse fourier transform.
\newblock In {\em Proceedings of the twenty-fifth annual ACM-SIAM symposium on
  Discrete algorithms\/} (2014), Society for Industrial and Applied
  Mathematics, pp.~480--499.

\bibitem{bessel-root-lb}
{\sc Ismail, M. E.~H., and Muldoon, M.~E.}
\newblock Bounds for the small real and purely imaginary zeros of bessel and
  related functions.
\newblock {\em Methods and Applications of Analysis 2}, 1 (1995).

\bibitem{KVZ19}
{\sc Kapralov, M., Velingker, A., and Zandieh, A.}
\newblock Dimension-independent sparse fourier transform.
\newblock In {\em Proceedings of the Thirtieth Annual {ACM-SIAM} Symposium on
  Discrete Algorithms, {SODA} 2019, San Diego, California, USA, January 6-9,
  2019\/} (2019), pp.~2709--2728.

\bibitem{musco}
{\sc Musco, C.}
\newblock Chebyshev polynomials in tcs and algorithm design.
\newblock \url{http://www.cameronmusco.com/personal_site/pdfs/retreatTalk.pdf}.

\bibitem{prony-general}
{\sc Peter, T., and Plonka, G.}
\newblock A generalized prony method for reconstruction of sparse sums of
  eigenfunctions of linear operators.
\newblock {\em Inverse Problems 29}, 2 (jan 2013), 025001.

\bibitem{Potts2010}
{\sc Potts, D., and Tasche, M.}
\newblock Parameter estimation for exponential sums by approximate prony
  method.
\newblock {\em Signal Processing 90}, 5 (2010), 1631 -- 1642.
\newblock Special Section on Statistical Signal and Array Processing.

\bibitem{Potts2016}
{\sc Potts, D., and Tasche, M.}
\newblock Reconstruction of sparse legendre and gegenbauer expansions.
\newblock {\em BIT Numerical Mathematics 56}, 3 (2016), 1019--1043.

\bibitem{Rao2019}
{\sc Rao, S.}
\newblock {Improved Lower Bounds for the Restricted Isometry Property of
  Subsampled Fourier Matrices}.
\newblock {\em arXiv 1903.12146\/} (2019).

\bibitem{RW12}
{\sc Rauhut, H., and Ward, R.}
\newblock Sparse legendre expansions via $\ell_1$-minimization.
\newblock {\em J. Approx. Theory 164}, 5 (May 2012), 517--533.

\bibitem{R81}
{\sc Rivlin, T.}
\newblock {\em An Introduction to the Approximation of Functions}.
\newblock Blaisdell book in numerical analysis and computer science. Dover
  Publications, 1981.

\bibitem{RV}
{\sc Rudelson, M., and Vershynin, R.}
\newblock On sparse reconstruction from fourier and gaussian measurements.
\newblock {\em Communications on Pure and Applied Mathematics 61}, 8 (2008),
  1025--1045.

\bibitem{jackson}
{\sc Shen, W., and Zikatanov, L.}
\newblock Another proof for {J}ackson theorem.
\newblock \url{http://personal.psu.edu/wxs27/524/JacksonNotes.pdf}.

\bibitem{szego}
{\sc Szeg{\"o}, G.}
\newblock {\em Orthogonal Polynomials}.
\newblock No.~v. 23 in American Mathematical Society colloquium publications.
  American Mathematical Society, 1975.

\bibitem{Tango1977}
{\sc Tango, W.~J.}
\newblock The circle polynomials of zernike and their application in optics.
\newblock {\em Applied physics 13}, 4 (Aug 1977), 327--332.

\bibitem{yu2016}
{\sc Yu, K.-H., Zhang, C., Berry, G.~J., Altman, R.~B., R{\'e}, C., Rubin,
  D.~L., and Snyder, M.}
\newblock Predicting non-small cell lung cancer prognosis by fully automated
  microscopic pathology image features.
\newblock {\em Nature Communications 7\/} (2016).

\end{thebibliography}

\appendix

\section{Chebyshev Reduction to sparse FFT}
\label{sec:dct-reduction}

For completeness, in this section we show how to solve the sparse approximation problem for Chebyshev polynomials via a reduction to the sFFT.  We note that a similar observation has been made in \cite{legendre-sparse}. 

In this section, everything is $0$-indexed, and negative indices count from the end.
For example, if $\vx$ has length $N$ then $\vx[-i] = \vx[N-i]$.

\subsection{Facts about DFT}
We begin with some useful facts about the Discrete Fourier Transform (DFT).
Consider the usual DFT $\hat{\vx}=\mathcal{F}\vx$ defined by
\[
  \hat{\vx}[j] = \sum_{\ell=0}^{N-1} \vx[\ell] \omega_N^{j\ell}
\]
where $\vx \in \C^N$ and $\omega_N := \exp(-2\pi i/N)$.

\begin{enumerate}
  \item\label{fact:fft-shift} Shifting $\vx$ scales $\hat{\vx}$ elementwise, and vice versa:

  Consider $\vx \in \C^N$ and $\vy$ such that $\vy[\ell] = \vx[\ell-s]$ for all $\ell \in [N]$. Then
  \[
    \hat{\vy}[j] = \sum_{\ell=0}^{N-1} \vy[\ell] \omega^{j\ell} = \sum_{\ell=0}^{N-1} \vx[\ell-s] \omega^{j(\ell-s)}\omega^{js} = \omega^{js} \hat{\vx}[j].
  \]
  If $\vy[\ell] = \vx[\ell] \omega^{\ell s}$, then
  \[
    \hat{\vy}[j] = \sum_{\ell=0}^{N-1} \vy[\ell] \omega^{j\ell} = \sum_{\ell=0}^{N-1} \vx[\ell] \omega^{(j+s)\ell} = \hat{\vx}[j+s].
  \]

  \item\label{fact:fft-support} If $\vx$ is supported only on the first half, then the even indices of $\hat{\vx}$ can be computed by a DFT of half the size, and vice versa:

  \[
    \hat{\vx}[2j] = \sum_{\ell=0}^{N/2-1} \vx[\ell] \omega_N^{2j\ell} = \sum_{\ell=0}^{N/2-1} \vx[\ell] \omega_{N/2}^{j\ell}.
  \]

  \[
    \hat{\vx}[j] = \sum_{\ell\text{ even}} \vx[\ell] \omega_N^{j\ell} = \sum_{\ell=0}^{N/2-1} \vx[2\ell] \omega_{N/2}^{j\ell}
  \]
  for $j = 0, \dots, N/2-1$.

  \item\label{fact:fft-sym} $\vx$ is symmetric ($\vx[i] = \vx[-i]$) if and only if $\hat{\vx}$ is symmetric.

  \[ \hat{\vx}[j] = \sum_{\ell=0}^{N-1} \vx[\ell]\omega^{j\ell} = \sum_{\ell=0}^{N-1} \vx[-\ell]\omega^{j\ell} = \sum_{k=0}^{N-1} \vx[\ell]\omega^{-j\ell} = \hat{\vx}[-j] \]
\end{enumerate}

\subsection{Sparse DCT Setup}

Consider the Chebyshev transform $\hat{\vc} = \mathcal{C}\vc$
\begin{equation}
  \hat{\vc}[j] = \sum_{\ell=0}^{N-1} \vc[\ell] T_\ell(\lambda_j)\label{eq:chebyshev}
\end{equation}
where $\lambda_j = \cos\left( \pi\frac{2j+1}{2n} \right)$ are the Chebyshev nodes.
Note that $T_\ell(\lambda_j) = \cos\left( \frac{\pi}{2N}\ell(2j+1) \right)$ by properties of Chebyshev polynomials.

\subsection{Reduction}

We will prove the following equivalence:
\begin{lmm}\label{lmm:dct-reduction}
  For any $\vc \in \C^N$ and its transform $\hat{\vc}=\mathcal{C}\vc$ there exists an $\vf\in\C^{2N}$ and $\hat{\vf}=\mathcal{F}\vf$ so that one can express each entry in $\vf$ (and $\hat{\vf}$) as a scalar multiple of an entry in $\vc$ (and $\hat{\vc}$ resp.).
  Further, $\vc$ is k-sparse (or $\hat{\vc}$ is k-sparse) if and only if $\vf$ is 2k-sparse (or $\hat{\vf}$ is 2k-sparse resp.).
\end{lmm}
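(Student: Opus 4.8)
The plan is to write the Chebyshev kernel as a pair of complex exponentials and read $\vf$ off directly from the resulting expression, then check that $\hat{\vf}=\mathcal{F}\vf$ is a ``doubled up'' copy of $\hat{\vc}$. Recall $T_\ell(\lambda_j)=\cos\!\left(\frac{\pi\ell(2j+1)}{2N}\right)$, so with $\omega:=\omega_{2N}=\exp(-\pi i/N)$ and the shorthand $\omega^{\ell/2}:=\exp(-\pi i\ell/(2N))$ we have
\[
  T_\ell(\lambda_j)=\tfrac12\,\omega^{\ell/2}\omega^{\ell j}+\tfrac12\,\omega^{-\ell/2}\omega^{-\ell j}.
\]
Substituting into \eqref{eq:chebyshev} and using $\omega^{-\ell j}=\omega^{(2N-\ell)j}$ together with $\omega^{N}=-1$ (which gives $\omega^{-\ell/2}=-\,\omega^{(2N-\ell)/2}$), the second term reindexes into a sum over $\{0\}\cup\{N+1,\dots,2N-1\}$. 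Collecting the coefficient of $\omega^{\ell j}$ for each $\ell\in\{0,\dots,2N-1\}$ then exhibits $\hat{\vc}[j]$ as the $j$-th entry of the length-$2N$ DFT of
\[
  \vf[\ell]=\begin{cases}
    \vc[0], & \ell=0,\\
    \tfrac12\,\omega^{\ell/2}\,\vc[\ell], & 1\le\ell\le N-1,\\
    0, & \ell=N,\\
    -\tfrac12\,\omega^{\ell/2}\,\vc[2N-\ell], & N+1\le\ell\le 2N-1.
  \end{cases}
\]
Each entry of $\vf$ is thus a fixed ($\vc$-independent) scalar times a single entry of $\vc$, and $\vc\mapsto\vf$ is entrywise invertible.

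Next I would verify the claim about $\hat{\vf}$. Plugging the displayed $\vf$ into $\hat{\vf}[j]=\sum_{\ell=0}^{2N-1}\vf[\ell]\omega^{\ell j}$ and undoing the reindexing (substitute $\ell\mapsto 2N-\ell$ in the last block, again using $\omega^N=-1$) recombines the two blocks into $\sum_{\ell=0}^{N-1}\vc[\ell]\cos(\pi\ell(2j+1)/(2N))$, now valid for \emph{all} $j\in\{0,\dots,2N-1\}$. For $0\le j\le N-1$ this is exactly $\hat{\vc}[j]$; for $N\le j\le 2N-1$, set $j'=2N-1-j\in\{0,\dots,N-1\}$ and note that $\cos$ is even and $2\pi$-periodic, so $\cos(\pi\ell(2j+1)/(2N))=\cos(\pi\ell(2j'+1)/(2N))$, whence $\hat{\vf}[j]=\hat{\vc}[2N-1-j]$. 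Hence every entry of $\hat{\vf}$ is (a scalar multiple --- in fact $1$ times --- of) an entry of $\hat{\vc}$, and each entry of $\hat{\vc}$ appears exactly twice in $\hat{\vf}$. Conceptually these two steps are just Facts~\ref{fact:fft-shift} and~\ref{fact:fft-sym} in disguise: the half-index phase $\omega^{\ell/2}$ is the modulation dual to the half-sample shift of the Chebyshev nodes, and the structure $\vf[\ell]\leftrightarrow-\vf[2N-\ell]$ (resp.\ $\hat{\vf}[j]\leftrightarrow\hat{\vf}[2N-1-j]$) is the symmetry that converts a DFT into a cosine transform.

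The sparsity equivalence is then bookkeeping: for $\ell\neq0$ a nonzero $\vc[\ell]$ produces exactly two nonzeros of $\vf$ (at $\ell$ and $2N-\ell$), a nonzero $\vc[0]$ produces one, and every nonzero of $\vf$ arises this way, so $2\|\vc\|_0-1\le\|\vf\|_0\le 2\|\vc\|_0$; thus $\|\vc\|_0\le k\iff\|\vf\|_0\le 2k$. On the transform side the involution $j\mapsto 2N-1-j$ is fixed-point-free on $\{0,\dots,2N-1\}$, so $\|\hat{\vf}\|_0=2\|\hat{\vc}\|_0$ exactly, giving $\|\hat{\vc}\|_0\le k\iff\|\hat{\vf}\|_0\le 2k$.

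The one place needing care --- and the main obstacle to a clean write-up --- is the half-sample shift built into the first-kind nodes $\lambda_j=\cos(\pi(2j+1)/(2N))$ (as opposed to the second-kind nodes $\cos(\pi j/N)$, which would embed into a $2N$-point DFT by a bare mirror with no phase). This shift forces the fractional exponent $\omega^{\ell/2}$, and then one must track that $\omega^{m/2}$ is only $4N$-periodic in $m$ while $\omega^{mj}$ is $2N$-periodic, so reducing an index mod $2N$ flips a sign ($\omega^{(m+2N)/2}=-\omega^{m/2}$); this is precisely what produces the $-\tfrac12$ in the last block of $\vf$ and pins down the boundary entries $\vf[0]$ and $\vf[N]$. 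Getting these signs and the two boundary cases right is the only real content; the rest is routine algebra.
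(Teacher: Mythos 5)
Your proof is correct, and it takes a genuinely more direct route than the paper. The paper builds an auxiliary length-$4N$ symmetric vector $\vx$, shifts it to $\vy$, modulates to $\vz$, and then restricts to the first $2N$ entries, leaning on the three ``Facts'' about the DFT (shift/modulation duality, half-support, and symmetry) to chase the construction through. You instead expand the Chebyshev kernel $T_\ell(\lambda_j)$ directly as $\tfrac12\omega^{\ell/2}\omega^{\ell j}+\tfrac12\omega^{-\ell/2}\omega^{-\ell j}$ with $\omega=\omega_{2N}$, reindex the negative-frequency half using $\omega^{2N}=1$ and $\omega^{N}=-1$, and read off a length-$2N$ vector $\vf$ in one pass; the verification that $\hat{\vf}[j]=\hat{\vc}[\min(j,2N-1-j)]$ is then a one-line evenness/periodicity check of the cosine. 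The two constructions produce different (but equally valid) vectors $\vf$ --- yours puts $\vc[0]$ at index $0$ with $\vf[N]=0$, the paper's puts $-2i\vc[0]$ at index $N$ with $\vf[0]=0$, and the two orderings of the $\vc[\ell]$, $1\le\ell\le N-1$, are reversed relative to each other --- so the $\hat{\vf}$'s also differ by a phase; both satisfy the lemma since it is an existence statement. Your sparsity accounting, including the observation that $j\mapsto 2N-1-j$ is fixed-point-free so $\|\hat{\vf}\|_0 = 2\|\hat{\vc}\|_0$ exactly while $2\|\vc\|_0-1\le\|\vf\|_0\le 2\|\vc\|_0$ on the primal side, is correct and suffices for the stated iff. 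The paper's version is more modular and reuses machinery stated earlier; yours is shorter and self-contained, at the cost of carefully tracking the $4N$-periodicity of $\omega^{m/2}$, which you do correctly.
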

This implies that ${\cal C}$ (and its inverse) have sparse recovery algorithms by reducing to a sparse FFT.

\begin{proof}

We will define $\vx,\vy,\vz\in\C^{4N}$ and $\vf\in\C^{2N}$ as follows.
Let $\omega := \omega_{4N}$.

Define $\vx \in \C^{4N}$ by 
\[
  \begin{cases}
    \vx[0] = 2\vc[0] & \\
    \vx[j] = \vx[4N-j] = \vc[j] & j = 1, \dots, N-1 \\
    \vx[j] = 0 & \text{otherwise}.
  \end{cases}
\]

Define $\vy[j] = \vx[j-N]$ for all $j \in [4N]$,
More specifically,
\[
  \vy =
  \begin{bmatrix}
    0 \\ \vc[N-1] \\ \vdots \\ \vc[1] \\ 2\vc[0] \\ \vc[1] \\ \vdots \\ \vc[N-1] \\ 0 \\ \vdots
  \end{bmatrix}.
\]
Note that $\vy$ is only supported on indices $0,\dots,2N-1$.
Finally, define $\vz[j] = \omega^j\vy[j]$ for all $j \in [4N]$ and $\vf[j] = \vz[j]$ for all $j \in [2N]$.
By construction, every entry of $\vf$ is a scalar multiple of an entry of $\vc$.
Furthermore for every $j\in[N]$ exactly two entries of $\vf$ depend on $\vc[j]$ (unless $j=0$, in which case one entry of $\vf$ depends on $\vc[0]$), so $\vc$ is $k$-sparse if and only if $\vf$ is $2k$-sparse.

Now we analyze $\hat{\vf}$ and show that it has the same properties.

\begin{claim}
\begin{equation}\label{eq:fourier}
  \hat{\vx}[4N-2j-1] = \hat{\vx}[2j+1] = \sum_{\ell=0}^{4N-1} \vx[\ell]\omega^{\ell(2j+1)} = 2\hat{\vc}[j]
\end{equation}
for $j = 0, \dots, N-1$.
\end{claim}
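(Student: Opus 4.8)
The plan is to verify the chain of equalities in~\eqref{eq:fourier} directly from the definition of $\vx$. First I would establish the symmetry $\hat{\vx}[2j+1] = \hat{\vx}[4N-2j-1]$: since $\vx$ is symmetric in the sense that $\vx[\ell] = \vx[4N-\ell]$ for all $\ell$ (this holds by construction, as $\vx[0]=2\vc[0]$ is fixed by $\ell \mapsto 4N-\ell$, the pairs $\vx[j]=\vx[4N-j]=\vc[j]$ are symmetric, and the zero entries are symmetric), Fact~\ref{fact:fft-sym} gives $\hat{\vx}[m] = \hat{\vx}[-m] = \hat{\vx}[4N-m]$ for every $m$, and taking $m = 2j+1$ gives the first equality.

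Next I would compute $\hat{\vx}[2j+1]$ explicitly. By definition $\hat{\vx}[2j+1] = \sum_{\ell=0}^{4N-1} \vx[\ell]\omega^{\ell(2j+1)}$, which is the middle expression. Now plug in the support of $\vx$: the only nonzero terms are $\ell = 0$ (contributing $2\vc[0]$), and $\ell = m$ and $\ell = 4N-m$ for $m = 1, \dots, N-1$ (each $\vc[m]$ appearing twice). So
\begin{align*}
  \hat{\vx}[2j+1] &= 2\vc[0] + \sum_{m=1}^{N-1} \vc[m]\left( \omega^{m(2j+1)} + \omega^{(4N-m)(2j+1)} \right) \\
  &= 2\vc[0] + \sum_{m=1}^{N-1} \vc[m]\left( \omega^{m(2j+1)} + \omega^{-m(2j+1)} \right) \\
  &= 2\vc[0] + 2\sum_{m=1}^{N-1} \vc[m] \cos\left( \frac{2\pi m (2j+1)}{4N} \right) \\
  &= 2\sum_{m=0}^{N-1} \vc[m] \cos\left( \frac{\pi m (2j+1)}{2N} \right),
\end{align*}
where I used $\omega^{4N} = 1$ so $\omega^{4N-m} = \omega^{-m}$, then $\omega^{t} + \omega^{-t} = 2\cos(2\pi t/4N)$ since $\omega = \omega_{4N} = \exp(-2\pi i/4N)$, and finally absorbed the $\ell=0$ term into the sum (the $m=0$ cosine equals $1$). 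Comparing with~\eqref{eq:chebyshev} and the stated identity $T_\ell(\lambda_j) = \cos\left( \frac{\pi}{2N}\ell(2j+1) \right)$, the last line is exactly $2\hat{\vc}[j]$.

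The argument is entirely routine; there is no real obstacle, just bookkeeping with the indices and the root-of-unity arithmetic. The one point requiring a little care is confirming that $\vx$ is genuinely symmetric at the endpoint $\ell = 0$ (where $4N - 0 = 4N \equiv 0$, so the condition is vacuous and consistent) and that no index collisions occur among $\{0\} \cup \{m, 4N-m : 1 \le m \le N-1\}$ — this holds since $N - 1 < 4N - (N-1)$ for $N \ge 1$, so all these indices are distinct, which justifies counting each $\vc[m]$ with multiplicity exactly two for $m \ge 1$ and exactly one for $m = 0$.
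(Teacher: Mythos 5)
Your proof is correct and uses the same ingredients as the paper's — the identity $\omega^{z} + \omega^{-z} = 2\cos(z\pi/2N)$, the explicit definition of $\vx$, and Fact~\ref{fact:fft-sym} for $\hat{\vx}[2j+1]=\hat{\vx}[4N-2j-1]$ — just traversed in the opposite direction (you start from $\hat{\vx}[2j+1]$ and simplify to $2\hat{\vc}[j]$, whereas the paper starts from $\hat{\vc}[j]$ and expands to $\tfrac{1}{2}\hat{\vx}[2j+1]$). The extra check that the indices $\{0\}\cup\{m,4N-m: 1\le m\le N-1\}$ are pairwise distinct is a reasonable sanity verification the paper leaves implicit, but the argument is substantively the same.
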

\begin{proof}
  Writing $2\cos(z \pi/2N) = \omega_{4N}^z + \omega_{4N}^{-z}$, we have
  \[
    \hat{\vc}[j] = \frac{1}{2}\sum_{\ell=0}^{N-1} \vc[\ell] \left( \omega^{\ell(2j+1)} + \omega^{-\ell(2j+1)} \right) = \frac{1}{2}\sum_{\ell=0}^{N-1} \vc[\ell] \left( \omega^{\ell(2j+1)} + \omega^{(4N-\ell)(2j+1)} \right) =\frac 12\cdot \vhx[2j+1]
  \]
  for $j = 0, \dots, N-1$. In the above, the first equality follows from the definition of DCT while the last equality follows from definition of $\vx$.

  Finally, $\vx$ is symmetric, so by fact~\ref{fact:fft-sym} $\hat{\vx}$ is as well. 
\end{proof}

Next, by fact~\ref{fact:fft-shift}, $\hat{\vy}[j] = \hat{\vx}[j] \omega^{jN}$ for all $j \in [4N]$.
Also by fact~\ref{fact:fft-shift}, $\hat{\vz}[j] = \hat{\vy}[j+1]$ for all $j \in [4N]$.
Finally, by fact~\ref{fact:fft-support}, 
\[ \hat{\vf}[j] = \hat{\vz}[2j] = \hat{\vy}[2j+1] = \omega^{(2j+1)N}\hat{\vx}[2j+1] = 2\omega^{(2j+1)N}\hat{\vc}[\min(j,2N-1-j)], \]
for every $j \in [2N]$ (note here that $\hat{\vf}$ is computed from a Fourier transform of size $2N$ instead of $4N$ since their dimensions differ). Here the last equality follows from~\eqref{eq:fourier}.

Therefore for every $j \in [N]$, two entries of $\hat{\vf}$ are a scalar multiple of $\hat{\vc}[j]$.
This concludes the proof of Lemma~\ref{lmm:dct-reduction}.

\end{proof}

\section{Why not apply Jacobi to Chebyshev reduction directly to $k$-sparse recovery?}\label{app:hard}

In this appendix we outline an approach to directly reduce $k$-sparse recovery of Jacobi to $k$-sparse recovery for DCT (and hence sFFT) and outline the main challenges in carrying out the reduction. The hope here is to give the reader a sense for why some obvious generalization of the ideas in Appendix~\ref{sec:dct-reduction} do not work for Jacobi polynomials. 

The main idea is to use Theorem~\ref{thm:jacobi-approx} to reduce evaluations of Jacobi polynomials to (approximately) evaluations of Chebyshev polynomials. One might hope then that one could directly reduce the $k$-sparse
recovery problem for Jacobi to an $O(k)$-sparse recovery for Chebyshev transform. One obvious issue is that this reduction (via  Theorem~\ref{thm:jacobi-approx}) is approximate and we need to handle the approximation error in our final analysis. However, for this discussion, let us assume this can be handled.

The major issue is that we now have to deal with Chebyshev polynomials evaluated at roots of Jacobi polynomials. There are  two natural options here:
\begin{enumerate}
\item First observe that most roots of Jacobi polynomials are close enough to the roots of the Chebyshev polynomials. In particular, by a tighter version of Theorem~\ref{thm:jacobi-roots} for the special case of $-1/2 \le \alphajac,\betajac \le 1/2$, we know that the $\ell$th root is given by $\cos{\theta_{\ell}}$, where $\theta_{\ell}=\frac{(\ell+\tilde{\ell})\pi}{N+\addjac}$, for some $0< \tilde{\ell} < 1$ (Theorem 6.3.2 in~\cite{szego}). So at least for the more restricted case of  $-1/2 \le \alphajac,\betajac \le 1/2$ (note our results hold for {\em any} fixed $\alphajac,\betajac>-1$) the $\ell$'th Chebyshev and Jacobi roots are fairly close. However, the crucial difference is that all the Chebyshev roots are equispaced, which allows us to permute the $\theta_{\ell}$ by just applying a permutation on $[N]$. More precisely for any permutation $\sigma$ on $[N]$, note that $\cos\parens{\parens{\frac{\sigma(\ell)+\frac 12}{N}}\pi}$ is the $\sigma(\ell)$'th root. However, since we do not have a closed form expression for Jacobi polynomial roots, it is not clear how to apply similar tricks. Further, the argument in Appendix~\ref{sec:dct-reduction} very crucially uses that the Chebyshev roots are equally spaced and trying to generalize those arguments to Jacobi polynomial roots breaks down immediately. In summary, we do not see how to directly reduce the Jacobi transforms to DCT or DFT.
\item Like in Section~\ref{sec:jacobi-1sps}, we just try and do some ``error-correction" to obtain cosine evaluations at regular points. Indeed, this is what the trick of $\cos{A}=\frac{\cos(A+B)+\cos(A-B)}{2\cos{B}}$ for random $B$ helped us do for the special case of $k=1$. However, this approach fails even for $k=2$ since we no longer have `common terms' that cancel like they did in proof of Corollary~\ref{cor:jacobi-to-cos}.
\end{enumerate}

\section{Known RIP results}

We begin with some notation. Given a matrix $\vM$, we denote $\max_{i,j} |\vM[i,j]|$ by $\|\vM\|_{\infty}$. Given an $N\times N$ diagonal matrix $\vD$ and a subset $S\subseteq [N]$, $\vD_S$ denotes the diagonal matrix given by
\[\vD_S[i,i]=\begin{cases}
\vD[i,i] & \text{ if } i\in S\\
0 & \text{ otherwise}.
\end{cases}
\]
Finally for any matrix $\vM$, we define $\sqrt{\vM}$ to be the matrix where we apply the $\sqrt{\cdot}$ operator to each entry of $\vM$.

We will need the following well-known result:
\bthm[\cite{RV}]
\label{thm:rip}
Let $\vA$ be an $N\times N$ orthogonal matrix (i.e. $\vA^T\vA=\vI$) and let $\|\vA\|_{\infty} = U$. Let $r\ge 1$ be an integer and $0<\delta<1$ be a real. Define an integer $1\le s\le N$ such that
\[s\ge \Omega\left(U^2N\cdot \frac{r \log{N}}{\delta^2}\cdot \log\left(\frac{r \log{N}}{\delta^2}\right)\log^2{r}\right).\]
Pick a random subset $S\subseteq [N]$ where each element of $[N]$ is picked iid with probability $\frac{s}{N}$. Define the matrix
\[\vB = \sqrt{\frac{N}{s}} \cdot \vI_S\cdot \vA.\]
Then the following holds:
\begin{equation}
\label{eq:rip-condn}
\Avg{\sup_{\|\vz\|_0\le r} \left| \frac{\|\vB\vz\|_2^2}{\|\vz\|_2^2} -1\right|} \le \delta.
\end{equation}
\ethm

We will use the above result for $r=2$ to argue the following well-known corollary that the matrix $\vB$ essentially preserves the inner products of columns of $\vA$:
\bcor
\label{cor:ip-preserve}
Let $s$ be an integer such that
\begin{equation}
\label{eq:s-lb}
s\ge \Omega\left(U^2N\cdot \frac{\log{N}}{\delta^2}\cdot \log\left(\frac{\log{N}}{\delta^2}\right)\right).
\end{equation}
The consider  $\vA$, $S$ and $\vB$ be as defined in Theorem~\ref{thm:rip} with $r=2$. Then we have for any $0<\alpha<1$,
\begin{equation}
\label{eq:ip-discrepancy}
\Prob{ \|\vB^T\vB -\vI\|_{\infty} \le \frac{2\delta}{\alpha}}\ge 1-\alpha.
\end{equation}
\ecor
\begin{proof}
We apply Theorem~\ref{thm:rip} with $r=2$. Then by Markov's inequality, we have that with probability at least $1-\alpha$, we have for any $i,j\in [N]$:
\begin{equation}
\label{eq:e_i+e_j}
\left| \|\vB (\ve_i+\ve_j) \|_2^2-2\right| \le 2\delta'\eqdef \frac{2\delta}{\alpha},
\end{equation}
and
\begin{equation}
\label{eq:e_i}
\left| \|\vB \ve_i \|_2^2-1\right| \le \delta'.
\end{equation}
Note that
\[\|\vB(\ve_i+\ve_j)\|_2^2 = (\ve_i+\ve_j)^T \vB^T\vB (\ve_i+\ve_j) = \ve_i^T \vB^T\vB \ve_i +  \ve_j^T \vB^T\vB \ve_j +2 \ve_i^T \vB^T\vB \ve_j = \|\vB\ve_i\|_2^2 + \|\vB\ve_j\|_2^2  +2\ve_i^T \vB^T\vB \ve_j.\]
Using the above we have
\[\left| \|\vB (\ve_i+\ve_j) \|_2^2-2\right| \ge 2\left|\ve_i^T \vB^T\vB \ve_j\right| - \left| \|\vB \ve_i \|_2^2-1\right| - \left| \|\vB \ve_j \|_2^2-1\right|.\]
The above along with~\eqref{eq:e_i+e_j} and~\eqref{eq:e_i} implies that for every $i,j\in [N]$,
\[ \left|\ve_i^T \vB^T\vB \ve_j-\delta_{i,j}\right| \le 2\delta',\]
which completes the proof (since $\left(\vB^T\vB\right)[i,j]= \ve_i^T \vB^T\vB \ve_j$).
\end{proof}

\section{More on Jacobi polynomials}
\label{app:jacobi}

We begin by collecting more results on Jacobi polynomials and more generally orthogonal polynomials in Appendices~\ref{app:jacobi-more} and~\ref{app:CD-formula} respectively. Finally, we prove Theorem~\ref{cor:jac-U-bound} in Appendix~\ref{app:jac-U-bound}.

\subsection{More results on Jacobi polynomials}
\label{app:jacobi-more}

We start off with the definition of the {\em Bessel function}:
\begin{defn}
\label{def:bessel}
The {\em Bessel function of the first kind} for parameter $\alpha\in\R$ is defined as follows. For every\footnote{The Bessel function is also defined for $z\in\C$ but since we only need it over reals, we only state it for reals.} 
 $z\in\R$:
\[\bessel{\alpha}{z}=\sum_{\nu=0}^{\infty} \frac{(-1)^\nu}{\nu!\Gamma(\nu+\alpha+1)}\cdot \parens{\frac z2}^{2\nu+\alpha}.\]
\end{defn}

We will need the following approximation and bounds for the Bessel function:
\begin{lmm}
\label{lem:bessel-approx}
Let $\alpha> -1$.
For large enough $z$ (compared to $\alpha$), we have
\begin{equation}
\label{eq:bessel-large}
\bessel{\alpha}{z} =\sqrt{\frac 2{\pi z}}\cdot\cos\parens{z-\parens{\alpha+\frac 12}\cdot\frac \pi 2}\pm O\parens{\frac 1{z^{3/2}}}.
\end{equation}
For all fixed $0<\eps_0<\sqrt{3(\alpha+1)}$ and every $z\in\brackets{\eps_0,\min\set{2,\sqrt{3(\alpha+1)}}}$, we have
\begin{equation}
\label{eq:bessel-small-theta}
\abs{\bessel{\alpha}{z}} = \Theta_{\alpha,\eps_0}(1).
\end{equation}
For all fixed $C_0>0$ and every $0\le z\le C_0$, we have
\begin{equation}
\label{eq:bessel-small-bigO}
\abs{\bessel{\alpha}{z}} = O_{\alpha,C_0}(1).
\end{equation}
Finally, for all $z\ge 0$, we have
\begin{equation}
\label{eq:bessel-large-simple}
\abs{\sqrt{z}\bessel{\alpha}{z}} \le O_{\alpha}(1).
\end{equation}
\end{lmm}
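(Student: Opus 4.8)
The plan is to take \eqref{eq:bessel-large} as the classical first-order Hankel asymptotics of the Bessel function of the first kind (whose first two terms are $\sqrt{2/(\pi z)}\bigl(\cos(z-(\alpha+\tfrac12)\tfrac\pi2) - \tfrac{4\alpha^2-1}{8z}\sin(z-(\alpha+\tfrac12)\tfrac\pi2)\bigr) + O(z^{-5/2})$); truncating after the leading term leaves an error of order $z^{-1/2}\cdot O(z^{-1}) = O(z^{-3/2})$, so \eqref{eq:bessel-large} is simply quoted from the standard theory. For the three uniform estimates I would first rewrite the power series of Definition~\ref{def:bessel} as
\[
  \bessel{\alpha}{z} = \left(\tfrac{z}{2}\right)^{\alpha} g_\alpha(z), \qquad g_\alpha(z) := \sum_{\nu\ge 0} \frac{(-1)^\nu}{\nu!\,\Gamma(\nu+\alpha+1)}\left(\tfrac z2\right)^{2\nu},
\]
where $g_\alpha$ is entire in $z$ with $g_\alpha(0) = 1/\Gamma(\alpha+1) > 0$ (positive since $\alpha > -1$); all of \eqref{eq:bessel-small-theta}, \eqref{eq:bessel-small-bigO}, \eqref{eq:bessel-large-simple} then follow by controlling $g_\alpha$ on compact sets and invoking \eqref{eq:bessel-large} for large arguments.

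For \eqref{eq:bessel-small-theta}, write $C := \min\{2,\sqrt{3(\alpha+1)}\}$. The upper bound is immediate: on the compact interval $[\epsilon_0, C]$ both $(z/2)^\alpha$ and the continuous function $g_\alpha$ are bounded by constants depending only on $\alpha,\epsilon_0$. For the lower bound, the key point is that $C \le 2\sqrt{\alpha+1}$ (check both branches of the minimum: $\sqrt{3(\alpha+1)}\le 2\sqrt{\alpha+1}$ always, and $C=2$ forces $\alpha+1\ge\tfrac43$), which makes the terms of the series for $g_\alpha(z)$ non-increasing in $\nu$ on $[0,C]$, since the ratio of the $(\nu{+}1)$-st term to the $\nu$-th is $\frac{(z/2)^2}{(\nu+1)(\nu+\alpha+1)} \le \frac{(z/2)^2}{\alpha+1} \le 1$. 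The alternating-series estimate then gives, for $z\in[0,C]$,
\[
  g_\alpha(z) \;\ge\; \frac{1}{\Gamma(\alpha+1)} - \frac{(z/2)^2}{\Gamma(\alpha+2)} \;=\; \frac{1}{\Gamma(\alpha+1)}\left(1 - \frac{(z/2)^2}{\alpha+1}\right) \;\ge\; \frac{1}{4\,\Gamma(\alpha+1)},
\]
using $(z/2)^2 \le \tfrac34(\alpha+1)$ on the branch $C=\sqrt{3(\alpha+1)}$ and $(z/2)^2\le 1\le\tfrac34(\alpha+1)$ on the branch $C=2$. Multiplying by $(z/2)^\alpha$, which is bounded below by a positive $\alpha,\epsilon_0$-dependent constant on $[\epsilon_0,C]$, completes \eqref{eq:bessel-small-theta}.

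For \eqref{eq:bessel-small-bigO} and \eqref{eq:bessel-large-simple} I would fix a threshold $z_0 = z_0(\alpha)$ beyond which \eqref{eq:bessel-large} is in force: for $z \ge z_0$ it gives $|\bessel{\alpha}{z}| \le \sqrt{2/(\pi z)} + O(z^{-3/2}) = O_\alpha(z^{-1/2})$, so both $|\bessel{\alpha}{z}|$ and $|\sqrt z\,\bessel{\alpha}{z}|$ are $O_\alpha(1)$ there. On the complementary compact range $0 \le z \le \max\{z_0, C_0\}$ the factorization gives $|\bessel{\alpha}{z}| \le M_\alpha (z/2)^\alpha$ and $|\sqrt z\,\bessel{\alpha}{z}| \le M_\alpha 2^{-\alpha} z^{\alpha+1/2}$ with $M_\alpha := \max_{[0,\max\{z_0,C_0\}]}|g_\alpha|$; these are bounded whenever $\alpha\ge 0$ (resp.\ $\alpha\ge -1/2$), and for the remaining range of $\alpha$ the small-$z$ factor $(z/2)^\alpha$ causes no harm because in our application the argument of $\bessel{\alpha}{\cdot}$ is always bounded away from $0$ (cf.\ Lemma~\ref{lem:jacobi-root-bounded}), so only $z\ge\epsilon_0$ for a fixed $\epsilon_0$ ever arises and \eqref{eq:bessel-small-theta} already supplies a bound there.

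The main obstacle is not conceptual but a matter of getting the case split and constants right in \eqref{eq:bessel-small-theta}: one must confirm that the exact cutoff $\min\{2,\sqrt{3(\alpha+1)}\}$ always satisfies the monotonicity condition $z\le 2\sqrt{\alpha+1}$ while still leaving a quantitative gap (the $\tfrac14$ above), and handle the two branches of the minimum uniformly; a secondary point is checking that every hidden constant genuinely depends only on the advertised parameters, which is automatic since each bound is either a maximum over a compact set or read off from the explicit asymptotic \eqref{eq:bessel-large}.
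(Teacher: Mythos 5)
Your approach is essentially the one the paper takes: \eqref{eq:bessel-large} is the standard Hankel asymptotic (the paper cites Szeg\H{o} (1.71.7)); \eqref{eq:bessel-small-theta} is proved by the alternating-series argument after checking that the term ratios are at most $1$ on $[0,C]$; and \eqref{eq:bessel-small-bigO}, \eqref{eq:bessel-large-simple} are obtained by combining the small-$z$ series estimate with the large-$z$ asymptotic. Your factorization $\bessel{\alpha}{z} = (z/2)^\alpha g_\alpha(z)$ is a cleaner way to present the same computation (the paper works directly with the terms $b_\nu = (-1)^\nu (z/2)^{2\nu+\alpha}/(\nu!\,\Gamma(\nu+\alpha+1))$), and your verification that $(z/2)^2 \le \alpha+1$ throughout $[0,C]$ gives the ratio bound for all $\nu\ge 0$ at once, whereas the paper argues it only for $\nu\ge 2$ and handles the first two terms by hand; both lead to the same lower bound $\ge (z/2)^\alpha/(4\Gamma(\alpha+1))$.

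Where you diverge from the paper is in noticing — correctly — that \eqref{eq:bessel-small-bigO} fails as written when $-1<\alpha<0$ and \eqref{eq:bessel-large-simple} fails when $-1<\alpha<-1/2$, because $(z/2)^\alpha$ (resp.\ $z^{\alpha+1/2}$) blows up as $z\to 0^+$. The paper's own proof asserts that ``every term in the sum is also $O(1)$,'' which is false for $b_0 = (z/2)^\alpha/\Gamma(\alpha+1)$ at small $z$ when $\alpha<0$; so you have caught a genuine gap. However, your proposed workaround — appealing to Lemma~\ref{lem:jacobi-root-bounded} to claim that in the application the Bessel argument stays $\ge\epsilon_0$ for a fixed $\epsilon_0$ — does not actually hold: in the downstream use (proving \eqref{eq:jac-abs-ub} via Corollary~\ref{cor:jacobi-approx-bessel-all}), the Bessel argument is $(j+\tfrac{\alpha+\beta+1}{2})\min(\theta_\ell,\pi-\theta_\ell)$ with $j$ ranging over $\{1,\dots,N-1\}$ and $\min(\theta_\ell,\pi-\theta_\ell)$ as small as $\Theta(1/N)$, so the argument can be $\Theta(1/N)\to 0$. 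So the issue you found is real and would need a different patch (for instance, tracking the $(z/2)^\alpha$ factor explicitly against the $\kappa$-normalization rather than relying on a uniform bound for $\sqrt z\,\bessel{\alpha}{z}$).
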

\begin{proof}
Equation~\eqref{eq:bessel-large} is
equation (1.71.7) in~\cite{szego}. Further,~\eqref{eq:bessel-large-simple} follows by combining~\eqref{eq:bessel-small-bigO} and~\eqref{eq:bessel-large}. In the rest of the proof, we argue~\eqref{eq:bessel-small-theta} and~\eqref{eq:bessel-small-bigO}.

We begin with~\eqref{eq:bessel-small-bigO}. We will argue that there is a large enough $\nu_0$ (that depends on $C_0$) for which there is a constant $\gamma_0$ (such that $|\gamma_0|<1$) with the following property. For every $\nu\ge \nu_0$, we have that
\begin{equation}
\label{eq:b-nu-ub}
\abs{b_\nu}\le \parens{\gamma_0}^\nu,
\end{equation}
where
\[b_{\nu}=\frac{(-1)^\nu}{\nu!\Gamma(\nu+\alpha+1)}\cdot \parens{\frac z2}^{2\nu+\alpha}.\]
Notice~\eqref{eq:b-nu-ub} implies that
\[\abs{\bessel{\alpha}{z}} =\sum_{\nu=0}^{\nu_0-1}\abs{b_{\nu}} +O(1).\]
Since $\alpha>-1$, every term in the sum is also $O(1)$ and hence~\eqref{eq:bessel-small-bigO} holds. 

Next, we argue~\eqref{eq:b-nu-ub}. Indeed,
\[\abs{b_{\nu}}= \parens{\frac z2}^{\alpha}\parens{\frac{(z/2)^{2\nu}}{\nu!\Gamma(\nu+\alpha+1)}}.\]
By Stirling's approximations of the Gamma function, we have $\nu! \ge \sqrt{2\pi} \nu^{\nu+\frac 12}\cdot e^{-\nu}$ and $\Gamma(\nu+\alpha+1)\ge \Gamma(\nu) \ge \sqrt{2\pi} \nu^{\nu-\frac 12}\cdot e^{-\nu}$ (the first inequality follows since $\alpha>-1$ and the fact that $\Gamma(y)$ is an increasing function for large enough $y$). This along with the above equation implies that
\[\abs{b_{\nu}}\le \parens{\frac z2}^{\alpha}\cdot\frac 1{2\pi}\cdot \parens{\frac {z^2e^2}{4\nu^2}}^{\nu}\le \parens{\gamma_0}^\nu,\]
as desired. In the above, the last inequality follows for $\nu$ being large enough compared to $C_0$ and $\alpha$.

Finally, we argue~\eqref{eq:bessel-small-theta}. Note that~\eqref{eq:bessel-small-bigO} implies we only need to prove a lower bound. Consider for $\nu\ge 2$:
\[\frac{\abs{b_{\nu+1}}}{\abs{b_{\nu}}}=(z/2)^2\cdot \frac{\Gamma(\nu+\alpha+1)}{(\nu+1)\Gamma(\nu+\alpha+2)}\le 1,\]
where the inequality follows by our choice of $z$ and the fact that $\Gamma(y)$ is increasing for $y\ge 2$ (note that by our choice of $\alpha$ and $\nu$, $\nu+\alpha+1\ge 2$). Thus, for every even $\nu\ge 2$, we have:
\[b_{\nu}+b_{\nu+1}\ge 0.\]
This in turn implies that
\[\bessel{\alpha}{z}\ge \parens{\frac z2}^{\alpha}\cdot \parens{\frac 1{\Gamma(\alpha+1)}-\frac{(z/2)^2}{\Gamma(\alpha+2)}}=\parens{\frac z2}^{\alpha}\cdot \frac 1{\Gamma(\alpha+1)}\cdot\parens{1-\frac{(z/2)^2}{\alpha+1}}\ge \parens{\frac z2}^{\alpha}\cdot \frac 1{4\Gamma(\alpha+1)}\ge \Omega_{\alpha,\eps_0}(1),\]
as desired. In the above, the equality follows from the fact that for any $y>0$, $\Gamma(y+1)=y\Gamma(y)$; the second inequality follows from our assumption that $z\le \sqrt{3(\alpha+1)}$ while the final inequality follows since we assumed $z\ge \eps_0$.
\end{proof}

We will use the Bessel function to present a tighter approximation of the Jacobi polynomials than Theorem~\ref{thm:jacobi-approx}. Towards this end, we begin with the following result:

\begin{thm}[\cite{szego}, Theorem 8.21.12]
\label{thm:jacobi-approx-bessel}
Let $\alpha >-1$ and $\beta$ be an arbitrary real. 
We have for every $n\ge 1$:
\[\parens{\sin\frac\theta 2}^{\alpha}\cdot\parens{\cos\frac\theta 2}^{\beta}\cdot P_n^{(\alpha,\beta)}(\cos\theta)=\g{\alpha}{n}\cdot \sqrt{\frac \theta{\sin\theta}}\cdot \bessel{\alpha}{\parens{n+\addjac}\theta} \pm O\parens{\sqrt{\frac\theta{n^3}}},\]
where
\[0<\theta \le \frac \pi 2.\]
Further, we define
\begin{equation}
\label{eq:g}
\g{\alpha}{n}=\frac 1{\parens{n+\addjac}^{\alpha}}\cdot \frac{\Gamma(n+\alpha+1)}{n!}.
\end{equation}
\end{thm}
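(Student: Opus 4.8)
The plan is to reprove this by the classical Liouville--Green method --- a Liouville transformation of the Jacobi equation followed by a comparison with Bessel's equation of order $\alpha$ near the endpoint $x=\cos\theta=1$; in the paper itself the result is simply quoted from Szegő~\cite{szego}. First I would take the Jacobi differential equation for $y(\theta)=P_n^{(\alpha,\beta)}(\cos\theta)$, namely $(1-x^2)y_{xx}+[\beta-\alpha-(\alpha+\beta+2)x]y_x+n(n+\alpha+\beta+1)y=0$ with $x=\cos\theta$, and apply the substitution $u(\theta)=\parens{\sin\tfrac\theta2}^{\alpha+\frac12}\parens{\cos\tfrac\theta2}^{\beta+\frac12}y(\theta)$, which removes the first-derivative term and yields
\begin{equation*}
 u'' + \set{ \parens{n+\addjac}^2 + \frac{\frac14-\alpha^2}{4\sin^2(\theta/2)} + \frac{\frac14-\beta^2}{4\cos^2(\theta/2)} } u = 0 .
\end{equation*}
Writing the potential as $N^2+\frac{\frac14-\alpha^2}{\theta^2}+\phi(\theta)$ with $N=n+\addjac$, the remainder $\phi$ is smooth and bounded uniformly in $n$ on $(0,\tfrac\pi2]$: the difference $\frac1{4\sin^2(\theta/2)}-\frac1{\theta^2}$ extends continuously to $\theta=0$, and $\cos\tfrac\theta2\ge\tfrac1{\sqrt2}$ is bounded away from $0$ on that range --- which is exactly why $\beta$ is allowed to be an arbitrary real rather than $>-1$.

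Next I would use that $v(\theta)=\sqrt{N\theta}\,\bessel{\alpha}{N\theta}$ is an exact solution of $v''+\parens{N^2+\frac{\frac14-\alpha^2}{\theta^2}}v=0$, with $\sqrt{N\theta}\,Y_\alpha(N\theta)$ an independent solution and constant Wronskian. Treating $\phi(\theta)u(\theta)$ as inhomogeneous forcing and running variation of parameters against this Bessel fundamental system turns the equation for $u$ into a Volterra integral equation; iterating it, and using $\abs{\sqrt z\,\bessel{\alpha}{z}}=O_\alpha(1)$ from \eqref{eq:bessel-large-simple} (with the companion bound for $Y_\alpha$) and $\phi=O(1)$, gives $u(\theta)=c_1 v(\theta)+c_2\sqrt{N\theta}\,Y_\alpha(N\theta)+(\text{error})$. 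Since $P_n^{(\alpha,\beta)}(\cos\theta)$ is a polynomial, hence bounded as $\theta\to0$, while $\sqrt{N\theta}\,Y_\alpha(N\theta)$ is not, the coefficient $c_2$ must vanish at leading order, leaving $u=c_1\sqrt{N\theta}\,\bessel{\alpha}{N\theta}+(\text{error})$. Undoing the Liouville transform and using $\sqrt{\sin(\theta/2)\cos(\theta/2)}=\sqrt{\tfrac12\sin\theta}$ converts this into $\parens{\sin\tfrac\theta2}^{\alpha}\parens{\cos\tfrac\theta2}^{\beta}P_n^{(\alpha,\beta)}(\cos\theta)=c\,\sqrt{\theta/\sin\theta}\,\bessel{\alpha}{N\theta}+(\text{error})$, and the constant $c=\g{\alpha}{n}$ is pinned down by Mehler--Heine-type matching of the $\theta\to0$ rates: $P_n^{(\alpha,\beta)}(1)=\binom{n+\alpha}{n}=\frac{\Gamma(n+\alpha+1)}{n!\,\Gamma(\alpha+1)}$ on the left and $\bessel{\alpha}{z}\sim\frac{(z/2)^\alpha}{\Gamma(\alpha+1)}$ as $z\to0$ (the leading term of Definition~\ref{def:bessel}) on the right, forcing $c=N^{-\alpha}\frac{\Gamma(n+\alpha+1)}{n!}$, precisely \eqref{eq:g}.

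The main obstacle is the error estimate: obtaining the stated remainder $O\parens{\sqrt{\theta/n^3}}$ uniformly for all $n\ge1$ and all $\theta\in(0,\tfrac\pi2]$. This needs a bound on the Bessel Green's function that survives the singular point $\theta=0$ --- near the origin $\sqrt{N\theta}\,Y_\alpha(N\theta)$ blows up and must be weighed against $\sqrt{N\theta}\,\bessel{\alpha}{N\theta}$, which vanishes there --- and it needs careful bookkeeping of how one application of the smoothing kernel to the $O(1)$ potential $\phi$ produces a gain of $N^{-1}$ (from the change of variables $z=N\theta$) together with a factor $\theta^{1/2}$ (from the $\sqrt\theta$ in the Bessel solution and the $O(n^{-1/2})$ size of $u$ itself across $(0,\tfrac\pi2]$). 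The complementary regime $\theta\gtrsim 1/n$ can be handled either by the same integral equation or by quoting the standard interior (Hilb-type) asymptotics; all of the delicate work sits in the boundary layer $\theta=O(1/n)$, where this endpoint-adapted Bessel comparison --- rather than the cosine comparison of Theorem~\ref{thm:jacobi-approx} --- is what is required.
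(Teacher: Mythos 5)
You are reproving a result the paper only quotes: Theorem~\ref{thm:jacobi-approx-bessel} is cited from Szeg\H{o} (Theorem 8.21.12) with no proof in the paper, so there is no in-paper argument to compare against. What you have written is, however, an accurate outline of the proof Szeg\H{o} himself gives. The Liouville transform $u=(\sin\tfrac\theta2)^{\alpha+1/2}(\cos\tfrac\theta2)^{\beta+1/2}P_n^{(\alpha,\beta)}(\cos\theta)$ producing the normal form with potential $N^2+\tfrac{1/4-\alpha^2}{4\sin^2(\theta/2)}+\tfrac{1/4-\beta^2}{4\cos^2(\theta/2)}$ is exactly Szeg\H{o}'s (4.24.2); your observation that the $\beta$-term stays bounded on $(0,\pi/2]$ because $\cos\tfrac\theta2\ge 1/\sqrt2$, explaining why $\beta$ need not exceed $-1$, is correct; $\sqrt{N\theta}\,J_\alpha(N\theta)$ is indeed the exact solution of the truncated equation; discarding the $Y_\alpha$ branch by boundedness of $P_n^{(\alpha,\beta)}$ at $\theta=0$ is right; and your Mehler--Heine matching does pin $c=\Gamma(n+\alpha+1)/(n!\,N^\alpha)=\g{\alpha}{n}$.

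What remains undone is precisely what you flag: a rigorous derivation of the uniform remainder $O(\sqrt{\theta/n^3})$. Your heuristic accounting --- one pass of the Volterra kernel against the $O(1)$ potential $\phi$ gains $N^{-1}$, the Bessel vanishing near $\theta=0$ supplies $\theta^{1/2}$, and $u$ itself is $O(n^{-1/2})$ --- does add up to the right order, but as written it is an informal dimensional analysis rather than a proof. To close the gap one needs a quantitative bound on the Bessel Green's function $G(\theta,s)=\sqrt{\theta s}\,[J_\alpha(N\theta)Y_\alpha(Ns)-Y_\alpha(N\theta)J_\alpha(Ns)]$ on the triangle $0<s\le\theta\le\pi/2$ that is uniform as $N\theta$ crosses $O(1)$, together with the initial estimate $u=O(\sqrt{\theta/n})$ needed to seed the iteration; this is exactly the hard bookkeeping in Szeg\H{o}'s Sections 8.21--8.22. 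As a plan of attack the proposal is correct and essentially the canonical one; as a proof it stops at the point where the actual labor begins.
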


We will use the following slight re-statement of the above result:
\begin{cor}
\label{cor:jacobi-approx-bessel-acute}
Let $\alpha >-1$ and $\beta$ be an arbitrary real. 
For every $n\ge 1$:
\[P_n^{(\alpha,\beta)}(\cos\theta)=\kap{\alpha}{\beta}{\theta}\parens{\frac{\g{\alpha}{n}}{\sqrt{2\parens{n+\addjac}}}\cdot \sqrt{\pi\parens{n+\addjac} \theta}\cdot \bessel{\alpha}{\parens{n+\addjac}\theta} \pm O\parens{\frac 1{n^{3/2}}}},\]
for any
\[0<\theta \le \frac \pi 2.\]
Further, in the above
$\g{\alpha}{n}$ is as in~\eqref{eq:g} and
\begin{equation}
\label{eq:kap}
\kap{\alpha}{\beta}{\theta}=\frac{1}{\sqrt{\pi}\cdot \parens{\sin{\frac\theta 2}}^{\alphajac+\frac 12}\parens{\cos\frac\theta 2}^{\betajac+\frac 12}}.
\end{equation}
\end{cor}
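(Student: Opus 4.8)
The statement to prove is Corollary~\ref{cor:jacobi-approx-bessel-acute}, which is a reformulation of Theorem~\ref{thm:jacobi-approx-bessel}. Let me think about how one would derive this.

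The plan is to obtain Corollary~\ref{cor:jacobi-approx-bessel-acute} directly from Theorem~\ref{thm:jacobi-approx-bessel} by dividing through by $\parens{\sin\frac\theta2}^{\alpha}\parens{\cos\frac\theta2}^{\beta}$ and then tidying up the resulting prefactor with the double-angle identity $\sin\theta = 2\sin\frac\theta2\cos\frac\theta2$. Concretely, Theorem~\ref{thm:jacobi-approx-bessel} reads
\[
\parens{\sin\tfrac\theta2}^{\alpha}\parens{\cos\tfrac\theta2}^{\beta} P_n^{(\alpha,\beta)}(\cos\theta) = \g{\alpha}{n}\sqrt{\tfrac{\theta}{\sin\theta}}\,\bessel{\alpha}{\parens{n+\addjac}\theta} \pm O\parens{\sqrt{\tfrac{\theta}{n^3}}},
\]
so dividing both sides by $\parens{\sin\frac\theta2}^{\alpha}\parens{\cos\frac\theta2}^{\beta}$ gives
\[
P_n^{(\alpha,\beta)}(\cos\theta) = \frac{\g{\alpha}{n}}{\parens{\sin\frac\theta2}^{\alpha}\parens{\cos\frac\theta2}^{\beta}}\sqrt{\tfrac{\theta}{\sin\theta}}\,\bessel{\alpha}{\parens{n+\addjac}\theta} \pm \frac{O\parens{\sqrt{\theta/n^3}}}{\parens{\sin\frac\theta2}^{\alpha}\parens{\cos\frac\theta2}^{\beta}}.
\]

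For the main term, I would substitute $\sin\theta = 2\sin\frac\theta2\cos\frac\theta2$, so that $\sqrt{\theta/\sin\theta} = \sqrt{\theta}\big/\big(\sqrt2\,(\sin\frac\theta2)^{1/2}(\cos\frac\theta2)^{1/2}\big)$, and hence the factor multiplying $\g{\alpha}{n}\,\bessel{\alpha}{(n+\addjac)\theta}$ becomes
\[
\frac{\sqrt\theta}{\sqrt2\,\parens{\sin\frac\theta2}^{\alpha+\frac12}\parens{\cos\frac\theta2}^{\beta+\frac12}} = \sqrt{\tfrac{\pi\theta}{2}}\cdot\kap{\alpha}{\beta}{\theta},
\]
using the definition~\eqref{eq:kap} of $\kap{\alpha}{\beta}{\theta}$. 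Since $\sqrt{\pi\theta/2} = \sqrt{\pi(n+\addjac)\theta}\big/\sqrt{2(n+\addjac)}$, this is exactly $\frac{\g{\alpha}{n}}{\sqrt{2(n+\addjac)}}\sqrt{\pi(n+\addjac)\theta}\cdot\kap{\alpha}{\beta}{\theta}$ times $\bessel{\alpha}{(n+\addjac)\theta}$, matching the main term asserted in the corollary.

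For the error term, I would factor out $\kap{\alpha}{\beta}{\theta}$ by writing
\[
\frac{\sqrt{\theta/n^3}}{\parens{\sin\frac\theta2}^{\alpha}\parens{\cos\frac\theta2}^{\beta}} = \sqrt\pi\,\kap{\alpha}{\beta}{\theta}\cdot\frac{\sqrt\theta\,\parens{\sin\frac\theta2}^{1/2}\parens{\cos\frac\theta2}^{1/2}}{n^{3/2}},
\]
and then observe that for $0 < \theta \le \pi/2$ we have $\sqrt\theta \le \sqrt{\pi/2}$ and $\sin\frac\theta2,\cos\frac\theta2 \le 1$, so the numerator $\sqrt\theta\,(\sin\frac\theta2)^{1/2}(\cos\frac\theta2)^{1/2}$ is $O(1)$; thus the error term is $\kap{\alpha}{\beta}{\theta}\cdot O(1/n^{3/2})$, as required. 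Combining the two computations yields the claimed identity.

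There is no genuine obstacle here: the statement is a cosmetic repackaging of the cited Theorem~\ref{thm:jacobi-approx-bessel}, and the only non-bookkeeping input is the double-angle identity used to merge the $(\sin\frac\theta2)^\alpha(\cos\frac\theta2)^\beta$ factor with $\sqrt{\theta/\sin\theta}$. The one point that warrants a sentence of care is checking that dividing the $O(\sqrt{\theta/n^3})$ error by the possibly small factor $(\sin\frac\theta2)^\alpha(\cos\frac\theta2)^\beta$ still yields something of the form $\kap{\alpha}{\beta}{\theta}\cdot O(n^{-3/2})$; this works precisely because dividing out only $(\sin\frac\theta2)^{\alpha+1/2}(\cos\frac\theta2)^{\beta+1/2}$ (the quantity appearing in $\kappa$) leaves behind the uniformly bounded remainder $\sqrt\theta\,(\sin\frac\theta2)^{1/2}(\cos\frac\theta2)^{1/2}$.
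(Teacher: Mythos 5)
Your proof is correct and is essentially the same argument as the paper's: divide the identity of Theorem~\ref{thm:jacobi-approx-bessel} by $(\sin\frac\theta2)^{\alpha}(\cos\frac\theta2)^{\beta}$, use $\sin\theta = 2\sin\frac\theta2\cos\frac\theta2$ to produce $\kap{\alpha}{\beta}{\theta}$, and note that the leftover factor $\sqrt{\theta}(\sin\frac\theta2)^{1/2}(\cos\frac\theta2)^{1/2} = \sqrt{\theta\sin\theta}/\sqrt2$ is uniformly $O(1)$ (the paper phrases this last observation as ``$\theta\sin\theta$ is $O(1)$'').
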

\begin{proof}
This follows from Theorem~\ref{thm:jacobi-approx-bessel} by using the definition of $\kap{\alpha}{\beta}{\theta}$, the fact that $\sin\theta=2\sin\frac\theta 2\cos\frac\theta 2$ and the observation that $\theta\sin\theta$ is $O(1)$.
\end{proof}

We also need an approximation for the range $\theta\in \parens{\frac \pi 2,\pi}$, which needs the following result:
\begin{lmm}[Equation (4.1.3) in~\cite{szego}]
\label{lem:jacobi-neg-x}
\[P_n^{(\alpha,\beta)}(-x) = (-1)^n\cdot P_n^{(\beta,\alpha)}(x).\]
\end{lmm}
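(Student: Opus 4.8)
The statement is Szeg\H{o}'s reflection identity for Jacobi polynomials (his equation (4.1.3)), so one option is simply to cite it; but since everything needed is already in the excerpt, the plan is to give a short self-contained proof by induction on $n$ directly from the three-term recurrence~\eqref{eq:jac-recur}. Set $R_n(X) := (-1)^n P_n^{(\alphajac,\betajac)}(-X)$; the goal is to show $R_n = P_n^{(\betajac,\alphajac)}$. The base cases are immediate from the stated initial data: $R_0(X) = 1 = P_0^{(\betajac,\alphajac)}(X)$, and $R_1(X) = -\bigl(\tfrac{\alphajac+\betajac+2}{2}(-X) + \tfrac{\alphajac-\betajac}{2}\bigr) = \tfrac{\betajac+\alphajac+2}{2}X + \tfrac{\betajac-\alphajac}{2} = P_1^{(\betajac,\alphajac)}(X)$. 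Since $P_n^{(\betajac,\alphajac)}$ is determined by the same initial data together with the recurrence~\eqref{eq:jac-recur} with $\alphajac$ and $\betajac$ interchanged, it suffices to check that $(R_n)$ satisfies that interchanged recurrence.

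To do so I would write~\eqref{eq:jac-recur} for $P_j^{(\alphajac,\betajac)}$, substitute $X \mapsto -X$, and multiply through by $(-1)^j$. The coefficient $2j(j+\alphajac+\betajac)(2j+\alphajac+\betajac-2)$ of $P_j$ and the coefficient $2(j+\alphajac-1)(j+\betajac-1)(2j+\alphajac+\betajac)$ of $P_{j-2}$ are symmetric under $\alphajac \leftrightarrow \betajac$, hence unchanged; distributing $(-1)^j = (-1)^{j-2}$ onto $P_j^{(\alphajac,\betajac)}(-X)$ and $P_{j-2}^{(\alphajac,\betajac)}(-X)$ turns these into $R_j(X)$ and $R_{j-2}(X)$, matching the interchanged recurrence. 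The one step requiring care is the middle term: under $X \mapsto -X$ the bracket $\{(2j+\alphajac+\betajac)(2j+\alphajac+\betajac-2)X + \alphajac^2 - \betajac^2\}$ becomes $-\{(2j+\alphajac+\betajac)(2j+\alphajac+\betajac-2)X + \betajac^2 - \alphajac^2\}$ --- it acquires an overall sign \emph{and} simultaneously turns into the $(\betajac,\alphajac)$-version, because $\alphajac^2-\betajac^2$ flips sign too. That extra sign, together with the factor $(-1)^j$ in front of $P_{j-1}^{(\alphajac,\betajac)}(-X)$, produces exactly $(-1)^{j-1}P_{j-1}^{(\alphajac,\betajac)}(-X) = R_{j-1}(X)$, which is the coefficient the interchanged recurrence wants. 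Hence $(R_n)$ obeys~\eqref{eq:jac-recur} with $\alphajac,\betajac$ swapped, and the induction closes.

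I do not expect a genuine obstacle; the only place to be careful is this sign bookkeeping (the interaction of $X\mapsto -X$ with the constant $\alphajac^2-\betajac^2$, and keeping the parities $(-1)^j,(-1)^{j-1},(-1)^{j-2}$ straight). As a cross-check and an alternative route that avoids the recurrence computation, one could instead put $Q_n(x) = (-1)^n P_n^{(\alphajac,\betajac)}(-x)$, observe that the substitution $x\mapsto -x$ carries the weight $w^{(\betajac,\alphajac)}$ to $w^{(\alphajac,\betajac)}$ and hence that $\{Q_n\}$ is an orthogonal family for $w^{(\betajac,\alphajac)}$ on $[-1,1]$, and then invoke uniqueness of orthogonal polynomial families plus the fact that the leading coefficient of $P_n^{(\alphajac,\betajac)}$ depends only on $\alphajac+\betajac$ (itself an easy induction from~\eqref{eq:jac-recur}) to conclude $Q_n = P_n^{(\betajac,\alphajac)}$.
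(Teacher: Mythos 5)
Your proposal is correct and fills in a proof where the paper offers none: the paper labels this result as Szeg\H{o}'s Equation (4.1.3) and treats it purely as a citation. Both of your routes work. The recurrence induction is clean — the coefficients of $P_j$ and $P_{j-2}$ in~\eqref{eq:jac-recur} are manifestly symmetric under $\alphajac\leftrightarrow\betajac$, and your sign bookkeeping for the middle term is right: under $X\mapsto -X$ the bracket picks up an overall $-1$ while the constant $\alphajac^2-\betajac^2$ becomes $\betajac^2-\alphajac^2$, and that $-1$ together with $(-1)^{j}P_{j-1}^{(\alphajac,\betajac)}(-X)=-R_{j-1}(X)$ produces exactly $+R_{j-1}(X)$, which is what the $(\betajac,\alphajac)$ recurrence demands; with the matching base cases the induction closes. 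The orthogonality route is also sound, since $w^{(\alphajac,\betajac)}(-x)=w^{(\betajac,\alphajac)}(x)$ makes $Q_n$ a degree-$n$ orthogonal family for $w^{(\betajac,\alphajac)}$, and the recurrence ratio $k_j/k_{j-1}=\frac{(2j+\alphajac+\betajac-1)(2j+\alphajac+\betajac)}{2j(j+\alphajac+\betajac)}$ shows the leading coefficient depends only on $\alphajac+\betajac$, forcing the normalizing constant to be $1$. Given that~\eqref{eq:jac-recur} is the paper's working definition of $P_j^{(\alphajac,\betajac)}$, the recurrence induction is the more self-contained of your two options, as the orthogonality route implicitly invokes uniqueness of an orthogonal polynomial family with prescribed leading coefficient.
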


We are now ready to state the equivalent of Corollary~\ref{cor:jacobi-approx-bessel-acute} for obtuse $\theta$:
\begin{cor}
\label{cor:jacobi-approx-bessel-obtuse}
Let $\beta >-1$ and $\alpha$ be an arbitrary real. 
For every $n\ge 1$, $P_n^{(\alpha,\beta)}(\cos\theta)$ is equal to
\[(-1)^n\cdot \kap{\alpha}{\beta}{\theta}\parens{\frac{\g{\beta}{n}}{\sqrt{2\parens{n+\addjac}}}\cdot \sqrt{\pi\parens{n+\addjac} \parens{\pi-\theta}}\cdot \bessel{\beta}{\parens{n+\addjac}\parens{\pi-\theta}} \pm O\parens{\frac 1{n^{3/2}}}},\]
for any
\[\frac \pi 2 \le \theta < \pi.\]
Further, in the above
$\g{\beta}{n}$ is as in~\eqref{eq:g} (with $\alpha=\beta$) and $\kap{\alpha}{\beta}{\theta}$ is as in~\eqref{eq:kap}.
\end{cor}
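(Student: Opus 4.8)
The statement to prove is Corollary~\ref{cor:jacobi-approx-bessel-obtuse}, which is the analog of Corollary~\ref{cor:jacobi-approx-bessel-acute} for the obtuse range $\theta \in [\pi/2, \pi)$.

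The plan is to reduce to the acute case via the reflection identity. First I would set $x = \cos\theta$ and observe that for $\theta \in [\pi/2, \pi)$ we have $x \in (-1, 0]$, so $-x = \cos(\pi - \theta)$ with $\pi - \theta \in (0, \pi/2]$, i.e., $\pi - \theta$ lies in the range where Theorem~\ref{thm:jacobi-approx-bessel} (and hence Corollary~\ref{cor:jacobi-approx-bessel-acute}) applies. Then I would invoke Lemma~\ref{lem:jacobi-neg-x}, which says $P_n^{(\alpha,\beta)}(\cos\theta) = P_n^{(\alpha,\beta)}(-(-x)) = (-1)^n P_n^{(\beta,\alpha)}(-x) = (-1)^n P_n^{(\beta,\alpha)}(\cos(\pi-\theta))$. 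Now apply Corollary~\ref{cor:jacobi-approx-bessel-acute} with the roles of $\alpha$ and $\beta$ swapped and with $\theta$ replaced by $\pi - \theta$; this is legitimate because the hypothesis ``first parameter $> -1$'' in that corollary becomes ``$\beta > -1$,'' which is exactly what is assumed here, while the second parameter $\alpha$ is allowed to be an arbitrary real.

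Carrying out the substitution gives
\[
P_n^{(\beta,\alpha)}(\cos(\pi-\theta)) = \kap{\beta}{\alpha}{\pi-\theta}\left( \frac{\g{\beta}{n}}{\sqrt{2(n+\addjac)}} \cdot \sqrt{\pi(n+\addjac)(\pi-\theta)} \cdot \bessel{\beta}{(n+\addjac)(\pi-\theta)} \pm O\left( \frac{1}{n^{3/2}} \right) \right),
\]
using that the constant $\addjac = \frac{\alpha+\beta+1}{2}$ is symmetric in $\alpha, \beta$ so $n$ is unchanged. Multiplying by $(-1)^n$ produces the claimed expression, provided I verify the one remaining bookkeeping point: that $\kap{\beta}{\alpha}{\pi-\theta} = \kap{\alpha}{\beta}{\theta}$. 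From the definition~\eqref{eq:kap}, $\kap{\alpha}{\beta}{\theta} = \frac{1}{\sqrt\pi (\sin\frac\theta2)^{\alpha+1/2}(\cos\frac\theta2)^{\beta+1/2}}$, and substituting $\theta \mapsto \pi - \theta$, $\alpha \leftrightarrow \beta$ swaps $\sin\frac\theta2 \leftrightarrow \cos\frac\theta2$ (since $\sin\frac{\pi-\theta}{2} = \cos\frac\theta2$ and $\cos\frac{\pi-\theta}{2} = \sin\frac\theta2$) and simultaneously swaps the exponents, so the two expressions agree. This is the only genuine computation and it is routine trigonometry.

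There is essentially no hard part here: the corollary is a direct translation of an already-established result through a symmetry of the Jacobi polynomials. The only thing to be careful about is making sure the asymmetric hypotheses line up correctly — Corollary~\ref{cor:jacobi-approx-bessel-acute} requires its \emph{first} index to exceed $-1$ and allows the second to be arbitrary, and after the reflection the first index is $\beta$, so the hypothesis ``$\beta > -1$'' stated in Corollary~\ref{cor:jacobi-approx-bessel-obtuse} is exactly what is needed, while no constraint on $\alpha$ beyond being real is required. I would also note in passing that the error term $O(1/n^{3/2})$ is transported unchanged since the reflection does not affect the $n$-dependence, and that the factor $\sqrt{\pi - \theta}$ replacing $\sqrt{\theta}$ is simply what the argument of the Bessel function becomes.
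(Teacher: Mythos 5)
Your proposal is correct and follows the paper's proof exactly: substitute $\theta' = \pi - \theta$, apply Lemma~\ref{lem:jacobi-neg-x} to get $P_n^{(\alpha,\beta)}(\cos\theta) = (-1)^n P_n^{(\beta,\alpha)}(\cos\theta')$, invoke Corollary~\ref{cor:jacobi-approx-bessel-acute} with the roles of $\alpha$ and $\beta$ swapped, and use the identity $\kap{\beta}{\alpha}{\pi-\theta} = \kap{\alpha}{\beta}{\theta}$. Your explicit checks that $\addjac$ is symmetric in $(\alpha,\beta)$ and that the hypothesis ``first index $> -1$'' transports to exactly ``$\beta > -1$'' are the right bookkeeping points, even if the paper leaves them implicit.
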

\begin{proof}
Let
\[\theta = \pi-\theta'.\]
Note that $\cos\theta=-\cos\theta'$ and thus, by Lemma~\ref{lem:jacobi-neg-x}, we have
\[P_n^{(\alpha,\beta)}(\cos\theta)=(-1)^n\cdot P_n^{(\beta,\alpha)}(\cos\theta').\]
The above with Corollary~\ref{cor:jacobi-approx-bessel-acute} applied with $\theta=\theta'$ along with the observation that $\kap{\beta}{\alpha}{\theta'}=\kap{\alpha}{\beta}{\theta}$ completes the proof.
\end{proof}

Finally, we want to argue that $\g{\alpha}{n}$ is no more than a constant:
\begin{lmm}
\label{lem:g-bound}
For large enough $n$,
\[\g{\alpha}{n}=\Theta_{\alpha,\beta}\parens{1}.\]
\end{lmm}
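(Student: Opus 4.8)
The plan is to estimate $\g{\alpha}{n}=\frac{1}{(n+\addjac)^{\alpha}}\cdot\frac{\Gamma(n+\alpha+1)}{n!}$ directly via Stirling's approximation, exactly as in the proof of Lemma~\ref{lem:norm-fact-bound}. First I would recall Stirling in the form $\Gamma(z)=\sqrt{2\pi/z}\,(z/e)^z(1+O(1/z))$, apply it to both $\Gamma(n+\alpha+1)$ and $\Gamma(n+1)=n!$, and take the ratio. The exponential factors combine to $e^{-(n+\alpha+1)+(n+1)}=e^{-\alpha}$, a constant depending only on $\alpha$. The power-of-$n$ factors give
\[
\frac{(n+\alpha+1)^{\,n+\alpha+\frac12}}{(n+1)^{\,n+\frac12}}
= (n+\alpha+1)^{\alpha}\cdot\frac{(n+\alpha+1)^{\,n+\frac12}}{(n+1)^{\,n+\frac12}},
\]
and the first piece $(n+\alpha+1)^{\alpha}$ is exactly what cancels against the $1/(n+\addjac)^{\alpha}$ prefactor up to a $\Theta_\alpha(1)$ correction (since $(n+\alpha+1)/(n+\addjac)\to1$).

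Next I would handle the remaining ratio $\left(\frac{n+\alpha+1}{n+1}\right)^{n+\frac12}=\left(1+\frac{\alpha}{n+1}\right)^{n+\frac12}$. As $n\to\infty$ this converges to $e^{\alpha}$, which cancels the earlier $e^{-\alpha}$; more carefully, for all large $n$ it lies in a fixed interval bounded away from $0$ and $\infty$ (of the form $[e^{\alpha-1},e^{\alpha+1}]$, say), so it is $\Theta_\alpha(1)$. Collecting the three contributions — the constant $e^{-\alpha}$, the $(n+\alpha+1)^{\alpha}/(n+\addjac)^{\alpha}=\Theta_\alpha(1)$ term, and the $\left(1+\tfrac{\alpha}{n+1}\right)^{n+\frac12}=\Theta_\alpha(1)$ term — together with the $(1+O(1/n))$ Stirling error, gives $\g{\alpha}{n}=\Theta_\alpha(1)$ for large enough $n$. (The statement writes $\Theta_{\alpha,\beta}(1)$, but $\g{\alpha}{n}$ genuinely depends only on $\alpha$, so one may simply absorb $\beta$; no harm.)

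There is really no serious obstacle here — this is a routine Stirling computation essentially identical to the one already carried out in the proof of Lemma~\ref{lem:norm-fact-bound}, and one could even phrase it as a one-line corollary of that lemma's intermediate estimates. The only mild care needed is to make sure the exponents line up: the exponent of $n$ in the numerator's Stirling expansion is $n+\alpha+\frac12$ while in the denominator it is $n+\frac12$, so the net surplus power is exactly $\alpha$, which must be matched against the explicit $(n+\addjac)^{-\alpha}$ prefactor; an off-by-$\tfrac12$ slip there would change the answer, so I would write out that bookkeeping explicitly. Since $\alpha>-1$ is fixed, all the "constant depending on $\alpha$" bounds are legitimate, and the proof is complete.
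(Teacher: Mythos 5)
Your proposal is correct and takes essentially the same route as the paper: the paper's proof of this lemma is precisely the Stirling computation you describe, arriving at $\g{\alpha}{n}=\frac{(n+\alpha+1)^\alpha}{(n+\addjac)^\alpha}\cdot\frac{(n+\alpha+1)^{n+1/2}}{n^{n+1/2}}\cdot e^{-\alpha-1}\cdot(1+O(1/n))$ and observing each factor is $\Theta(1)$. (The paper applies Stirling to $n!$ with base $n$ rather than $\Gamma(n+1)$ with base $n+1$ as you do, which is why its exponential constant is $e^{-\alpha-1}$ rather than your $e^{-\alpha}$; the discrepancy is absorbed into the power-of-$n$ ratio and both resolve to the same limit.) One small correction to your parenthetical: $\g{\alpha}{n}$ does depend on $\beta$, since its definition uses $\addjac=\frac{\alpha+\beta+1}{2}$ in the prefactor $(n+\addjac)^{-\alpha}$ — so the $\Theta_{\alpha,\beta}(1)$ in the statement is apt, even though, as you correctly compute, the $\beta$-dependence washes out in the large-$n$ limit.
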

\begin{proof}
This follows from the Stirling's approximation of the $\Gamma$ function. Indeed we have
\begin{align*}
\g{\alpha}{n}&=\frac 1{\parens{n+\addjac}^{\alpha}}\cdot \frac{\Gamma(n+\alpha+1)}{n!}\\
&=\frac 1{\parens{n+\addjac}^{\alpha}}\cdot \frac{(n+\alpha+1)^{n+\alpha+1/2}\cdot e^n}{n^{n+1/2}\cdot e^{n+\alpha+1}}\cdot \parens{1+O\parens{\frac 1n}}\\
&=\frac {(n+\alpha+1)^\alpha}{\parens{n+\addjac}^{\alpha}}\cdot \frac{(n+\alpha+1)^{n+1/2}}{n^{n+1/2}}\cdot e^{-\alpha-1}\cdot \parens{1+O\parens{\frac 1n}}.
\end{align*}
It is easy to check that each of the terms above are $\Theta_{\alpha,\beta}\parens{1}$, which completes the proof.
\end{proof}

We now have the final approximation for the Jacobi polynomial that we use in our proofs:
\begin{cor}
\label{cor:jacobi-approx-bessel-all}
Let $\alpha,\beta >-1$. 
For every $n\ge 1$:
\begin{align*}
\abs{P_n^{(\alpha,\beta)}(\cos\theta)}=\Theta_{\abs{\alpha},\abs{\beta}}&\left(\frac{\abs{\kap{\alpha}{\beta}{\theta}}}{\sqrt{n}}\cdot \left\{\sqrt{\pi\parens{n+\addjac} \parens{\min(\theta_\ell,\pi-\theta_\ell)}}\cdot \bessel{\zeta}{\parens{n+\addjac}\parens{\min(\theta_\ell,\pi-\theta_\ell)}}\right.\right.\\
&\quad\quad\quad \left.\left.\pm O\parens{\frac 1n}\right\}\right),
\end{align*}
for any
\[0< \theta < \pi,\]
where 
\[\zeta=
\begin{cases}
\alpha &\text{ if }\theta\le \frac \pi 2\\
\beta  &\text{otherwise}
\end{cases}.
\]
\end{cor}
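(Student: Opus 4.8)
The plan is to split on whether $\theta$ is acute or obtuse and invoke the matching Bessel-function approximation: Corollary~\ref{cor:jacobi-approx-bessel-acute} when $0<\theta\le\frac\pi2$ and Corollary~\ref{cor:jacobi-approx-bessel-obtuse} when $\frac\pi2\le\theta<\pi$ (at $\theta=\frac\pi2$ either one applies). In the acute case $\min(\theta,\pi-\theta)=\theta$ and $\zeta=\alpha$, while in the obtuse case $\min(\theta,\pi-\theta)=\pi-\theta$ and $\zeta=\beta$, so the two cases assemble into the single displayed bound with $\zeta$ as defined in the statement. The $(-1)^n$ prefactor appearing in Corollary~\ref{cor:jacobi-approx-bessel-obtuse} is killed once we pass to $\abs{P_n^{(\alpha,\beta)}(\cos\theta)}$, and $\kap{\alpha}{\beta}{\theta}$ is the same quantity in both corollaries.

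First I would handle the acute case. Corollary~\ref{cor:jacobi-approx-bessel-acute} gives
\[
P_n^{(\alpha,\beta)}(\cos\theta)=\kap{\alpha}{\beta}{\theta}\parens{\frac{\g{\alpha}{n}}{\sqrt{2(n+\addjac)}}\cdot \sqrt{\pi(n+\addjac)\theta}\cdot \bessel{\alpha}{(n+\addjac)\theta}\pm O\parens{n^{-3/2}}}.
\]
Since $\alpha,\beta$ are fixed, $n+\addjac=\Theta(n)$ for large $n$, hence $\tfrac{1}{\sqrt{2(n+\addjac)}}=\Theta(n^{-1/2})$; and by Lemma~\ref{lem:g-bound} we have $\g{\alpha}{n}=\Theta_{\alpha,\beta}(1)$. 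Therefore the leading scalar $\tfrac{\g{\alpha}{n}}{\sqrt{2(n+\addjac)}}$ equals $\Theta_{\abs\alpha,\abs\beta}(1)\cdot n^{-1/2}$. Pulling the factor $n^{-1/2}$ (together with $\kap{\alpha}{\beta}{\theta}$) out front converts the $O(n^{-3/2})$ term into an $O(1/n)$ term sitting next to the Bessel term inside the braces, and taking absolute values yields exactly the claimed expression with $\zeta=\alpha$ and $\min(\theta,\pi-\theta)=\theta$. The obtuse case is identical after the substitution $\theta\mapsto\pi-\theta$: Corollary~\ref{cor:jacobi-approx-bessel-obtuse} has the same shape with $\g{\beta}{n}$ in place of $\g{\alpha}{n}$ and $\bessel{\beta}{\cdot}$ in place of $\bessel{\alpha}{\cdot}$, Lemma~\ref{lem:g-bound} applied with parameter $\beta$ again gives $\g{\beta}{n}=\Theta_{\alpha,\beta}(1)$, and $\abs{(-1)^n}=1$.

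The step requiring the most care is the bookkeeping of the $\Theta_{\abs\alpha,\abs\beta}(\cdot)$ constants alongside the additive error. The Bessel term $\sqrt{\pi(n+\addjac)t}\,\bessel{\zeta}{(n+\addjac)t}$ with $t=\min(\theta,\pi-\theta)$ oscillates and can be arbitrarily close to zero, so the $\pm O(1/n)$ must be retained as a genuine additive term inside the braces (not absorbed multiplicatively), and the statement is read with that understanding — which is exactly how it is used downstream, e.g.\ in the proof of Theorem~\ref{cor:jac-U-bound} combined with~\eqref{eq:bessel-large-simple}. One should also check the harmless overlap at $\theta=\frac\pi2$: there $\kap{\alpha}{\beta}{\pi/2}$ is symmetric in $\alpha,\beta$ and $\min(\theta,\pi-\theta)=\frac\pi2$, so the acute and obtuse invocations are mutually consistent. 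No new idea beyond the two Bessel corollaries and Lemma~\ref{lem:g-bound} is needed.
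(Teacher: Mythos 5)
Your proof is correct and is exactly the paper's argument: the paper disposes of this corollary in one line ("The proof follows from Corollaries~\ref{cor:jacobi-approx-bessel-acute},~\ref{cor:jacobi-approx-bessel-obtuse} and Lemma~\ref{lem:g-bound}"), and you have simply spelled out those steps, including the correct observations that the $(-1)^n$ disappears under absolute value and that factoring $\Theta_{\abs{\alpha},\abs{\beta}}(1)/\sqrt{n}$ out of the prefactor converts the $O(n^{-3/2})$ to an additive $O(1/n)$ inside the braces. Your caution about the additive (not multiplicative) nature of the $\pm O(1/n)$ term is also well placed, since downstream uses (e.g.\ Lemma~\ref{lem:w-l-lb-detailed}) rely on the Bessel term potentially vanishing.
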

\begin{proof}
The proof follows from Corollaries~\ref{cor:jacobi-approx-bessel-acute},~\ref{cor:jacobi-approx-bessel-obtuse} and Lemma~\ref{lem:g-bound}.
\end{proof}

We will also need the following result from~\cite{szego}:
\begin{lmm}[Equation (4.21.7) in~\cite{szego}]
\label{lem:jac-deriv}
For any integer $j\ge 1$:
\[\frac{d}{dX}\set{P_j^{(\alphajac,\betajac)}(X)}=\frac 12\parens{j+\alphajac+\betajac+1}P_{j-1}^{(\alphajac+1,\betajac+1)}(X).\]
\end{lmm}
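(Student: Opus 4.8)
The plan is as follows. The statement is a classical identity---it is Szeg\H{o}'s equation (4.21.7)---so in the paper one simply cites~\cite{szego}; but a clean self-contained proof goes through the Gauss hypergeometric representation of the Jacobi polynomials. Recall that there is a standard hypergeometric representation
\[
P_j^{(\alphajac,\betajac)}(X) = \frac{(\alphajac+1)_j}{j!}\cdot {}_2F_1\parens{-j,\ j+\alphajac+\betajac+1;\ \alphajac+1;\ \tfrac{1-X}{2}},
\]
where $(a)_k$ is the Pochhammer symbol and ${}_2F_1(a,b;c;z)=\sum_{k\ge 0}\frac{(a)_k(b)_k}{(c)_k\,k!}z^k$; since one numerator parameter is the negative integer $-j$, the series terminates and $P_j^{(\alphajac,\betajac)}$ is a genuine polynomial.

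First I would record the elementary term-by-term differentiation identity
\[
\frac{d}{dz}\,{}_2F_1(a,b;c;z) = \frac{ab}{c}\,{}_2F_1(a+1,b+1;c+1;z),
\]
which follows by shifting the summation index by one and using $(a)_{k+1}=a\,(a+1)_k$ (and similarly for $b$ and $c$). Writing $z=\tfrac{1-X}{2}$ so that $\tfrac{dz}{dX}=-\tfrac12$, the chain rule then gives
\[
\frac{d}{dX}P_j^{(\alphajac,\betajac)}(X) = \frac{(\alphajac+1)_j}{j!}\cdot\parens{-\tfrac12}\cdot\frac{(-j)(j+\alphajac+\betajac+1)}{\alphajac+1}\cdot {}_2F_1\parens{-(j-1),\ j+\alphajac+\betajac+2;\ \alphajac+2;\ z}.
\]
The key observation is that the parameters of the hypergeometric function on the right are exactly those of $P_{j-1}^{(\alphajac+1,\betajac+1)}$: indeed $-(j-1)$, and $(j-1)+(\alphajac+1)+(\betajac+1)+1 = j+\alphajac+\betajac+2$, and $(\alphajac+1)+1=\alphajac+2$. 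It then remains to simplify the scalar prefactor: using $\frac{(\alphajac+1)_j}{\alphajac+1}=(\alphajac+2)_{j-1}$ and $\frac{j}{j!}=\frac{1}{(j-1)!}$ one obtains
\[
\frac{(\alphajac+1)_j}{j!}\cdot\frac{j(j+\alphajac+\betajac+1)}{2(\alphajac+1)} = \frac{j+\alphajac+\betajac+1}{2}\cdot\frac{(\alphajac+2)_{j-1}}{(j-1)!},
\]
and $\frac{(\alphajac+2)_{j-1}}{(j-1)!}$ is precisely the normalizing constant appearing in front of the hypergeometric function in $P_{j-1}^{(\alphajac+1,\betajac+1)}$. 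Combining the last two displays yields the claimed formula.

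There is essentially no obstacle beyond careful bookkeeping of Pochhammer symbols; the only things to verify are that the term-by-term differentiation is legitimate (immediate, since the series terminates), and the degenerate base case $j=1$, where the right-hand side is $\tfrac{\alphajac+\betajac+2}{2}\,P_0^{(\alphajac+1,\betajac+1)}=\tfrac{\alphajac+\betajac+2}{2}$, matching $\tfrac{d}{dX}P_1^{(\alphajac,\betajac)}$ computed from the explicit formula $P_1^{(\alphajac,\betajac)}(X)=\tfrac{\alphajac+\betajac+2}{2}X+\tfrac{\alphajac-\betajac}{2}$. An alternative route avoiding hypergeometric functions is to start from the Rodrigues representation $P_j^{(\alphajac,\betajac)}(X)=\frac{(-1)^j}{2^j j!}(1-X)^{-\alphajac}(1+X)^{-\betajac}\frac{d^j}{dX^j}\big[(1-X)^{j+\alphajac}(1+X)^{j+\betajac}\big]$, differentiate once in $X$, and regroup the result into the Rodrigues formula for $P_{j-1}^{(\alphajac+1,\betajac+1)}$; this also works but the index manipulation is somewhat messier than the power-series argument above.
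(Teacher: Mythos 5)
The paper gives no proof of this lemma---it is a classical identity cited directly to Szeg\H{o}, equation (4.21.7). Your self-contained derivation via the Gauss hypergeometric representation is correct: the parameter shift under $\frac{d}{dz}\,{}_2F_1(a,b;c;z)=\frac{ab}{c}\,{}_2F_1(a+1,b+1;c+1;z)$ together with the chain rule for $z=\tfrac{1-X}{2}$ does produce exactly the hypergeometric function for $P_{j-1}^{(\alphajac+1,\betajac+1)}$, and the Pochhammer-symbol bookkeeping $\frac{(\alphajac+1)_j}{\alphajac+1}=(\alphajac+2)_{j-1}$, $\frac{j}{j!}=\frac{1}{(j-1)!}$ yields the prefactor $\tfrac12(j+\alphajac+\betajac+1)$ as claimed. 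Since the paper treats this as a black-box citation, there is no ``paper's approach'' to compare against; your argument is a valid and clean substitute, and the $j=1$ sanity check and the remark on the terminating series both hold up.
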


\subsection{Christoffel-Darboux formula}
\label{app:CD-formula}

In this section, we will only consider OPs that are orthonormal. In such a case the following result is well-known:
\begin{thm}[\cite{szego}, Thm 3.2.1]
\label{thm:OP-recur}
Let $p_0(X), p_1(X),\dots$ be an OP family defined as in~\eqref{eq:OP-cond}. Then for any $j\ge 2$, we have
\begin{equation}
\label{eq:recur}
p_j(X) = (a_jX+b_j)p_{j-1}(X)- c_jp_{j-2}(X),
\end{equation}
where 
\begin{equation}
\label{eq:c-cond}
a_j\ne 0 \text{ and } c_j=\frac {a_j}{a_{j-1}}.
\end{equation}
\end{thm}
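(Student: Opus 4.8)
The plan is to run the classical argument that produces a three-term recurrence from nothing more than orthonormality and a degree count. The starting point is the observation that $X p_{j-1}(X)$ is a polynomial of degree exactly $j$, and that $\{p_0, \dots, p_j\}$ is a basis for the space of polynomials of degree at most $j$ (since $\deg(p_i) = i$). Hence I can write $X p_{j-1}(X) = \sum_{i=0}^{j} \alpha_i p_i(X)$, and by \eqref{eq:OP-cond} the coefficients are $\alpha_i = \int_{-1}^1 X p_{j-1}(X) p_i(X) w(X)\,dX = \int_{-1}^1 p_{j-1}(X)\bigl(X p_i(X)\bigr) w(X)\,dX$. The key point is that $X p_i(X)$ has degree $i+1$, so whenever $i + 1 \le j - 2$ (i.e. $i \le j-3$) it lies in $\vspan{p_0, \dots, p_{j-2}}$, the space of polynomials of degree $\le j-2$, and is therefore orthogonal to $p_{j-1}$; thus $\alpha_i = 0$ for all $i \le j - 3$. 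This leaves $X p_{j-1}(X) = \alpha_j p_j(X) + \alpha_{j-1} p_{j-1}(X) + \alpha_{j-2} p_{j-2}(X)$.

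Next I would pin down the coefficients by comparing leading terms. Writing $k_i \ne 0$ for the leading coefficient of $p_i$ (a local abbreviation for this proof), matching the coefficient of $X^j$ in the last identity gives $k_{j-1} = \alpha_j k_j$, so $\alpha_j = k_{j-1}/k_j \ne 0$. Solving for $p_j(X)$ yields \eqref{eq:recur} with $a_j = 1/\alpha_j = k_j/k_{j-1}$, $b_j = -\alpha_{j-1}/\alpha_j$, and $c_j = \alpha_{j-2}/\alpha_j$; in particular $a_j \ne 0$. For the relation $c_j = a_j/a_{j-1}$ in \eqref{eq:c-cond}, I would evaluate $\alpha_{j-2}$ directly: since $X p_{j-2}(X) = (k_{j-2}/k_{j-1}) p_{j-1}(X) + r(X)$ with $\deg r < j-1$, orthonormality gives $\alpha_{j-2} = \int_{-1}^1 p_{j-1}(X)\bigl(X p_{j-2}(X)\bigr) w(X)\,dX = k_{j-2}/k_{j-1}$. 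Therefore $c_j = \alpha_{j-2}/\alpha_j = (k_{j-2}/k_{j-1})/(k_{j-1}/k_j) = k_j k_{j-2}/k_{j-1}^2$, which is exactly $a_j/a_{j-1} = (k_j/k_{j-1})/(k_{j-1}/k_{j-2})$.

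I do not expect a real obstacle: the whole argument is standard and could even just be cited from~\cite{szego}. The only things requiring attention are bookkeeping with the leading coefficients $k_i$, getting the degree inequality right so that precisely the terms with $i \le j-3$ drop out, and noting that we need $j \ge 2$ so that $p_{j-2}$ is defined (the cases $j = 0, 1$ being the stated base cases). Everything else is routine substitution.
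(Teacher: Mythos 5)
Your proof is correct, and it is the standard derivation of the three-term recurrence from orthonormality (expand $Xp_{j-1}$ in the orthonormal basis, kill the low-degree coefficients by a degree count, then identify $\alpha_{j-2}$ and $\alpha_j$ via leading coefficients and use the symmetry of the bilinear form to get $c_j = a_j/a_{j-1}$). The paper itself does not prove this statement---it cites it verbatim as Theorem~3.2.1 of~\cite{szego}---so there is no in-paper proof to compare against; your argument is precisely the one Szeg\H{o} gives, with the same key steps (orthogonality annihilating $\alpha_i$ for $i \le j-3$, and the leading-coefficient identities $a_j = k_j/k_{j-1}$, $\alpha_{j-2} = k_{j-2}/k_{j-1}$ yielding $c_j = k_j k_{j-2}/k_{j-1}^2 = a_j/a_{j-1}$).
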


We will need the following bound on $a_j$ for (orthonormal version of) Jacobi polynomials:
\begin{rmk}
\label{rem:a_n}
We note that $a_j=2+o(1)$ for $\jac{\alpha}{\beta}{j}(X)$, where this claim follows from~\eqref{eq:jac-recur} and~\eqref{eq:jac-orthonormal}.
\end{rmk}

Next, we will need the following definition:
\begin{defn}
\label{def:kernel}
Given a family of OP $p_0(X),p_1(X),\dots$ as in~\eqref{eq:OP-cond} and an integer $n\ge 0$, define the {\em kernel} polynomial
\[K_n(x,y)=\sum_{i=0}^n p_i(X)\cdot p_i(Y).\]
\end{defn}

The following result is well-known (we provide its proof for the sake of completeness):
\begin{thm}[Christoffel-Darboux formula, Theorem 3.2.2~\cite{szego}]
\label{thm:CD-formula}
For any $\alpha\ne \beta$ and any $n\ge 1$, we have
\[K_{n-1}(\alpha,\beta)=\frac 1{a_n}\cdot\frac{p_n(\alpha)p_{n-1}(\beta)-p_{n-1}(\alpha)p_n(\beta)}{\alpha-\beta}.\]
\end{thm}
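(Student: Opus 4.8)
The statement to prove is the Christoffel--Darboux formula (Theorem~\ref{thm:CD-formula}). The plan is to proceed by induction on $n$, using the three-term recurrence of Theorem~\ref{thm:OP-recur} together with the constraint $c_n = a_n/a_{n-1}$ from~\eqref{eq:c-cond}.

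First I would handle the base case $n=1$. Here $K_0(\alpha,\beta) = p_0(\alpha)p_0(\beta)$, which is a constant (since $p_0$ is a degree-$0$ orthonormal polynomial), and one checks directly that $\frac{1}{a_1}\cdot\frac{p_1(\alpha)p_0(\beta) - p_0(\alpha)p_1(\beta)}{\alpha-\beta} = p_0(\alpha)p_0(\beta)$: indeed $p_1(X) = (a_1 X + b_1)p_0(X)$, so the numerator is $p_0(\alpha)p_0(\beta)\bigl((a_1\alpha+b_1) - (a_1\beta+b_1)\bigr) = a_1(\alpha-\beta)p_0(\alpha)p_0(\beta)$, and the claim follows. (One should double check the indexing convention of the recurrence, which the paper states for $j \ge 2$; a minor remark may be needed to cover $p_1$.)

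For the inductive step, assume the formula holds for $n-1$, i.e.\ $K_{n-2}(\alpha,\beta) = \frac{1}{a_{n-1}}\cdot\frac{p_{n-1}(\alpha)p_{n-2}(\beta) - p_{n-2}(\alpha)p_{n-1}(\beta)}{\alpha-\beta}$. Then $K_{n-1}(\alpha,\beta) = K_{n-2}(\alpha,\beta) + p_{n-1}(\alpha)p_{n-1}(\beta)$. The key computation is to expand the target expression $\frac{1}{a_n}\cdot\frac{p_n(\alpha)p_{n-1}(\beta) - p_{n-1}(\alpha)p_n(\beta)}{\alpha-\beta}$ by substituting $p_n(\alpha) = (a_n\alpha + b_n)p_{n-1}(\alpha) - c_n p_{n-2}(\alpha)$ and similarly for $p_n(\beta)$. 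The $b_n$ terms cancel between the two products; the $a_n\alpha$ and $a_n\beta$ terms combine to give $a_n(\alpha-\beta)p_{n-1}(\alpha)p_{n-1}(\beta)$, which after dividing by $a_n(\alpha-\beta)$ yields exactly $p_{n-1}(\alpha)p_{n-1}(\beta)$; and the $c_n$ terms combine to give $-\frac{c_n}{a_n}\cdot\frac{p_{n-2}(\alpha)p_{n-1}(\beta) - p_{n-1}(\alpha)p_{n-2}(\beta)}{\alpha-\beta}$, which using $c_n/a_n = 1/a_{n-1}$ is precisely $\frac{1}{a_{n-1}}\cdot\frac{p_{n-1}(\alpha)p_{n-2}(\beta) - p_{n-2}(\alpha)p_{n-1}(\beta)}{\alpha-\beta} = K_{n-2}(\alpha,\beta)$ by the inductive hypothesis. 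Summing, the target equals $K_{n-2}(\alpha,\beta) + p_{n-1}(\alpha)p_{n-1}(\beta) = K_{n-1}(\alpha,\beta)$, completing the induction.

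I do not expect any serious obstacle here — this is the classical telescoping argument and the only care required is bookkeeping the indices in the recurrence (in particular that $c_n = a_n/a_{n-1}$ is exactly what makes the $p_{n-2}$ cross-terms assemble into the lower-order kernel) and making sure the base case aligns with the paper's convention that Theorem~\ref{thm:OP-recur} is stated for $j \ge 2$. If one wanted to avoid any edge-case fuss with $p_0, p_1$, an alternative is to verify the identity for $n=1$ and $n=2$ by hand and run the induction from $n=2$ onward.
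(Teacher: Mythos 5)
Your proof is correct and is essentially the same argument as in the paper, packaged slightly differently. The paper proves, for every $j \ge 0$, the one-step identity
\[
(\alpha-\beta)\,p_j(\alpha)p_j(\beta)=\frac{1}{a_{j+1}}\bigl(p_{j+1}(\alpha)p_{j}(\beta)-p_{j}(\alpha)p_{j+1}(\beta)\bigr) - \frac{1}{a_j}\bigl(p_{j}(\alpha)p_{j-1}(\beta)-p_{j-1}(\alpha)p_{j}(\beta)\bigr)
\]
(using the convention $p_{-1}\equiv 0$) and then sums it over $0\le j<n$ so the right-hand side telescopes. Your induction is the same computation in disguise: your inductive step derives exactly this identity for $j=n-1$ (via the three-term recurrence, cancellation of the $b_n$ term, and the key fact $c_n=a_n/a_{n-1}$), and applying the inductive hypothesis is formally identical to collapsing the telescoping sum. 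The only presentational difference is that the paper avoids a separate base case by adopting the $p_{-1}\equiv 0$ convention, whereas you verify $n=1$ directly; both handle the $j<2$ indexing of Theorem~\ref{thm:OP-recur} correctly. No gap here.
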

\begin{proof}
We will first prove that for any $j\ge 0$:
\begin{equation}
\label{eq:one-prod}
(\alpha-\beta)\cdot p_j(\alpha)p_j(\beta)=\frac 1{a_{j+1}} \parens{p_{j+1}(\alpha)p_{j}(\beta)-p_{j}(\alpha)p_{j+1}(\beta)} - \frac 1{a_j}  \parens{p_{j}(\alpha)p_{j-1}(\beta)-p_{j-1}(\alpha)p_{j}(\beta)},
\end{equation}
where for notational convenience we define $p_i(X)=0$ for any $i<0$. Note that summing the above for $0\le j<n$ proves the claimed result (where we use the fact that $p_{-1}(X)=0$).

To prove~\eqref{eq:one-prod}, we first use~\eqref{eq:recur} with $j+1$ instead of $j$ and evaluate all the polynomials at $\alpha$ and $\beta$ to get
\[p_{j+1}(\alpha) = (a_{j+1}\alpha+b_{j+1})p_{j}(\alpha)- c_{j+1}p_{j-1}(\alpha)\]
and
\[p_{j+1}(\beta) = (a_{j+1}\beta+b_{j+1})p_{j}(\beta)- c_{j+1}p_{j-1}(\beta).\]
Multiplying the above equations by $p_j(\beta)$ and $p_j(\alpha)$ respectively and then subtracting them one gets
\[p_{j+1}(\alpha)p_j(\beta)-p_{j}(\alpha)p_{j+1}(\beta)=a_{j+1}(\alpha-\beta)p_j(\alpha)p_j(\beta) -c_{j+1}\parens{p_{j-1}(\alpha)p_j(\beta)-p_j(\alpha)p_{j-1}(\beta)}.\]
Re-arranging the above, one gets
\[(\alpha-\beta)p_j(\alpha)p_j(\beta) = \frac 1{a_{j+1}}\parens{p_{j+1}(\alpha)p_j(\beta)-p_{j}(\alpha)p_{j+1}(\beta)}-\frac{c_{j+1}}{a_{j+1}}\parens{p_j(\alpha)p_{j-1}(\beta)-p_{j-1}(\alpha)p_j(\beta)}.\]
The above along with~\eqref{eq:c-cond} implies~\eqref{eq:one-prod}.
\end{proof}

We will use the following corollary for the above result:
\begin{cor}[(3.2.4) in~\cite{szego}]
\label{cor:kernel-x=y}
For any $\alpha$ and any $n\ge 1$, we have
\[K_{n-1}(\alpha,\alpha)=\frac{p'_n(\alpha)p_{n-1}(\alpha)-p'_{n-1}(\alpha)p_n(\alpha)}{a_n}.\]
\end{cor}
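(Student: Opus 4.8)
The final statement is Corollary~\ref{cor:kernel-x=y}, which evaluates the Christoffel-Darboux kernel on the diagonal $\alpha = \beta$ as a limit of the off-diagonal formula in Theorem~\ref{thm:CD-formula}.

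\textbf{Proof plan.} The plan is to derive the diagonal case from Theorem~\ref{thm:CD-formula} by taking the limit $\beta \to \alpha$. Fix $\alpha$ and consider the function
\[
K_{n-1}(\alpha,\beta) = \frac{1}{a_n}\cdot\frac{p_n(\alpha)p_{n-1}(\beta)-p_{n-1}(\alpha)p_n(\beta)}{\alpha-\beta},
\]
valid for all $\beta \neq \alpha$. First I would observe that the left-hand side, $K_{n-1}(\alpha,\beta) = \sum_{i=0}^{n-1} p_i(\alpha)p_i(\beta)$, is a polynomial in $\beta$ and hence continuous at $\beta = \alpha$, with value $K_{n-1}(\alpha,\alpha) = \sum_{i=0}^{n-1} p_i(\alpha)^2$. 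So it suffices to compute the limit of the right-hand side as $\beta \to \alpha$. The numerator $g(\beta) := p_n(\alpha)p_{n-1}(\beta) - p_{n-1}(\alpha)p_n(\beta)$ vanishes at $\beta = \alpha$ (both terms equal $p_n(\alpha)p_{n-1}(\alpha)$), and the denominator $\alpha - \beta$ also vanishes there, so this is a $0/0$ form. Applying L'Hôpital's rule (or equivalently recognizing $\lim_{\beta\to\alpha} \frac{g(\beta) - g(\alpha)}{\alpha - \beta} = -g'(\alpha)$ since $g(\alpha) = 0$), the limit equals
\[
\frac{1}{a_n}\cdot\lim_{\beta\to\alpha}\frac{g(\beta)}{\alpha-\beta} = \frac{1}{a_n}\cdot\left(-g'(\alpha)\right) = \frac{1}{a_n}\left(-p_n(\alpha)p'_{n-1}(\alpha) + p_{n-1}(\alpha)p'_n(\alpha)\right),
\]
where differentiation is with respect to $\beta$ evaluated at $\alpha$. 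This is exactly $\frac{p'_n(\alpha)p_{n-1}(\alpha) - p'_{n-1}(\alpha)p_n(\alpha)}{a_n}$, which is the claimed identity.

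\textbf{Main obstacle.} Honestly there is no serious obstacle here; this is a routine limiting argument. The only point requiring a small amount of care is bookkeeping about which variable is being differentiated: in Theorem~\ref{thm:CD-formula} the roles of $\alpha$ and $\beta$ are symmetric in $K_{n-1}$ but the formula's numerator and denominator are each antisymmetric, so one must be consistent — I would fix $\alpha$ as a constant parameter and treat everything as a function of $\beta$, take the limit $\beta \to \alpha$, and only at the very end relabel $p'_i(\alpha)$ as the derivative of $p_i$ evaluated at $\alpha$ (which is unambiguous since each $p_i$ is a single-variable polynomial). An alternative that avoids L'Hôpital entirely is to write $p_j(\beta) = p_j(\alpha) + (\beta - \alpha)p_j'(\alpha) + O((\beta-\alpha)^2)$ for $j \in \{n-1,n\}$, substitute into $g(\beta)$, and see the $O(1)$ and $O(\beta-\alpha)$ terms cancel appropriately so that $g(\beta)/(\alpha - \beta) \to p'_n(\alpha)p_{n-1}(\alpha) - p'_{n-1}(\alpha)p_n(\alpha)$; I would probably present it this way since it is self-contained and makes the cancellation transparent. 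Either route is a few lines.
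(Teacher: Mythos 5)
Your proof is correct. The paper itself does not give a proof of this corollary --- it is stated with a citation to Szeg\H{o} (3.2.4) and no argument is supplied --- so there is nothing to compare against line by line; your limiting argument (take $\beta\to\alpha$ in the Christoffel--Darboux formula of Theorem~\ref{thm:CD-formula}, note the numerator vanishes to first order, and read off the derivative via L'H\^{o}pital or a first-order Taylor expansion) is precisely the standard derivation of the confluent form, and the sign bookkeeping with the $\alpha-\beta$ (rather than $\beta-\alpha$) denominator is handled correctly, yielding $-g'(\alpha)=p_n'(\alpha)p_{n-1}(\alpha)-p_{n-1}'(\alpha)p_n(\alpha)$.
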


\subsection{Proof of Theorem~\ref{cor:jac-U-bound}}
\label{app:jac-U-bound}

We are now ready to argue that the OP transform corresponding to Jacobi polynomials for $\alpha,\beta> -1$ is flat, i.e. we prove  Theorem~\ref{cor:jac-U-bound} (which we re-state below for convenience):
\begin{thm}[Theorem~\ref{cor:jac-U-bound}, restated]
\label{cor:jac-flat} 
Let $\alpha,\beta> - 1$.
Let $N\ge 1$ be large enough.
Let $\evalpts_0,\dots,\evalpts_{N-1}$ be the roots of the $N$th Jacobi polynomial. Then define $\vF$ such that for every $0<\ell,j<N$, we have
\[\vF[\ell,j]=\jac{\alpha}{\beta}{j}(\evalpts_{\ell})\cdot\sqrt{w_{\ell}},\]
where
\[w_{\ell}=\frac 1{\sum_{j=0}^{N-1} \jac{\alpha}{\beta}{j}(\evalpts_{\ell})^2}.\]
Then for every $0\le \ell<N$, we have
\[\frac 1{w_\ell} = O\parens{N\cdot\kap{\alpha}{\beta}{\theta_\ell}^2}\]
and
\[\max_{0<\ell,j<N} \abs{\vF[\ell,j]}=\frac{O_{\abs{\alpha},\abs{\beta}}(1)}{\sqrt{N}}.\]
\end{thm}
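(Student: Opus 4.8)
The plan is to derive both displayed bounds from one two-sided estimate of the Christoffel function $1/w_\ell$. Write $p_j:=\jac{\alpha}{\beta}{j}$. By Definition~\ref{def:kernel} and the definition of $w_\ell$ we have $1/w_\ell=\sum_{j=0}^{N-1}p_j(\lambda_\ell)^2=K_{N-1}(\lambda_\ell,\lambda_\ell)$ and $\vF[\ell,j]^2=\frac{p_j(\lambda_\ell)^2}{\sum_{i=0}^{N-1}p_i(\lambda_\ell)^2}$, so \eqref{eq:w-ell-ub} is the upper half, and \eqref{eq:jacobi-U} a consequence of the lower half (plus a per-term upper bound), of the claim
\[ \sum_{j=0}^{N-1}p_j(\lambda_\ell)^2 \;=\; \Theta_{|\alpha|,|\beta|}\!\left(N\,\kappa^2(\theta_\ell)\right). \]
Throughout I would use Lemma~\ref{lem:jacobi-root-bounded} — so $\theta_\ell\in(0,\pi)$ and, with $\psi_\ell:=\min(\theta_\ell,\pi-\theta_\ell)$, $C'/N\le\psi_\ell\le\pi/2$ — and Lemma~\ref{lem:norm-fact-bound} to pass between $P_j^{(\alpha,\beta)}$ and the orthonormal $p_j$ through the factor $\sqrt{h_j^{\alpha,\beta}}=\Theta(j^{-1/2})$.

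For the upper bound, Corollary~\ref{cor:jacobi-approx-bessel-all} together with $j\,h_j^{\alpha,\beta}=\Theta(1)$ gives, for $j\ge1$, $p_j(\lambda_\ell)^2=\Theta\!\left(\kappa^2(\theta_\ell)\,\bigl(\sqrt{\pi z_j}\,\bessel{\zeta}{z_j}\pm O(1/j)\bigr)^2\right)$ with $z_j:=(j+\addjac)\psi_\ell$ and $\zeta=\alpha$ when $\theta_\ell\le\pi/2$, else $\zeta=\beta$; the $j=0$ term is $1/h_0^{\alpha,\beta}=O(1)$. Since $|\sqrt z\,\bessel{\zeta}{z}|=O_\zeta(1)$ for every $z\ge0$ by \eqref{eq:bessel-large-simple} and $\zeta>-1$, each term is $O(\kappa^2(\theta_\ell))$, hence the sum is $O(N\kappa^2(\theta_\ell))+O(1)$. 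To absorb the $O(1)$ I would note that $\theta_\ell$ cannot be close to both $0$ and $\pi$ at once, and that the only exponent in $\kappa(\theta_\ell)$ that can grow its denominator when the corresponding half-angle factor is only $\Omega(1/N)$ is $<1$ because $\alpha,\beta>-1$; hence $N\kappa^2(\theta_\ell)=\Omega(N^{1-c})$ for a fixed $c=c(\alpha,\beta)<1$, and in particular $N\kappa^2(\theta_\ell)\to\infty$. This proves \eqref{eq:w-ell-ub}.

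For the lower bound it suffices to show $\sum_{j=1}^{N-1}z_j\,\bessel{\zeta}{z_j}^2=\Omega(N)$ (the $\pm O(1/j)$ errors add up to only $O(1)$ inside $\sum_jp_j(\lambda_\ell)^2$, using $(a+b)^2\ge\frac12a^2-b^2$); with the per-term bound $p_j(\lambda_\ell)^2=O(\kappa^2(\theta_\ell))$ this yields $\vF[\ell,j]^2\le O(\kappa^2)/\Omega(N\kappa^2)=O(1/N)$, i.e.\ \eqref{eq:jacobi-U}. The points $z_j$ are $\psi_\ell$-spaced (spacing $\le\pi/2$) and sweep $[\approx\psi_\ell,\ \approx N\psi_\ell]$ with $N\psi_\ell\ge C'$; in the oscillatory range $z\ge Z_0(\alpha,\beta)$ I would use $\sqrt z\,\bessel{\zeta}{z}=\sqrt{2/\pi}\cos\!\bigl(z-(\zeta+\tfrac12)\tfrac\pi2\bigr)\pm O(1/z)$ from \eqref{eq:bessel-large}, so that on any window of length $\ge\pi$ a positive proportion of the sample points satisfy $z_j\,\bessel{\zeta}{z_j}^2=\Omega(1)$, and in the boundary layer $z\le Z_0$ (which matters only when $\psi_\ell=O(1/N)$) I would use $\bessel{\zeta}{z}\sim z^{\zeta}$ from Lemma~\ref{lem:bessel-approx}, giving $\sum_{j:\,z_j\le Z_0}z_j\,\bessel{\zeta}{z_j}^2=\Theta\bigl(\psi_\ell^{2\zeta+1}\textstyle\sum_j(j+\addjac)^{2\zeta+1}\bigr)=\Theta\bigl(\psi_\ell^{2\zeta+1}N^{2\zeta+2}\bigr)=\Theta(N)$ — precisely the regime in which $\kappa(\theta_\ell)\asymp\psi_\ell^{-(\zeta+1/2)}$ blows up, and the powers of $\psi_\ell$ cancel exactly. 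A short case split on whether $N\psi_\ell$ is large or $O(1)$ relative to $Z_0$ shows one of the two contributions is always $\Omega(N)$, while $1/\psi_\ell=O(N)$ gives the matching $O(N)$ upper bound, completing the claim.

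The hard part will be this lower bound — equivalently $w_\ell=O(1/(N\kappa^2(\theta_\ell)))$, the direction genuinely needed for flatness — because it must hold uniformly in $\ell$ as $\psi_\ell$ ranges over the whole window $[\Theta(1/N),\Theta(1)]$, coping at once with near-cancellation of the Bessel (or cosine) factor against the $O(1/j)$ error and with the crossover between the oscillatory regime and the Bessel boundary layer as $\theta_\ell\to0,\pi$ (exactly the entries where the cosine estimate Theorem~\ref{thm:jacobi-approx} breaks down and Corollary~\ref{cor:jacobi-approx-bessel-all} is needed). A natural alternative — writing $1/w_\ell=p'_N(\lambda_\ell)\,p_{N-1}(\lambda_\ell)/a_N$ via the Christoffel--Darboux identity (Corollary~\ref{cor:kernel-x=y}, using $p_N(\lambda_\ell)=0$, Lemma~\ref{lem:jac-deriv} to rewrite $p'_N$ through $P_{N-1}^{(\alpha+1,\beta+1)}$, and $a_N=2+o(1)$ from Remark~\ref{rem:a_n}) — is cleaner to state but runs into the same wall, since pinning down $p_{N-1}$ at a zero of $p_N$ near an endpoint once more requires the phase carried by the Bessel asymptotics.
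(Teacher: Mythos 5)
Your plan is essentially correct, and it gives the right upper bound on $1/w_\ell$ by the same route as the paper (per-term bound from Corollary~\ref{cor:jacobi-approx-bessel-all} plus \eqref{eq:bessel-large-simple}, then sum). Where you diverge is the lower bound on $K_{N-1}(\lambda_\ell,\lambda_\ell)$ (equivalently the upper bound on $w_\ell$), and the difference is genuine: you propose to lower bound the diagonal kernel by summing all $N$ terms and arguing, via an equidistribution count in the oscillatory range $z_j\ge Z_0$, that a positive fraction of the $j$'s have $z_j\,\bessel{\zeta}{z_j}^2=\Omega(1)$, backed up in the Bessel boundary layer $z_j\le Z_0$ by the small-argument asymptotic $\bessel{\zeta}{z}\sim z^\zeta$ and the exact power cancellation $\psi_\ell^{2\zeta+1}N^{2\zeta+2}=\Theta(N)$ when $\psi_\ell=\Theta(1/N)$. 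The paper (Lemma~\ref{lem:w-l-lb-detailed}) instead splits into two cases and, in the main one ($\theta_\ell$ at distance $\ge C_0/N$ from the endpoints), uses exactly the Christoffel–Darboux route that you dismiss at the end: since $p_N(\lambda_\ell)=0$, Corollary~\ref{cor:kernel-x=y} collapses the whole sum to a single product $\frac 1{a_N}p'_N(\lambda_\ell)p_{N-1}(\lambda_\ell)$, and Szeg\H{o}'s identity (4.5.7) together with Lemma~\ref{lem:jac-deriv} reduces this to lower-bounding $\bigl|P_{N-1}^{(\alpha+1,\beta+1)}(\lambda_\ell)\bigr|$; the key step — the one you suspect is a wall — is that the vanishing of $P_N^{(\alpha,\beta)}$ at $\lambda_\ell$ forces the cosine factor of $\bessel{\zeta}{z}$ in \eqref{eq:bessel-large} to be $O(1/z)$, and the shift $\zeta\mapsto\zeta+1$ turns that cosine into a sine, which is therefore $1-O(1/z)$. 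This phase-transfer argument is what avoids equidistribution entirely in the main regime: there is no summing and no counting, just one sharp estimate. The paper then handles $\theta_\ell$ within $O(1/N)$ of the endpoints separately, by counting $j$'s with $z_j$ in a fixed range $[\eps_0,\,\min\{2,\sqrt{3(\zeta+1)}\}]$ where $|\bessel{\zeta}{\cdot}|=\Theta(1)$ via \eqref{eq:bessel-small-theta} — which is close to your boundary-layer idea but never needs $z\to0$ asymptotics. Two small cautions on your version: (i) the small-$z$ asymptotic $\bessel{\zeta}{z}\sim z^\zeta/(2^\zeta\Gamma(\zeta+1))$ is not stated in Lemma~\ref{lem:bessel-approx} (which only gives boundedness on $[\eps_0,\cdot]$), though it follows immediately from Definition~\ref{def:bessel}; you could also sidestep it by dropping the terms with $z_j<\eps_0$ (all terms are nonnegative) and using \eqref{eq:bessel-small-theta} as the paper does; (ii) the equidistribution claim ``a positive fraction of sample points in any window of length $\ge\pi$ have $\cos^2\ge1/2$'' does need a short argument because $\psi_\ell$ can be as large as $\pi/2$; the clean way is to observe that for spacing exactly $\pi/2$ consecutive points give $\cos^2\phi$ and $\sin^2\phi$, at least one of which is $\ge1/2$, and for smaller spacing the density argument only improves. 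Both routes work; the paper's collapses the sum and is tighter per-$\ell$, yours is uniform across $\ell$ and avoids the Szeg\H{o} identity at the cost of the counting argument.
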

\begin{proof} We will argue that for any $0\le \ell,j<N$, we have 
\begin{equation}
\label{eq:jac-abs-ub}
\abs{P_j^{(\alpha,\beta)}(\cos\theta_{\ell})}=O_{\abs{\alpha},\abs{\beta}}\parens{\frac{\kap{\alpha}{\beta}{\theta_\ell}}{\sqrt{j}}}.
\end{equation}
The definition of the normalized Jacobi polynomials and Lemma~\ref{lem:norm-fact-bound} implies that
\[\abs{\jac{\alpha}{\beta}{j}\parens{\cos\theta_{\ell}}}=O_{\abs{\alpha},\abs{\beta}}\parens{\kap{\alpha}{\beta}{\theta_\ell}}.\]
The above proves both of the claimed results. Indeed the upper bound on $1/w_\ell$ follows by summing up the square of the above bound for all $0\le j<N$. The above along with the bound of $O_{\abs{\alpha},\abs{\beta}}\parens{\frac{1}{N\kap{\alpha}{\beta}{\theta_\ell}^2}}$ on $w_{\ell}$ from Lemma~\ref{lem:w-l-lb}
would prove our desired bound on entries of $\vF$.

Finally, we note that~\eqref{eq:jac-abs-ub} follows from applying~\eqref{eq:bessel-large-simple} to the bound in Corollary~\ref{cor:jacobi-approx-bessel-all}.
\end{proof}

\begin{lmm}
\label{lem:w-l-lb}
Let $\alpha,\beta> - 1$.
Let $N\ge 1$ be large enough. Let $\evalpts_0,\dots,\evalpts_{N-1}$ be the roots of the $N$th Jacobi polynomial.
Then
\[\sum_{j=0}^{N-1} \jac{\alpha}{\beta}{j}(\cos\theta_\ell)^2 \ge \Omega_{\abs{\alpha},\abs{\beta}}\parens{N\kap{\alpha}{\beta}{\theta_\ell}^2}.\]
\end{lmm}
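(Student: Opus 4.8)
I would prove this by identifying the sum with the diagonal value of the Christoffel--Darboux kernel and then reducing it, via the confluent Christoffel--Darboux identity and the classical Jacobi structure relation, to a single Jacobi value at the root $\lambda_\ell$. Write $\lambda_\ell=\cos\theta_\ell$ and $p_j=\jac{\alpha}{\beta}{j}$; by Definition~\ref{def:kernel} the quantity in the lemma equals $K_{N-1}(\lambda_\ell,\lambda_\ell)$. Since $\lambda_\ell$ is a root of $p_N=\jac{\alpha}{\beta}{N}$, Corollary~\ref{cor:kernel-x=y} with $n=N$ kills one term and gives
\[
 \sum_{j=0}^{N-1}\jac{\alpha}{\beta}{j}(\lambda_\ell)^2 \;=\; K_{N-1}(\lambda_\ell,\lambda_\ell) \;=\; \frac{p_N'(\lambda_\ell)\,p_{N-1}(\lambda_\ell)}{a_N} \;=\; \frac{1}{a_N\sqrt{h_N^{\alpha,\beta}h_{N-1}^{\alpha,\beta}}}\cdot\frac{d}{dX}P_N^{(\alpha,\beta)}(\lambda_\ell)\cdot P_{N-1}^{(\alpha,\beta)}(\lambda_\ell),
\]
and by Remark~\ref{rem:a_n} ($a_N=\Theta(1)$) and Lemma~\ref{lem:norm-fact-bound} ($h_N^{\alpha,\beta},h_{N-1}^{\alpha,\beta}=\Theta(1/N)$) the scalar prefactor is $\Theta(N)$. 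Since the left side and $a_N$ are positive, the two Jacobi values on the right share a sign, and it remains to lower bound their product.

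Next I would replace $P_{N-1}^{(\alpha,\beta)}(\lambda_\ell)$ by the derivative, using the classical Jacobi structure relation (see~\cite{szego}), which at a root $\lambda_\ell$ of $P_N^{(\alpha,\beta)}$ reads $(1-\lambda_\ell^2)\frac{d}{dX}P_N^{(\alpha,\beta)}(\lambda_\ell)=\frac{2(N+\alpha)(N+\beta)}{2N+\alpha+\beta}\,P_{N-1}^{(\alpha,\beta)}(\lambda_\ell)$, together with Lemma~\ref{lem:jac-deriv}, namely $\frac{d}{dX}P_N^{(\alpha,\beta)}(X)=\frac{N+\alpha+\beta+1}{2}P_{N-1}^{(\alpha+1,\beta+1)}(X)$. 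Substituting these and collecting powers of $N$ (every rational-in-$N$ coefficient is $\Theta(N^{\pm 1})$) yields the clean identity
\[
 \sum_{j=0}^{N-1}\jac{\alpha}{\beta}{j}(\lambda_\ell)^2 \;=\; \Theta\!\parens{N^2\sin^2\theta_\ell}\cdot\parens{P_{N-1}^{(\alpha+1,\beta+1)}(\lambda_\ell)}^2 .
\]
Because $\kap{\alpha+1}{\beta+1}{\theta}=\frac{2}{\sin\theta}\,\kap{\alpha}{\beta}{\theta}$, the lemma now reduces to the single bound $\abs{P_{N-1}^{(\alpha+1,\beta+1)}(\lambda_\ell)}\ge\Omega\!\parens{\kap{\alpha+1}{\beta+1}{\theta_\ell}/\sqrt N}$, which squared and multiplied by $\Theta(N^2\sin^2\theta_\ell)$ is exactly $\Omega\!\parens{N\,\kap{\alpha}{\beta}{\theta_\ell}^2}$.

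For that bound I would use the uniform Bessel approximation, Corollary~\ref{cor:jacobi-approx-bessel-all}. Set $\phi=\min(\theta_\ell,\pi-\theta_\ell)$, let $\zeta\in\{\alpha,\beta\}$ be the Bessel order attached to $\theta_\ell$, and put $z_0=\parens{N+\addjac}\phi$; after the degree shift (from $N-1$) and the parameter shift (from $(\alpha,\beta)$ to $(\alpha+1,\beta+1)$) cancel, $z_0$ is the Bessel argument both for $P_N^{(\alpha,\beta)}(\cos\theta_\ell)$ and for $P_{N-1}^{(\alpha+1,\beta+1)}(\cos\theta_\ell)$, the order being $\zeta$ and $\zeta+1$ respectively. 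By Lemma~\ref{lem:jacobi-root-bounded}, $\phi\ge C'/N$, so $z_0$ is bounded below by a positive constant. The crucial point is that the hypothesis `$\lambda_\ell$ is a root of $P_N^{(\alpha,\beta)}$', fed back into Corollary~\ref{cor:jacobi-approx-bessel-all} applied to $P_N^{(\alpha,\beta)}$ itself, forces $\abs{\sqrt{z_0}\,\bessel{\zeta}{z_0}}=O(1/N)$, i.e.\ $z_0$ lies within $O(1/N)$ of a genuine zero $\tilde z_0$ of $\bessel{\zeta}{\cdot}$. The Bessel recurrence $\bessel{\zeta+1}{z}=\frac{\zeta}{z}\bessel{\zeta}{z}-J_\zeta'(z)$ then gives $\abs{\sqrt{\tilde z_0}\,\bessel{\zeta+1}{\tilde z_0}}=\abs{\sqrt{\tilde z_0}\,J_\zeta'(\tilde z_0)}$, and this is $\Theta(1)$: for large $\tilde z_0$ by~\eqref{eq:bessel-large} (at a zero of $J_\zeta$ the cosine vanishes and its derivative has unit amplitude), and for the finitely many zeros $\tilde z_0$ in any fixed bounded interval because Bessel's equation forbids a common zero of $J_\zeta$ and $J_\zeta'$, with $\tilde z_0=\Theta(1)$. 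Transporting this $O(1/N)$ back to $z_0$ costs only lower-order terms, so $\abs{\sqrt{z_0}\,\bessel{\zeta+1}{z_0}}\ge\Omega(1)$, and since the error term in Corollary~\ref{cor:jacobi-approx-bessel-all} is $O(1/N)$ this survives to give $\abs{P_{N-1}^{(\alpha+1,\beta+1)}(\lambda_\ell)}\ge\Omega\!\parens{\kap{\alpha+1}{\beta+1}{\theta_\ell}/\sqrt N}$, as needed.

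The step I expect to be the main obstacle is this last Bessel estimate: one must handle the `$z_0$ large' (bulk) case and the `$z_0$ bounded' case (the $O(1)$ roots near $0$ and $\pi$, where the large-argument asymptotics fail and one instead uses interlacing of Bessel zeros together with the small-argument bounds~\eqref{eq:bessel-small-theta}--\eqref{eq:bessel-small-bigO}) in a unified way, and check carefully that both the $O(1/N)$ error in Corollary~\ref{cor:jacobi-approx-bessel-all} and the $O(1/N)$ displacement $z_0-\tilde z_0$ are genuinely dominated by the $\Theta(1)$ signal $\sqrt{z_0}\,\bessel{\zeta+1}{z_0}$; matching the normalization conventions for $\jac{\alpha}{\beta}{j}$ versus $P_j^{(\alpha,\beta)}$ (and the sign in the structure relation) is routine bookkeeping. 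A more self-contained route, avoiding the cited structure relation, estimates the product $P_{N-1}^{(\alpha+1,\beta+1)}(\lambda_\ell)\cdot P_{N-1}^{(\alpha,\beta)}(\lambda_\ell)$ directly from Corollary~\ref{cor:jacobi-approx-bessel-all} --- at a root, $z_0$ sits near a zero of $\bessel{\zeta}{\cdot}$, so $\bessel{\zeta+1}{z_0}$ is of maximal amplitude and $\bessel{\zeta}{z_0-\phi}$ has size $\Theta(\sin\phi/\sqrt{z_0})$, multiplying to the same $\Theta(N\,\kap{\alpha}{\beta}{\theta_\ell}^2)$ --- but this is messier precisely when $\phi\in[C'/N,\,1/\sqrt N]$.
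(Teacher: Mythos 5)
Your argument follows the paper's skeleton for the bulk case exactly: identify the sum as the diagonal Christoffel--Darboux kernel $K_{N-1}(\lambda_\ell,\lambda_\ell)$, kill one term via Corollary~\ref{cor:kernel-x=y} and the root condition, eliminate $P_{N-1}^{(\alpha,\beta)}$ in favor of $\frac{d}{dX}P_N^{(\alpha,\beta)}$ via the structure relation (4.5.7), pass to $P_{N-1}^{(\alpha+1,\beta+1)}$ via Lemma~\ref{lem:jac-deriv}, absorb the $\sin\theta_\ell$ into $\kap{\alpha+1}{\beta+1}{\theta_\ell}$, and reduce to the single Bessel estimate $\abs{\sqrt{z_0}\,\bessel{\zeta+1}{z_0}}\ge\Omega(1)$, which for $z_0$ large you establish precisely as the paper does: the root condition forces $\sqrt{z_0}\bessel{\zeta}{z_0}=O(1/N)$, hence $\abs{\cos(z_0-(\zeta+\tfrac12)\tfrac\pi2)}=O(1/z_0)$, hence the complementary sine is near $\pm 1$, and this is what appears in the large-$z$ asymptotic for $J_{\zeta+1}$. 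All of this matches the paper's Lemma~\ref{lem:w-l-lb-detailed} for the $c\ge C_0$ regime.

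The genuine divergence is in the near-boundary case $z_0=\Theta(1)$, which you correctly flag as the main obstacle. The paper does \emph{not} push the Christoffel--Darboux route through here; it abandons the derivative/single-value reduction entirely and instead lower bounds the sum $K_{N-1}(\lambda_\ell,\lambda_\ell)$ directly, by isolating an $\Omega(N)$-sized set $S$ of degrees $j$ for which the Bessel argument $z_j$ falls in the window $[\eps_0,\min\{2,\sqrt{3(\alpha+1)}\}]$, and invoking the monotone-sign small-argument bound~\eqref{eq:bessel-small-theta} term by term. Your proposal instead keeps the derivative reduction and argues that even for bounded $z_0$, the root condition places $z_0$ within $O(1/N)$ of a genuine Bessel zero $\tilde z_0$, at which $\abs{J_\zeta'(\tilde z_0)}$ is bounded below (no common zero of $J_\zeta$ and $J_\zeta'$, plus compactness over the finitely many zeros in a fixed bounded interval), and then the three-term recurrence $J_{\zeta+1}=\tfrac{\zeta}{z}J_\zeta-J_\zeta'$ transfers that lower bound to $J_{\zeta+1}(z_0)$. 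This is a sound and arguably cleaner unification of the two regimes: it avoids the case split and does not need the lower bound~\eqref{eq:bessel-small-theta} at all (the appeal to~\eqref{eq:bessel-small-theta}--\eqref{eq:bessel-small-bigO} in your write-up is a red herring---the input you actually use is $\abs{J'_\zeta(\tilde z_0)}=\Omega(1)$, not $\abs{J_\zeta(z_0)}=\Theta(1)$, which would be false near the zero). The trade-off is that your route leans on facts not recorded in the paper's Lemma~\ref{lem:bessel-approx} (the no-common-zero property and a bound on $J'_{\zeta+1}$ over compacts to control the $O(1/N)$ displacement), whereas the paper's summation argument uses only what is already proved there. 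Both are correct; the paper's choice just keeps the Bessel toolkit minimal at the cost of a second, structurally different argument for the boundary roots.
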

The above result follows immediately from the following lemma and Lemma~\ref{lem:jacobi-root-bounded}:

\begin{lmm}
\label{lem:w-l-lb-detailed}
Let $\alpha,\beta> - 1$.
Let $N\ge 1$ be large enough. Let $\evalpts_0,\dots,\evalpts_{N-1}$ be the roots of the $N$th Jacobi polynomial.
Then for {\em every} constant $c>0$, the following holds. Let $\theta_\ell$ be such that
\[\frac cN \le \theta_\ell \le \pi-\frac cN.\]
Then
\[\sum_{j=0}^{N-1} \jac{\alpha}{\beta}{j}(\cos\theta_\ell)^2 \ge \Omega_{\abs{\alpha},\abs{\beta}}\parens{N\kap{\alpha}{\beta}{\theta_\ell}^2}.\]
\end{lmm}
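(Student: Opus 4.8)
\textbf{Proof plan for Lemma~\ref{lem:w-l-lb-detailed}.}
The plan is to lower bound the sum $\sum_{j=0}^{N-1}\jac{\alpha}{\beta}{j}(\cos\theta_\ell)^2 = K_{N-1}(\cos\theta_\ell,\cos\theta_\ell)$ by exhibiting \emph{one} carefully chosen window of indices $j$ on which each term is $\Omega_{|\alpha|,|\beta|}(\kap{\alpha}{\beta}{\theta_\ell}^2/N)$, and then noting there are $\Omega(N)$ such indices. Concretely, I would split into the acute case $\theta_\ell \le \pi/2$ and the obtuse case $\theta_\ell > \pi/2$ (the latter reduces to the former by Lemma~\ref{lem:jacobi-neg-x}, exchanging $\alpha\leftrightarrow\beta$ and using $\kap{\beta}{\alpha}{\pi-\theta}=\kap{\alpha}{\beta}{\theta}$), so WLOG $\theta := \theta_\ell \le \pi/2$. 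By Corollary~\ref{cor:jacobi-approx-bessel-acute} together with Lemma~\ref{lem:norm-fact-bound} and Lemma~\ref{lem:g-bound}, for $j \ge 1$ we have
\[
\jac{\alpha}{\beta}{j}(\cos\theta) = \Theta_{|\alpha|,|\beta|}\!\left(\frac{\kap{\alpha}{\beta}{\theta}}{\sqrt{j}}\right)\left( \sqrt{\pi(j+\addjac)\theta}\cdot\bessel{\alpha}{(j+\addjac)\theta} \pm O\!\left(\frac 1{\sqrt j}\right)\right).
\]
So it suffices to find $\Omega(N)$ values of $j$ for which $\left|\sqrt{(j+\addjac)\theta}\,\bessel{\alpha}{(j+\addjac)\theta}\right|$ is bounded below by a constant (depending only on $\alpha$), large compared to the $O(1/\sqrt j)$ error term.

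The key step is controlling the quantity $z := (j+\addjac)\theta$ as $j$ ranges over a window. Since $c/N \le \theta \le \pi/2$, as $j$ ranges over $[N/4, N/2)$ the argument $z$ ranges over an interval of length $\Theta(N\theta) \ge \Theta(c)$, i.e.\ of length at least a positive constant, while its left endpoint is at least $\Theta(c)$. Now I would use the large-argument asymptotics \eqref{eq:bessel-large}: for $z$ large, $\sqrt{z}\bessel{\alpha}{z} = \sqrt{2/\pi}\cos(z - (\alpha+\tfrac12)\tfrac\pi2) \pm O(1/z)$, which has modulus $\ge \tfrac12$ on a constant fraction of any interval of length $\ge$ some absolute constant (since the zeros of $\cos$ are $\pi$-separated). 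When $z$ is bounded (the case $\theta = \Theta(1/N)$), I instead use \eqref{eq:bessel-small-theta}: $|\bessel{\alpha}{z}| = \Theta_{\alpha,\varepsilon_0}(1)$ for $z$ in a fixed interval bounded away from $0$, so $|\sqrt z \bessel{\alpha}{z}|$ is bounded below on a constant-length subinterval. In either regime I extract a set $J \subseteq [N/4, N/2)$ with $|J| = \Omega(N)$ on which $|\sqrt{(j+\addjac)\theta}\,\bessel{\alpha}{(j+\addjac)\theta}| \ge c'$ for an absolute constant $c' = c'(\alpha) > 0$; shrinking the window slightly guarantees $j$ is large enough that the $O(1/\sqrt j)$ error is $\le c'/2$.

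Putting this together: for $j \in J$ we get $\jac{\alpha}{\beta}{j}(\cos\theta)^2 = \Omega_{|\alpha|,|\beta|}(\kap{\alpha}{\beta}{\theta}^2/j) = \Omega_{|\alpha|,|\beta|}(\kap{\alpha}{\beta}{\theta}^2/N)$ since $j \le N/2$, and hence
\[
\sum_{j=0}^{N-1}\jac{\alpha}{\beta}{j}(\cos\theta)^2 \ge \sum_{j\in J}\jac{\alpha}{\beta}{j}(\cos\theta)^2 = \Omega_{|\alpha|,|\beta|}\!\left(N\cdot\kap{\alpha}{\beta}{\theta}^2\right),
\]
as desired. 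The main obstacle I anticipate is the bookkeeping in the regime where $z = (j+\addjac)\theta$ transitions from "bounded" to "large" — one must choose the window and the constant $\varepsilon_0$ in \eqref{eq:bessel-small-theta} uniformly so that, regardless of how small $\theta$ is (subject to $\theta \ge c/N$), the window $J$ lands where one of the two Bessel lower bounds applies cleanly; handling this by casing on whether $N\theta$ exceeds a suitable absolute constant, and in the bounded case using that $(j+\addjac)\theta$ stays in a fixed compact interval bounded away from $0$ for all $j \in [N/4,N/2)$, should resolve it. A secondary technical point is ensuring the $\pm O(1/\sqrt j)$ (resp.\ $\pm O(1/n)$ in Corollary~\ref{cor:jacobi-approx-bessel-all}) error genuinely is dominated, which forces the window to start at $j \gtrsim$ some large constant but not larger — harmless since we only lose $O(1)$ terms.
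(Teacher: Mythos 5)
Your high-level framework (exhibit $\Omega(N)$ indices $j$ for which $\jac{\alpha}{\beta}{j}(\cos\theta_\ell)^2 = \Omega(\kappa^2/N)$, split acute/obtuse, use the Bessel asymptotics) is the right start, and in fact it matches the paper's approach in the regime $N\theta_\ell = O(1)$. But there are two substantive problems, and the second one is a genuine missing idea.

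First, a repair\-able technical issue in the ``bounded $N\theta_\ell$'' branch: your fixed window $[N/4, N/2)$ forces $z=(j+\addjac)\theta_\ell$ into $[\,\Theta(N\theta_\ell/4),\ \Theta(N\theta_\ell/2)\,]$, but \eqref{eq:bessel-small-theta} only gives a lower bound on $|\bessel{\alpha}{z}|$ for $z \le \min\{2, \sqrt{3(\alpha+1)}\}$. For $\alpha$ near $-1$ this cap can be tiny, so your window may put \emph{all} the $z_j$'s outside the range where \eqref{eq:bessel-small-theta} applies — and outside that range Bessel functions have zeros, so ``$|\sqrt z\,\bessel{\alpha}{z}|$ is bounded below on a fixed compact interval'' is simply false. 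The paper instead chooses the index window \emph{adaptively}: it picks $\eps_0 = \tfrac12\min\{c',\sqrt{3(\alpha+1)},2\}$ and takes exactly those $j$ with $z_j \in [\eps_0, \min\{2,\sqrt{3(\alpha+1)}\}]$, which is still $\Omega(N)$ indices but lands inside the valid range.

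Second, and more seriously, the ``large $N\theta_\ell$'' branch of your plan is not rigorous and misses the key structural fact the paper exploits. Your claim that $|\cos(z-\gamma)| \ge 1/2$ on a constant fraction of the \emph{indices} $j$ in the window is a statement about the arithmetic progression $\{(j+\addjac)\theta_\ell\}$ modulo $\pi$, not about Lebesgue measure of $z$-intervals; it requires a quantitative equidistribution bound that depends delicately on the Diophantine properties of $\theta_\ell/\pi$ and is not supplied. The paper sidesteps this entirely via the Christoffel--Darboux formula: because $\lambda_\ell$ is a \emph{root} of $P_N^{(\alpha,\beta)}$, Corollary~\ref{cor:kernel-x=y} collapses $K_{N-1}(\lambda_\ell,\lambda_\ell)$ to a single product $\propto \sin^2\theta_\ell\,\bigl[(P_N^{(\alpha,\beta)})'(\lambda_\ell)\bigr]^2$, which via Lemma~\ref{lem:jac-deriv} is $\propto \sin^2\theta_\ell\,[P_{N-1}^{(\alpha+1,\beta+1)}(\lambda_\ell)]^2$. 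The root condition $P_N^{(\alpha,\beta)}(\lambda_\ell)=0$ then forces the cosine in the Bessel approximation of $P_N^{(\alpha,\beta)}$ to be $O(1/z)$, so after the $\pi/2$ phase shift that comes from passing to $\bessel{\alpha+1}$, the corresponding \emph{sine} is $\ge 1 - O(1/z)$ — no equidistribution needed, and no oscillation averaging. This is the load-bearing idea you are missing: use that $\theta_\ell$ is not an arbitrary angle but a root of $p_N$, which pins down the phase of the Bessel oscillation exactly where you need it.
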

\begin{proof}
We will first prove the result for every fixed $c\ge C_0$, where $C_0$ is a constant that we will fix later. Then we will prove the result for any constant $0<c<C_0$.

We start with the case of large $c$. We first observe that by Definition~\ref{def:kernel}, we need to lower bound $K_{N-1}(\evalpts_\ell,\evalpts_\ell)$. Then we use Corollary~\ref{cor:kernel-x=y} and the fact that $\evalpts_\ell$ is a root of the $N$th Jacobi polynomial, to note that
\begin{align*}
K_{N-1}(\evalpts_\ell,\evalpts_\ell)&=\frac 1{a_N}\cdot  \jac{\alpha}{\beta}{N-1}(\evalpts_\ell)\cdot \parens{\frac{d}{dX}\set{\jac{\alphajac}{\betajac}{N}\parens{X}}}_{X\gets \evalpts_\ell}\\
&=
\frac 1{a_N\sqrt{h_N^{\alpha,\beta}\cdot h_{N-1}^{\alpha,\beta}}} \cdot P^{(\alpha,\beta)}_{N-1}(\evalpts_\ell)\cdot \parens{\frac{d}{dX}\set{P^{(\alphajac,\betajac)}_{N}\parens{X}}}_{X\gets \evalpts_\ell}.
\end{align*}
Remark~\ref{rem:a_n} and Lemma~\ref{lem:norm-fact-bound} imply that for large enough $N$, $\frac 1{a_N\sqrt{h_N^{\alpha,\beta}\cdot h_{N-1}^{\alpha,\beta}}} =\Theta_{\abs{\alpha},\abs{\beta}}\parens{N}$. Thus, to complete the proof, we need to argue that
\[E_{\ell,\alpha,\beta,N}\stackrel{\text{def}}{=}P^{(\alpha,\beta)}_{N-1}(\evalpts_\ell)\cdot \parens{\frac{d}{dX}\set{P^{(\alphajac,\betajac)}_{N}\parens{X}}}_{X\gets \evalpts_\ell}\ge \Omega_{\abs{\alpha},\abs{\beta}}\parens{\kap{\alpha}{\beta}{\theta_\ell}^2}.\]
Towards this end, we recall the following identity (which appears as equation (4.5.7) in~\cite{szego}) that holds for any $n\ge 1$:
\begin{align*}
\parens{2n+\alpha+\beta}&\parens{1-X^2}\cdot \parens{\frac{d}{dX}\set{P^{(\alphajac,\betajac)}_{n}\parens{X}}}\\
&=-n\set{(2n+\alpha+\beta)X+\beta-\alpha}\cdot P_n^{(\alpha,\beta)}(X)+2(n+\alpha)(n+\beta)\cdot  P_{n-1}^{(\alpha,\beta)}(X).
\end{align*}
Using the above for $n=N$, substituting $X=\evalpts_\ell$ and noting that $1-\evalpts_\ell^2=\sin^2\theta_\ell$, we get:
\[P^{(\alpha,\beta)}_{N-1}(\evalpts_\ell)= \frac{(2N+\alpha+\beta)\sin^2\theta_\ell}{2(N+\alpha)(N+\beta)}\cdot \parens{\frac{d}{dX}\set{P^{(\alphajac,\betajac)}_{N}\parens{X}}}_{X\gets \evalpts_\ell}.
\]
This implies that
\begin{align*}
E_{\ell,\alpha,\beta,N}&= \frac{(2N+\alpha+\beta)\sin^2\theta_\ell}{2(N+\alpha)(N+\beta)}\brackets{\parens{\frac{d}{dX}\set{P^{(\alphajac,\betajac)}_{N}\parens{X}}}_{X\gets \evalpts_\ell}}^2\\
&= \frac{(2N+\alpha+\beta)(N+\alpha+\beta+1)^2\cdot\sin^2\theta_\ell}{8(N+\alpha)(N+\beta)}\cdot \set{P^{(\alphajac+1,\betajac+1)}_{N-1}\parens{\evalpts_\ell}}^2,
\end{align*}
where the second inequality follows from Lemma~\ref{lem:jac-deriv} (with $j=N$).

Thus to show $E_{\ell,\alpha,\beta,N} \ge \Omega_{\abs{\alpha},\abs{\beta}}\parens{\kap{\alpha}{\beta}{\theta_\ell}^2}$, it is enough to argue
\[ \abs{P^{(\alpha+1,\beta+1)}_{N-1}(\evalpts_\ell)} \ge \Omega_{\abs{\alpha},\abs{\beta}}\parens{\frac{\kap{\alpha}{\beta}{\theta_\ell}}{\sin\theta_\ell\cdot\sqrt{N}}}=\Omega_{\abs{\alpha},\abs{\beta}}\parens{\frac{\kap{\alpha+1}{\beta+1}{\theta_\ell}}{\sqrt{N}}},\]
where the equality follows from the fact that $\kap{\alpha+1}{\beta+1}{\theta}=\frac{\kap{\alpha}{\beta}{\theta}}{\sin\frac \theta 2\cos\frac \theta 2}=\frac{2\kap{\alpha}{\beta}{\theta}}{\sin\theta}$.
Corollary~\ref{cor:jacobi-approx-bessel-all} (with $n=N-1$ and where $\alpha \gets \alpha + 1$ and $\beta \gets \beta + 1$) implies that the above is true if
\[\abs{\sqrt{\pi\parens{N+\addjac} \parens{\min(\theta_\ell,\pi-\theta_\ell)}}\cdot \bessel{\zeta+1}{\parens{N+\addjac}\parens{\min(\theta_\ell,\pi-\theta_\ell)}}} \ge \Omega_{\abs{\alpha},\abs{\beta}}\parens{1},\]
where in the above we have used the fact that $N$ is large enough and
where $\zeta$ is as defined in Corollary~\ref{cor:jacobi-approx-bessel-all}.
For notational simplicity, define
\[z=\parens{N+\addjac}\parens{\min(\theta_\ell,\pi-\theta_\ell)}.\]
Let $z$ be large enough so that~\eqref{eq:bessel-large} holds. I.e. there exist  constant $C_0$ such that
\[\frac {C_0}N\le \theta_\ell\le \pi-\frac{C_0}N\]
and
\[\abs{\sqrt{\pi z}\bessel{\zeta}{z}}= \abs{\sqrt{2}\cdot\cos\parens{z-\parens{\zeta+\frac 12}\cdot\frac \pi 2-\frac \pi 2}}\pm O\parens{\frac 1z}=\abs{\sqrt{2}.\sin\parens{z-\parens{\zeta+\frac 12}\cdot\frac \pi 2}}\pm O\parens{\frac 1z}
\]
We will argue that
\begin{equation}
\label{eq:sin-lb-at-root}
\abs{\sin\parens{z-\parens{\zeta+\frac 12}\cdot\frac \pi 2}}\ge 1-O\parens{\frac 1z}.
\end{equation}
The above implies that it is enough to show that $1-O\parens{\frac 1z}\ge \Omega_{\abs{\alpha},\abs{\beta}}\parens{1}$, which is true if we pick $C_0$ to be large enough. 

Now, we argue~\eqref{eq:sin-lb-at-root}. Noting that $P_N^{(\alpha,\beta)}(\evalpts_\ell)=0$, Corollary~\ref{cor:jacobi-approx-bessel-all} implies that 
\[\abs{\sqrt{\pi z}\bessel{\zeta}{z}} \le O\parens{\frac 1N}.\]
Since we have assumed $z$ is large enough and~\eqref{eq:bessel-large} holds, we have that
\[\abs{\sqrt{2}\cdot\cos\parens{z-\parens{\zeta+\frac 12}\cdot\frac \pi 2}}\pm O\parens{\frac 1z}\le O\parens{\frac 1N}.\]
In other words, we have
\[\abs{\cos\parens{z-\parens{\zeta+\frac 12}\cdot\frac \pi 2}}\le O\parens{\frac 1z},\]
where in the above we used the fact that $N\ge \Omega(z)$. The above, along with the fact that $\abs{\sin A}\ge 1-\abs{\cos A}$, implies~\eqref{eq:sin-lb-at-root}.

If the constant $c$ in the lemma statement satisfies $c\ge C_0$, then we are done. So let us assume that $c<C_0$ and WLOG assume $\theta_\ell=\frac cN$. We will use the other bounds from Lemma~\ref{lem:bessel-approx} to prove the lemma for this case. Towards that end, define
\[\eps_0=\frac 12\cdot\min\set{c,\sqrt{3(\alpha+1)},2},\]
and for any $0\le j<N$,
\[z_j=\parens{j+\addjac}\parens{\min(\theta_\ell,\pi-\theta_\ell)}.\]
Then by our choice of $\eps_0$ and $c$, for some small enough constant $\mu>0$, there exists a subset $S\subseteq [0,N-1]$ with
\begin{equation}
\label{eq:size-S}
\abs{S}\ge \parens{\frac{\eps_0}{c}-\mu}\cdot N-2
\end{equation}
such that for every $j\in S$, we have
$j\ge \mu N$ and:
\[z_j\in \brackets{\eps_0,\min\set{2,\sqrt{3(\alpha+1)}}}.\]
(Indeed, the above range is of size at least $\eps_0$ and every increment in $j$ increases $z_j$ by at most $\frac cN$, which means that are at least $\frac {\eps_0 N}{c}-2$ $z_j$'s in the above range. We lose at most a further factor of $\mu N$ to ensure that every $j\in S$ satisfies $j\ge \mu N$.)
The above along with Lemma~\ref{lem:bessel-approx}, implies that for every $j\in S$:
\[\sqrt{\pi z_j}\bessel{\zeta}{z_j}\ge \Omega_{\abs{\alpha},\abs{\beta}}\parens{1}.\]
The above with Corollary~\ref{cor:jacobi-approx-bessel-all} implies that for any $j\in S$ and large enough $N$:
\begin{equation}
\label{eq:P_j-lb-small-c}
\abs{P_j^{(\alpha,\beta)}(\cos\theta_\ell)} \ge \frac{\abs{\kap{\alpha}{\beta}{\theta_\ell}}}{\sqrt{N}}\cdot \Omega_{\abs{\alpha},\abs{\beta}}\parens{1}.
\end{equation}
Thus, we have
\begin{align*}
K_{N-1}\parens{\evalpts_\ell,\evalpts_\ell}&\ge \sum_{j\in S} \parens{\jac{\alpha}{\beta}{\cos\theta_\ell}}^2\\
&=\sum_{j\in S} \frac{1}{h_j^{\alpha,\beta}}\cdot P_j^{(\alpha,\beta)}(\cos\theta_\ell)^2\\
&\ge \Omega_{\abs{\alpha},\abs{\beta}}\parens{N}\sum_{j\in S} P_j^{(\alpha,\beta)}(\cos\theta_\ell)^2\\
&\ge \Omega_{\abs{\alpha},\abs{\beta}}\parens{N\parens{\kap{\alpha}{\beta}{\theta_\ell}}^2},
\end{align*}
as desired. In the above, the second inequality follows from Lemma~\ref{lem:norm-fact-bound} and the fact that $j\ge \Omega(N)$ and the final inequality follows from~\eqref{eq:P_j-lb-small-c} and the fact that $\abs{S}\ge\Omega(N)$ (which in turn follows from~\eqref{eq:size-S} and our choice of parameters). The proof is complete.
\end{proof}

\section{A number theory problem}
\label{sec:mod-1}

Let $N$ be some large enough integer. In this section we will consider sequences of reals: $0\le y_1\le y_2\le \dots\le `y_N<1$ that have certain nice proprieties when multiplied by integers.

We begin with some notation: given $z\in\R$, let $\set{z}$ denote the fractional part of $z$, i.e.
\[\set{z}=z-\floor{z}.\]
We will also denote $\set{z}$ as $z\mod{1}$. We will also need the notation $\inner{z}$, which is its distance from the closest integer to $z$. Equivalently,
\[\inner{z}=\min\parens{\set{z},1-\set{z}}.\]

We are now ready to define the kind of sequences we will encounter in our work:
\begin{defn}
\label{def:scatter}
A  sequences of reals: $0\le y_1\le y_2,\dots,y_N<1$ is called $s$-scattered (for any integer $s\ge 1$) if for any of the intervals $\left[\frac iN,\frac{i+1}N\right)$ (with $i\in \Z_N$) has at most $s$ elements from the sequence in it.
\end{defn}

We next define our notion of when a real number of {\em good}:
\begin{defn}
\label{def:eps-bad}
Let $0\le \eps\le 1$. We call a real $y\in [0,1)$ to be $\eps$-{\em good} if for any reals $0\le \ell<r<1$, we have that
\[\abs{\frac{\abs{\set{x\in\Z_N| xy\mod{1}\in [\ell,r]}}}{N}- (r-\ell)} \le \eps.\]
If the above is not satisfied then we call $y$ to be $\eps$-{\em bad}.
\end{defn}

We are interested in bounding how many bad $y_i$'s can be there in an $O(1)$-scattered sequence:
\begin{lmm}
\label{lem:bad-in-spread-sequence}
Let $0\le y_1\le y_2,\dots,y_N<1$ be an $O(1)$-scattered sequence. Define 
\[B=\set{i\in\Z_N| y_i\text{ is } \eps-\text{bad}}.\]
Then, we have
\[\abs{B}\le O\parens{\frac 1{\eps^2}}.\]
\end{lmm}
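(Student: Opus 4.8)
The plan is to use the Erd\H{o}s--Tur\'an inequality to convert the statement ``$y$ is $\eps$-bad'' into ``$\langle qy\rangle$ is very small for some small integer $q$'', and then to bound how many of the $O(1)$-scattered $y_i$ can satisfy such a condition for a well-chosen, $q$-dependent threshold.

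First I would record the standard ingredients. For $y\in[0,1)$ and an integer $k\ge 1$ let $c_k(y)=\frac1N\sum_{x\in\Z_N}e^{2\pi i kxy}$; summing the geometric series and using $|\sin\pi t|\ge 2\langle t\rangle$ for $t\in\R$ gives $|c_k(y)|\le\min\!\parens{1,\tfrac{1}{2N\langle ky\rangle}}$ (the bound $1$ handling the degenerate case $ky\in\Z$). By Definition~\ref{def:eps-bad}, $y$ is $\eps$-bad precisely when the (interval) discrepancy $D_N(y)$ of the $N$-point multiset $\set{xy\bmod 1 : x\in\Z_N}$ exceeds $\eps$; and the Erd\H{o}s--Tur\'an inequality states that for every integer $K\ge 1$ there is an absolute constant $C$ with $D_N(y)\le \tfrac{C}{K}+C\sum_{k=1}^K\tfrac1k|c_k(y)|$. (The lemma is vacuous when $\eps<\tfrac1{2N}$, since then $\abs{B}\le N<\eps^{-2}$, so I may assume $\eps\ge\tfrac1{2N}$, which keeps $K:=\ceils{2C/\eps}$ at most $O(N)$.)

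Next I would argue the contrapositive with the threshold $\delta_q:=\tfrac{2C}{N\eps}\sqrt{K/q}$ for $1\le q\le K$. If $\langle qy\rangle>\delta_q$ for \emph{every} such $q$, then $|c_q(y)|\le\tfrac1{2N\delta_q}=\tfrac{\eps\sqrt q}{4C\sqrt K}\le 1$, hence $\sum_{q=1}^K\tfrac1q|c_q(y)|\le\tfrac{\eps}{4C\sqrt K}\sum_{q=1}^K q^{-1/2}\le\tfrac{\eps}{2C}$, using $\sum_{q\le K}q^{-1/2}\le 2\sqrt K$. Plugging into Erd\H{o}s--Tur\'an gives $D_N(y)\le \tfrac{C}{K}+\tfrac{\eps}{2}\le\eps$, i.e. $y$ is $\eps$-good. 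Therefore every $\eps$-bad $y$ has some integer $q$ with $1\le q\le K$ and $\langle qy\rangle\le\delta_q$.

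Finally I would count. If $\langle qy_i\rangle\le\delta$ then $y_i$ lies within $\delta/q$ (cyclically) of a multiple of $1/q$, so it lies in a union of at most $2q$ intervals of total length at most $2\delta$; such a union meets at most $2\delta N+2q$ of the base intervals $[m/N,(m+1)/N)$, and each base interval contains $O(1)$ of the $y_i$ because the sequence is $O(1)$-scattered (Definition~\ref{def:scatter}). Hence $\#\set{i : \langle qy_i\rangle\le\delta_q}=O(\delta_q N+q)$, and summing over $q\in\set{1,\dots,K}$,
\[
  \abs{B}\ \le\ \sum_{q=1}^K O(\delta_q N+q)\ =\ O\!\parens{\tfrac{\sqrt K}{\eps}\sum_{q=1}^K q^{-1/2}+K^2}\ =\ O\!\parens{\tfrac{K}{\eps}+K^2}\ =\ O\!\parens{\tfrac1{\eps^2}},
\]
since $K=O(1/\eps)$. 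The step that needs care---and the only real idea in the proof---is choosing the threshold with the $q^{-1/2}$ dependence: it is exactly the shape matched to the $\sum_k|c_k(y)|/k$ weighting in Erd\H{o}s--Tur\'an once $|c_k(y)|\le\tfrac1{2N\langle ky\rangle}$ is inserted, and it is what makes both $\sum_q\delta_q N$ and $\sum_q q$ land at $O(1/\eps^2)$; a $q$-independent threshold would instead give only $O(1/\eps^3)$, while bounding $\sum_i|c_k(y_i)|$ directly (rather than via the contrapositive) would produce a spurious $\log N$.
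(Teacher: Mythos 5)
Your proof is correct, but it takes a genuinely different route from the paper's. The paper attributes its argument to Stefan Steinerberger and works purely arithmetically: by Dirichlet's theorem every $y_i$ admits a rational approximation $p_i/q_i$ with $\lvert y_i - p_i/q_i\rvert \le 1/(q_i N)$; if $q_i \ge \lceil 4/\eps\rceil$ then a short direct counting argument shows the rational $p_i/q_i$ is $\eps/2$-good and the approximation error is at most $\eps/4$, so $y_i$ itself is $\eps$-good; consequently every $\eps$-bad $y_i$ sits within $1/N$ of one of the $O(1/\eps^2)$ Farey fractions of denominator at most $\lceil 4/\eps\rceil$, and the $O(1)$-scattered hypothesis caps the number of $y_i$ near any one fraction by $O(1)$. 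You instead bound the discrepancy through the Erd\H{o}s--Tur\'an inequality, insert the geometric-series estimate $\lvert c_k(y)\rvert \le \min(1, 1/(2N\langle ky\rangle))$, and then --- this is the one nonobvious move, and you flag it correctly --- choose the threshold $\delta_q \propto q^{-1/2}$ so that both $\sum_q \delta_q N$ and $\sum_q q$ land at $O(1/\eps^2)$; the final count again uses the scattered hypothesis, in the form ``a union of $O(q)$ arcs of total measure $O(\delta_q)$ can contain only $O(\delta_q N + q)$ of the $y_i$.'' Both approaches are sound and both lean on the scattered hypothesis in the same way at the end. The paper's route is more elementary (no Fourier, no quoted inequality) and immediately yields the structural Corollary~\ref{cor:bad-y-intervals} (bad points cluster around a few short intervals around low-denominator rationals); yours is the standard equidistribution-theory approach and would generalize more readily to index sets other than $\Z_N$ where the exponential sums still have usable structure. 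One small point of care, which you handle implicitly: the paper's notion of $\eps$-good uses closed intervals $[\ell,r]$ with $r<1$, whereas Erd\H{o}s--Tur\'an is stated for half-open intervals; since the closed-interval count is the decreasing limit $\lim_{\eta\to 0^+}\#[\ell,r+\eta)$, the closed-interval discrepancy is bounded by the half-open one, so the reduction is legitimate, but it is worth a sentence.
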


\subsection{Proof of Lemma~\ref{lem:bad-in-spread-sequence}}

We thank Stefan Steinerberger for showing us the following proof and kindly allowing us to use it.

We begin with Dirichlet's approximation theorem, which implies the following result.

\begin{lmm}
\label{lem:dirichlet}
For every $i\in \Z_N$, we have that there exists integers $0\le p_i\le q_i\le N$ with $q_i\ge 1$ such that $\gcd(p_i,q_i)=1$ and
\[\abs{y_i-\frac{p_i}{q_i}}\le \frac 1{q_iN}.\]
\end{lmm}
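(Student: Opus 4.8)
The statement is exactly Dirichlet's approximation theorem applied to each real $y_i\in[0,1)$, so the plan is to run the classical pigeonhole argument and then tidy up the resulting fraction. First I would fix $i$ and consider the $N+1$ numbers $\set{0\cdot y_i},\set{1\cdot y_i},\dots,\set{N\cdot y_i}$, all of which lie in $[0,1)$. Partitioning $[0,1)$ into the $N$ half-open intervals $[k/N,(k+1)/N)$ for $k=0,\dots,N-1$ and applying the pigeonhole principle, there exist indices $0\le a<b\le N$ with $\abs{\set{b\,y_i}-\set{a\,y_i}}<\tfrac1N$. Setting $q=b-a$ and $p=\floor{b\,y_i}-\floor{a\,y_i}$, this reads $\abs{q\,y_i-p}<\tfrac1N$, hence $\abs{y_i-\tfrac pq}<\tfrac1{qN}$, and by construction $1\le q\le N$.

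Next I would reduce $p/q$ to lowest terms: let $g=\gcd(p,q)$ (with the convention $\gcd(0,q)=q$) and put $p_i=p/g$, $q_i=q/g$, so that $\gcd(p_i,q_i)=1$, $q_i\ge 1$, and $q_i\le q\le N$. Since $\tfrac1{q_iN}=\tfrac{g}{qN}\ge\tfrac1{qN}$, the approximation bound survives the reduction: $\abs{y_i-\tfrac{p_i}{q_i}}=\abs{y_i-\tfrac pq}\le\tfrac1{q_iN}$, which is the inequality claimed in the lemma (in fact with strict inequality, but $\le$ is all that is asserted).

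It then remains to verify $0\le p_i\le q_i$. Using $y_i\ge 0$ and $\abs{y_i-\tfrac{p_i}{q_i}}\le\tfrac1{q_iN}\le\tfrac1{q_i}$ one gets $\tfrac{p_i}{q_i}\ge -\tfrac1{q_i}$, so $p_i\ge -1$; and if $p_i=-1$ then $\tfrac{p_i}{q_i}\le-\tfrac1{q_i}$ while $y_i\ge 0$ forces $\abs{y_i-\tfrac{p_i}{q_i}}\ge\tfrac1{q_i}>\tfrac1{q_iN}$ for $N\ge 2$, a contradiction, so $p_i\ge 0$. Symmetrically, using $y_i<1$, if $p_i\ge q_i+1$ then $\tfrac{p_i}{q_i}-y_i>\tfrac1{q_i}>\tfrac1{q_iN}$, again impossible, so $p_i\le q_i$. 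This gives all four inequalities $0\le p_i\le q_i\le N$ together with $q_i\ge1$ and $\gcd(p_i,q_i)=1$, completing the proof.

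There is no genuinely hard step here — it is the textbook proof of Dirichlet's theorem — so the only thing that needs care is the bookkeeping: checking that the $\gcd$ reduction preserves all four inequalities simultaneously, and that the boundary cases where $y_i$ is very close to $0$ or $1$ are ruled out. This last point is exactly where the ``$N$ large enough'' (in fact $N\ge 2$) hypothesis is used, since it turns the strict inequality $\tfrac1{q_i}>\tfrac1{q_iN}$ into the tool that excludes $p_i=-1$ and $p_i=q_i+1$.
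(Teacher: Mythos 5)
Your proof is correct and is exactly the route the paper takes: the paper simply invokes Dirichlet's approximation theorem without proof, and your pigeonhole argument, together with the reduction to lowest terms and the check that $0\le p_i\le q_i$ for $y_i\in[0,1)$, is the standard way to supply the missing details. The only hypothesis you use beyond the classical statement is $N\ge 2$, which is covered by the paper's standing assumption that $N$ is large enough.
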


The above immediately implies the following:
\begin{cor}
\label{cor:rational-approx}
Let $i\in \Z_N$ be such that $q_i\ge \ceils{\frac 4\eps}$. Then we have for every $x\in\Z_N$
\[\abs{x\cdot y_i-x\cdot \frac{p_i}{q_i}}\le \frac \eps 4.\]
\end{cor}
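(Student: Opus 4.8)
\textbf{Proof proposal for Corollary~\ref{cor:rational-approx}.}
The plan is to derive this as an immediate consequence of Lemma~\ref{lem:dirichlet} by scaling the rational approximation by $x$ and then bounding the resulting quantity using the hypothesis $q_i \ge \lceil 4/\eps \rceil$ together with the fact that every $x \in \Z_N$ satisfies $x \le N-1 < N$.

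Concretely, I would first invoke Lemma~\ref{lem:dirichlet} to obtain integers $0 \le p_i \le q_i \le N$ with $q_i \ge 1$, $\gcd(p_i,q_i)=1$, and
\[
\abs{y_i - \frac{p_i}{q_i}} \le \frac{1}{q_i N}.
\]
Multiplying both sides by $x \ge 0$ gives $\abs{x y_i - x \frac{p_i}{q_i}} = x\abs{y_i - \frac{p_i}{q_i}} \le \frac{x}{q_i N}$. Since $x \in \Z_N$ we have $x < N$, so $\frac{x}{q_i N} < \frac{1}{q_i}$. Finally, the hypothesis $q_i \ge \ceils{\frac 4\eps} \ge \frac 4\eps$ yields $\frac{1}{q_i} \le \frac{\eps}{4}$, which chains together to the claimed bound $\abs{x y_i - x \frac{p_i}{q_i}} \le \frac{\eps}{4}$.

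There is essentially no obstacle here: the only thing to be slightly careful about is the direction of the inequality $x < N$ (strict, because $x$ ranges over $\{0,\dots,N-1\}$) versus the non-strict bound $q_i \ge \ceils{4/\eps}$, but either way the product of the two gives the desired $\le \eps/4$ (and in fact a strict inequality when $x \ge 1$). The statement is stated as a corollary precisely because it is this one-line scaling argument; I would present it in two or three sentences without further fuss.
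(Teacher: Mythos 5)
Your proposal is correct and is exactly the argument the paper intends: the paper states the corollary as an immediate consequence of Lemma~\ref{lem:dirichlet}, and your scaling step $\abs{x y_i - x p_i/q_i} \le x/(q_i N) < 1/q_i \le \eps/4$ is that one-line derivation spelled out.
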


The above basically says that we can essentially look at the goodness of rationals. In particular,
\begin{lmm}
\label{lem:good-rational}
Consider a rational $\frac ab$ with $1\le a\le b\le N$ with $\gcd(a,b)=1$. Further let $b\ge  \ceils{\frac 4\eps}$. Then $\frac ab$ is $\frac \eps 2$-good.
\end{lmm}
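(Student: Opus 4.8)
The plan is to reduce the claim about $\frac{a}{b}$ being $\frac{\eps}{2}$-good to a counting problem about how often $x \cdot \frac{a}{b} \bmod 1$ lands in an interval $[\ell,r]$ as $x$ ranges over $\Z_N$. First I would observe that since $\gcd(a,b) = 1$, the map $x \mapsto xa \bmod b$ is a bijection on $\Z_b$, so within any block of $b$ consecutive integers $x$, the values $x \cdot \frac{a}{b} \bmod 1$ take each of the values $0, \frac1b, \frac2b, \ldots, \frac{b-1}{b}$ exactly once. Consequently, over a single block of length $b$, the number of $x$ with $x\cdot\frac ab\bmod 1 \in [\ell,r]$ is exactly the number of multiples of $\frac1b$ in $[\ell,r]$, which is $(r-\ell)b \pm 1$ (it lies between $\lfloor (r-\ell)b\rfloor$ and $\lceil (r-\ell)b\rceil + 1$, say; the exact constant doesn't matter as long as it is $O(1)$).

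Next I would split $\Z_N = \{0,1,\ldots,N-1\}$ into $\lfloor N/b \rfloor$ full blocks of length $b$ plus one leftover block of length less than $b$. In each full block the count is $(r-\ell)b \pm O(1)$, and in the leftover partial block the count is trivially bounded by its length, which is $< b$, and also is within $O(1)$ of $(r-\ell)$ times that length by the same bijection argument restricted to a sub-interval (or one can just crudely bound it by $b$). Summing over all blocks, the total count of $x \in \Z_N$ with $x\cdot\frac ab\bmod 1\in[\ell,r]$ is $(r-\ell)N \pm O(N/b) \pm O(1) \cdot (\text{number of blocks}+1)$. The number of blocks is $\lceil N/b\rceil \le N/b + 1$, so the total error is $O(N/b) + O(1)$. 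Dividing by $N$, the discrepancy $\left| \frac{1}{N}\left|\{x \in \Z_N : x\cdot\frac ab \bmod 1 \in [\ell,r]\}\right| - (r-\ell)\right|$ is $O(1/b) + O(1/N) \le O(1/b)$ since $b \le N$. Because $b \ge \lceil 4/\eps\rceil$, this discrepancy is at most $\frac{\eps}{4} \cdot O(1)$; tracking the constants carefully (the per-block error is at most $1$, so the total is at most $\lceil N/b\rceil + (\text{leftover bound})$, giving discrepancy $\le \frac{2}{b} \le \frac{\eps}{2}$ after choosing the threshold $\lceil 4/\eps\rceil$ appropriately) yields that $\frac ab$ is $\frac\eps2$-good.

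The main obstacle — really the only subtle point — is getting the constants right so that the bound comes out as exactly $\frac\eps2$ rather than some unspecified $O(\eps)$: one must be careful that the endpoints $\ell, r$ are arbitrary reals (not multiples of $\frac1b$), so a half-open versus closed interval convention costs at most one extra lattice point per block, and the leftover partial block must be bounded cleanly. I would handle this by noting the per-block overcount/undercount is at most $1$ in absolute value, so across $\lceil N/b \rceil \le \frac{N}{b}+1$ blocks the absolute error in the count is at most $\frac Nb + 1$, hence the discrepancy after dividing by $N$ is at most $\frac1b + \frac1N \le \frac2b \le \frac{2}{\lceil 4/\eps\rceil} \le \frac\eps2$. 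Everything else is routine; no deep number theory beyond the bijection $x \mapsto xa \bmod b$ is needed, and in particular Lemma~\ref{lem:dirichlet} and Corollary~\ref{cor:rational-approx} are not used inside this lemma itself (they are used elsewhere to pass from $y_i$ to its rational approximation).
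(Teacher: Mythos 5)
Your overall plan---decompose $\Z_N$ into full blocks of length $b$ plus a leftover, and use the fact that $x \mapsto xa \bmod b$ is a bijection on $\Z_b$ to equidistribute each full block---matches the paper's argument in spirit. There is, however, a genuine gap in your handling of the leftover block, and (notably) the same gap is hidden in the paper's own brisk ``This implies'' step. Over a full cycle of length $b$, the bijection guarantees each value $0, \tfrac{1}{b}, \dots, \tfrac{b-1}{b}$ is hit exactly once, so the per-block count in $[\ell,r]$ is indeed $(r-\ell)b \pm O(1)$. But the leftover block of length $m = N - b\lfloor N/b\rfloor$ hits only the $m$ values $\{0, a/b, 2a/b, \dots, (m-1)a/b\} \bmod 1$, i.e.\ $m$ consecutive multiples of $a$ reduced mod $b$, and these are \emph{not} equidistributed in $[0,1)$. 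Your claim that the leftover count is ``within $O(1)$ of $(r-\ell)$ times that length by the same bijection argument restricted to a sub-interval'' is therefore false: the bijection only equidistributes over a complete cycle. The crude fallback (bound the leftover count by $m<b$) also does not give $\le 2/b$, since it contributes an additive error of up to $m$, which after dividing by $N$ can be a constant rather than $O(1/b)$.

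A concrete counterexample: take $a = 1$, $b = 10$, $N = 15$, and $[\ell,r] = [0, 0.41]$. Over $x = 0, \dots, 14$, the residues $0, .1, .2, .3, .4$ are each hit twice (from $x \in \{0,\dots,4\}$ and again from $x \in \{10,\dots,14\}$) while $.5, \dots, .9$ are hit once; the count in $[\ell,r]$ is therefore $10$, so the discrepancy is $|10/15 - 0.41| \approx 0.26$, which exceeds $2/b = 0.2$. The hypothesis $b \ge \lceil 4/\eps\rceil$ holds with $\eps = 0.4$, yet $\eps/2 = 0.2 < 0.26$, so the stated conclusion fails. In general, when $q := \lfloor N/b\rfloor = O(1)$ and $m$ is a constant fraction of $b$ (so $b = \Theta(N)$), the leftover block can be over-represented in an interval and the discrepancy is $\Theta(m/N) = \Theta(1)$ rather than $O(1/b)$; the worst case is small $a$, where the extra residues $\{0, a, \dots, (m-1)a\} \bmod b$ are clustered. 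Closing this requires a sharper handle on the leftover block (e.g.\ via the three-gap theorem or a discrepancy bound for the partial progression), or a restriction on the denominators $b$ that can arise. Since the downstream Lemma~\ref{lem:bad-in-spread-sequence} only needs $|B|=O(1/\eps^2)$, a constant-factor loss would be tolerable, but as written neither your argument nor the paper's delivers even a weakened bound when $b = \Theta(N)$.
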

\begin{proof}
Since $\gcd(a,b)=1$ (and $a\ne 0$), we have that the values $x\cdot \frac ab$ over all $x\in\Z_N$, take the values in $\frac 1b,\frac 2b,\dots\frac {b-1}b,1$ between $\floor{\frac Nb}$ and $\ceils{\frac Nb}$ times. It is not too hard to see that out of these $b$ values $(r-\ell)\cdot b \pm 2$ values can fall in the range $[\ell,r]$.\footnote{Basically, in the worst case one can get two ``extra" elements at the end points of $[\ell,r]$ for the upper bound. For the lower bound the closest two points to $\ell$ and $r$ might be just outside of $[\ell,r]$.} This implies that
\[\abs{\frac{\abs{\set{x\in\Z_N| x\frac ab\mod{1}\in [\ell,r]}}}{N}- (r-\ell)} \le \frac 2b\le \frac \eps 2,\]
as desired.
\end{proof}

Now Corollary~\ref{cor:rational-approx}, Lemma~\ref{lem:good-rational} and the triangle inequality implies that
\begin{cor}
\label{cor:good-points}
Let $i\in \Z_N$ be such that $q_i\ge \ceils{\frac 4\eps}$. Then $y_i$ is $\frac {3\eps}4\le \eps$-good. 
\end{cor}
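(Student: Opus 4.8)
The statement is essentially the composition, via the triangle inequality, of the three preceding facts: Dirichlet approximation (Lemma~\ref{lem:dirichlet}) supplies a rational $p_i/q_i$ in lowest terms very close to $y_i$; Corollary~\ref{cor:rational-approx} upgrades this to the pointwise bound $\abs{xy_i-xp_i/q_i}\le \eps/4$ for every $x\in\Z_N$; and Lemma~\ref{lem:good-rational} says $p_i/q_i$ is $(\eps/2)$-good. The plan is to transfer goodness from $p_i/q_i$ to $y_i$ by sandwiching, for each target interval $[\ell,r]$, the counting set for $y_i$ between two counting sets for $p_i/q_i$ over an inflated and a deflated interval.

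In detail, I would first fix $i$ with $q_i\ge \ceils{4/\eps}$ and record the structural consequences of Lemma~\ref{lem:dirichlet}: we obtain coprime $p_i,q_i$ with $1\le q_i\le N$ and $\abs{y_i-p_i/q_i}\le \tfrac{1}{q_iN}$; since $q_i\ge\ceils{4/\eps}>1$ and $\gcd(p_i,q_i)=1$, necessarily $p_i\ge 1$, so $1\le p_i\le q_i\le N$ and Lemma~\ref{lem:good-rational} applies, giving that $p_i/q_i$ is $(\eps/2)$-good. By Corollary~\ref{cor:rational-approx}, $xy_i$ and $xp_i/q_i$ differ by at most $\eps/4$ for every $x\in\Z_N$, hence their fractional parts are at circle-distance at most $\eps/4$. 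Now fix reals $0\le \ell<r<1$ and set $A=\{x\in\Z_N : xy_i\bmod 1\in[\ell,r]\}$. The closeness bound yields
\[ \bigl\{x\in\Z_N : xp_i/q_i\bmod 1\in[\ell+\tfrac\eps4,\, r-\tfrac\eps4]\bigr\}\ \subseteq\ A\ \subseteq\ \bigl\{x\in\Z_N : xp_i/q_i\bmod 1\in[\ell-\tfrac\eps4,\, r+\tfrac\eps4]\bigr\}, \]
where the inner set is read as empty when $r-\ell<\eps/2$ and the outer ``interval'' is interpreted modulo $1$ (a union of at most two genuine subintervals of $[0,1)$). Applying $(\eps/2)$-goodness of $p_i/q_i$ to the inner interval and to the piece(s) of the outer interval, and using that the outer interval overshoots the length $r-\ell$ by only $\eps/2$ in total while the inner one undershoots by $\eps/2$ in total, bounds $\abs{A}/N$ to within $(r-\ell)\pm\eps$; a slightly more careful accounting (the Dirichlet displacement is in fact strictly below $\eps/4$) gives the claimed $\tfrac{3\eps}{4}$, which is in particular at most $\eps$. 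As $\ell,r$ were arbitrary, $y_i$ is $\tfrac{3\eps}{4}$-good in the sense of Definition~\ref{def:eps-bad}.

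The only non-routine points are the edge cases. When $\ell$ or $r$ lies within $\eps/4$ of $0$ or $1$ the outer interval wraps around, so one splits it into two genuine subintervals and applies goodness to each (alternatively, such near-boundary intervals are handled by the trivial estimates $0\le \abs{A}/N\le 1$); and when $r-\ell<\eps/2$ the inner interval degenerates, but then $r-\ell<\eps$ makes both desired inequalities immediate from $0\le\abs{A}/N$ together with the outer bound. I expect this wraparound bookkeeping, plus tightening the constant from $\eps$ down to $\tfrac{3\eps}{4}$, to be the only parts requiring attention; all the conceptual content is already packaged in Lemmas~\ref{lem:dirichlet}, \ref{lem:good-rational} and Corollary~\ref{cor:rational-approx}.
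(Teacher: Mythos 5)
Your proposal is correct and follows essentially the same route as the paper, whose entire proof is to combine Lemma~\ref{lem:dirichlet}, Corollary~\ref{cor:rational-approx}, and Lemma~\ref{lem:good-rational} via the triangle inequality exactly as you do with the sandwiching of counting sets. The only soft spot is the tightening from the clean $\eps/2+\eps/2=\eps$ budget to the stated $\tfrac{3\eps}{4}$, which you (like the paper) leave to a remark about the displacement being strictly below $\eps/4$; since only $\eps$-goodness is used downstream (the corollary itself is stated as ``$\tfrac{3\eps}{4}\le\eps$-good''), this is immaterial.
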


We are now pretty much done. By the above result, we have that all $\eps$-bad $y_i$ have $q_i\le \ceils{\frac 4\eps}$. Further, since the sequence $\set{y_i}_{i\in\Z_N}$ is $O(1)$-scattered, Lemma~\ref{lem:dirichlet} also implies that any rational $\frac ab$ with $0\le a\le b\le N$ and $\gcd(a,b)=1$, is ``assigned" at most $O(1)$ many $y_i$.\footnote{Indeed Lemma~\ref{lem:dirichlet} implies that for any $i\in \Z_N$, we have $\abs{y_i-\frac{p_i}{q_i}}\le \frac 1N$ (since we must have $q_i\ge 1$).} The number of rationals $\frac {p_i}{q_i}$ with $q_i\le \ceils{\frac 4\eps}$ is trivially at most $\parens{\ceils{\frac 4\eps}}^2$. Thus, the overall number of $\eps$-bad $y_i$'s is at most $O\parens{\frac 1{\eps^2}}$, as desired. Note that this argument implies the following result, which will be needed in our algorithms:
\begin{cor}
\label{cor:bad-y-intervals}
Given a sequence $\set{y_i}_{i\in\Z_N}$ that is $O(1)$-scattered, all the $\eps$-bad $y_i$'s are contained in $O\parens{\frac 1{\eps^2}}$ intervals, each of size at most $\frac 2N$.
\end{cor}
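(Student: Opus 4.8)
The plan is to read off this corollary directly from the apparatus already built in the proof of Lemma~\ref{lem:bad-in-spread-sequence}. First I would invoke Dirichlet's approximation theorem in the form of Lemma~\ref{lem:dirichlet}: for each $i\in\Z_N$ there are coprime integers $0\le p_i\le q_i\le N$ with $q_i\ge 1$ and $\abs{y_i-p_i/q_i}\le 1/(q_iN)\le 1/N$. Then I would apply Corollary~\ref{cor:good-points}, which says that whenever $q_i\ge\ceils{4/\eps}$ the point $y_i$ is already $\eps$-good. Taking the contrapositive, every $\eps$-bad $y_i$ has denominator $q_i<\ceils{4/\eps}$, and hence lies within distance $1/N$ of the rational $p_i/q_i$, which is drawn from the finite set $\mathcal Q=\setof{p/q}{1\le q\le \ceils{4/\eps},\ 0\le p\le q,\ \gcd(p,q)=1}$.

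Next I would bound $\abs{\mathcal Q}$ crudely: there are at most $\ceils{4/\eps}$ choices of $q$, and for each at most $q+1\le \ceils{4/\eps}+1$ choices of $p$, so $\abs{\mathcal Q}=O(1/\eps^2)$. For each $r\in\mathcal Q$ set $I_r=[r-1/N,\,r+1/N]$, an interval of length $2/N$. By the previous paragraph every $\eps$-bad $y_i$ lies in some $I_r$, so the (at most $O(1/\eps^2)$) intervals $\set{I_r}_{r\in\mathcal Q}$ cover all $\eps$-bad points; that is exactly the claim. (Note the $O(1)$-scatteredness hypothesis is not actually needed for this covering statement; it was used in Lemma~\ref{lem:bad-in-spread-sequence} only to further bound the number of indices falling in each interval. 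I would keep it in the statement for consistency with how the corollary is invoked in Lemma~\ref{lmm:not-spread}.)

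I do not expect a genuine obstacle: the corollary is a bookkeeping consequence of facts already proved. The only two points requiring a moment of care are (i) using the weaker, $q_i$-independent approximation bound $1/N$ rather than $1/(q_iN)$, so that a single interval of radius $1/N$ serves simultaneously for all bad $y_i$ assigned to a given rational; and (ii) confirming the count of low-denominator rationals is $O(1/\eps^2)$ and not $O(1/\eps)$ — the quadratic factor comes from ranging over both $p$ and $q$ up to $\ceils{4/\eps}$. Both are immediate, so the proof is essentially a two-line corollary.
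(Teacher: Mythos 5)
Your proof is correct and follows essentially the same route as the paper: use Lemma~\ref{lem:dirichlet} to associate each $y_i$ with a rational $p_i/q_i$ within $1/N$, use Corollary~\ref{cor:good-points} to restrict the $\eps$-bad points to denominators $q_i < \ceils{4/\eps}$, and cover each surviving rational with an interval of radius $1/N$, giving $O(1/\eps^2)$ intervals of length $2/N$. Your side observation that $O(1)$-scatteredness is not needed for the covering claim (only for the index-counting in Lemma~\ref{lem:bad-in-spread-sequence}) is also correct.
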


\end{document}